\newtheorem{example}{Example}
\newtheorem{lemma}{Lemma}
\newtheorem{proposition}{Proposition}
\newtheorem{property}{Property}
\newtheorem{definition}{Definition}
\newcommand{\OMIT}[1]{}
\newcommand{\mn}[1]{\ensuremath{\mathsf{#1}}}
\newcommand{\mi}[1]{\ensuremath{\mathit{#1}}}
\newcommand{\Cmc}{\ensuremath{\mathcal{C}}}
\newcommand{\Imc}{\ensuremath{\mathcal{I}}}
\newcommand{\Smc}{\ensuremath{\mathcal{S}}}
\newcommand{\eg}{e.g.,~}
\newcommand{\ie}{i.e.,~}
\newcommand{\wrt}{w.r.t.~}
\newcommand{\cf}{cf.~}
\def\true{\ensuremath{\mathsf{true}}}
\def\false{\ensuremath{\mathsf{false}}}
\newcommand{\df}{:=}
\newcommand{\constSet}{\bf{C}}
\newcommand{\varSet}{\bf{V}}
\newcommand{\predSet}{\bf{P}}
\newcommand{\pred}{p}
\newcommand{\ontology}{\Sigma}
\newcommand{\database}{D}
\newcommand{\domain}{\mathcal{D}}
\newcommand{\inter}{I}
\newcommand{\ans}{\vec{a}}
\newcommand{\schema}{\mathcal{S}}
\newcommand{\drvtr}[1]{T^{\ontology}_{\database}(#1)}
\newcommand{\drvtrsig}{\drvtr{\fact}}
\newcommand{\dep}{\texttt{depth}}
\newcommand{\ann}{\Lambda}
\newcommand{\annt}[1]{\prov^{#1}}
\newcommand{\atann}{\annt{\texttt{AT}}}
\newcommand{\mtann}{\annt{\texttt{NRT}}}
\newcommand{\mdtann}{\annt{\texttt{MDT}}}
\newcommand{\rmdtann}{\annt{\texttt{HMDT}}}
\newcommand{\modannot}{\mu}
\newcommand{\provmodtot}{\annt{\texttt{AM}}}
\newcommand{\provmodmon}{\annt{\texttt{SAM}}}
\newcommand{\provnaive}{\annt{\texttt{NE}}}
\newcommand{\provsemi}{\annt{\texttt{SNE}}}
\newcommand{\provopti}{\annt{\texttt{OE}}}
\newcommand{\consop}{T_\ontology}
\newcommand{\diffop}{\Delta_\ontology}
\newcommand{\naive}{\inter_{\mn{n}}}
\newcommand{\semi}{\inter_{\mn{sn}}}
\newcommand{\opti}{\inter_{\mn{o,\fact}}}
\newcommand{\naiveannot}{\annot_{\mn{n}}}
\newcommand{\semiannot}{\annot_{\mn{sn}}}
\newcommand{\optiannot}{\annot_{\mn{o,\fact}}}
\newcommand{\zero}{0}
\newcommand{\one}{1}
\newcommand{\kzero}{0_\semiringshort}
\newcommand{\kone}{1_\semiringshort}
\newcommand{\semiringshort}{\mathbb{K}}
\newcommand{\semiring}{(K,\kplus,\ktimes,\zero_\semiringshort,\one_\semiringshort)}
\newcommand{\kplus}{+_\semiringshort}
\newcommand{\ktimes}{\times_\semiringshort}
\newcommand{\plus}{+}
\newcommand{\semiringVars}{\ensuremath{X}\xspace}
\newcommand{\annot}{\lambda}
\newcommand{\prov}{\mathcal{P}}
\newcommand{\monomial}{\ensuremath{m}\xspace}
\newcommand{\monom}{\ensuremath{\mn{mon}}\xspace}
\newcommand{\fact}{\alpha}
\newcommand{\datrule}{r}
\newcommand{\goal}{\mn{goal}}
\tikzset{
	itria/.style={
		draw,dashed,shape border uses incircle,
		isosceles triangle,shape border rotate=90,yshift=-1.45cm},
	rtria/.style={
		draw,dashed,shape border uses incircle,
		isosceles triangle,isosceles triangle apex angle=90,
		shape border rotate=-45,yshift=0.2cm,xshift=0.5cm},
	ritria/.style={
		draw,dashed,shape border uses incircle,
		isosceles triangle,isosceles triangle apex angle=110,
		shape border rotate=-55,yshift=0.1cm},
	letria/.style={
		draw,dashed,shape border uses incircle,
		isosceles triangle,isosceles triangle apex angle=110,
		shape border rotate=235,yshift=0.1cm}
}
\title{Revisiting Semiring Provenance for Datalog}
\author{%
Camille Bourgaux$^1$\and
Pierre Bourhis$^2$\and
Liat Peterfreund$^3$\and
Micha\"el Thomazo$^1$ \\
\affiliations
$^1$DI ENS, ENS, CNRS, PSL University \& Inria, Paris, France\\
$^2$CRIStAL, CNRS, University of Lille, Inria, Lille, France\\
$^3$LIGM, CNRS, Universit\'{e} Gustave Eiffel, ENPC, Paris, France\\
\emails
$^{1,2}$\{first.last\}@inria.fr, 
$^3$\{first.last\}@univ-eiffel.fr
}
\begin{document}

\maketitle

\begin{abstract}
Data provenance consists in bookkeeping meta information during query evaluation, in order to enrich query results with their trust level, likelihood, evaluation cost, and more. The framework of semiring provenance abstracts from the specific kind of meta information that annotates the data. While the definition of semiring provenance is uncontroversial for unions of conjunctive queries, the picture is less clear for Datalog. Indeed, the original definition might include infinite computations, and is not consistent with other proposals for Datalog semantics over annotated data. 
In this work, we propose and investigate several provenance semantics, based on different approaches for defining classical Datalog semantics. We study the relationship between these semantics, and introduce properties that allow us to analyze and compare them.
\end{abstract}

\section{Introduction}

Datalog is a rule language widely studied both in the database community, where it is seen as a query language, and in the 
KR community, as an ontology language. 

In relational databases, the framework of \emph{semiring provenance} was introduced to generalize computations over annotated databases, \eg the semantics of probabilistic databases \cite{DBLP:journals/sigmod/Senellart17}, the bag semantics, lineage or why-provenance \cite{DBLP:journals/ftdb/CheneyCT09}. 
In this framework, the semantics of positive relational algebra queries over databases annotated with elements of any commutative semiring is inductively defined on the structure of the query  \cite{greenpods2007,DBLP:conf/pods/GreenT17}. 
\emph{Provenance semirings} are  expressions (such as polynomials) built from variables associated to each tuple of the database \cite{containmentGreen09}. A provenance expression provides a general representation of how tuples have been used to derive a query result, and can be faithfully evaluated in any semiring in which the considered provenance semiring can be homomorphically embedded.

Semiring provenance has also been studied for Datalog queries, for which it was defined based on the set of all \emph{derivation trees} for the query \cite{greenpods2007,DeutchMRT14,DBLP:journals/vldb/DeutchGM18}. 
However, this definition 
seems less axiomatic than in the case of relational databases. 
Indeed, there may be infinitely many derivation trees, leading to infinite provenance expressions, while Datalog programs have finite models that can be computed efficiently~\cite{DBLP:books/aw/AbiteboulHV95}.  
A consequence is that this definition is valid only for a restricted class of semirings, namely $\omega$-continuous. 
Recently, \citeauthor{DannertGNT21} \shortcite{DannertGNT21} restrict the semiring even further by considering fully-chain complete semirings in order to extend provenance definition to logical languages featuring negation and fixed-point. 
Even if numerous useful semirings are $\omega$-continuous, or can be extended to a such semiring, infinite provenance expressions may be considered unintuitive in some cases. Consider, for example, the counting semiring (\ie natural numbers with standard operations) for which provenance of positive relational algebra queries corresponds to their bag semantics. This semiring can be extended to an $\omega$-continuous one by adding $\infty$ to the natural numbers, hence providing a way to capture the bag semantics for Datalog queries \cite{DBLP:conf/vldb/MumickPR90,greenpods2007}. However, query answers having infinite multiplicities may not seem very natural or informative. 
Moreover, alternative bag semantics for languages close to Datalog have been defined, and would not lead to such infinite multiplicities when applied to Datalog. This is in particular the case of the bag semantics for ontology-based data access \cite{DBLP:conf/ijcai/NikolaouKKKGH17,DBLP:journals/ai/NikolaouKKKGH19}, which corresponds to one of the two semantics proposed for source-to-target tuple generating dependencies in the context of data exchange \cite{DBLP:conf/lics/HernichK17}. 
Interestingly, these bag semantics are not based on derivation trees but are \emph{model-theoretic} semantics: they define annotated interpretations, and conditions for rules satisfaction over such interpretations. 
Such model-theoretic semantics have also been used in other contexts to evaluate Datalog and variants over annotated databases, such as fuzzy Datalog \cite{DBLP:journals/actaC/AchsK95} or description logic knowledge bases annotated with provenance tokens \cite{provdllite,provEL}. 
Finally, yet other semantics definitions have been proposed for some use cases. For instance, 
\citeauthor{DBLP:journals/toplas/ZhaoSS20} \shortcite{DBLP:journals/toplas/ZhaoSS20} consider \emph{minimal depth proof trees}, which correspond to a Datalog \emph{evaluation algorithm}, with the intended use of understanding the computation of the result, and guiding debugging. 

The fact that the above semantics are not encompassed by the definition of semiring provenance for Datalog, along with the need of handling infinite computations which are entailed by this definition, motivate us to 
investigate alternative natural semantics that
might be a better fit in different contexts. 

In this paper we introduce several natural provenance semantics for Datalog over annotated data.
Our definitions are based on different classical approaches: model-theoretic, execution-based and proof tree-based. 
They capture the semantics mentioned previously, and 
are inspired by practical needs. 
For instance, our semantics definition based on minimal depth derivation trees capture the behavior of Datalog engines, such as Souffl\'{e} \cite{souffle}, that store only minimal depth derivation trees instead of storing them all (which might be impossible in case there are infinitely many); Our semantics based on non-recursive derivation trees (in which a fact is not derived from itself) resembles the approach taken by some graph query languages, \eg SPARQL and Cypher, to handle queries with possibly infinite outputs by allowing to explicitly restrict the output to include only simple paths. 
In addition, some of the semantics suggested in the paper are closely related to paradigms for weighted reasoning in the context of words and trees \cite{10.1007/978-3-642-22944-2_2,STUBER2008221}.

After defining these different semantics, 
we study under which conditions they coincide and investigate their connections. 
We then provide a general framework for defining such provenance semantics, and present several properties 
relevant for provenance semantics 
that allow us to compare them. 
We briefly discuss some complexity issues in conclusion. 
Proofs and additional discussion are available in the appendix.

\section{Preliminaries}
\subsection{Datalog}
We use the standard Datalog settings (cf.~\cite{DBLP:books/aw/AbiteboulHV95} part D).
\subsubsection*{Syntax} Let $\predSet$, $\constSet$, and $\varSet$ be mutually disjoint, possibly infinite sets of \emph{predicates}, \emph{constants}, and \emph{variables} respectively. 
Elements of $\constSet \cup \varSet$ are called \emph{terms}. 
An \emph{atom} has the form $\pred(t_1,\dots,t_n)$ where $\pred\in\predSet$ is an $n$-ary predicate, and  $t_i$'s are terms. 
A \emph{fact} (or \emph{ground atom}) is a variable-free atom. 
A \emph{(Datalog) rule} is an expression:
$\forall \vec{x}\forall\vec{y}(\phi(\vec{x},\vec{y} ) \rightarrow \psi(\vec{x}))$
where $\vec{x}$ and $\vec{y}$ are tuples of variables and $\phi(\vec{x},\vec{y} )$ and $\psi(\vec{x})$ are conjunctions of atoms 
whose variables are 
$\vec{x}\cup\vec{y}$ and $\vec{x}$ respectively.  
We call $\phi(\vec{x},\vec{y} )$ and $\psi(\vec{x})$  the \emph{body} and \emph{head} of the rule, respectively. 
From now on, we assume that rules are in \emph{normalized form}, \ie 
the head 
consists of a single atom $H(\vec{x})$, and
quantifiers are implicit. 
The \emph{domain} $\domain(\mathcal A)$ of a set $\mathcal A$ of atoms is the set of terms that appear in its atoms. 

A \emph{database} $\database$ is a finite set of facts, and 
a \emph{Datalog program (or ontology)} $\ontology$ is a finite set of Datalog rules. 
The \emph{schema} of 
$\database$ (resp.\ $\ontology$) 
denoted $\schema(\database)$ (resp.\ $\schema(\ontology)$) is the set of predicates that appear in its atoms.\footnote{
	Note that 
	we do not require the set of predicates of atoms appearing in heads of rules to be disjoint from  $\schema(\database)$; naturally, all of our results are valid under this assumption as well.}

\subsubsection*{Semantics} 
The semantics of Datalog can classically be defined in three ways: through models, fixpoints or derivation trees. All three definitions rely on the notion of homomorphism: 
a \emph{homomorphism} from a set $\mathcal A$ of atoms to a set $\mathcal B$ of atoms is a  function $h: \domain(\mathcal A)\rightarrow \domain(\mathcal B)$ such that $h(t)=t$ for all $t\in\constSet$, and 
$\pred(t_1,\dots, t_n)\in\mathcal A$ implies 
$h(\pred(t_1,\cdots, t_n))\df\pred(h(t_1),\dots, h(t_n))\in \mathcal B.$ 
We denote by $h(\mathcal A)$ the set $\{h(\pred(t_1,\dots, t_n))\mid \pred(t_1,\dots, t_n)\in \mathcal A\}$. 
The homomorphism definition is extended to 
conjunctions of atoms 
by viewing 
them as the sets of atoms they contain. 

A 
set $\inter$ of facts is a \emph{model} of a rule $\datrule \df \phi(\vec{x},\vec{y} ) \rightarrow \psi(\vec{x})$, denoted by $\inter\models \datrule$, if every 
homomorphism $h$ from $\phi(\vec{x},\vec{y})$ to $\inter$ is also a homomorphism from $\psi(\vec{x})$ to $\inter$; 
it is a \emph{model} of a Datalog program $\ontology$ if $\inter\models\datrule$ for every $\datrule\in\ontology$; 
it is a \emph{model} of a database $\database$ if $\database \subseteq \inter$. 
A fact $\fact$ is \emph{entailed} by $\database$ and $\ontology$, denoted $\ontology,\database\models\fact$, if  $\fact\in I$ for every model $I$ of $\ontology$ and $\database$. 
\begin{example}\label{ex:running}
	Let
	$\ontology$ contain the rules $B(x)\rightarrow A(x)$,  $R(x,y)\wedge A(y)\rightarrow B(x)$, and $R(x,y)\rightarrow  R(y,x)$, and 
	$\database \df \{B(a), B(b), R(a,b),  R(b,a)\}$. 
	Each model of $\database$ and $\ontology$ contains all facts in $\database$ as well as $A(a)$ and $A(b)$, which are thus entailed by $\ontology,\database$. 
\end{example}

An equivalent way to define the entailment of a fact $\fact$ by $\database$ and $\ontology$ is to check if there is a homomorphism from $\fact$ to a specific model, defined as the \emph{least fixpoint} containing $\database$ of the immediate consequence operator: An immediate consequence for $\database$ and $\ontology$ is either $\fact \in \database$, or $\alpha$ such that there exists a rule $\datrule \df \phi(\vec{x},\vec{y} ) \rightarrow \psi(\vec{x})$ and a homomorphism $h$ from $\phi(\vec{x},\vec{y} )$ to $\database$ such that $h(\psi(\vec{x})) = \fact$.

Finally, a third definition relies on 
derivation trees.

\begin{definition}[Derivation Tree]
	A \emph{derivation tree} $t$ of a fact $\fact$ \wrt a database $\database$ and a program $\ontology$ is a finite tree whose leaves are labeled by facts from $\database$ and 
	non-leaf nodes are labeled by triples $(p(t_1,\ldots, t_m), \datrule, h)$ where 
	\begin{itemize}
		\item $p(t_1,\ldots, t_m)$ is a fact over the schema $\schema(\ontology)$;
		\item $\datrule$ is a rule from $\ontology$ of the form  $\phi(\vec{x},\vec{y} ) \rightarrow  p(\vec{x})$;
		\item $h$ is a homomorphism from $\phi(\vec{x},\vec{y} )$ to the facts of the labels of the node children, such that $h(p(\vec{x})) = p(t_1,\ldots,t_m)$;
		\item there is a bijection $f$ between the node children and the atoms of $\phi(\vec{x},\vec{y} )$, such that for every $q(\vec{z})\in \phi(\vec{x},\vec{y} )$, $f(q(\vec{z}))$ is of the form $(h(q(\vec{z})),\datrule',h')$ or is a leaf labeled by $h(q(\vec{z}))$. 
	\end{itemize}
	Moreover, if $(p(t_1,\cdots, t_m), \datrule, h)$ or $p(t_1,\cdots, t_m)$ is the root of $t$, then $p(t_1,\cdots, t_m)=\fact$. 
\end{definition}

\begin{example}\label{ex:trh}
	Let $\ontology$ contain $\datrule_1 \df R(x,y)\rightarrow H(x,x)$, $\datrule_2 \df R(x,y)\rightarrow H(x,y)$ and $\datrule_3\df S(x,y,z) \wedge S(x,z,y)\rightarrow H(x,x)$. If $D=\{R(a,a), S(a,b,c),S(a,c,b)\}$, then the fact $\fact\df H(a,a)$ has the following derivation trees
	\noindent
	\begin{tabular}{llll}
		\begin{tikzpicture}
			[level distance=0.75cm]
			\node {$(\fact,\datrule_1,h)$}
			child {node {$R(a,a)$}};
		\end{tikzpicture}
		&\hspace{-0.6cm}
		\begin{tikzpicture}
			[level distance=0.75cm]
			\node {$(\fact,\datrule_2,h)$}
			child {node {$R(a,a)$}};
		\end{tikzpicture}
		&\hspace{-0.8cm}
		\begin{tikzpicture}
			[level distance=0.75cm,
			level 1/.style={sibling distance=1.3cm}]
			\node {$(\fact,\datrule_3,h_3)$}
			child {node {$S(a,b,c)$}
			}
			child {node {$S(a,c,b)$}
			};
		\end{tikzpicture}
		&\hspace{-0.7cm}
		\begin{tikzpicture}
			[level distance=0.75cm,
			level 1/.style={sibling distance=1.3cm}]
			\node {$(\fact,\datrule_3,h_3')$}
			child {node {$S(a,c,b)$}
			}
			child {node {$S(a,b,c)$}
			};
		\end{tikzpicture}
	\end{tabular}
	where $h(x) = h(y) = a$, $h_3(x)=a$, $h_3(y)=b$, $h_3(z)=c$ and $h'_3(x)=a$, $h'_3(y)=c$, $h'_3(z)=b$.
\end{example}
Note that when the program at hand is recursive (i.e., the dependency graph of its predicates 
contains cycles) 
a fact may have infinitely many derivation trees. Figure~\ref{fig:derivtree} depicts some of the infinitely many derivation trees of $A(a)$ from Example~\ref{ex:running}.
In this example, and from this point on, we omit rules and homomorphisms from trees when there is no ambiguity.

\begin{figure}[t]	
	\centering
	\begin{tikzpicture}
		[level distance=2.1cm,
		level 1/.style={sibling distance=1.8cm},
		level 2/.style={sibling distance=0.9cm}]
		\node {$A(a)$}
		child {node {$B(a)$}};
	\end{tikzpicture}
	\hspace{0.7cm}
	\begin{tikzpicture}
		[level distance=0.7cm,
		level 1/.style={sibling distance=2.0cm},
		level 2/.style={sibling distance=1.2cm}]
		\node {$A(a)$}
		child {node {$B(a)$}
			child {node {$R(a,b)$}}
			child {node {$A(b)$}	
				child {node {$B(b)$}			
		}}};
	\end{tikzpicture}
	\hspace{0.7cm}
	\begin{tikzpicture}
		[level distance=0.7cm,
		level 1/.style={sibling distance=2.0cm},
		level 2/.style={sibling distance=1.2cm}]
		\node {$A(a)$}
		child {node {$B(a)$}
			child {node {$R(a,b)$} child {node {$R(b,a)$}}}
			child {node {$A(b)$}	
				child {node {$B(b)$}			
		}}};
	\end{tikzpicture}	
	\caption{
		Some derivation trees of $A(a)$ in Example~\ref{ex:running}. 
		\label{fig:derivtree}}
\end{figure}
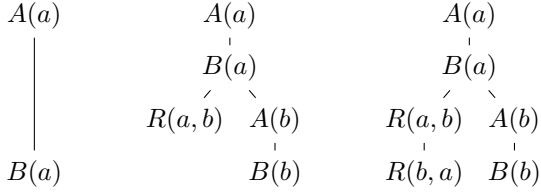

\subsubsection*{Queries} A \emph{conjunctive query} (CQ) is an existentially quantified formula $\exists \vec{y}\, \phi(\vec{x}, \vec{y})$ where $\phi(\vec{x}, \vec{y})$ is a conjunction of atoms with variables in $\vec{x}\cup\vec{y}$; 
a \emph{union of conjunctive queries} (UCQ) is a disjunction of CQs (over the same free variables). A query is \emph{Boolean} if it has no free-variables. 
A set of facts $\inter$ satisfies a Boolean CQ (BCQ) $q\df\exists \vec{y}\, \phi(\vec{y})$, written $\inter\models q$, if and only if there is a homomorphism from $\phi(\vec{y})$ to $\inter$. 
A BCQ $q$ is \emph{entailed} by a Datalog program $\ontology$ and database $\database$, written $\ontology,\database\models q$, if and only if $\inter\models q$ for every model $\inter$ of $\ontology$ and $\database$. 
Note that $\ontology,\database\models q$ if and only if $\ontology\cup\{\phi(\vec{y})\rightarrow \goal\},\database\models\goal$, where $\goal$ is a nullary predicate such that 
$\goal\notin\schema(\ontology)\cup\schema(\database)$. 
A tuple of constants $\ans$ is an \emph{answer} to a CQ $q(\vec{x})\df\exists \vec{y}\, \phi(\vec{x}, \vec{y})$ over $\ontology$ and $\database$ if $\ans$ and $\vec{x}$ have the same arity and $\ontology,\database\models q(\ans)$ where $q(\ans)$ is the BCQ obtained by replacing the variables from $\vec{x}$ with the corresponding constants from $\ans$. 
When $\ontology=\emptyset$, it amounts to the existence of a homomorphism from $q(\ans)$ 
to $\database$, which corresponds to the semantics of CQs over relational~databases.

\subsection{Annotated Databases}  
To equip databases with extra information, their facts might be annotated with, e.g., trust levels, clearance degree required to access them, or identifiers to track how they are used.

In the framework of semiring provenance, 
annotations are elements of 
algebraic structures known as commutative semirings. 
A \emph{semiring}\sloppy{ $\semiringshort=\semiring$ } is a set $K$ with distinguished elements $\kzero$ and $\kone$, equipped with two binary operators: $\kplus$, called the \emph{addition}, which is an associative and commutative operator with identity~$\kzero$, and $\ktimes$, called the \emph{multiplication}, which is an associative operator with identity~$\kone$. 
It also holds that $\ktimes$ distributes over $\kplus$, and $\kzero$ is annihilating for~$\ktimes$. 
When multiplication is commutative, the semiring is said to be 
\emph{commutative}. 
We use the convention according to which multiplication is applied before addition to omit parentheses. 
We omit the subscript of operators and distinguished elements when there is no~ambiguity.

\begin{definition}
	An \emph{annotated database} is a triple $(\database, \semiringshort, \annot)$ where $\database$ is a database, $\semiringshort = \semiring$ is a semiring, and $\annot : \database \mapsto K \setminus \{\kzero\}$ maps facts into semiring elements different from $\kzero$. 
\end{definition}

\begin{example}[Ex. \ref{ex:running} cont'd]\label{ex:running-annot}
	The semiring $\mathbb{N}=(\mathbb{N}, \plus, \times, 0,1)$ of the natural numbers equipped with the usual operations is used for bag semantics. The tropical semiring $\mathbb{T}=(\mathbb{R}^\infty_+, \mn{min}, +, \infty, 0)$ is used to compute minimal-cost paths.
	We define $\annot_\mathbb{N}:\database\mapsto \mathbb{N}\setminus\{0\}$ by
	$\annot_\mathbb{N}(B(a))=3$,  $\annot_\mathbb{N}(B(b))=1$,   $\annot_\mathbb{N}(R(a,b))=2$,  $\annot_\mathbb{N}(R(b,a))=1$;
	And $\annot_\mathbb{T}:\database\mapsto \mathbb{R}_+$ by
	$\annot_\mathbb{T}(B(a))=10$,  $\annot_\mathbb{T}(B(b))=1$, $\annot_\mathbb{T}(R(a,b))=5$,  $\annot_\mathbb{T}(R(b,a))=2$.
\end{example}

We next list some possible properties of semirings. 
A semiring is \emph{$\plus\,$-idempotent} (resp.\  \emph{$\times$-idempotent}) if for every $a\in K$, $a\plus a=a$ (resp.\ $a\times a=a$). 
It is \emph{absorptive} if for every $a,b\in K$, $a \times b \plus a= a$. 
It is \emph{positive} if for every $a,b\in K$, $a \times b =0$ if and only if ($a=0$ or $b=0$), and $a \plus b =0$ if and only if $a=b=0$. 
Finally, an important class is that of $\omega$-continuous commutative semirings in which infinite sums are well-defined.  
Given a semiring, we define the binary relation $\sqsubseteq$ such that $a\sqsubseteq b$ if and only if there exists $c\in K$ such that $a\plus c =b$. 
A commutative semiring is \emph{$\omega$-continuous} if $\sqsubseteq$ is a partial order, every (infinite) $\omega$-chain $a_0\sqsubseteq a_1\sqsubseteq a_2\dots$ has a least upper bound $\sup((a_i)_{i\in \mathbb{N}})$, and for every $a$, $a\plus \sup((a_i)_{i\in \mathbb{N}})=\sup((a\plus a_i)_{i\in \mathbb{N}})$ and $a\times \sup((a_i)_{i\in \mathbb{N}})=\sup((a\times a_i)_{i\in \mathbb{N}})$.

The semantics of queries from the positive relational algebra, and in particular of UCQs, over annotated databases is defined inductively on the structure of the query \cite{greenpods2007}. Intuitively, joint use of data (conjunction) corresponds to multiplication, and alternative use of data (union or projection) corresponds to addition. 

\begin{example}[Ex. \ref{ex:running-annot} cont'd]
	\sloppy{The BCQ $\exists xy \,(R(x,y)\wedge B(y))$ is entailed from $(\database, \mathbb{N}, \annot_\mathbb{N})$ with multiplicity $\annot_\mathbb{N}(R(a,b)) \times \annot_\mathbb{N}(B(b))\plus\annot_\mathbb{N}(R(b,a)) \times \annot_\mathbb{N}(B(a))=5$,
		and from $(\database, \mathbb{T}, \annot_\mathbb{T})$ with minimal cost $\mn{min}(\annot_\mathbb{T}(R(a,b)) + \annot_\mathbb{T}(B(b)), \annot_\mathbb{T}(R(b,a)) + \annot_\mathbb{T}(B(a))) = 6$. }
\end{example}

A semantics of Datalog over annotated databases has been defined by \citeauthor{greenpods2007} \shortcite{greenpods2007} using derivation trees, that we shall name the \emph{all-tree semantics}. 
It associates to each fact $\fact$ entailed by $\ontology$ and $\database$ the following sum, where $\drvtrsig$ is the set of all derivation trees for $\fact$ \wrt $\ontology$ and $\database$ and $\ann(t) := \prod_{v\text{ is a leaf of $t$}} \annot(v)$ is the $\semiringshort$-annotation of the derivation tree $t$ (since $\semiringshort$ is commutative, the result of the product is well-defined). 
\[
\atann(\ontology,\database,\semiringshort,\annot,\fact):= \sum_{ t\in  \drvtrsig}   \ann(t).
\]
Since $\drvtrsig$ may be infinite, $\atann$ is well-defined for all $\ontology$, $(\database,\semiringshort,\annot)$ and $\alpha$ only in the case where $\semiringshort$ is $\omega$-continuous.

\begin{example}[Ex. \ref{ex:running-annot} cont'd] The fact
	$\alpha\df A(a)$ is entailed 
	with minimal cost: $\atann(\ontology,\database, \mathbb{T}, \annot_\mathbb{T},\fact)=\mn{min}_{t \in \drvtrsig} \Sigma_{v\text{ is a leaf of $t$}} \annot_\mathbb{T}(v) = 3$. 
	Since $\mathbb{N}$ is not $\omega$-continuous, $\atann(\ontology,\database, \mathbb{N}, \annot_\mathbb{N},\fact)$ is not defined. 
\end{example}

\subsection{Provenance Semirings} 
Provenance semirings have been introduced to abstract from a particular semiring by associating a unique provenance token to each fact of the database, and building expressions that trace their use. 
Given a set $\semiringVars$ of \emph{variables} that annotate the database, a \emph{provenance semiring} $\mi{Prov}(\semiringVars)$ is a semiring over a space of provenance expressions with variables from~$\semiringVars$. 

Various such semirings were introduced in the context of relational databases \cite{containmentGreen09}: The most expressive annotations are provided by the \emph{provenance polynomials} semiring \sloppy{ $\mathbb{N}[\semiringVars] \df (\mathbb{N}[\semiringVars], + ,\times,0,1)$ } of polynomials with coefficients from $\mathbb{N}$ and variables from $\semiringVars$, and the usual operations. 
Less general provenance semirings include, for example, the semiring $\mathbb{B}[\semiringVars]\df (\mathbb{B}[\semiringVars], + ,\times,0,1)$ of polynomials with Boolean coefficients, and the semiring $\mi{PosBool}(\semiringVars)\df(\mi{PosBool}(\semiringVars), \vee ,\wedge,\false,\true)$ of positive Boolean expressions.

In the Datalog context, it is important to allow for infinite provenance expressions, as there can be infinitely many derivation trees.  
A \emph{formal power series} with variables from $\semiringVars$ and coefficients from $K$ is a mapping that associates to each monomial over $\semiringVars$ a coefficient in $K$. A formal power series $S$ can be written as a possibly infinite sum $S=\Sigma_{\monomial\in\monom(\semiringVars)}S(\monomial)\monomial$ where $\monom(\semiringVars)$ is the set of monomials over $\semiringVars$ and $S(\monomial)$ is the coefficient of the monomial $\monomial$. 
The set of formal power series with variables from $\semiringVars$ and coefficients from $K$ is denoted $K\llbracket\semiringVars\rrbracket$. \citeauthor{greenpods2007} \shortcite{greenpods2007} define the \emph{Datalog provenance semiring} as the semiring $\mathbb{N}^\infty\llbracket\semiringVars\rrbracket$ of formal power series with coefficients from $\mathbb{N}^\infty=\mathbb{N}\cup\{\infty\}$. 

A \emph{semiring homomorphism} from $\semiringshort = \semiring$ to $\mathbb{K'} = (K', +_\mathbb{K'} ,\times_\mathbb{K'}, 0_\mathbb{K'}, 1_\mathbb{K'})$ is a mapping $h : K \rightarrow K'$ such that $h(\kzero) = 0_\mathbb{K'}$, $h(\kone) = 1_\mathbb{K'}$, and for all $a,b\in K$, $h(a \kplus b) = h(a) +_\mathbb{K'} h(b)$ and $h(a \ktimes b) = h(a) \times_\mathbb{K'} h(b)$. 
A semiring homomorphism between $\omega$-continuous semirings is \emph{$\omega$-continuous} if it preserves least upper bounds: $h(\sup((a_i)_{i\in \mathbb{N}})) =\sup((h(a_i))_{i\in \mathbb{N}})$.

Following \citeauthor{DeutchMRT14} \shortcite{DeutchMRT14}, we say that a provenance semiring $\mi{Prov}(\semiringVars)$ \emph{specializes} correctly to a semiring $\semiringshort$, if any valuation $\nu : \semiringVars \rightarrow K$ extends uniquely to a ($\omega$-continuous if $\mi{Prov}(\semiringVars)$ and $\semiringshort$ are $\omega$-continuous) semiring homomorphism $h : \mi{Prov}(\semiringVars) \rightarrow K$, allowing the computations for $\semiringshort$ to factor through the computations for $\mi{Prov}(\semiringVars)$. 
A provenance semiring $\mi{Prov}(\semiringVars)$ is \emph{universal for a set of semirings} if it specializes correctly to each semiring of this set. 
\citeauthor{greenpods2007} \shortcite{greenpods2007} showed that $\mathbb{N}[\semiringVars]$ is universal for commutative semirings, and $\mathbb{N}^\infty\llbracket\semiringVars\rrbracket$ is universal for commutative $\omega$-continuous semirings.

\section{Alternative Semantics}\label{sec:semantics}

In this section we propose several natural ways of defining the semantics of Datalog over annotated databases, and investigate their connections. 
We have seen that the semantics of Datalog can equivalently be defined through models, fixpoints or derivation trees. 
The semantics we propose also fall into these three approaches. 
For presentation purposes, 
we see each semantics as a partial function $\prov$ that associates to a Datalog program $\ontology$, annotated database $(\database, \semiringshort, \annot)$, and fact $\fact$, a semiring element $\prov(\ontology, \database, \semiringshort, \annot,\fact)$.

\subsection{Model-Based Semantics}\label{subsec:model-based-sem}
We first investigate two provenance semantics based on Datalog's model-theoretic semantics. 
In both cases, we will define interpretations $(\inter,\modannot^I)$ where $I$ is a set of facts and $\modannot^I$ is a function that annotates facts of $I$, and formulate requirements for them to be models of $\ontology$ and $(\database,\semiringshort,\annot)$, extending standard models of $\ontology$ and $\database$ with fact annotations. 

\subsubsection*{Annotated Model-based} 
\citeauthor{DBLP:conf/lics/HernichK17} \shortcite{DBLP:conf/lics/HernichK17} define two bag semantics in the context of data exchange: 
the \emph{incognizant} and \emph{cognizant} semantics. 
The difference between them arise from the two different semantics of bag union: the incognizant semantics uses the maximum-based union, while the cognizant semantics uses the sum-based union. 

In more details, both semantics are based on the following semantics for source-to-target tuple generating dependencies (s-t tgds): a pair $(I,J)$ of source and target instances satisfies an s-t tgd $q_1(\vec{x})\rightarrow q_2(\vec{x})$ if 
for every answer $\ans$ to $q_1$ over $I$, $\ans$ is an answer to $q_2$ over $J$ with at least the same  multiplicity. 
Given a set of s-t tgds $\ontology$ and a source $I$, a target $J$ is an \emph{incognizant solution} for $I$ \wrt $\ontology$ if $(I,J)$ satisfies every s-t tgd in $\ontology$. 
It is a \emph{cognizant solution} if for every $\datrule\in\ontology$, there is a target instance $J_\datrule$ such that $(I,J_\datrule)$ satisfies $\datrule$ and $\uplus J_{\datrule}\subseteq J$, where $\uplus$ denotes the sum-union of bags (\ie the multiplicity of each element of the sum-union is equal to the sum of its multiplicities). 
The incognizant (resp.\ cognizant) \emph{certain answers} to a query $q$ \wrt $\ontology$ on $I$ are defined using bag intersection of the answers over the incognizant (resp.\ cognizant) solutions for $I$ \wrt $\ontology$, \ie the multiplicity of an answer is the minimum of its multiplicities over the solutions. Note that for BCQs, the only possible certain answer is the empty tuple.

For example,  consider $\ontology=\{B(x) \rightarrow A(x), C(x) \rightarrow A(x)\}$ and $\database=\{(B(a),1),(C(a),1)\}$. 
Under the incognizant semantics, the multiplicity of the certain answer of the Boolean query $A(a)$ \wrt $\ontology$ and $\database$ is $1$ while under the cognizant semantics it is $2$. 
Indeed, $J=\{(A(a),1)\}$ 
is an incognizant solution for $\database$ \wrt $\ontology$ as it satisfies both s-t tgds, but is not a cognizant solution as the sum of multiplicities that arise from the two rules is $2$. 

It is easy to show that the cognizant semantics is equivalent to $\atann$ on the counting semiring 
$\mathbb{N}=(\mathbb{N}, \plus, \times, 0,1)$, and thus coincides with the classical bag semantics for Datalog. 
However, we have seen 
that the incognizant and cognizant semantics differ. 
Moreover, note that in the field of ontology-based data access, the bag semantics defined by \citeauthor{DBLP:journals/ai/NikolaouKKKGH19} for DL-Lite$_R$ \shortcite{DBLP:conf/ijcai/NikolaouKKKGH17,DBLP:journals/ai/NikolaouKKKGH19} coincides with the \emph{incognizant} semantics, thus disagrees with the classical Datalog bag semantics \cite{DBLP:conf/vldb/MumickPR90,greenpods2007}.

We hence define a provenance semantics that coincides  with these semantics when used with the counting semiring. 
Since it is based on greatest lower bounds, it is defined on a restricted class of semirings. 

Let $\semiringshort=(K,\plus,\times,\zero,\one)$ be a commutative $\omega$-continuous semiring such that for every $K'\subseteq K$, the greatest lower bound $\inf(K')$ of $K'$ is well defined (\ie there exists a unique $z\in K$ such that $z\sqsubseteq x$ for every $x\in K'$ and every $z'$ such that $z'\sqsubseteq x$ for every $x\in K'$ is such that $z'\sqsubseteq z$), $\ontology$ be a Datalog program, and $(\database,\semiringshort, \annot)$ be an annotated database. We define 
\emph{$\semiringshort$-annotated interpretations} as pairs $(\inter,\modannot^I)$ where $\inter$ is a set of facts, and $\modannot^I$ is a function from $\inter$ to $K$. 
We say that a 
$\semiringshort$-annotated interpretation
$(\inter,\modannot^I)$ is a \emph{model} of $\ontology$ and $ (\database,\semiringshort, \annot)$, denoted by  $(\inter,\modannot^I)\models (\ontology,\database,\semiringshort, \annot)$, if 
\begin{enumerate}
	\item $\database\subseteq \inter$, and for every $\fact\in\database$, $\annot(\fact)\sqsubseteq \modannot^I(\fact)$;
	\item for every $\phi(\vec{x},\vec{y} ) \rightarrow H(\vec{x})$ in $\ontology$, whenever there is a homomorphism $h:\phi(\vec{x},\vec{y})\mapsto \inter$, 
	then \mbox{$h(H(\vec{x}))\in\inter$} and \mbox{$\displaystyle\sum_{h':\phi(\vec{x},\vec{y})\mapsto I, h'(\vec{x})=h(\vec{x})} \prod_{\beta\in h'(\phi(\vec{x},\vec{y}))} \modannot^I(\beta) \sqsubseteq \modannot^I(h(H(\vec{x})))$.}
\end{enumerate}

The \emph{annotated model-based provenance semantics} $\provmodtot$ is defined by 
$$\provmodtot(\ontology, \database, \semiringshort, \annot,\fact) \df \inf (\{\modannot^I(\fact){\mid}(\inter,\modannot^I){\models}(\ontology,\database,\semiringshort, \annot)\}).$$

\begin{restatable}{proposition}{propprovmodtotcaptureincognizantandDL}\label{prop:provmodtot-capture-incognizant-and-DL}
	If the Datalog rules in $\ontology$ are (1) s-t tgds, or (2) formulated in DL-Lite$_R$, then for every BCQ $q$, $\provmodtot(\ontology\cup\{q\rightarrow \mn{goal}\},\database,\mathbb{N},\annot_\mathbb{N},\mn{goal})$ is equal to the multiplicity of the empty tuple in (1) the incognizant certain answers or (2) the bag certain answers to $q$ \wrt $\ontology$ and $(\database,\mathbb{N},\annot_\mathbb{N})$.
\end{restatable}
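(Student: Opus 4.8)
The plan is to pass to the $\omega$-continuous completion $\mathbb{N}^\infty=\mathbb{N}\cup\{\infty\}$ of the counting semiring (in which $\sqsubseteq$ is the usual order $\leq$ and $\inf$ of any set of naturals is simply its minimum), and to show that the $\mathbb{N}^\infty$-annotated models of $\ontology\cup\{q\rightarrow\mn{goal}\}$ are in tight correspondence with the incognizant (resp.\ DL-Lite$_R$ bag) solutions, with $\modannot^I(\mn{goal})$ measuring exactly the multiplicity of the empty tuple of $q$ over the associated solution.

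First I would unfold condition~2 of the model definition in $\mathbb{N}^\infty$. For a single rule $\phi(\vec x,\vec y)\rightarrow H(\vec x)$ and a frontier value $\ans=h(\vec x)$, the left-hand side $\sum_{h':h'(\vec x)=\ans}\prod_\beta \modannot^I(\beta)$ is precisely the bag multiplicity of $\ans$ as an answer to the body query $\exists\vec y\,\phi(\vec x,\vec y)$ over the annotated facts of $\inter$. Hence condition~2 says exactly that $\modannot^I(H(\ans))$ is at least this multiplicity, which is the satisfaction condition of the corresponding s-t tgd (case~1). Since this requirement must hold for every rule sharing the head predicate $H$, the combined constraint is a maximum over these rules: this is the maximum-based union underlying the incognizant semantics, and likewise the bag semantics of Nikolaou et al.\ for DL-Lite$_R$ (case~2).

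Next I would exhibit a least model. Because s-t tgds are non-recursive and the bodies of the relevant DL-Lite$_R$ Datalog rules are single atoms, the maximum-based immediate-consequence operator never multiplies multiplicities along a cycle; it is monotone and $\omega$-continuous, so its least fixpoint $(\inter_{\min},\modannot_{\min})$ over the database with source annotations $\annot_\mathbb{N}$ exists and is finite. I would then establish two monotonicity facts: (i) enlarging any annotation in a model can only enlarge $\modannot^I(\mn{goal})$, since the body query for $\mn{goal}$ is positive, so to compute the $\inf$ it suffices to consider models with source annotations fixed to $\annot_\mathbb{N}$ and minimal derived annotations, i.e.\ $(\inter_{\min},\modannot_{\min})$; (ii) for a positive (conjunctive) query the multiplicity of its empty tuple is monotone in the underlying solution, so the bag intersection (minimum) over all solutions is attained at the least solution. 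Combining (i) and (ii), $\provmodtot(\ontology\cup\{q\rightarrow\mn{goal}\},\database,\mathbb{N},\annot_\mathbb{N},\mn{goal})=\modannot_{\min}(\mn{goal})$ equals the multiplicity of the empty tuple of $q$ over $\inter_{\min}$, which is the incognizant (resp.\ bag) certain-answer multiplicity.

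The main obstacle I expect is the faithful translation in case~(2): matching the least $\mathbb{N}^\infty$-annotated model of the Datalog encoding of a DL-Lite$_R$ ontology to the canonical solution used in the definition of the bag certain answers of Nikolaou et al., and in particular arguing that their semantics coincides with the maximum-based (incognizant) reading of condition~2 rather than a sum-based one; here I would invoke the explicit form of the DL-Lite$_R$ rules together with the fact, already noted in the text, that their bag semantics agrees with the incognizant semantics. Case~(1) is comparatively direct, since the models with source annotations fixed to $\annot_\mathbb{N}$ are literally the incognizant solutions equipped with a value for $\mn{goal}$ bounded below by the body multiplicity.
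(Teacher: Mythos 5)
Your case~(1) is essentially the paper's own argument, phrased through a fixpoint instead of an explicit formula: the paper likewise reads condition~2 of the annotated-model definition as per-tgd satisfaction (sum over homomorphisms sharing a frontier within one rule, maximum across rules sharing a head), exhibits the pointwise-minimal model by annotating each target fact with that maximum, and matches it against the least incognizant solution, treating the goal rule $q\rightarrow\mn{goal}$ in a second stage exactly as you do. Your monotonicity facts (i) and (ii) are the same observations the paper uses to argue that both the $\inf$ defining $\provmodtot$ and the bag intersection defining the certain answers are attained at these minimal objects, so this half is fine.

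The genuine gap is case~(2). You propose to close it by invoking ``the fact, already noted in the text, that their bag semantics agrees with the incognizant semantics.'' That sentence in the paper is an informal observation, not a citable lemma, and it cannot be applied literally: the incognizant semantics is defined only for s-t tgds, whose body and head predicates live on disjoint schemas and which are therefore non-recursive, whereas DL-Lite$_R$ Datalog rules are recursive (e.g.\ $A(x)\rightarrow B(x)$ together with $B(x)\rightarrow A(x)$, or $R(x,y)\rightarrow R(y,x)$) and have no source/target separation, so there is no precise sense of ``agrees'' that reduces (2) to (1). What is actually required --- and what the paper's appendix proof supplies --- is a direct two-way simulation between the bag interpretations of Nikolaou et al.\ and $\mathbb{N}^\infty$-annotated models: from any bag model $\Imc$ of the TBox and bag ABox one constructs an annotated model whose goal value is $q^\Imc()$, by a case analysis over the five Datalog DL-Lite rule shapes, the crucial case being $R(x,y)\rightarrow B(x)$, where $(\exists R)^\Imc(a)=\sum_{v} R^\Imc(a,v)$ must be matched with the sum over homomorphisms sharing the frontier in condition~2; and conversely, from any annotated model $(\inter,\modannot^I)$ one builds a bag interpretation over $\domain(\database)$ whose query value is at most $\modannot^I(\mn{goal})$. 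These two constructions yield the two inequalities, and they sidestep your least-model machinery entirely (which, in the recursive setting, also carries an unproven obligation: your assertion that the max-based fixpoint terminates finitely is true for single-atom bodies but is itself a claim that needs an argument). Without this bidirectional translation, case~(2) of your proof does not go through.
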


\subsubsection*{Set-Annotated Model-based} 
We adapt the work on provenance for the description logics DL-Lite$_R$ and $\mathcal{ELH}^r$ \cite{provdllite,provEL}, where the semiring is assumed to be a $\times$-idempotent provenance semiring $\mi{Prov}(\semiringVars)$ and rules are also annotated. 
Annotated models of annotated knowledge bases are defined as set of facts annotated with sets of monomials from $\mi{Prov}(\semiringVars)$. 
Given a fact $\fact$ and a monomial $m$ over $\semiringVars$, $(\ontology,\database,\mi{Prov}(\semiringVars),\annot_\semiringVars)\models (\fact,m) $ holds when $m$ belongs to the annotation set of $\fact$ in every models of $\ontology$ and $(\database,\mi{Prov}(\semiringVars),\annot_\semiringVars)$. 

To obtain an analog provenance semantics for Datalog, we define interpretations which associate facts with (possibly infinite) \emph{sets of annotations}, and formulate the requirements for them to be models of $\ontology$ and $(\database,\semiringshort,\annot)$.

Let $\semiringshort=(K,\plus,\times,\zero,\one)$ be a commutative $\omega$-continuous semiring, $\ontology$ be a Datalog program, and $(\database,\semiringshort, \annot)$ be an annotated database. We define 
\emph{$\semiringshort$-set-annotated interpretations} as pairs $(\inter,\modannot^I)$ where $\inter$ is a set of facts, and $\modannot^I$ is a function from $\inter$ 
to the \emph{power-set of $K$}. 
We say that a $\semiringshort$-set-annotated interpretation
$(\inter,\modannot^I)$ is a model of $\ontology$ and $ (\database,\semiringshort, \annot)$, denoted by  $(\inter,\modannot^I)\models (\ontology,\database,\semiringshort, \annot)$, if 
\begin{enumerate}
	\item$\database\subseteq \inter$, and for every $\fact\in\database$, $\annot(\fact)\in\modannot^I(\fact)$;
	\item   for every $\phi(\vec{x},\vec{y} ) \rightarrow H(\vec{x})$ in $\ontology$, whenever there is a homomorphism $h:\phi(\vec{x},\vec{y})\mapsto \inter$, 
	then \mbox{$h(H(\vec{x}))\in\inter$} and if $h(\phi(\vec{x},\vec{y} ))=\beta_1\wedge\dots\wedge\beta_n$, $\{\Pi_{i=1}^n k_i \mid (k_1,\dots,k_n)\in\modannot^I(\beta_1)\times\dots\times\modannot^I(\beta_n)\}\subseteq \modannot^I(h(H(\vec{x})))$. 
\end{enumerate}

The \emph{set-annotated model-based provenance semantics} $\provmodmon$ is defined by
\[\provmodmon(\ontology, \database, \semiringshort, \annot,\fact):= \sum_{k\in \bigcap_{(\inter,\modannot^I)\models (\ontology,\database,\semiringshort, \annot)} \modannot^I(\fact)} k.\]

\subsubsection*{Connections between semantics}
Let $\sqsubseteq$ be the binary relation between provenance semantics such that $\prov\sqsubseteq\prov'$ if and only if $\prov(\ontology,\database,\semiringshort,\annot,\fact)\sqsubseteq\prov'(\ontology,\database,\semiringshort,\annot,\fact)$ for every $\ontology$, $(\database,\semiringshort,\annot)$ and $\alpha$ on which $\prov$ and $\prov'$ are well-defined. 

\begin{restatable}{proposition}{propmodelBasedRelationship}\label{prop:modelBasedRelationship}
	The following holds: $$\provmodtot\sqsubseteq \atann\text{ and }\provmodmon\sqsubseteq \atann.$$
\end{restatable}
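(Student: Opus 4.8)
The plan is to prove both inequalities with one device: for any $\ontology$, $(\database,\semiringshort,\annot)$ and $\fact$ on which $\atann$ is defined (so $\semiringshort$ is commutative $\omega$-continuous, and for $\provmodtot$ also closed under greatest lower bounds), I exhibit a single canonical (set-)annotated model built directly from derivation trees, and then exploit that $\provmodtot$ is an infimum over all models while $\provmodmon$ sums over an intersection over all models. In each case the canonical model certifies that the quantity of interest lies $\sqsubseteq$-below $\atann(\ontology,\database,\semiringshort,\annot,\fact)$.

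For $\provmodtot\sqsubseteq\atann$, let $\inter^\star$ be the set of facts entailed by $\ontology,\database$ and put $\modannot^\star(\beta):=\atann(\ontology,\database,\semiringshort,\annot,\beta)$ for $\beta\in\inter^\star$. I claim $(\inter^\star,\modannot^\star)$ is a model. Condition~1 holds because $\database\subseteq\inter^\star$ and, for $\fact\in\database$, the single-node tree labelled $\fact$ is a derivation tree, so $\annot(\fact)$ is one summand of $\atann(\ldots,\fact)$ and hence $\annot(\fact)\sqsubseteq\modannot^\star(\fact)$. For Condition~2, fix $\datrule\df\phi(\vec x,\vec y)\to H(\vec x)$ with $\phi=\beta_1\wedge\dots\wedge\beta_n$ and a homomorphism $h:\phi\to\inter^\star$; closure of $\inter^\star$ gives $h(H(\vec x))\in\inter^\star$. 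Reading the product over the conjuncts of $\phi$ with multiplicity (matching the set-annotated condition, which sums over $\prod_{i=1}^n$) and using the generalized distributivity of $\omega$-continuous commutative semirings, I rewrite $\sum_{h'}\prod_{i=1}^n\atann(\ldots,h'(\beta_i))=\sum_{h'}\sum_{(t_1,\dots,t_n)}\prod_{i=1}^n\ann(t_i)$, where $h'$ ranges over homomorphisms with $h'(\vec x)=h(\vec x)$ and each $t_i$ ranges over $\drvtr{h'(\beta_i)}$. Every such pair $(h',(t_1,\dots,t_n))$ assembles, by placing $\datrule$ at the root and attaching the $t_i$ to the children (which are in bijection with $\beta_1,\dots,\beta_n$), into a derivation tree $t$ of $h(H(\vec x))$ with $\ann(t)=\prod_{i=1}^n\ann(t_i)$; conversely every tree of $h(H(\vec x))$ whose root applies $\datrule$ arises exactly once this way. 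Hence the left-hand side equals the sum of $\ann(t)$ over this sub-family of $\drvtr{h(H(\vec x))}$, which is $\sqsubseteq\atann(\ldots,h(H(\vec x)))=\modannot^\star(h(H(\vec x)))$, the complementary family of trees providing the witness $c$. Thus $(\inter^\star,\modannot^\star)$ is a model, $\atann(\ldots,\fact)=\modannot^\star(\fact)$ belongs to the set whose infimum defines $\provmodtot(\ldots,\fact)$, and therefore $\provmodtot(\ldots,\fact)\sqsubseteq\atann(\ldots,\fact)$.

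For $\provmodmon\sqsubseteq\atann$, I use the analogous canonical set-annotated model $(\inter^\star,\modannot^{\mathrm{set}})$ with $\modannot^{\mathrm{set}}(\beta):=\{\ann(t)\mid t\in\drvtr{\beta}\}$. The same assembling observation shows it is a set-annotated model: $\annot(\fact)\in\modannot^{\mathrm{set}}(\fact)$ for $\fact\in\database$ via the single-node tree, and for each rule and each tuple $(k_1,\dots,k_n)$ with $k_i=\ann(t_i)$ the assembled tree $t$ satisfies $\prod_{i=1}^n k_i=\ann(t)\in\modannot^{\mathrm{set}}(h(H(\vec x)))$, so the product set is contained in $\modannot^{\mathrm{set}}(h(H(\vec x)))$. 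Consequently $\bigcap_{(\inter,\modannot^I)\models(\ontology,\database,\semiringshort,\annot)}\modannot^I(\fact)\subseteq\modannot^{\mathrm{set}}(\fact)=\{\ann(t)\mid t\in\drvtr{\fact}\}$. Summing, $\provmodmon(\ldots,\fact)=\sum_{k\in\bigcap}k\sqsubseteq\sum_{k\in\modannot^{\mathrm{set}}(\fact)}k\sqsubseteq\atann(\ldots,\fact)$: the first step discards the summands outside the intersection, and the second holds because summing over the \emph{set} of distinct tree-annotations is dominated by the \emph{multiset} sum $\sum_{t\in\drvtr{\fact}}\ann(t)$, each distinct value occurring at least once so that the extra copies furnish the witness $c$ (using $k\sqsubseteq k\plus k$).

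The step demanding the most care, and the main potential pitfall, is the exact correspondence between tuples of subtrees and derivation trees of the head: one must read $\prod_{\beta\in h'(\phi)}$ as a product over the conjuncts of $\phi$ with multiplicity so that it matches the children of the derivation tree (otherwise duplicated body atoms break the direction of the inequality), and one must justify both the generalized distributivity turning a finite product of infinite sums into a sum over tuples and the splitting of the resulting possibly infinite sum over a sub-family; both rely on the $\omega$-continuity of $\semiringshort$ that also guarantees $\atann$ is well-defined in the first place.
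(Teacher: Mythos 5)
Your proof is correct and takes essentially the same route as the paper's: the paper's own argument (Lemma~\ref{provmodtot-upperbound} for $\provmodtot$ and Lemmas~\ref{provmodmon-upperbound} and~\ref{provmodmon-lemma} for $\provmodmon$, in the appendix) constructs precisely your canonical interpretations --- facts annotated by $\atann(\ontology,\database,\semiringshort,\annot,\beta)$, respectively by $\{\ann(t)\mid t\in\drvtr{\beta}\}$ --- verifies that they are models by the same assemble-subtrees-at-a-rule-root plus distributivity argument, and concludes from the infimum, respectively the intersection, in the definitions of $\provmodtot$ and $\provmodmon$. If anything, you are more careful than the paper: for $\provmodmon$ you only need the inclusion $\bigcap_{(\inter,\modannot^I)\models(\ontology,\database,\semiringshort,\annot)}\modannot^I(\fact)\subseteq\{\ann(t)\mid t\in\drvtr{\fact}\}$ (the paper establishes the equality, which requires an additional induction it does not actually need for this proposition), you spell out why the sum over the \emph{set} of tree annotations is dominated by the multiset sum $\atann$, and you explicitly flag the duplicated-body-atom subtlety --- the product over the set $h'(\phi(\vec{x},\vec{y}))$ versus over conjuncts with multiplicity --- which the paper's Lemma~\ref{provmodtot-upperbound} silently elides when it identifies tuples of subtrees indexed by the distinct facts $\gamma^{h'}_i$ with derivation trees rooted at $(\fact,\datrule,h')$.
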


Next examples show that $\provmodtot$ and $\provmodmon$ are incomparable.

\begin{example}
	Let $\ontology=\{A(x)\rightarrow \mn{goal}, B(x)\rightarrow \mn{goal}\}$, $\database=\{A(a), B(a)\}$, 
	$\annot_\mathbb{N}(A(a))=2$ and $\annot_\mathbb{N}(B(a))=3$. 
	
	Annotated models of $\ontology$ and $(\database,\mathbb{N},\annot_\mathbb{N})$ are such that $\modannot^I(\mn{goal})\geq 3$, so $\provmodtot(\ontology,\database,\mathbb{N},\annot_\mathbb{N},\mn{goal})=3$.

	Set-annotated models of $\ontology$ and $(\database,\mathbb{N},\annot_\mathbb{N})$ are such that $\{2,3\}\subseteq \modannot^I(\mn{goal})$, so $\provmodmon(\ontology,\database,\mathbb{N},\annot_\mathbb{N},\mn{goal})=5$.
	
	Hence $\provmodtot(\ontology,\database,\mathbb{N},\annot_\mathbb{N},\mn{goal}){<}\provmodmon(\ontology,\database,\mathbb{N},\annot_\mathbb{N},\mn{goal})$. 
\end{example}
\begin{example}
	Let $\ontology=\{R(x,y)\rightarrow \mn{goal}\}$, $\database=\{R(a,b), R(a,c)\}$, 
	$\annot_\mathbb{N}(R(a,b))=2$ and $\annot_\mathbb{N}(R(a,c))=2$. 
	
	Annotated models of $\ontology$ and $(\database,\mathbb{N},\annot_\mathbb{N})$ are such that $\modannot^I(\mn{goal})\geq 4$, so $\provmodtot(\ontology,\database,\mathbb{N},\annot_\mathbb{N},\mn{goal})=4$.

	Set-annotated models of $\ontology$ and $(\database,\mathbb{N},\annot_\mathbb{N})$ are such that $\{2\}\subseteq \modannot^I(\mn{goal})$, so $\provmodmon(\ontology,\database,\mathbb{N},\annot_\mathbb{N},\mn{goal})=2$.
	
	Hence $\provmodtot(\ontology,\database,\mathbb{N},\annot_\mathbb{N},\mn{goal}){>}\provmodmon(\ontology,\database,\mathbb{N},\annot_\mathbb{N},\mn{goal})$. 
\end{example}

Despite of their inherently different approaches, 
$\provmodtot$, $\provmodmon$ and $\atann$ coincide on a large class of semirings. 
\begin{restatable}{proposition}{propmodelBasedSameAsTreeBasedForIdempotent}\label{prop:modelBasedSameAsTreeBasedForIdempotent}
	If $\semiringshort$ is a commutative $\plus\,$-idempotent $\omega$-continuous semiring, then for every $\ontology$, $(\database,\semiringshort, \annot)$, and~$\fact$, 
	$\provmodtot(\ontology, \database, \semiringshort, \annot,\fact)=\provmodmon(\ontology, \database, \semiringshort, \annot,\fact)= \atann(\ontology, \database, \semiringshort, \annot,\fact).$
\end{restatable}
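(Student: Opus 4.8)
The plan is to prove both equalities against $\atann$ directly: in each of the two model-based settings I would exhibit a canonical ``tree-induced'' model that witnesses the extremal value, and use a single structural-induction lemma to control every other model. The lemma I would establish first, by induction on the structure of a derivation tree $t$ of a fact $\beta$, is: for every annotated model $(\inter,\modannot^I)\models(\ontology,\database,\semiringshort,\annot)$ one has $\ann(t)\sqsubseteq\modannot^I(\beta)$, and for every set-annotated model one has $\ann(t)\in\modannot^I(\beta)$. The base case is a leaf $\beta\in\database$, where $\ann(t)=\annot(\beta)$, handled by condition~(1) of each definition. For an internal node with root rule $\datrule\df\phi(\vec x,\vec y)\to H(\vec x)$, root homomorphism $h$, and child subtrees $t_1,\dots,t_n$ deriving the body atoms $h(\phi)=\beta_1\wedge\dots\wedge\beta_n$, we have $\ann(t)=\prod_i\ann(t_i)$. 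In the set-annotated case the induction hypothesis gives $(\ann(t_1),\dots,\ann(t_n))\in\modannot^I(\beta_1)\times\dots\times\modannot^I(\beta_n)$, so condition~(2) yields $\ann(t)\in\modannot^I(\beta)$ at once. In the annotated case, the hypothesis gives $\ann(t_i)\sqsubseteq\modannot^I(\beta_i)$; monotonicity of $\times$ for $\sqsubseteq$ (valid in any semiring) gives $\prod_i\ann(t_i)\sqsubseteq\prod_i\modannot^I(\beta_i)$, which is the $h'=h$ summand of the condition-(2) sum, hence $\sqsubseteq\modannot^I(\beta)$, and transitivity closes the step.

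Next I would check that the tree-induced interpretations are genuine models. For $\provmodtot$, take $\inter^*$ the set of entailed facts and $\modannot^I(\beta)\df\atann(\ontology,\database,\semiringshort,\annot,\beta)$. Condition~(1) holds because the single-leaf tree contributes $\annot(\beta)$ as one summand of $\atann(\beta)$; condition~(2) holds because, by distributivity and $\omega$-continuity, the sum over homomorphisms $h'$ agreeing with $h$ on $\vec x$ of $\prod_{\gamma\in h'(\phi)}\atann(\gamma)$ equals the sum of $\ann(t)$ over exactly those derivation trees of $h(H(\vec x))$ whose root uses $\datrule$, a sub-sum of $\atann(h(H(\vec x)))$. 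For $\provmodmon$, take the same $\inter^*$ and $\modannot^I(\beta)\df\{\ann(t)\mid t\in\drvtr{\beta}\}$; conditions~(1) and~(2) hold since a tuple of subtrees for the body atoms assembles into a single derivation tree of the head. Combining each canonical model with the lemma then pins down the two quantities: for $\provmodtot$, the lemma shows $\atann(\fact)$ is a lower bound of $\{\modannot^I(\fact)\}$ while the canonical model attains it, so it is the minimum and equals $\provmodtot(\fact)$ (this also sidesteps any assumption that arbitrary infima exist); for $\provmodmon$, the lemma gives $\{\ann(t)\mid t\in\drvtrsig\}\subseteq\bigcap_{I}\modannot^I(\fact)$ while the canonical model gives the reverse inclusion, so $\bigcap_{I}\modannot^I(\fact)=\{\ann(t)\mid t\in\drvtrsig\}$.

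The final, and only genuinely idempotency-dependent, step reconciles a \emph{sum over trees} with a \emph{bound/sum over the set of tree-values}. For $\provmodtot$, to pass from ``$\ann(t)\sqsubseteq\modannot^I(\fact)$ for each $t$'' to ``$\atann(\fact)=\sum_t\ann(t)\sqsubseteq\modannot^I(\fact)$'', I would use that in a $\plus$-idempotent semiring $a\sqsubseteq b$ is equivalent to $a\plus b=b$, and that $\omega$-continuity propagates this through the countable sum via $\bigl(\sum_t\ann(t)\bigr)\plus\modannot^I(\fact)=\sup_n\bigl(\sum_{t\le n}\ann(t)\plus\modannot^I(\fact)\bigr)=\modannot^I(\fact)$. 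For $\provmodmon$, the same idempotency collapses multiplicities, so $\sum_{k\in\{\ann(t)\mid t\in\drvtrsig\}}k=\sum_{t\in\drvtrsig}\ann(t)=\atann(\fact)$.

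This is precisely where the hypothesis is essential, and it is the main obstacle: without $\plus$-idempotency a sum of (possibly infinitely many) tree-annotations can strictly exceed every individual bound — as with the bag semiring $\mathbb{N}^\infty$, where $\atann$ may be $\infty$ while a model assigns a finite value — so the real work is taming these countable sums by combining the idempotent characterization of $\sqsubseteq$ with the $\omega$-continuity axioms $a\plus\sup=\sup(a\plus\cdot)$ and $a\times\sup=\sup(a\times\cdot)$. Antisymmetry of $\sqsubseteq$ (a partial order on $\omega$-continuous semirings) then delivers the stated chain of equalities; alternatively, the directions $\provmodtot\sqsubseteq\atann$ and $\provmodmon\sqsubseteq\atann$ are already supplied by Proposition~\ref{prop:modelBasedRelationship}, so only the reverse inequalities established above are strictly needed.
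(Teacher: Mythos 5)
Your proposal is correct and follows essentially the same route as the paper's proof: the same structural-induction lemma on derivation trees (giving $\ann(t)\sqsubseteq\modannot^I(\beta)$ for annotated models and $\ann(t)\in\modannot^I(\beta)$ for set-annotated ones), the same two canonical tree-induced models serving as witnesses, and the same use of $\plus$-idempotency together with $\omega$-continuity to reconcile the sum over trees with the per-model bounds. The only difference is bookkeeping in the $\provmodtot$ case: you show $\atann\sqsubseteq\modannot^I(\fact)$ for every model and conclude the infimum is attained as a minimum, whereas the paper sums the inequalities $\ann(t)\sqsubseteq\provmodtot$ directly and closes with antisymmetry of $\sqsubseteq$; these are the same argument arranged differently.
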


Additional insights on the connection between definitions can be gained by considering the provenance semiring  $\mathbb{N}^\infty\llbracket\semiringVars\rrbracket$: the monomials with non-zero coefficients are the same with all semantics but their coefficients may differ ($\atann$ leading to the highest coefficients by Proposition~\ref{prop:modelBasedRelationship}). 

\begin{restatable}{proposition}{propmodelBasedVSTreeBased}\label{prop:modelBasedVSTreeBased}
	Let $\annot_\semiringVars$ be an injective function from $\database$ to $\semiringVars$. 
	\begin{itemize}
		\item A monomial occurs in $\atann(\ontology, \database, \mathbb{N}^\infty\llbracket\semiringVars\rrbracket, \annot_\semiringVars,\fact)$ if and only if it occurs in $\provmodtot(\ontology, \database, \mathbb{N}^\infty\llbracket\semiringVars\rrbracket, \annot_\semiringVars,\fact)$. 
		\item $\provmodmon(\ontology, \database, \mathbb{N}^\infty\llbracket\semiringVars\rrbracket, \annot_\semiringVars,\fact)$ is obtained by setting all non-zero coefficients to $1$ in $\atann(\ontology, \database, \mathbb{N}^\infty\llbracket\semiringVars\rrbracket, \annot_\semiringVars,\fact)$. 
	\end{itemize}
\end{restatable}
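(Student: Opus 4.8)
The plan is to reduce both statements to one combinatorial fact about the monomials carried by derivation trees. Since $\annot_\semiringVars$ is injective, for every derivation tree $t$ the annotation $\ann(t)=\prod_{v\text{ leaf of }t}\annot_\semiringVars(v)$ is a single monomial over $\semiringVars$, so the monomials occurring in $\atann(\ontology,\database,\mathbb{N}^\infty\llbracket\semiringVars\rrbracket,\annot_\semiringVars,\fact)=\sum_{t\in\drvtr{\fact}}\ann(t)$ are exactly the elements of $M(\fact):=\{\ann(t)\mid t\in\drvtr{\fact}\}$. I will also use that on $\mathbb{N}^\infty\llbracket\semiringVars\rrbracket$ the order $\sqsubseteq$ is the coefficient-wise order inherited from $\mathbb{N}^\infty$, so arbitrary greatest lower bounds exist and are computed coefficient-wise as minima; in particular $\provmodtot$ is well defined here, and a monomial occurs in $\provmodtot(\dots,\fact)$ iff it has coefficient at least $1$ in $\modannot^I(\fact)$ for \emph{every} annotated model $(\inter,\modannot^I)$. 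Throughout I read the products in condition~2 as ranging over the $n$ body atoms \emph{with multiplicity} (matching the bijection between tree children and atoms of $\phi$ in the derivation-tree definition).

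The technical core is the following claim, proved by structural induction on derivation trees: for every $t\in\drvtr{\fact}$ and every (set-)annotated model $(\inter,\modannot^I)$ of $\ontology$ and $(\database,\mathbb{N}^\infty\llbracket\semiringVars\rrbracket,\annot_\semiringVars)$, the monomial $\ann(t)$ has coefficient at least $1$ in $\modannot^I(\fact)$ (respectively, belongs to the set $\modannot^I(\fact)$). The base case is a single leaf $\fact\in\database$ with $\ann(t)=\annot_\semiringVars(\fact)$, handled directly by condition~1. For the step, let the root be $(\fact,\datrule,h)$ with body atoms $h(\phi)=\beta_1\wedge\dots\wedge\beta_n$ and children $t_1,\dots,t_n$ deriving $\beta_1,\dots,\beta_n$, so $\ann(t)=\prod_i\ann(t_i)$. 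By the induction hypothesis each $\ann(t_i)$ is present in $\modannot^I(\beta_i)$; since coefficients in $\mathbb{N}^\infty$ are nonnegative, the product $\prod_i\modannot^I(\beta_i)$ has coefficient at least $1$ on $\ann(t)$, contributed by the factorization $\ann(t)=\prod_i\ann(t_i)$. This product is the summand $h'=h$ of the sum over homomorphisms in condition~2, and the remaining summands only add nonnegative coefficients, so condition~2 forces the same lower bound on $\modannot^I(\fact)$. The set-annotated case is identical, with ``coefficient $\ge 1$'' replaced by ``membership'' and the sum replaced by the union implicit in the set of products.

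Bullet~1 follows at once. By Proposition~\ref{prop:modelBasedRelationship} we have $\provmodtot\sqsubseteq\atann$, hence coefficient-wise $\le$, so the support of $\provmodtot(\dots,\fact)$ is contained in $M(\fact)$. Conversely the claim shows that every $m\in M(\fact)$ has coefficient at least $1$ in $\modannot^I(\fact)$ for all models, hence coefficient at least $1$ in the infimum $\provmodtot(\dots,\fact)$, giving the reverse inclusion. Note this controls only supports: coefficients may genuinely shrink, consistent with the remark following Proposition~\ref{prop:modelBasedRelationship}. For Bullet~2 I first show $\bigcap_{(\inter,\modannot^I)\models(\dots)}\modannot^I(\fact)=M(\fact)$. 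The inclusion $M(\fact)\subseteq\modannot^I(\fact)$ for every model is the set-annotated form of the claim, so $M(\fact)$ lies in the intersection. For the converse I exhibit one model realizing $M$: take $\inter$ to be the least Datalog model of $\ontology,\database$ and set $\modannot^I(\beta):=M(\beta)$ for $\beta\in\inter$. Condition~1 holds since the one-leaf tree gives $\annot_\semiringVars(\beta)\in M(\beta)$, and condition~2 holds because any product $\prod_i\ann(t_i)$ with $t_i\in\drvtr{\beta_i}$ equals $\ann(t)$ for the tree $t$ attaching the $t_i$ under a root $(h(H(\vec{x})),\datrule,h)$, hence lies in $M(h(H(\vec{x})))$. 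This model contributes exactly $M(\fact)$, so the intersection equals $M(\fact)$ and therefore $\provmodmon(\dots,\fact)=\sum_{m\in M(\fact)}m$: every monomial of the common support $M(\fact)$ receives coefficient exactly $1$, which is precisely $\atann(\dots,\fact)$ with all non-zero coefficients reset to $1$.

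The main obstacle I anticipate is not the induction but the surrounding bookkeeping: making the coefficient-wise descriptions of $\sqsubseteq$, of products, and of the infimum on $\mathbb{N}^\infty\llbracket\semiringVars\rrbracket$ fully precise, fixing the multiset reading of the body-product so that the induction matches $\ann(t)=\prod_{v\text{ leaf}}\annot_\semiringVars(v)$ even when a body atom is used repeatedly or several homomorphisms share a head image, and verifying that the candidate model $\modannot^I(\beta)=M(\beta)$ genuinely satisfies condition~2 for \emph{all} rule firings, so that the intersection collapses to exactly $M(\fact)$ rather than to a strictly smaller set.
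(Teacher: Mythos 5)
Your proof is correct and takes essentially the same route as the paper: the paper's argument likewise combines (i) the observation that annotating every entailed fact with its all-tree annotation $\atann$ (resp.\ with the set $\{\ann(t)\mid t\in\drvtr{\fact}\}$) yields a model — giving $\provmodtot\sqsubseteq\atann$ and pinning down the intersection for $\provmodmon$ — with (ii) an induction on derivation trees showing $\ann(t)\sqsubseteq\modannot^I(\fact)$ (resp.\ $\ann(t)\in\modannot^I(\fact)$) in every model, which are exactly your two ingredients. Your explicit multiset reading of the rule-body product (one factor per body atom, matching the bijection with tree children) is also the reading the paper's lemmas implicitly rely on, so no discrepancy arises.
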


An example where $\atann$ and $\provmodtot$ or $\provmodmon$ differ on $\mathbb{N}^\infty\llbracket\semiringVars\rrbracket$ is the following: 
Let $\ontology$ contain $A(x)\rightarrow B(x)$, $B(x)\rightarrow A(x)$, $\database=\{A(a)\}$ and \mbox{$\annot_\semiringVars(A(a))=x$.} 
Since there are infinitely many derivation trees for $A(a)$, \mbox{$\atann(\ontology,\database,\semiringshort,\annot_\semiringVars,A(a))=\infty x$} while for $\prov\in\{\provmodtot,\provmodmon\}$, $\prov(\ontology,\database,\mathbb{N}^\infty\llbracket\semiringVars\rrbracket,\annot_\semiringVars,A(a))=x$, as $\{A(a),B(a)\}$ with both facts annotated with $x$ (resp.\ $\{x\}$) is a (resp.\ set-)annotated model for $\ontology$ and $(\database,\mathbb{N}^\infty\llbracket\semiringVars\rrbracket,\annot_\semiringVars)$. 

Note that $\provmodtot$ and $\provmodmon$ can still lead to infinite provenance expressions: Let $\ontology=\{A(x)\wedge B(x)\rightarrow A(x)\}$, $\database=\{A(a), B(a)\}$, $\annot_\semiringVars(A(a))=x$ and $\annot_\semiringVars(B(a))=y$. For $\prov\in\{\provmodtot,\provmodmon\}$, $\prov(\ontology,\database,\mathbb{N}^\infty\llbracket\semiringVars\rrbracket,\annot_\semiringVars,A(a))=x + xy+xy^2+xy^3+\dots$.

\subsection{Execution- and Tree-Based Semantics}

We saw that when annotations are present there is more than one way to define a model-based semantics for Datalog and that it differs from the all-tree semantics. 
We now investigate definitions based on classical Datalog evaluation algorithms. 

We extend the notion of \emph{immediate consequence operator}
describing the application of rules onto facts, with the computation of annotation.  To this end, we introduce the
\emph{annotation aware {immediate} consequence operator} $\consop$. 
Applying $\consop$ on a set of annotated facts $(\inter,\semiringshort,\annot)$ results in $(\inter_{\consop},\semiringshort,\annot_{\consop})$ 
where $\inter_{\consop}$ is the result of applying the immediate consequence operator to $\ontology$ and $\inter$, and $\annot_{\consop}$ annotates facts in $\inter_{\consop}$ with the relational provenance (over $(\inter,\semiringshort,\annot)$) of the UCQ formed by the bodies of the rules that create them.  
Formally,
\begin{align*}
	&	{\inter_{\consop} }
	\df\{H(\vec{a}) \mid \inter\models \exists \vec{y}\,\phi(\vec{a},\vec{y})\,,\, \phi(\vec{x},\vec{y})\rightarrow H(\vec{x})\in\ontology  \}\\
	&\annot_{\consop}(H(\vec{a}))\df
	\sum_{\substack{h(\vec{x})=\vec{a},\,
			\inter\models h( \phi(\vec{x},\vec{y})) \\
			\phi(\vec{x},\vec{y})\rightarrow H(\vec{x})\in\ontology}} \,\,
	\prod_{\beta\in h( \phi(\vec{x},\vec{y}))} \annot(\beta)
\end{align*}

We define 
a union operator 
for annotated databases (over the same semiring): $(\inter, \semiringshort,\annot)\cup (\inter',\semiringshort,\annot') \df (\inter\cup\inter', \semiringshort, \annot'')$ where $\annot''(\alpha) \df \annot(\alpha) \plus \annot'(\alpha)$ 
where we slightly abuse notation by setting $\annot(\alpha)=0$ if $\alpha\notin\inter$, {and $\annot'(\alpha)=0$ if $\alpha\notin\inter'$}.

\subsubsection*{Naive Evaluation /\ All Trees} 

In the naive evaluation algorithm, all rules are applied in parallel until a fixpoint is reached. 
The `annotation aware' version of it is as follows: 
We set $\naive^{0}(\ontology,\database,\semiringshort,\annot) \df (\database,\semiringshort,\annot)$, and define inductively $\naive^{i+1}(\ontology,\database,\semiringshort,\annot) \df \consop(\naive^{i}(\ontology,\database,\semiringshort,\annot))\cup (\database,\semiringshort,\annot)$.
Note that the subscript $\mn{n}$ of $\naive$ is an abbreviation for `naive', and the superscript $i$ indicates how many times $\consop$ was applied.

Let $(\naive^{i},\semiringshort,\naiveannot^i) $ denote $  \naive^{i}(\ontology,\database,\semiringshort,\annot)$. We say that $\naive^{i}(\ontology,\database,\semiringshort,\annot)$ \emph{converges} if there is some $k$ such that $\naive^{\ell} = \naive^{k}$ for every $\ell \ge k$, and $\sup (\naiveannot^i (\fact)) $ exists for every $\fact \in \naive^{k}$.
\begin{restatable}{proposition}{propnaive}\label{prop:naive}
	For every $\ontology,\database,\semiringshort,\annot$,
	if ~ $\semiringshort$ is $\omega$-continuous then 
	$\naive^{i}(\ontology,\database,\semiringshort,\annot) $ converges.
\end{restatable}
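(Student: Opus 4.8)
The plan is to establish the two requirements of convergence separately: first that the underlying sequence of fact sets $\naive^0\subseteq\naive^1\subseteq\cdots$ stabilizes, and second that for each surviving fact the associated annotations form an $\omega$-chain whose supremum exists. Both will follow once I show that the one-step operator driving the naive evaluation is monotone for a suitable order on annotated databases. Concretely I would write $(\inter,\semiringshort,\annot)\preceq(\inter',\semiringshort,\annot')$ whenever $\inter\subseteq\inter'$ and $\annot(\fact)\sqsubseteq\annot'(\fact)$ for every $\fact\in\inter$, and set $F(\inter,\semiringshort,\annot)\df\consop(\inter,\semiringshort,\annot)\cup(\database,\semiringshort,\annot)$, so that $\naive^{i+1}=F(\naive^i)$ (abusing notation to let $\naive^i$ denote the full annotated database $(\naive^i,\semiringshort,\naiveannot^i)$).

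The crux is proving $F$ is $\preceq$-monotone, for which I first record that the natural order $\sqsubseteq$ is compatible with the semiring operations: if $a\sqsubseteq b$, say $b=a\plus c$, then $a\plus d\sqsubseteq b\plus d$ and $a\times d\sqsubseteq b\times d$, witnessed by $c$ and $c\times d$ (using distributivity and commutativity); by transitivity this extends factorwise to finite products and summandwise to finite sums. Granting this, assume $(\inter,\annot)\preceq(\inter',\annot')$. On facts, $\inter\subseteq\inter'$ means every body homomorphism into $\inter$ is also one into $\inter'$, so $\inter_{\consop}\subseteq\inter'_{\consop}$. On annotations, fix $\fact=H(\vec a)\in\inter_{\consop}$; its $\consop$-annotation is a finite sum over matching homomorphisms $h$ of products $\prod_{\beta\in h(\phi)}\annot(\beta)$. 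Each such product is $\sqsubseteq$ the corresponding product for $\annot'$, since $\annot(\beta)\sqsubseteq\annot'(\beta)$ for every $\beta\in\inter$, and the sum over $(\inter',\annot')$ ranges over a superset of matchings; combining summandwise growth with the extra summands (each $\sqsupseteq\kzero$) yields that the $\consop$-annotation over $(\inter,\annot)$ is $\sqsubseteq$ that over $(\inter',\annot')$. The final union with the fixed $(\database,\semiringshort,\annot)$ adds the same element to both sides and so preserves $\sqsubseteq$. Hence $F$ is monotone.

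For the base case $\naive^0\preceq\naive^1$, note that for $\fact\in\database$ one has $\naiveannot^1(\fact)=\annot(\fact)\plus c$ with $c$ the (possibly $\kzero$) $\consop$-annotation, so $\naiveannot^0(\fact)=\annot(\fact)\sqsubseteq\naiveannot^1(\fact)$, while $\database\subseteq\naive^1$ by the union; monotonicity and induction then give $\naive^i\preceq\naive^{i+1}$ for all $i$. For the set component, Datalog rules introduce no new constants, so every $\naive^i$ lies in the finite Herbrand base over the schema and active domain of $\ontology,\database$, and an increasing sequence of subsets of a finite set stabilizes, yielding $k$ with $\naive^\ell=\naive^k$ for $\ell\ge k$. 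For the annotation component, each $\fact\in\naive^k$ gives a non-decreasing $\omega$-chain $(\naiveannot^i(\fact))_{i\ge k}$ under $\sqsubseteq$, so by $\omega$-continuity of $\semiringshort$ its least upper bound $\sup(\naiveannot^i(\fact))$ exists. Together these yield convergence. I expect the monotonicity of the annotation part of $\consop$ to be the main obstacle, since one must simultaneously track that each summand grows and that fresh matching homomorphisms may appear, both handled cleanly only after verifying compatibility of $\sqsubseteq$ with $\plus$ and $\times$; the set-stabilization step is the standard Datalog termination argument.
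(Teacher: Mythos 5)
Your proof is correct, but it takes a genuinely different route from the paper's. The paper first proves a combinatorial lemma (by induction on $i$) characterizing the iterates: $\naiveannot^i(\fact)$ equals the sum of $\ann(t)$ over all derivation trees $t$ of $\fact$ of depth at most $i$. Stabilization of the fact sets is then imported from the standard convergence of naive Datalog evaluation, and convergence of the annotations follows because the $\naiveannot^i(\fact)$ are the partial sums of a series of tree annotations, hence an $\omega$-chain whose least upper bound exists by $\omega$-continuity. You instead argue order-theoretically, Kleene-style: you equip annotated databases with the pointwise order induced by $\sqsubseteq$, prove that the step operator $\consop(\cdot)\cup(\database,\semiringshort,\annot)$ is monotone for it (via the compatibility of $\sqsubseteq$ with $\plus$ and $\times$, the inclusion of matching homomorphisms, and the fact that extra summands only increase a sum), and conclude that $(\naiveannot^i(\fact))_i$ is an $\omega$-chain directly, with set stabilization obtained from finiteness of the Herbrand base over the active domain. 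Both arguments are sound. Yours is self-contained, never mentions derivation trees, and the monotonicity of $\consop$ is a reusable fact in its own right. The paper's detour through trees is not wasted work, however: the same characterization lemma is exactly what is needed immediately afterwards to prove $\provnaive=\atann$ (Proposition~\ref{prop:executiontreeconnection}), so the paper gets convergence essentially as a by-product and, unlike your argument, also identifies the limit $\naiveannot^\infty(\fact)$ explicitly as the all-tree sum $\atann(\ontology,\database,\semiringshort,\annot,\fact)$.
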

\noindent
In this case, we define $\naive^\infty \df \naive^k$ and $\naiveannot^\infty \df \sup_{i\rightarrow \infty} \naiveannot^i$. 
\noindent
The \emph{na{i}ve execution provenance semantics} $\provnaive$ is defined by
\[\provnaive(\ontology, \database, \semiringshort, \annot,\fact):=
\left\{\begin{matrix}
	\naiveannot^\infty(\alpha) 	&\alpha\in\naive^\infty \\ 
	0	& \text{otherwise}
\end{matrix}\right.
\]
and is equivalent to the all-tree semantics.
\begin{restatable}{proposition}{propexecutiontreeconnection}\label{prop:executiontreeconnection} It holds that 
	$\provnaive=\atann $.
\end{restatable}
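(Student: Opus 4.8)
The plan is to show that the finite naive stages compute exactly the sums of annotations of derivation trees of bounded height, and then to pass to the limit using $\omega$-continuity. Throughout, fix $\ontology$, $(\database,\semiringshort,\annot)$ with $\semiringshort$ $\omega$-continuous, and fact $\fact$, and write $T_i(\fact)$ for the set of derivation trees of $\fact$ \wrt $\ontology$ and $\database$ whose height is at most $i$ (leaves having height $0$). Since the database and program are finite, there are finitely many trees of each height and hence $\drvtrsig$ is countable, and since every derivation tree is finite, $\bigcup_{i\in\N} T_i(\fact)=\drvtrsig$.

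First I would prove by induction on $i$ the key identity
\[ \naiveannot^i(\fact)=\sum_{t\in T_i(\fact)}\ann(t), \]
using the convention that the annotation of a fact absent from $\naive^i$ is $0$ (an empty sum). For $i=0$ this is immediate: $T_0(\fact)$ is nonempty exactly when $\fact\in\database$, in which case it is the singleton one-node tree of annotation $\annot(\fact)=\naiveannot^0(\fact)$. For the inductive step, unfolding $\naive^{i+1}=\consop(\naive^i)\cup(\database,\semiringshort,\annot)$ and the definition of $\consop$ gives
\[ \naiveannot^{i+1}(\fact)=\annot(\fact)+\sum_{\substack{\phi(\vec x,\vec y)\rightarrow H(\vec x)\in\ontology,\ h \\ h(H(\vec x))=\fact,\ \naive^i\models h(\phi)}}\ \prod_{\beta\in h(\phi)}\naiveannot^i(\beta), \]
where the first summand is present only when $\fact\in\database$. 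The core observation is that a derivation tree of height at most $i+1$ for $\fact$ is either a single leaf (possible iff $\fact\in\database$), or is determined by a choice of rule $\datrule\colon\phi\rightarrow H(\vec x)$, a homomorphism $h$ with $h(H(\vec x))=\fact$, and, for each body atom $q(\vec z)\in\phi$, a derivation tree of height at most $i$ for $h(q(\vec z))$. Because $\ann(t)$ is the product of leaf annotations, it factorizes along the children, so distributivity of $\ktimes$ over $\kplus$ yields $\sum_{t}\ann(t)=\prod_{\beta\in h(\phi)}\bigl(\sum_{t'\in T_i(\beta)}\ann(t')\bigr)$ for the trees sharing a fixed root; applying the induction hypothesis to each factor turns this into $\prod_{\beta\in h(\phi)}\naiveannot^i(\beta)$, and summing over all roots reproduces the displayed expression for $\naiveannot^{i+1}(\fact)$. (When a body atom has no tree of height at most $i$, its factor is $0$, matching the constraint $\naive^i\models h(\phi)$.)

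Next I would pass to the limit. Since $T_i(\fact)\subseteq T_{i+1}(\fact)$, the identity gives $\naiveannot^i(\fact)\sqsubseteq\naiveannot^{i+1}(\fact)$, so the values form an $\omega$-chain; by Proposition~\ref{prop:naive} the iteration converges and $\provnaive(\ontology,\database,\semiringshort,\annot,\fact)=\naiveannot^\infty(\fact)=\sup_i\naiveannot^i(\fact)=\sup_i\sum_{t\in T_i(\fact)}\ann(t)$. It remains to identify this with $\atann(\ontology,\database,\semiringshort,\annot,\fact)=\sum_{t\in\drvtrsig}\ann(t)$. In an $\omega$-continuous semiring the infinite sum is by definition the supremum of its finite partial sums; since any finite set of derivation trees has a maximal height and is thus contained in some $T_i(\fact)$, the chain $(T_i(\fact))_i$ is cofinal in the finite subsets of $\drvtrsig$, so the two suprema coincide.

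The main obstacle I anticipate is not the limit argument but making the height-based decomposition fully rigorous: one must verify that sending a tree to its root triple together with the family of child subtrees is a genuine bijection onto (rule, homomorphism, tuple-of-subtrees) data — correctly accounting for repeated body atoms, which produce repeated factors and hence the right multiplicities, and for the interplay between the bijection $f$ and the homomorphism $h$ in the definition of derivation tree — and that the leaf-product $\ann$ factorizes exactly along this decomposition so that distributivity applies. The only hypothesis used beyond the commutative-semiring axioms is $\omega$-continuity, which is precisely what guarantees both convergence (Proposition~\ref{prop:naive}) and that $\sum_{t\in\drvtrsig}\ann(t)$ is well defined and equal to the relevant supremum.
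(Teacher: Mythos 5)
Your proof is correct and follows essentially the same route as the paper: the paper's key lemma establishes, by induction on $i$, that $\naiveannot^i(\fact)$ equals the sum of $\ann(t)$ over derivation trees of depth at most $i$ (your identity with $T_i(\fact)$), and then concludes by passing to the limit using $\omega$-continuity, exactly as you do. Your write-up is in fact more explicit than the paper's rather terse argument — in particular the distributivity step in the induction and the cofinality of the bounded-depth families among finite subsets of $\drvtrsig$ are spelled out rather than left implicit — but the decomposition and the limit argument are the same.
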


\subsubsection*{Optimized Naive Evaluation /\  Minimal Depth Trees}
We consider an optimized version of the naive algorithm 
that stops as soon as the desired fact
is derived. 
We define the `annotation aware' version of this algorithm by 
$\opti^{0}(\ontology,\database,\semiringshort,\annot)\df(\database,\semiringshort,\annot)$, and{
	\begin{multline*}
		\opti^{i+1}(\ontology,\database,\semiringshort,\annot) \df\\
		\left\{\begin{matrix}
			\consop(\opti^{i}(\ontology,\database,\semiringshort,\annot))\cup (\database,\semiringshort,\annot) &\fact\notin \opti^{i} \\ 
			\opti^{i}(\ontology,\database,\semiringshort,\annot)	& \text{otherwise}
		\end{matrix}\right.
	\end{multline*}
	where $\opti^{i}$ is such that   $ \opti^{i}(\ontology,\database,\semiringshort,\annot) \df(\opti^{i},\semiringshort,\optiannot^i)$.}

\begin{restatable}{proposition}{propopti}\label{prop:opti}
	For every $\ontology,\database,\semiringshort,\annot$,
	{and $\fact$ such that $\ontology, \database\models \fact$	,}	there exists 
	$k\geq 0$ such that $\opti^{k}(\ontology,\database,\semiringshort,\annot)  
	= \opti^{\ell}(\ontology,\database,\semiringshort,\annot) $ for every 
	$\ell\ge k$. 
\end{restatable}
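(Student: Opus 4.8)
The plan is to exploit the stopping condition built into the definition of $\opti^{i}$: once the target fact $\fact$ belongs to $\opti^{i}$, the whole evaluation freezes. Concretely, the first step is to observe that if $\fact\in\opti^{k}$ for some $k$, then the definition immediately gives $\opti^{k+1}(\ontology,\database,\semiringshort,\annot)=\opti^{k}(\ontology,\database,\semiringshort,\annot)$, and a trivial induction on $\ell$ shows $\opti^{\ell}=\opti^{k}$ for every $\ell\ge k$. Note that here the annotated structure is copied verbatim at each subsequent step, so no convergence of annotations is needed; this is precisely what makes the statement hold for arbitrary semirings, in contrast with the $\omega$-continuity hypothesis required in Proposition~\ref{prop:naive}. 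The whole problem thus reduces to proving that $\fact$ appears in $\opti^{i}$ for some finite $i$.

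Next I would compare the optimized sequence with the naive one at the level of the underlying sets of facts. As long as $\fact$ has not yet been derived, each optimized step performs exactly a naive step: if $\fact\notin\opti^{0},\dots,\opti^{i}$ then $\opti^{i+1}=\consop(\opti^{i})\cup(\database,\semiringshort,\annot)$, which is the defining recurrence of $\naive^{i+1}$. Hence, by induction, $\opti^{m}$ and $\naive^{m}(\ontology,\database,\semiringshort,\annot)$ carry the same set of facts for every index $m$ up to and including the first one at which $\fact$ occurs.

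Then I would invoke the standard least-fixpoint characterization of Datalog semantics recalled in the Preliminaries. Since rule heads introduce no fresh terms, the set of facts over the schema $\schema(\ontology)$ and the finite active domain of $\database$ is finite; the fact component of $\naive^{m}$ forms an increasing chain, because the immediate consequence operator on sets of facts is monotone with respect to inclusion, and therefore stabilizes after finitely many steps at the least fixpoint, which is exactly the set of facts entailed by $\ontology$ and $\database$. As $\ontology,\database\models\fact$, this least fixpoint contains $\fact$, so there is a least index $j$ with $\fact\in\naive^{j}$. By the previous step $\opti^{m}$ and $\naive^{m}$ agree on facts for $m\le j$, whence $\fact\in\opti^{j}$; taking $k=j$ and applying the first observation yields $\opti^{k}=\opti^{\ell}$ for all $\ell\ge k$.

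The only genuinely delicate point is the index bookkeeping in the second step: one must check that the stopping test ``$\fact\notin\opti^{i}$'' is evaluated before the step is taken, so that the optimized and naive sequences really coincide up to the exact index at which $\fact$ first appears, and that the freeze then takes over without a gap. Everything else is routine once the finiteness of the Herbrand base and the monotonicity of $\consop$ are in place; in particular, no hypothesis on the semiring (such as $\omega$-continuity) is required, since the argument only tracks which facts are present and never the values of their annotations.
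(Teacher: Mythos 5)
Your proof is correct and follows essentially the same route as the paper's: the set of facts stabilizes by the convergence of classical Datalog evaluation, and the annotations stabilize because the definition of $\opti^{i}$ freezes the whole annotated database verbatim once $\fact$ has been derived. Your writeup is in fact more careful than the paper's own proof, which mistakenly attributes the stabilization of annotations to the operator $\diffop$ of the seminaive evaluation and never explicitly uses the hypothesis $\ontology,\database\models\fact$, whereas you make explicit both the coincidence with the naive sequence up to the first appearance of $\fact$ and the role of the entailment hypothesis (together with finiteness of the Herbrand base) in guaranteeing that this appearance happens at some finite index.
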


With $k$ as provided by Proposition \ref{prop:opti}, we define the \emph{optimized execution provenance semantics} $\provopti$ by:
\[\provopti(\ontology, \database, \semiringshort, \annot,\fact):=
\left\{\begin{matrix}
	\optiannot^k(\alpha) 	&\alpha\in\opti^k \\ 
	0	& \text{otherwise}
\end{matrix}\right.\] 
We show that an equivalent tree-based semantics can be obtained 
by considering only minimal depth trees for the desired fact. 
This approach 
has been considered useful, for example to present a `small proof' 
for debugging \cite{DBLP:journals/toplas/ZhaoSS20}. 
Formally, let $\dep(t)$ denote the depth of tree $t$. We say that $t\in \drvtrsig$ is \emph{of minimal depth} if for every $t'\in  \drvtrsig$ it holds that $\dep(t)\le \dep(t')$. 
The \emph{minimal depth tree provenance semantics} $\mdtann$ is defined by
\[
\mdtann(\ontology,\database,\semiringshort,\annot,\fact):= 
\sum_{\substack{ t\in  \drvtrsig \\\text{ is of minimal depth}}}   \ann(t)
\]
and is equivalent to the optimized naive execution.
\begin{restatable}{proposition}{propexecutiontreeconnectionmdt}\label{prop:executiontreeconnection_mdt}
	It holds that
	$
	\provopti=\mdtann.
	$
\end{restatable}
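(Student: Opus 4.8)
The plan is to reduce the whole statement to a single depth-refined correspondence between the naive evaluation and derivation trees, and then to read off both $\provopti$ and $\mdtann$ from it. The key lemma I would establish is that for every fact $\beta$ and every $i\ge 0$,
\[
\naiveannot^i(\beta) \;=\; \sum_{\substack{t\in\drvtr{\beta}\\ \dep(t)\le i}} \ann(t),
\]
with the convention that the right-hand side is $0$ (and $\beta\notin\naive^i$) when no such tree exists. I would prove this by induction on $i$. The base case $i=0$ is immediate, since $\naive^0=(\database,\semiringshort,\annot)$ and the only derivation trees of depth $0$ are single leaves labeled by database facts. For the inductive step I would expand $\annot_{\consop}$ on $\naive^i$: by the induction hypothesis and distributivity of $\ktimes$ over $\kplus$, each summand $\prod_{\beta\in h(\phi)}\naiveannot^i(\beta)$ becomes a sum over tuples of subtrees of depth $\le i$, one per body atom, and gluing such a tuple under a fresh root labeled by the rule application yields exactly the derivation trees of $\beta$ whose depth lies between $1$ and $i+1$; the union with $(\database,\semiringshort,\annot)$ then restores the depth-$0$ leaf when $\beta\in\database$. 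Since $i$ is finite, all sums and products are finite, so no assumption on $\semiringshort$ is needed. This is the same bookkeeping already underlying Proposition~\ref{prop:executiontreeconnection}, merely refined to track depth.

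Next I would relate the optimized run to the naive one. Let $m$ be the least index at which $\fact$ occurs in the naive evaluation; by the lemma this is precisely the minimal depth $d^{*}$ of a derivation tree for $\fact$, because $\fact\in\naive^i$ holds iff there is a tree of depth $\le i$. A straightforward induction then shows $\opti^i=\naive^i$ and $\optiannot^i=\naiveannot^i$ for all $i\le m$: the guard $\fact\notin\opti^{i}$ holds for every $i<m$, so the optimized recursion coincides with the naive one up to step $m$, and at step $m$ the fact appears, after which $\opti^\ell=\opti^m$ for all $\ell\ge m$. Hence the stabilization index provided by Proposition~\ref{prop:opti} is exactly $k=m$.

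Finally I would combine the two observations:
\[
\provopti(\ontology,\database,\semiringshort,\annot,\fact)=\optiannot^{m}(\fact)=\naiveannot^{d^{*}}(\fact)=\sum_{\substack{t\in\drvtrsig\\ \dep(t)\le d^{*}}}\ann(t).
\]
Since $d^{*}$ is minimal, there are no trees of depth strictly below $d^{*}$, so this sum ranges exactly over the minimal depth trees and equals $\mdtann(\ontology,\database,\semiringshort,\annot,\fact)$. The non-entailed case is handled separately: then $\drvtrsig=\emptyset$ gives $\mdtann=0$, while $\fact$ never enters the evaluation, so $\provopti=0$ as well.

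I expect the main obstacle to be the inductive lemma, and specifically making precise the bijection between the expansion of $\prod_{\beta}\naiveannot^i(\beta)$ into a sum over tuples of subtrees and the set of derivation trees obtained by attaching those subtrees below a rule-application root, so that every tree is counted exactly once and carries the correct annotation $\ann(t)$. Everything else, including the passage from $\optiannot^m$ to $\mdtann$, is then a short consequence of the equality $m=d^{*}$.
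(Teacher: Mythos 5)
Your proof is correct and rests on the same core idea as the paper's: an induction that identifies the iteration index of the evaluation with derivation-tree depth. The factoring differs, however. The paper proves its Lemma~\ref{lem:oh} directly about the optimized sequence---for the target fact in $\opti^i$, the annotation equals the sum of $\ann(t)$ over minimal-depth trees of depth at most $i$---by an induction whose details it omits entirely, and then declares the proposition straightforward. You instead reuse the depth-refined statement for the \emph{naive} sequence (the paper's Lemma~\ref{lem:deductiontree-iteration}, already established for Proposition~\ref{prop:executiontreeconnection}) and add two observations: the optimized run coincides with the naive run as long as the guard $\fact\notin\opti^{i}$ holds, and the first index $m$ at which $\fact$ appears equals the minimal tree depth $d^{*}$, so that $\optiannot^{m}(\fact)=\naiveannot^{m}(\fact)$ collapses to the sum over minimal-depth trees. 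Your factoring is more modular---only one induction, the naive one, is ever carried out, and the optimized case becomes bookkeeping about the guard---and it sidesteps the somewhat awkward statement of Lemma~\ref{lem:oh}, which mixes the minimal-depth restriction with the bound $\le i$ and quantifies over all facts of $\opti^{i}$ even though that restriction is only meaningful for the target fact. The crux is the same in both arguments, and you identify it correctly: the correspondence between pairs consisting of a rule application and a tuple of subtrees of depth $\le i$ on one side, and derivation trees of depth between $1$ and $i+1$ on the other, which drives the inductive step and which the paper leaves implicit in both of its lemmas.
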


\subsubsection*{Seminaive Evaluation /\ Hereditary Minimal Depth Trees} In the seminaive evaluation  algorithm, 
facts are derived only once.  
We introduce a new consequence operator $\diffop$ that derives only new facts and is defined as follows: 
$\diffop(\inter,\semiringshort,\annot)\df(\inter_{\diffop},\semiringshort,\annot_{\diffop})$ where 
$\consop(\inter,\semiringshort,\annot)\df(\inter_{\consop},\semiringshort,\annot_{\consop})$, 
$\inter_{\diffop}\df\inter_{\consop}\setminus\inter$, and $\annot_{\diffop}$ is the restriction of $\annot_{\consop}$ to $\inter_{\diffop}$.
We can now define the annotation aware version of the seminaive evaluation:
\sloppy{ $\semi^{0}(\ontology,\database,\semiringshort,\annot)\df(\database,\semiringshort,\annot)$ and $\semi^{i+1}(\ontology,\database,\semiringshort,\annot)\df\semi^{i}(\ontology,\database,\semiringshort,\annot) \cup \diffop(\semi^{i}(\ontology,\database,\semiringshort,\annot))$.}

\begin{restatable}{proposition}{propsemi}\label{prop:semi}
	{For every $\ontology,\database,\semiringshort,\annot$,}	there exists 
	$k\geq 0$ such that $\semi^{k}(\ontology,\database,\semiringshort,\annot)  
	= \semi^{\ell}(\ontology,\database,\semiringshort,\annot) $ for every 
	$\ell\ge k$. 
\end{restatable}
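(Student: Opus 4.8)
The plan is to reduce the statement to two independent facts: that the \emph{set} of derivable facts stabilizes after finitely many steps (a consequence of the finiteness of the Herbrand base, exactly as for ordinary Datalog), and that, once the fact set is stable, $\diffop$ produces nothing new and hence the annotations are frozen. The second ingredient is what distinguishes the seminaive case from the naive one: since $\diffop$ restricts $\consop$ to $\inter_{\consop}\setminus\inter$, the annotation of a fact is never touched again after the step at which the fact is first derived, so no $\omega$-continuity hypothesis is needed here, unlike in Proposition~\ref{prop:naive}.

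First I would record monotonicity of the fact-set components. Because $\semi^{i+1}(\ontology,\database,\semiringshort,\annot)=\semi^{i}(\ontology,\database,\semiringshort,\annot)\cup\diffop(\semi^{i}(\ontology,\database,\semiringshort,\annot))$ and the union operator only adds facts, we have $\semi^{i}\subseteq\semi^{i+1}$ for every $i$. Next I would bound this chain: since Datalog rules introduce no new terms, every fact in any $\semi^{i}$ has the form $p(\vec a)$ with $p\in\schema(\ontology)\cup\schema(\database)$ and $\vec a$ a tuple of the finitely many constants occurring in $\ontology$ or $\database$. Let $H$ be the (finite) set of all such facts; then $\semi^{i}\subseteq H$ for all $i$, so the non-decreasing chain $\semi^{0}\subseteq\semi^{1}\subseteq\cdots$ lives inside a finite set and is therefore eventually constant. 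Hence there is a $k$ with $\semi^{\ell}=\semi^{k}$ (as sets of facts) for all $\ell\ge k$.

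It then remains to upgrade set-stabilization to stabilization of the full annotated databases, and here I would use that once the fact set is stable $\diffop$ yields nothing. Writing $\diffop(\semi^{k}(\ontology,\database,\semiringshort,\annot))=(\inter_{\diffop},\semiringshort,\annot_{\diffop})$ with $\inter_{\diffop}=\inter_{\consop}\setminus\semi^{k}$ (where $\inter_{\consop}$ is the fact set of $\consop(\semi^{k}(\ontology,\database,\semiringshort,\annot))$), the equality of fact sets $\semi^{k}=\semi^{k+1}=\semi^{k}\cup\inter_{\diffop}$ forces $\inter_{\diffop}\subseteq\semi^{k}$; but $\inter_{\diffop}$ is disjoint from $\semi^{k}$ by construction, so $\inter_{\diffop}=\emptyset$. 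Consequently $\diffop(\semi^{k}(\ontology,\database,\semiringshort,\annot))$ has empty fact set, and taking its union with $\semi^{k}(\ontology,\database,\semiringshort,\annot)$ leaves every fact annotation unchanged, since the union merely adds $\zero$ to each $\semiannot^{k}(\fact)$. Thus $\semi^{k+1}(\ontology,\database,\semiringshort,\annot)=\semi^{k}(\ontology,\database,\semiringshort,\annot)$ as annotated databases, and a straightforward induction propagates this equality to every $\ell\ge k$.

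The main point to get right, and the only genuinely non-routine step, is the clean separation of the two phenomena in the last paragraph: \emph{no new fact is derivable} (finiteness of $H$) versus \emph{annotations of old facts are never updated} (the $\inter_{\consop}\setminus\inter$ restriction in $\diffop$ together with the $\plus\zero$ behaviour of the union). One must be careful that it is their combination that gives true finite convergence, rather than mere convergence in the limit; in particular the argument should make explicit that the annotation $\semiannot^{i}(\fact)$ of a fact $\fact$ becomes constant exactly from the first index at which $\fact$ enters $\semi^{i}$, which is what prevents the unbounded accumulation that occurs in the naive evaluation.
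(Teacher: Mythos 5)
Your proof is correct and follows essentially the same route as the paper's: first the fact sets stabilize (the paper cites the convergence of the standard seminaive Datalog algorithm, which you instead derive directly from monotonicity and finiteness of the Herbrand base), and then annotations stabilize because $\diffop$ only produces facts outside the current set and the union adds $\zero$ to existing annotations (the step the paper summarizes as following ``directly from the definition of $\diffop$''). Your write-up just makes explicit the details the paper leaves to the reader or to the citation.
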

Note that, unlike in Proposition~\ref{prop:naive}, we do not require $\semiringshort$ to be $\omega$-continuous. 
With $k$ provided by Proposition \ref{prop:semi}, the \emph{seminaive execution provenance semantics} $\provsemi$ is defined~by
\[\provsemi(\ontology, \database, \semiringshort, \annot,\fact):=
\left\{\begin{matrix}
	\semiannot^k	&\alpha\in\semi^k \\ 
	0	& \text{otherwise}
\end{matrix}\right.\] 

To capture this with the tree-based approach we need to further restrict all subtrees to be of minimal depth. 
Formally, a derivation tree $t\in \drvtrsig$ is a  \emph{hereditary minimal-depth (derivation) tree} if for every node $n$ of $t$ labeled by $(\beta,\datrule,h)$, the subtree $t_\beta$ with root $n$ is a minimal-depth derivation tree for $\beta$. 
The \emph{hereditary minimal depth tree provenance semantics} $\rmdtann$ is defined by
\[
\rmdtann(\ontology,\database,\semiringshort,\annot,\fact):= 
\sum_{\substack{ t\in  \drvtrsig\\ \text{ is hereditary minimal-depth}}}   \ann(t)
\]
and is equivalent to the seminaive execution.
\begin{restatable}{proposition}{propexecutiontreeconnectionhmdt}\label{prop:executiontreeconnection_hmdt}
	It holds that
	$
	\provsemi=\rmdtann.
	$
\end{restatable}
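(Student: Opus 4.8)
The plan is to prove the stronger, per-fact statement that for every \emph{entailed} fact $\beta$ its seminaive annotation, once $\beta$ is first produced, equals $\rmdtann(\ontology,\database,\semiringshort,\annot,\beta)$; the claim for $\fact$ then follows from Proposition~\ref{prop:semi} together with the remark that both sides vanish when $\fact$ is not entailed (there is then neither a derivation tree in $\drvtr{\fact}$ nor an occurrence of $\fact$ in the fixpoint $\semi^k$). I would run the whole argument by induction on $d(\beta) := \dep(t)$ for a minimal-depth $t\in\drvtr{\beta}$, i.e.\ the minimal depth of a derivation tree of $\beta$. The base case $d(\beta)=0$ is exactly the database facts: $\semiannot^0(\beta)=\annot(\beta)$, and the only minimal-depth tree of $\beta$ is the single leaf, so $\rmdtann(\dots,\beta)=\ann(t)=\annot(\beta)$.

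First I would set up the bridge between seminaive stages and minimal depth: by induction on $i$, the set $\semi^i$ is exactly the set of facts $\beta$ with $d(\beta)\le i$, so each entailed $\beta$ is \emph{first} derived at stage $d(\beta)$. The easy direction is that a fact added by $\diffop(\semi^{i-1})$ has a derivation tree of depth $1+\max$ over body-atom depths $\le i$; conversely a minimal-depth tree of depth $i$ fires a rule whose root-body atoms all have depth $\le i-1$, hence lie in $\semi^{i-1}$. Two consequences drive the annotation computation: \textbf{(a)} since $\diffop$ adds each fact only once and the union operator adds $0$ to facts already present, the annotation of $\beta$ never changes after stage $d(\beta)$, so $\provsemi(\dots,\beta)=\semiannot^{d(\beta)}(\beta)$; and \textbf{(b)} every rule instance $(\datrule,h)$ firing on $\semi^{d(\beta)-1}$ to produce $\beta$ must use at least one body atom of depth exactly $d(\beta)-1$ --- otherwise all its body atoms would lie in $\semi^{d(\beta)-2}$ and $\beta$ would already belong to $\semi^{d(\beta)-1}$, contradicting the minimality of $d(\beta)$.

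The second ingredient is the structural description of hereditary minimal-depth trees: a tree $t\in\drvtr{\beta}$ is hereditary minimal-depth iff its root fires some $(\datrule,h)$ whose body atoms $\gamma$ all satisfy $d(\gamma)\le d(\beta)-1$ with $\max_\gamma d(\gamma)=d(\beta)-1$, and each child subtree $t_\gamma$ is itself hereditary minimal-depth for $\gamma$. Here I use that any subtree of a hereditary minimal-depth tree is again hereditary minimal-depth, that $\dep(t_\gamma)=d(\gamma)$, and that $\dep(t)=1+\max_\gamma\dep(t_\gamma)=d(\beta)$. Combined with \textbf{(b)}, this shows that the root firings admitted by hereditary minimal-depth trees of $\beta$ are \emph{precisely} the rule instances summed over by $\annot_{\diffop}$ (restriction of $\annot_{\consop}$) when computing $\semiannot^{d(\beta)}(\beta)$ at stage $d(\beta)$. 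To close the induction I would unfold the seminaive definition as $\semiannot^{d(\beta)}(\beta)=\sum_{(\datrule,h)}\prod_{\gamma}\semiannot^{d(\beta)-1}(\gamma)$, where the product is taken over body-atom occurrences (one factor per child, as in the derivation-tree definition); by \textbf{(a)} each $\semiannot^{d(\beta)-1}(\gamma)=\semiannot^{d(\gamma)}(\gamma)$, which by the induction hypothesis equals $\sum_{t_\gamma}\ann(t_\gamma)$ over hereditary minimal-depth trees for $\gamma$. Expanding by distributivity of $\ktimes$ over $\kplus$ yields a sum, over all choices of a firing and of hereditary minimal-depth subtrees for each body-atom occurrence, of $\prod_\gamma\ann(t_\gamma)=\ann(t)$; by the structural description this is exactly $\sum_{t}\ann(t)$ over hereditary minimal-depth trees of $\beta$, i.e.\ $\rmdtann(\dots,\beta)$.

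I expect the main obstacle to be point \textbf{(b)}: pinning down that the firings counted by $\diffop$ at the first stage producing $\beta$ coincide exactly with the admissible roots of hereditary minimal-depth trees. This is where the double role of $d(\beta)$ (minimal tree depth \emph{and} first seminaive stage) is essential, and where one must rule out ``shortcut'' firings whose body atoms are all strictly shallower than $d(\beta)-1$. A secondary but necessary care point is the bookkeeping of repeated body atoms: the product in $\annot_{\consop}$ and the children of a derivation tree must range over the same collection of body-atom occurrences, so that the distributivity step produces each hereditary minimal-depth tree with the correct multiplicity.
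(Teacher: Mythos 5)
Your proof is correct and follows essentially the same route as the paper's: the paper proves the statement via Lemma~\ref{lem:sh}, which asserts, by induction on the iteration index $i$, that $\semiannot^i(\alpha)$ equals the sum of $\ann(t)$ over hereditary minimal-depth trees of depth at most $i$ --- exactly your per-fact invariant, since each fact receives its annotation once, at stage $d(\beta)$, and all of its hereditary minimal-depth trees have depth exactly $d(\beta)$. Your write-up simply supplies the details the paper leaves implicit in its one-line induction: the stage/depth correspondence, the stability of annotations under the union operator, the exclusion of ``shortcut'' firings whose body atoms are all strictly shallower than $d(\beta)-1$, and the per-occurrence (rather than per-fact) bookkeeping in the product, a subtlety the paper's definition of $\annot_{\consop}$ glosses over.
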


\subsection{Non-Recursive Tree-Based Semantics}

Both execution-based semantics $\provopti$ and $\provsemi$ take into account finite subsets of derivation trees (and hence converge). Is there a more informative tree-based semantics (\ie one that takes into account a bigger subset of derivation trees) that still converges? 
We present such a semantics based on the intuition that deriving a fact from itself is redundant.

Formally, a \emph{non-recursive (derivation) tree} is a derivation tree that does not contain two nodes labeled with the same fact and such that one is the descendant of the other. 	
The \emph{non-recursive tree provenance semantics} $\mtann$ is defined by
\[
\mtann(\ontology,\database,\semiringshort,\annot,\fact):= \sum_{\substack{ t\in  \drvtrsig\\ \text{ is non-recursive}}}   \ann(t).
\]

\subsubsection*{Connections between semantics}
Next proposition follows from 
the fact that hereditary minimal-depth trees are of minimal-depth and non recursive. The sets of minimal depth trees and non-recursive trees are incomparable, so that $\mtann \not\sqsubseteq \mdtann$ and $\mdtann \not\sqsubseteq \mtann$.
\begin{restatable}{proposition}{proptreeconnections}\label{prop:treeconnections}
	The following hold:
	$$\rmdtann \sqsubseteq \mtann \sqsubseteq \atann \ \text{ and }\quad \rmdtann \sqsubseteq \mdtann \sqsubseteq \atann$$
\end{restatable}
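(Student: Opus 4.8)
The plan is to reduce all four inequalities to a single monotonicity principle for tree-indexed sums, and then to read each of $\rmdtann$, $\mdtann$, $\mtann$ and $\atann$ as such a sum over a progressively larger family of derivation trees. Concretely, writing $\mathsf{MDT}$, $\mathsf{NR}$ and $\mathsf{HMDT}$ for the sets of minimal-depth, non-recursive and hereditary minimal-depth trees inside $\drvtrsig$, I would first prove the chain of inclusions $\mathsf{HMDT}\subseteq\mathsf{MDT}\subseteq\drvtrsig$ and $\mathsf{HMDT}\subseteq\mathsf{NR}\subseteq\drvtrsig$, and then show that a larger index set yields a $\sqsubseteq$-larger sum of annotations $\ann(t)$. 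The four claimed relations then correspond exactly to the four inclusions, which is the content of the informal remark preceding the statement.

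For the inclusions, $\mathsf{HMDT}\subseteq\mathsf{MDT}$ is immediate: the full tree is the subtree rooted at its own root, which by definition of hereditary minimal-depth must be of minimal depth for $\fact$. The inclusion $\mathsf{HMDT}\subseteq\mathsf{NR}$ is the only one needing an argument, and I would prove it by contradiction. Suppose a hereditary minimal-depth tree $t$ had a node $n$ labelled by a fact $\beta$ with a proper descendant $n'$ also labelled by $\beta$. Then the subtree $t_{n'}$ rooted at $n'$ is a derivation tree for $\beta$ whose depth is strictly smaller than that of the subtree $t_{n}$ rooted at $n$, since $n'$ lies strictly below $n$. This contradicts the hereditary condition, which requires $t_{n}$ to be of minimal depth among all derivation trees for $\beta$. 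The inclusions $\mathsf{MDT}\subseteq\drvtrsig$ and $\mathsf{NR}\subseteq\drvtrsig$ are trivial.

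It remains to turn inclusions into $\sqsubseteq$-inequalities of the corresponding sums. The sets $\mathsf{HMDT}$, $\mathsf{MDT}$ and $\mathsf{NR}$ are all \emph{finite}: derivation-tree branching is bounded by the maximal rule-body length, and their depth is bounded (by the minimal depth for $\mathsf{MDT}$ and $\mathsf{HMDT}$, and by the number of distinct derivable facts for $\mathsf{NR}$, as no fact repeats along a root-to-leaf path). Hence for $\rmdtann\sqsubseteq\mdtann$ and $\rmdtann\sqsubseteq\mtann$ one only needs finite semiring addition: writing $\sum_{t\in S'}\ann(t)=\sum_{t\in S}\ann(t)\plus\sum_{t\in S'\setminus S}\ann(t)$ for $S\subseteq S'$ finite, the element $c=\sum_{t\in S'\setminus S}\ann(t)$ witnesses $\sqsubseteq$. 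For $\mdtann\sqsubseteq\atann$ and $\mtann\sqsubseteq\atann$ the larger index set $\drvtrsig$ may be infinite, and here $\omega$-continuity (legitimately assumed, since $\atann$ is defined only for $\omega$-continuous semirings, and the comparison is required only where both sides are well-defined) enters: $\atann$ is the least upper bound of the partial sums over finite prefixes of an enumeration of $\drvtrsig$, and the finite partial sum indexed by $\mathsf{MDT}$ or $\mathsf{NR}$ lies below some member of that $\omega$-chain, hence below its supremum.

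I expect the only delicate step to be this last interface between a finite inclusion and an infinite sum: one must check that a finite partial sum is genuinely $\sqsubseteq$ the $\omega$-continuous sum and that this is independent of the chosen enumeration, which follows from the defining properties of $\omega$-continuous semirings but deserves to be spelled out rather than asserted. The remaining obligations—the finiteness of $\mathsf{NR}$, $\mathsf{MDT}$ and $\mathsf{HMDT}$, and the purely combinatorial inclusion $\mathsf{HMDT}\subseteq\mathsf{NR}$—are routine, so the substance of the proof is really the informal observation quoted before the statement, made precise.
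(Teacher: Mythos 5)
Your proposal is correct and takes essentially the same route as the paper's proof: both reduce the four inequalities to the inclusions of tree families ($\mathsf{HMDT}\subseteq\mathsf{MDT}\subseteq\drvtrsig$ and $\mathsf{HMDT}\subseteq\mathsf{NR}\subseteq\drvtrsig$) combined with monotonicity of semiring sums under set inclusion. The paper simply asserts these facts as straightforward, whereas you spell out the contradiction argument for $\mathsf{HMDT}\subseteq\mathsf{NR}$, the finiteness of the restricted tree sets, and the $\omega$-continuity step needed when the ambient sum over $\drvtrsig$ is infinite---precisely the details the paper leaves implicit.
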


Moreover $\mtann$ and $\atann$ coincide on specific semirings.
\begin{restatable}{proposition}{propalltreenonrecabsorptive}\label{prop:alltree-nonrec-absorptive}
	For every $\ontology,\database,\semiringshort,\annot$ and $\fact$,
	if $\semiringshort$ is a commutative absorptive $\omega$-continuous semiring, then 
	$\mtann(\ontology,\database,\semiringshort,\annot,\fact)=\atann(\ontology,\database,\semiringshort,\annot,\fact)$.
\end{restatable}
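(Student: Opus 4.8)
The plan is to prove the two inequalities $\mtann \sqsubseteq \atann$ and $\atann \sqsubseteq \mtann$ separately and then conclude by antisymmetry, using that $\sqsubseteq$ is a partial order since $\semiringshort$ is $\omega$-continuous. The first inequality is immediate from Proposition~\ref{prop:treeconnections}, because the non-recursive trees form a subset of $\drvtrsig$. All the work therefore lies in establishing $\atann \sqsubseteq \mtann$. For this I would first record that an absorptive semiring is $\plus\,$-idempotent: taking $b = \one$ in $a \times b \plus a = a$ gives $a \plus a = a$. Consequently $\plus$ computes least upper bounds for $\sqsubseteq$ (so $\plus$ is the join), and every countable sum equals the supremum of its finite partial sums; note that $\drvtrsig$ is countable, since derivation trees are finite trees whose labels range over the finitely many facts over $\schema(\ontology)$ and the domain of $\database$.

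The heart of the argument is a \emph{shortcutting lemma}: for every recursive $t \in \drvtrsig$ there is a non-recursive $\pi(t) \in \drvtrsig$ with $\ann(t) \sqsubseteq \ann(\pi(t))$. To build it, I would locate a node $u$ labelled by a fact $\beta$ having a proper descendant $v$ also labelled $\beta$, and replace the subtree $t_u$ rooted at $u$ by the subtree $t_v$ rooted at $v$. Since $t_v$ is a valid derivation tree for the same fact $\beta$, the resulting tree $t'$ is again a valid derivation tree for $\fact$. Because $v$ is a descendant of $u$, the leaves of $t_v$ are among the leaves of $t_u$, so the leaf multiset of $t$ factors and $\ann(t) = \ann(t') \times c$, where $c$ is the product of the annotations of the leaves of $t_u$ lying strictly above $t_v$. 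Absorptivity then gives $\ann(t') \times c \plus \ann(t') = \ann(t')$, that is $\ann(t) \sqsubseteq \ann(t')$. Each shortcut strictly decreases the number of nodes (at least $u$ is removed), so iterating terminates at a non-recursive tree $\pi(t)$, and transitivity of $\sqsubseteq$ yields $\ann(t) \sqsubseteq \ann(\pi(t))$.

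With the lemma in hand I would finish as follows. By $\omega$-continuity, $\atann$ is the supremum of the finite partial sums $\sum_{t \in F} \ann(t)$ over finite $F \subseteq \drvtrsig$. Fix such an $F$. For a non-recursive $t \in F$ we have $\ann(t) \sqsubseteq \mtann$ directly, as it is one of the summands defining $\mtann$; for a recursive $t \in F$ the lemma gives $\ann(t) \sqsubseteq \ann(\pi(t)) \sqsubseteq \mtann$. Since $\plus$ is the join in the idempotent setting, $\mtann$ being an upper bound of each $\ann(t)$ makes it an upper bound of the finite sum $\sum_{t \in F} \ann(t)$. As $\mtann$ bounds every finite partial sum, it bounds their supremum, whence $\atann \sqsubseteq \mtann$ and therefore $\atann = \mtann$.

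The step I expect to require the most care is the shortcutting lemma, specifically checking that the grafted tree $t'$ still satisfies the homomorphism and child-bijection conditions of the derivation-tree definition after $t_v$ replaces $t_u$, and that the leaf annotations factor exactly as $\ann(t) = \ann(t') \times c$. The subtlety here is that leaves must be counted as \emph{nodes} rather than facts, so that repeated occurrences of the same fact are multiplied the correct number of times. The infinite-sum manipulation in the final paragraph is then routine given $\omega$-continuity together with $\plus\,$-idempotence.
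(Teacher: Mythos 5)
Your proof is correct and takes essentially the same route as the paper: the paper's proof uses exactly the same shortcutting operation (replacing the subtree rooted at a node labelled $\beta$ by the subtree rooted at a descendant with the same label) together with the same absorptivity step $\ann(t')\times c \plus \ann(t') = \ann(t')$, framed against the supremum of finite partial sums. The only difference is bookkeeping at the end --- the paper shows every finite partial sum containing all non-recursive trees collapses exactly to $\mtann$, whereas you bound each $\ann(t)$ by $\mtann$ and conclude via antisymmetry --- and your version has the minor merit of making explicit the $\plus\,$-idempotence (from absorptivity with $b=\one$) that the paper uses tacitly.
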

If $\semiringshort$ is not absorptive, there exists $\ontology$, $(\database,\semiringshort,\annot)$ and $\fact$ such that $\mtann(\ontology,\database,\semiringshort,\annot,\fact)\neq\atann(\ontology,\database,\semiringshort,\annot,\fact)$, even in the case where $\semiringshort$ is $\plus\,$-idempotent and $\times$-idempotent: Let $\ontology$ consist of the rule $A(x)\wedge B(x)\rightarrow A(x)$ and $\database=\{A(a), B(a)\}$. Then $\mtann(\ontology,\database,\semiringshort,\annot, A(a))=\annot(A(a))$ while $\atann(\ontology,\database,\semiringshort,\annot, A(a))=\annot(A(a))\plus \annot(A(a))\times \annot(B(a))$.

The other semantics differ even under strong restrictions. 

\begin{example}
	This example shows that $\mtann$, $\mdtann$ and $\rmdtann$ differ even if $\semiringshort$ is $\plus$ and $\times$-idempotent and absorptive. 
	\begin{align*}
		\text{Let  }\ontology=&\{B(x) \wedge C(x)\rightarrow A(x),\ 
		D(x)\rightarrow B(x), \\&
		E(x)\rightarrow C(x), \ F(x)\rightarrow E(x)
		\}\\
		\database=&\{C(a),\ D(a),\ E(a),\ F(a) \}
	\end{align*}
	The three derivation trees of $A(a)$ \wrt $\ontology$ and $\database$ are non-recursive, but only the first two are of minimal depth and only the first one is a hereditary minimal-depth tree.
	\begin{center}
		\begin{tikzpicture}
			[level distance=1.05cm,
			level 1/.style={sibling distance=1.4cm},
			level 2/.style={sibling distance=0.7cm}]
			\node {{$A(a)$}}
			child {node {{$B(a)$}}
				child {node {{$D(a)$}}}}
			child {node {{$C(a)$}}};
		\end{tikzpicture}
		\hspace{0.2cm}
		\begin{tikzpicture}
			[level distance=1.05cm,
			level 1/.style={sibling distance=1.4cm},
			level 2/.style={sibling distance=0.7cm}]
			\node {{$A(a)$}}
			child {node {{$B(a)$}}
				child {node {{$D(a)$}}}}
			child {node {{$C(a)$}}
				child { node{{$E(a)$}}}};
		\end{tikzpicture}
		\hspace{0.2cm}
		\begin{tikzpicture}
			[level distance=0.7cm,
			level 1/.style={sibling distance=1.4cm},
			level 2/.style={sibling distance=0.7cm}]
			\node {{$A(a)$}}
			child {node {{$B(a)$}}
				child {node {{$D(a)$}}}}
			child {node {{$C(a)$}}
				child { node{{$E(a)$}} 
					child { node{{$F(a)$}}}}};
		\end{tikzpicture}
	\end{center}
	
	\noindent
	Thus, if $(\database,\semiringshort,\annot)$ is such that $\annot(C(a))=c$, $\annot(D(a))=d$, $\annot(E(a))=e$, and $\annot(F(a))=f$ then
	\begin{align*}
		\mtann(\ontology,\database,\semiringshort,\annot,A(a)) =& c\times d \plus d\times e \plus d\times f\\
		\mdtann(\ontology,\database,\semiringshort,\annot,A(a)) = &c\times d \plus d\times e\\
		\rmdtann(\ontology,\database,\semiringshort,\annot,A(a)) = &c\times d
	\end{align*}	
\end{example}

\section{Basics Properties}\label{sec:properties}

In this section, we provide a framework allowing to compare the provenance semantics 
presented in the previous section. It is clear that they all fulfill the following definition.

\begin{definition}[Provenance semantics]\label{def:provenance}
	A \emph{provenance semantics} is a partial function that assigns to a Datalog program $\ontology$, annotated database $(\database, \semiringshort, \annot)$ 
	and fact $\fact$, an element $\prov(\ontology, \database, \semiringshort, \annot,\fact)$ in $K$ such that:
	\begin{enumerate}
		\item $\ontology, \database\not\models \fact$ implies $\prov(\ontology, \database, \semiringshort, \annot,\fact)=\kzero$. 
		\label{defprov:null}
		\item If $\semiringshort$ is positive, $\prov(\ontology, \database, \semiringshort, \annot,\fact)=\kzero$ implies $\ontology, \database\not\models \fact$. 
		\label{defprov:null_in_positive}
	\end{enumerate} 
	We call the \emph{semiring domain} of $\prov$ {the maximal}  set $S$ of semirings such that $\prov(\ontology, \database, \semiringshort, \annot,\fact)$ is defined for every $\semiringshort\in S$, and every 
	$\ontology$, 
	$(\database, \semiringshort, \annot)$ and 
	$\fact$.
\end{definition}

Intuitively, 
Definition \ref{def:provenance} means that the semantics reflects fact (non)-entailment.  
It is extremely permissive: We could define 
such a semantics that associates to each 
entailed fact a random semiring element different from zero, and 
does not bring any information beyond facts entailment. 
In the sequel, we state and discuss a number of properties that may be expected to be satisfied by a provenance semantics.  

Throughout this section, when not stated otherwise, $\prov$, $\ontology$, $\database$, $\semiringshort$, $\annot$ and $\fact$ denote respectively an arbitrary provenance semantics, Datalog program, database, commutative semiring $(K, \plus, \times, 0, 1)$, function from $\database$ to $K\setminus\{0\}$, and fact. 
{We phrase properties as conditions, and say that 
	$\prov$ \emph{satisfies} a property if it satisfies the condition.
}
We also denote by $\annot_\semiringVars$ an injective function $\annot_\semiringVars: \database\mapsto\semiringVars$.

\subsection{Compatibility with Classical Notions}
Property \ref{prop-algebra-consistency} is a sanity check: 
if a Datalog program amounts to a UCQ, the provenance should be the same as the one defined for relational databases \cite{greenpods2007}.
A Datalog program $\ontology$ is \emph{UCQ-defined} if its rules are of the form $\phi(\vec{x},\vec{y} ) \rightarrow H(\vec{x})$ where $H$ is a predicate 
that does not occur in the body of any rule. 
\sloppy{
	In this case,
	the equivalent UCQ $Q^{\ontology}$ of $\ontology$ is $ \bigcup_{\phi(\vec{x},\vec{y} ) \rightarrow H(\vec{x}) \in\Sigma} \exists  \vec{y}\phi(\vec{x},\vec{y} )$. 
}

\newcommand{\propAlgebraConsistency}{
	If $\ontology$ is UCQ-defined with rule head $ H(\vec{x})$ 
	and $H\notin\schema(\database)$,
	then for every tuple $\vec a $ of same arity as $\vec{x}$, 
	the relational provenance of $Q^{\ontology}(\vec{a})$
	is equal to	$\prov(\ontology, \database, \semiringshort, \annot, H(\vec{a}))$.	
}
\begin{property}[Algebra Consistency]
	\label{prop-algebra-consistency}
	\propAlgebraConsistency
\end{property}

While Property 1 considers the behavior of a provenance semantics on a restricted class of queries, we can alternatively consider its behavior on a specific semiring. 
Boolean provenance has a very natural definition, based on the database subsets that entail the query, and  
is widely used, notably for probabilistic databases \cite{DBLP:journals/sigmod/Senellart17}, but also for ontology-mediated query explanation (\eg  in Datalog$^{+/-}$ 
or description logics~\cite{DBLP:conf/ijcai/CeylanLMV19,DBLP:conf/ecai/CeylanLMV20}). 
It is formalized with the semiring~$\mi{PosBool}(\semiringVars)$. 

\begin{property}[Boolean Compatibility]
	\label{prop-boolean}
	\[\prov(\ontology, \database,\mi{PosBool}(\semiringVars), \annot_X, \fact)=\bigvee_
	{\substack{
			D'\subseteq \database \\ \ontology ,D' \models \fact 	
	}}
	\bigwedge_{\beta\in D'}\annot_X(\beta)
	\] 
\end{property}	

Property \ref{prop-boolean} expresses `insensibility' to syntax, that is, every provenance semantics that satisfies Property \ref{prop-boolean} agrees on equivalent programs (\ie those that have the same models) for the semiring $\mi{PosBool}(\semiringVars)$. 
This is related to ideas from~\cite{containmentGreen09} on the provenance of equivalent UCQs.

\subsection{Compatibility with Specialization}
Semiring provenance has been introduced to abstract from the particular semiring at hand, and factor the computations in some provenance semiring which specializes correctly to any semiring of interest. 
The next property allows one to do so (Appendix \ref{app:commutation-universal}),
and is thus highly desirable. 

\begin{property}[Commutation with Homomorphisms]
	\label{prop-commutation-homomorphism}
	If there is a semiring homomorphism $h$ from $\semiringshort_1$ to $\semiringshort_2$, then $h(\prov(\ontology, \database,\semiringshort_1,\annot,\fact))=\prov(\ontology, \database,\semiringshort_2, h\circ\annot, \fact)$.
\end{property}

We call Property~\ref{prop-commutation-homomorphism} restricted to $\omega$-continuous homomorphisms \emph{Commutation with $\omega$-Continuous Homomorphisms}.

Specializing correctly is all the more useful when $\prov$ is well-defined for a lot of semirings, in particular on all commutative or at least all commutative $\omega$-continuous semirings.

\begin{property}[Any ($\omega$-Continuous) Semiring]
	$\prov$ satisfies the \emph{Any Semiring Property} (resp. \emph{Any $\omega$-Continuous Semiring Property}) if the semiring domain of $\prov$ contains the set of all commutative (resp. commutative $\omega$-continuous) semirings. 
\end{property}

\subsection{Joint and Alternative Use of the Data}
How is the actual usage of the data reflected in the provenance semantics?
The next property formalizes that multiplication reflects joint use of the data, and addition alternative use. For the rest of this section, we set $\goal$ to be a nullary predicate not in $\schema(\ontology)\cup\schema(\database)$.

\begin{property}[Joint and Alternative Use]
	\label{prop:joint-alternative} 
	For all tuples of facts $(\fact^1_1,\cdots, \fact^1_{n_1})$, $\dots$, $(\fact^m_1,\cdots, \fact^m_{n_m})$, it holds that  
	\[\prov(\ontology', \database, \semiringshort, \annot,\goal)= \Sigma_{i=1}^m \Pi_{j=1}^{n_i}
	\prov(\ontology, \database, \semiringshort, \annot,\fact^i_j)\] 
	where $\ontology'=\ontology\cup\{\bigwedge_{j=1}^{n_i}\fact^i_j\rightarrow\goal\mid 1\le i \le m \}$. 
\end{property}

We weaken the  above by  
referring to each mode 
separately:

\begin{property}[Joint Use]
	\label{prop-joint-use}
	\sloppy{
		For all facts $\fact_1,\cdots, \fact_n$,
		\[\prov(\ontology', \database, \semiringshort, \annot,\goal)= \Pi_{j=1}^n \prov(\ontology, \database, \semiringshort, \annot,\fact_j)\] 
		where  $\ontology'=\ontology\cup\{\bigwedge_{j=1}^n\fact_j \rightarrow\goal\}$. 
	}
\end{property}

\begin{property}[Alternative Use]\label{prop:alternative}
	\label{prop-alternative-use}
	For all facts $\fact_1,\cdots, \fact_m$, \[
	\prov(\ontology', \database, \semiringshort, \annot,\goal)= \Sigma_{i=1}^m
	\prov(\ontology, \database, \semiringshort, \annot,\fact_i)
	\] 
	where 
	$\ontology'=\ontology\cup\{\fact_i\rightarrow\goal\mid 1\le i \le m \}$.
\end{property}

\subsection{Fact Roles in Entailment.}
After considering 
how facts can be 
combined or used alternatively 
to entail a result, we ponder their possible 
roles \wrt the entailment. 
Property~\ref{prop-self} asserts that the original annotation of a 
fact 
takes part in 
the provenance of its~entailment. 

\newcommand{\propself}{
	If $\fact\in \database$, then $\annot(\fact)\sqsubseteq \prov(\ontology, \database, \semiringshort, \annot,\fact)$.
}
\begin{property}[Self]
	\label{prop-self}
	\propself
\end{property}

Moreover, if a database fact cannot be alternatively derived using the 
rules, then its provenance should be exactly its original annotation. To phrase this property we use the \emph{grounding} $\ontology_\database$ of $\ontology$ \wrt $\database$, defined by $\ontology_\database=\{h(\phi(\vec{x},\vec{y})) \rightarrow h(H(\vec{x})) \mid \phi(\vec{x},\vec{y} ) \rightarrow H(\vec{x})\in\ontology, h: \vec{x}\cup\vec{y} \mapsto \domain(\database)\}$. 
It holds that $\ontology,\database\models \fact$ if and only if 
$\ontology_\database,\database\models \fact$. 
\begin{property}[Parsimony]
	\label{prop-parsimony}
	If $\fact\in \database$ does not occur in any rule head in 
	$\ontology_\database$  
	then \(\prov(\ontology, \database, \semiringshort, \annot,\fact)= \annot(\fact)\).
\end{property}

Parsimony Property together with other constraints guarantee Algebra Consistency Property. 
\begin{restatable}{proposition}{propgroundingjointalttoalgebra}\label{prop:grounding-joint-alt-to-algebra}
	If $\prov$ satisfies Properties \ref{prop:joint-alternative} (Joint and Alternative Use) and \ref{prop-parsimony} (Parsimony), and is such that for every $\ontology, \database, \semiringshort, \annot, \fact$, $\prov(\ontology, \database, \semiringshort, \annot, \fact)=\prov(\ontology_\database, \database, \semiringshort, \annot,\fact)$, 
	then it satisfies Property \ref{prop-algebra-consistency} (Algebra Consistency).
\end{restatable}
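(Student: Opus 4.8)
The plan is to combine the three hypotheses so as to rewrite $\prov(\ontology,\database,\semiringshort,\annot,H(\vec a))$ as a sum, over the rules that derive $H(\vec a)$, of products of the annotations of their body facts, and then to recognise this expression as the inductively defined relational provenance of the UCQ $Q^{\ontology}(\vec a)=\bigvee_i \exists\vec y\,\phi_i(\vec a,\vec y)$, whose disjuncts are the rule bodies of $\ontology$.

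First I would apply the grounding-invariance hypothesis to pass from $\ontology$ to its grounding, giving $\prov(\ontology,\database,\semiringshort,\annot,H(\vec a))=\prov(\ontology_\database,\database,\semiringshort,\annot,H(\vec a))$. In $\ontology_\database$ the rules with head $H(\vec a)$ are exactly the ground instances $B\rightarrow H(\vec a)$ with $B=h(\phi_i(\vec x,\vec y))$ and $h(\vec x)=\vec a$. Here is where the hypotheses that $\ontology$ is UCQ-defined and $H\notin\schema(\database)$ enter: every atom of such a body $B$ is over a predicate other than $H$, and since $H$ is the only head predicate in $\ontology_\database$, such an atom is entailed by $\ontology_\database$ and $\database$ if and only if it already lies in $\database$. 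Hence a ground instance \emph{fires} exactly when $B\subseteq\database$, and the firing instances correspond to the homomorphisms $(i,h)$ with $h(\vec x)=\vec a$ and $h(\phi_i)\subseteq\database$, \ie to the terms of the relational provenance of $Q^{\ontology}(\vec a)$.

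Next I would read off the provenance of $H(\vec a)$ using Property~\ref{prop:joint-alternative} (Joint and Alternative Use), with the alternative conjunctions taken to be the bodies of the ground rules with head $H(\vec a)$ and with $H(\vec a)$ itself in the role of the goal. Letting $\ontology^{-}$ be the rules of $\ontology_\database$ whose head is not $H(\vec a)$, this gives $\prov(\ontology_\database,\database,\semiringshort,\annot,H(\vec a))=\sum_{B\rightarrow H(\vec a)}\prod_{\beta\in B}\prov(\ontology^{-},\database,\semiringshort,\annot,\beta)$. A non-firing body contains some non-$H$ atom $\beta\notin\database$, which is therefore not entailed by $\ontology^{-}$ and $\database$, so $\prov(\ontology^{-},\database,\semiringshort,\annot,\beta)=\kzero$ by Definition~\ref{def:provenance}(\ref{defprov:null}); as $\kzero$ annihilates, the corresponding product drops out. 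For a firing body, each $\beta\in B$ is a database fact appearing in no rule head of $\ontology^{-}$ (all heads being $H$-atoms), so Property~\ref{prop-parsimony} (Parsimony) gives $\prov(\ontology^{-},\database,\semiringshort,\annot,\beta)=\annot(\beta)$. Substituting, $\prov(\ontology,\database,\semiringshort,\annot,H(\vec a))$ equals the sum over the firing instances of $\prod_{\beta\in B}\annot(\beta)$, which is precisely the relational provenance of $Q^{\ontology}(\vec a)$: the disjunction over $i$ and the projection over homomorphisms contribute additively, and the body conjunction multiplicatively. This yields Property~\ref{prop-algebra-consistency} (Algebra Consistency).

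The step I expect to require the most care is the middle one, where $H(\vec a)$ is treated as the fresh goal of Property~\ref{prop:joint-alternative}: the predicate $H$ still occurs in $\schema(\ontology^{-})$ through the inert rules deriving other facts $H(\vec b)$, $\vec b\neq\vec a$, whereas the property is stated for a goal predicate that is fresh for the program. I would discharge this by first observing that, since no $H$-atom is ever consumed in a rule body, those rules (and the non-firing instances) cannot influence the provenance of $H(\vec a)$, thereby reducing to a program in which $H$ heads only $H(\vec a)$-rules, and then renaming this isolated target to a genuinely fresh nullary $\goal$ so that Property~\ref{prop:joint-alternative} applies verbatim. One must also check that distinct firing instances correspond to distinct terms of the relational provenance, so that the counts on the two sides agree.
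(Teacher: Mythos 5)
Your overall strategy---ground the program, decompose via Property~\ref{prop:joint-alternative}, evaluate the leaf provenances by Parsimony or by point~(\ref{defprov:null}) of Definition~\ref{def:provenance}, and recognize the resulting sum of products as the relational provenance of $Q^{\ontology}(\vec a)$---is exactly the paper's. In the case the paper actually treats, namely a UCQ-defined program whose head is a \emph{nullary} predicate $\goal$ with $\goal\notin\schema(\database)$, your argument collapses to the paper's proof: there are no ``other facts'' $H(\vec b)$, so your $\ontology^{-}$ is $\emptyset$, the grounding $\ontology_\database$ is literally of the form $\emptyset\cup\{\bigwedge_{j=1}^{n_i}\fact^i_j\rightarrow\goal\mid 1\le i\le m\}$, and Property~\ref{prop:joint-alternative} applies verbatim (freshness holds because $\schema(\emptyset)=\emptyset$ and $\goal\notin\schema(\database)$); Parsimony then applies to $\prov(\emptyset,\database,\semiringshort,\annot,\beta)$ for $\beta\in\database$ because the grounding of the empty program has no rule heads at all.

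The gap is precisely in the step you flag. For a head $H(\vec a)$ of positive arity, Property~\ref{prop:joint-alternative} simply does not license the equation you write: its goal must be a nullary predicate outside $\schema(\ontology^{-})\cup\schema(\database)$, whereas $H(\vec a)$ is non-nullary and $H\in\schema(\ontology^{-})$. Your two patches---(i) deleting the inert rules $B\rightarrow H(\vec b)$, $\vec b\neq\vec a$, because they ``cannot influence the provenance of $H(\vec a)$,'' and (ii) renaming the isolated target to a fresh nullary $\goal$---are invariances of \emph{reasonable} semantics, but neither follows from the three hypotheses of the proposition. The proposition quantifies over an \emph{arbitrary} partial function $\prov$ satisfying only Joint and Alternative Use, Parsimony, and grounding-invariance; nothing assumed allows removing rules that do not affect entailment, and nothing assumed makes $\prov$ invariant under renaming of predicates (the entire point of Section~\ref{sec:properties} is that such behaviors must be postulated as explicit properties rather than taken for granted). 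So these steps are unproven, and it is not evident they are provable from the hypotheses at all. The paper sidesteps the difficulty by carrying out its proof only for the nullary-goal case, so your proposal attempts strictly more than the paper's own proof establishes; restricted to that case, your argument is complete and coincides with the paper's. Your closing remark about matching the count of firing instances with the count of homomorphisms is a legitimate point of care, which the paper's proof also asserts without detail.
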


Property \ref{prop-necessary-fact} states that $\prov$ reflects the necessity of a fact for the entailment. 
We say that $\beta\in\database$ is \emph{necessary} to $\ontology, \database\models\fact$ if $\ontology, \database\setminus\{\beta\} \not \models \fact$, 
and denote by \mi{Nec} the set of such facts. 

\begin{property}[Necessary Facts]
	\label{prop-necessary-fact}
	There exists $e\in K$ such that
	\(
	\prov(\ontology, \database, \semiringshort, \annot,\fact) =\Pi_{\beta\in \mi{Nec}}\annot({\beta})\times e\).
\end{property}

A fact is \emph{usable} to $\ontology, \database\models\fact$ if it occurs in some derivation tree in $\drvtr{\fact}$. 
Usable facts are related to the notion of \emph{lineage} \cite{DBLP:journals/tods/CuiWW00} 
and can be defined without resorting to derivation trees (\cf Appendix \ref{app:def-usable}). 
Intuitively, if a fact is not usable to derive another fact, it should not have any influence on its~provenance.

\begin{property}[Non-Usable Facts]
	\label{prop-non-usable-fact}
	For every $\annot'$ that differs from $\annot$ only on facts that are not usable to $\ontology,\database\models\fact$, it holds that 
	\(\prov(\ontology, \database, \semiringshort, \annot,\fact)= \prov(\ontology, \database, \semiringshort, \annot',\fact)\).
\end{property}

\subsection{Data Modification}
The last two properties indicate how provenance is impacted when facts are inserted or deleted.
\begin{property}[Insertion]
	\label{prop:monotonicity}
	For every $(\database',\semiringshort,\annot')$ such that $\database\cap\database'=\emptyset$, 
	
	$\prov(\ontology, \database,\semiringshort, \annot, \fact) \plus \prov(\ontology, \database',\semiringshort, \annot', \fact)$\flushright$\sqsubseteq \prov(\ontology, \database\cup\database',\semiringshort, \annot\cup\annot', \fact)$.
\end{property}

Maintaining provenance upon fact deletion is very useful in practice. We formalize this using a provenance semiring, which allows us to keep track of the facts. 
A partial evaluation of a provenance expression $p(\semiringVars)$ over variables $\semiringVars$ is an expression obtained from $p(\semiringVars)$ by replacing some of the variables by a given value. 
\begin{property}[Deletion]
	\label{prop:deletion}
	For every provenance semiring $\mi{Prov}(\semiringVars)$ and $\database'\subseteq\database$, if $\annot'$ is the restriction of $\annot_X$ to $\database'$ and $\Delta=\database\setminus\database'$,  
	then 
	$\prov(\ontology, \database',\mi{Prov}(\semiringVars), \annot', \fact)$ is equal to the partial evaluation of $\prov(\ontology, \database, \mi{Prov}(\semiringVars), \annot_X, \fact)$ obtained by setting the annotations of facts in $\Delta$ to $\zero$: $\prov(\ontology, \database, \mi{Prov}(\semiringVars), \annot_X, \fact)[\{\annot_X(x) = \zero\}_{x \in \Delta}]$.
\end{property}

\section{Semantics Analysis \wrt Properties}\label{sec:analysis}

In this section, we analyze the semantics proposed in Section~\ref{sec:semantics} \wrt the properties introduced in Section \ref{sec:properties}. The properties each semantics satisfies are summarized in Table~\ref{fig:prop-def}. 
Proofs of the positive cases are given in Appendices \ref{app:prooftable} and~\ref{app:proofTableMod} 
and we discuss the negative cases, which may be more characteristic, in the sequel.

\begin{table}[h]
	\setlength{\tabcolsep}{2pt}
	\begin{tabular}{lcccccc}
		\toprule
		&
		$\atann$&
		$\mtann$&
		$\mdtann$&
		$\rmdtann$&
		$\provmodtot$&
		$\provmodmon$
		\\
		\midrule
		Algebra Consistency
		& \checkmark & \checkmark & \checkmark & \checkmark &&\\
		Boolean Compat. & \checkmark&\checkmark &&&\checkmark&\checkmark\\ 
		\midrule
		Com. with Hom.
		&&\checkmark&\checkmark&\checkmark &&\\
		Com. with $\omega$-Cont.
		&\checkmark&\checkmark&\checkmark&\checkmark &&\\
		Any Semiring &  & \checkmark & \checkmark & \checkmark &&  \\
		Any $\omega$-Cont. Sem.&\checkmark &\checkmark & \checkmark&\checkmark& &\checkmark\\
		\midrule
		Joint and Alt. Use
		&\checkmark&\checkmark& & & &\\
		Joint Use
		&\checkmark&\checkmark&&\checkmark&\checkmark &\\
		Alternative Use
		&\checkmark&\checkmark&&& &\\
		\midrule
		Self
		&\checkmark&\checkmark&\checkmark&\checkmark&\checkmark&\checkmark\\
		Parsimony 
		&\checkmark&\checkmark&\checkmark&\checkmark&\checkmark&\checkmark\\
		Necessary Facts
		&\checkmark&\checkmark&\checkmark&\checkmark&&\checkmark\\
		Non-Usable Facts
		&\checkmark&\checkmark&\checkmark&\checkmark&\checkmark&\checkmark\\
		\midrule
		Insertion 
		&\checkmark & \checkmark&  &  & & \\
		Deletion 
		&\checkmark&\checkmark&&&\checkmark&\checkmark\\
		\bottomrule
	\end{tabular}
	\caption{\label{fig:prop-def}
		Does a property hold for a provenance semantics?}
\end{table}
\subsection{Tree- and Execution-Based Semantics Cases}

We first discuss 
$\mdtann$ and $\rmdtann$, 
which have not been much investigated and stand out compared to $\atann$ and $\mtann$. 
The next example shows that they do not satisfy the Boolean Compatibility, Joint and Alternative Use, Alternative Use, Insertion and Deletion Properties.
\begin{example}
	Consider $\ontology$ and $(\database,\mi{Prov}(X),\annot_X)$ as follows.
	\begin{align*}
		\ontology=&\{A(x)\rightarrow \goal,\ B(x)\rightarrow \goal,\ C(x)\rightarrow B(x) \}\\
		\database=&\{A(a),\ C(a)\}\text{ with }\annot_X(A(a))=a, \ \annot_X(C(a))=c
	\end{align*}
	It holds that both $\mdtann(\ontology,\database,\mi{Prov}(X),\annot_X,\goal)$ and $\rmdtann(\ontology,\database,\mi{Prov}(X),\annot_X,\goal)$ are equal to~$a$. 
	For $\prov\in\{\mdtann,\rmdtann\}$ we then have the following:
	
	\noindent $(i)$ The Boolean provenance of $\goal$ is $a\vee c $, hence $\prov$ does not satisfy the Boolean Compatibility Property.
	
	\noindent$(ii)$ Since $\prov(\emptyset,\database,\mi{Prov}(X),\annot_X,A(a))=a$ and $\prov(\emptyset,\database,\mi{Prov}(X),\annot_X,C(a))=c$, 
	$\prov$ does not satisfy the Alternative Use, nor the Joint and Alternative Use Property. 
	
	\noindent$(iii)$ Let $D'=\{\goal\}$ and $\annot'(\goal)=g$. It holds that 
	$\prov(\ontology, \database\cup\database',\mi{Prov}(X),\annot_X\cup\annot_X', \goal)= g$, which is different from 
	$\prov(\ontology, \database,\mi{Prov}(X),\annot_X, \goal) \plus \prov(\ontology, \database',\mi{Prov}(X),\annot_X', \goal) \plus e$ for every $e\in \mi{Prov}(X)$. Hence $\prov$ does not satisfy the Insertion Property.
	
	\noindent$(iv)$ Let $D'=\database\setminus\{A(a)\}=\{C(a)\}$. The partial evaluation of $\prov(\ontology,\database,\mi{Prov}(X),\annot_X,\goal)$ where $a$ is set to $0$ is equal to $0$ while $\prov(\ontology,\database',\mi{Prov}(X),\annot_X,\goal)=c$. Hence $\prov$ does not satisfy the Deletion Property. 
\end{example}

We now illustrate the difference between $\rmdtann$ and $\mdtann$: $\rmdtann$ satisfies the Joint Use Property while $\mdtann$ does~not.
\begin{example}
	Let $\ontology=\{C(x)\rightarrow B(x),\ D(x)\rightarrow A(x)\}$, $\database=\{B(a),C(a),D(a)\}$ and $\annot(B(a))=b$, $\annot(C(a))=c$, $\annot(D(a))=d$, and consider  
	$\ontology'=\ontology\cup\{A(a)\wedge B(a)\rightarrow \goal\}$. 
	$\mdtann(\ontology',\database,\semiringshort,\annot,\goal)= b\times d \plus c\times d$
	while $\mdtann(\ontology,\database,\semiringshort,\annot,A(a))=  d $ and $\mdtann(\ontology,\database,\semiringshort,\annot,B(a))=  b $. 
	Hence $\mdtann$ does not satisfy the Joint Use Property. 
\end{example}

We conclude this discussion with the remark that $\atann$ satisfies the Commutation with $\omega$-Continuous Homomorphisms but not the Commutation with Homomorphisms Property.
\begin{example}
	Consider the semiring $\mathbb{N}^\infty$ with the classical operations, and define $\mathbb{N}^{\infty,\infty'}$ as its extension  by an element $\infty'$ such that for every $n\in \mathbb{N}\cup\{\infty\}$, $n\plus\infty'=\infty'$, and $n\times\infty'=\infty'$ if $n\neq 0$, and $0$ otherwise. 
	Both semirings are $\omega$-continuous and $h:\mathbb{N}^\infty\mapsto \mathbb{N}^{\infty,\infty'}$ defined by $h(n)=n$ for every $n\in\mathbb{N}$, $h(\infty)=\infty'$ is a semiring homomorphism (which is not $\omega$-continuous). 
	Assume that $\atann(\ontology,\database,\mathbb{N}^\infty,\annot,\goal)=\Sigma_{i\in\mathbb{N}} 1 =\infty$. 
	Then $h(\atann(\ontology,\database,\mathbb{N}^\infty,\annot,\goal))=\infty'$ is different from $\atann(\ontology,\database,\mathbb{N}^{\infty,\infty'},h\circ\annot,\goal)=\Sigma_{i\in\mathbb{N}} h(1) =\infty$. 
\end{example}

\subsection{Model-Based Semantics Cases}
On $\plus\,$-idempotent semirings, 
$\provmodtot$ and $\provmodmon$ coincide with $\atann$ so verify the same properties, and the semiring $\mathbb{B}\llbracket\semiringVars\rrbracket$ of formal power series with Boolean coefficients can be used to compute them in any $\plus\,$-idempotent semiring (Appendix \ref{app:booleanseriesformodlbased}). 
However, on non-idempotent semirings, they do not satisfy several properties, and in particular the Commutation with ($\omega$-Continuous) Homomorphisms Properties.

\begin{example}
	Let $\ontology=\{A(x)\rightarrow \mn{goal}, B(x)\rightarrow \mn{goal}\}$ and $\database=\{A(a), B(a)\}$ and consider the provenance semiring $\mathbb{N}^\infty\llbracket\semiringVars\rrbracket$ with $\annot_\semiringVars(A(a))=x$ and $\annot_\semiringVars(B(a))=y$. 
	
	It holds that both $\provmodtot(\ontology,\database,\mathbb{N}^\infty\llbracket\semiringVars\rrbracket,\annot_\semiringVars,\mn{goal})$ and $\provmodmon(\ontology,\database,\mathbb{N}^\infty\llbracket\semiringVars\rrbracket,\annot_\semiringVars,\mn{goal})$ 
	are equal to $x+y$. 
	
	Consider now the semiring 
	\mbox{$\mathbb{N}^\infty$, $\annot_\mathbb{N}(A(a))=2$} and $\annot_\mathbb{N}(B(a))=2$. 
	Both $\provmodtot(\ontology,\database,\mathbb{N}^\infty,\annot_\mathbb{N},\mn{goal})$ and $\provmodmon(\ontology,\database,\mathbb{N}^\infty,\annot_\mathbb{N},\mn{goal})$ are equal to $2$. 
	
	For $\prov\in\{\provmodtot,\provmodmon\}$ we then have the following:

	\noindent $(i)$  Let $h$ be a $\omega$-continuous homomorphism from $\mathbb{N}^\infty\llbracket\semiringVars\rrbracket$ to $\mathbb{N}^\infty$ such that $h(x)=2$ and $h(y)=2$. 
	Since $h(x+y)=h(x)+h(y)=4$, $\prov$ does not satisfy the Commutation with $\omega$-Continuous Homomorphisms Property. 
	
	\noindent $(ii)$  The relational provenance of $Q^\ontology()$ \wrt $\mathbb{N}^\infty$ and $\annot_\mathbb{N}$ is $4$ so $\prov$ does not satisfy the Algebra Consistency Property. 
	
	\noindent $(iii)$ $\prov(\emptyset,\database, \mathbb{N}^\infty,\annot_\mathbb{N}, A(a))+ \prov(\emptyset,\database, \mathbb{N}^\infty,\annot_\mathbb{N}, B(a))=4$ 
	so $\prov$ does not satisfy the Alternative Use nor the Joint and Alternative Use Property. 
	
	\noindent $(iv)$ Since $\prov(\ontology,\{A(a)\}, \mathbb{N}^\infty,\annot_\mathbb{N},\mn{goal}) + \prov(\ontology,\{B(a)\}, \mathbb{N}^\infty,\annot_\mathbb{N},\mn{goal})=4$ is strictly greater than $\prov(\ontology,\database, \mathbb{N}^\infty,\annot_\mathbb{N},\mn{goal})$, $\prov$ does not satisfy the Insertion Property. 
\end{example}

Moreover, $\provmodmon$ does not satisfy the Joint Use Property.

\begin{example}
	Let 
	\begin{align*}
		\ontology=\{&A(x)\rightarrow \mn{g}_1, A(x)\rightarrow \mn{g}_2, B(x)\rightarrow \mn{g}_1, B(x)\rightarrow \mn{g}_2\}\\
		\database=\{&A(a), B(a)\}\text{ with }
		\annot_\semiringVars(A(a))=x,\annot_\semiringVars(B(a))=y. 
	\end{align*}
	Both $\provmodmon(\ontology,\database,\mathbb{N}^\infty\llbracket\semiringVars\rrbracket,\annot_\semiringVars,\mn{g}_1)$ and $\provmodmon(\ontology,\database,\mathbb{N}^\infty\llbracket\semiringVars\rrbracket,\annot_\semiringVars,\mn{g}_2)$ are equal to $x+y$ 
	but 
	$\provmodmon(\ontology\cup\{\mn{g}_1 \wedge \mn{g}_2\rightarrow \mn{goal}\},\database,\mathbb{N}^\infty\llbracket\semiringVars\rrbracket,\annot_\semiringVars,\mn{goal})=x^2+y^2+xy\neq (x+y)^2$. 
\end{example}

We show that $\provmodtot$ does not satisfy the Necessary Facts Property in Appendix \ref{app:proofs-for-provmodtot-properties} 
because we needed to craft a specific semiring to get a counter-example.

\section{Complexity Considerations and Conclusion}
In this paper, we present alternative provenance semantics for Datalog based on models, execution algorithms and derivation trees, and compare them through the lens of different properties. $\mtann$ is the only one that satisfies all the studied properties but does not coincide with an execution based semantics contrary to the other tree-based semantics $\atann$, $\mdtann$, and $\rmdtann$. 
The equivalence between the tree-based $\atann$, $\mtann$ and model-based $\provmodtot$ and $\provmodmon$ definitions  on absorptive semirings may also indicates a robust provenance on this restricted setting. 

One of the main complexity sources of Datalog provenance stems from  
its infinite representation. 
\citeauthor{DeutchMRT14} \shortcite{DeutchMRT14} studied semirings for which the provenance expressions given by $\atann$ are finite, and showed that they can be represented by polynomial size circuits. 
We show (Appendix~\ref{appendix:computation}) 
that the annotations produced at each iteration of our execution algorithms can be represented by arithmetic circuits of polynomial size in the data. 
Consequently, both $\mdtann$ and $\rmdtann$ can be represented by polynomial size circuits 
regardless of the semiring. 
On the contrary, we show  
that (assuming $\mathsf{P} \ne \mathsf{NP}$) there is no polynomially computable circuit  that computes $\mtann$ on $\mathbb{N}^\infty\llbracket\semiringVars\rrbracket$, by a reduction from a result by \citeauthor{DBLP:conf/www/ArenasCP12} \shortcite{DBLP:conf/www/ArenasCP12}. 
Whether it is possible to polynomially compute circuits for $\mtann$ on provenance semirings less expressive than  $\mathbb{N}^\infty\llbracket\semiringVars\rrbracket$ but non-absorptive remains open. 

\section*{Acknowledgements}
This work is supported by the ANR project CQFD (ANR-18-CE23-0003). 

\bibliographystyle{kr}
\bibliography{provenance}

\newpage
\onecolumn
\appendix
\appendixpage

\startcontents[sections]
\printcontents[sections]{l}{1}{\setcounter{tocdepth}{2}}


\section{Discussion and Proofs for Section \ref{sec:semantics}}

\subsection{Relationships between Provenance Semantics and Different Bag Semantics}

\subsubsection{Connection between Incognizant and Cognizant Bag Semantics and Provenance Semantics}
To compare the bag semantics defined by \citeauthor{DBLP:conf/lics/HernichK17} \shortcite{DBLP:conf/lics/HernichK17}  in the context of information integration and a provenance semantics $\prov$ with counting semiring $\mathbb{N}$, we consider the case where $\ontology$ is a set of Datalog rules (in normalized form with a single atom in the head) which are also s-t tgds, \ie such that the predicates used in the rule heads and bodies are disjoint. 
For a provenance semantics $\prov$, we want to know whether for every such $\ontology$, $\database$ over the source schema (\ie predicates that occur in the Datalog rule bodies), $\annot_\mathbb{N}:\database\mapsto \mathbb{N}\setminus\{0\}$ and BCQ $q$ over the target schema (\ie predicates that occur in Datalog rule heads), $\prov(\ontology\cup\{q\rightarrow\mn{goal}\},\database,\mathbb{N},\annot_\mathbb{N},\mn{goal})$ is equal to the multiplicity of the empty tuple in the the incognizant or cognizant certain answers of $q$ \wrt $\ontology$ and $\database$. 
Note that in this context, all derivation trees of $\mn{goal}$ \wrt $\database$ and $\ontology$ are non-recursive and of depth 2. 
Hence, all derivation tree-based (and execution-based) semantics coincide. We show below that they are in line with the cognizant bag semantics. We have shown in Section \ref{subsec:model-based-sem} that this is not the case for the incognizant semantics, \ie only the cognizant bag semantics for information integration agrees with the traditional bag semantics for Datalog. 

\begin{proposition}
	For $\prov\in\{\atann,\mtann, \rmdtann, \mdtann, \provnaive, \provsemi, \provopti\}$, $\ontology$, $\database$, $\annot_\mathbb{N}$, and $q$ as required, $\prov(\ontology\cup\{q\rightarrow\mn{goal}\},\database,\mathbb{N},\annot_\mathbb{N},\mn{goal})$ is equal to the multiplicity of the empty tuple in the cognizant certain answers of $q$ \wrt $\ontology$ and $\database$.
\end{proposition}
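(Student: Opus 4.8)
The plan is to first collapse all seven semantics to $\atann$, and then to show that $\atann$ computes exactly the bag evaluation of $q$ over the least cognizant solution, which in turn realizes the cognizant certain answer.

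\emph{Collapsing the semantics.} As already observed, every derivation tree of $\mn{goal}$ \wrt $\database$ and $\ontology\cup\{q\rightarrow\mn{goal}\}$ has depth $2$: the root applies $q\rightarrow\mn{goal}$, its children are target atoms each derived by a single s-t tgd, and the leaves are source facts of $\database$. Such trees are non-recursive (the three levels carry pairwise disjoint predicates), all have the minimal possible depth $2$ (since $\mn{goal}\notin\database$ and its children are target atoms not in $\database$), and every subtree rooted at a target atom is itself a minimal-depth derivation of that atom; hence every tree is also hereditary minimal-depth. Consequently $\drvtrsig$ equals each of its restrictions to non-recursive, minimal-depth, and hereditary minimal-depth trees, so $\atann=\mtann=\mdtann=\rmdtann$ in this setting, and the three execution-based semantics reduce to these by Propositions~\ref{prop:executiontreeconnection}, \ref{prop:executiontreeconnection_mdt} and~\ref{prop:executiontreeconnection_hmdt} ($\provnaive=\atann$, $\provopti=\mdtann$, $\provsemi=\rmdtann$). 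It thus suffices to evaluate $\atann(\ontology\cup\{q\rightarrow\mn{goal}\},\database,\mathbb{N},\annot_\mathbb{N},\mn{goal})$.

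\emph{Factoring $\atann$.} I would group the depth-$2$ trees by the homomorphism $h$ from $q$ to the entailed target facts used at the root. For a fixed $h$, a tree is obtained by independently choosing, for each atom $A$ of $q$, a depth-$1$ subtree deriving $h(A)$; since we work in $\mathbb{N}$ and $\ann$ is a product over leaves, the distributive law lets the total annotation factor as $\sum_{h}\prod_{A\in q}\bigl(\sum_{t \text{ derives } h(A)}\ann(t)\bigr)$. The inner sum is $\atann$ of the target fact $h(A)$, and because target predicates never occur in s-t tgd bodies, such a fact $\beta$ admits only depth-$1$ trees, so this inner sum equals $\annot_{\consop}(\beta)$, the annotation assigned to $\beta$ by one application of the annotation-aware consequence operator to $(\database,\mathbb{N},\annot_\mathbb{N})$.

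\emph{Identifying the least cognizant solution.} Let $J^\ast$ be the target bag with multiplicity $\annot_{\consop}(\beta)$ on each target fact $\beta$. Taking, for each rule $\datrule=\phi_\datrule\rightarrow H(\vec{x})$, the per-rule bag $J_\datrule$ whose multiplicities are the $\datrule$-provenance $\sum_{h(H(\vec x))=\beta}\prod_{\gamma\in\phi_\datrule}\annot_\mathbb{N}(h(\gamma))$, one checks that $(\database,J_\datrule)$ satisfies $\datrule$ and $\uplus_\datrule J_\datrule=J^\ast$, so $J^\ast$ is a cognizant solution. Conversely, any cognizant solution $J$ has $\text{mult}_J(\beta)\ge\sum_\datrule\text{mult}_{J_\datrule}(\beta)\ge\annot_{\consop}(\beta)$, i.e.\ $J^\ast$ is bag-contained in every cognizant solution. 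Since the empty-tuple multiplicity of $q$ over a target bag $J$ is $\sum_{h:q\to J}\prod_{A\in q}\text{mult}_J(h(A))$, which is monotone in the multiplicities, the bag intersection (minimum) defining the certain answer is attained at $J^\ast$ and equals the factored expression of the previous paragraph, hence equals $\atann(\ontology\cup\{q\rightarrow\mn{goal}\},\database,\mathbb{N},\annot_\mathbb{N},\mn{goal})$.

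The hard part is the multiplicity bookkeeping across the two formalisms: I must verify that the derivation-tree product over the syntactic atoms of $q$ (and of each rule body) matches the relational-provenance convention underlying both $\annot_{\consop}$ and the bag answer multiplicity, and I must argue carefully that $J^\ast$ is genuinely the bag-least cognizant solution, so that the minimum in the bag intersection is realized there rather than at some bag-incomparable solution.
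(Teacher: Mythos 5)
Your proof is correct and takes essentially the same approach as the paper: the paper likewise uses the observation that all derivation trees of $\mn{goal}$ here are depth-2 and non-recursive (so all tree- and execution-based semantics coincide), decomposes those trees by the root homomorphism into products of target-fact provenances, and proves the equality with the cognizant certain-answer multiplicity by the same two-sided argument that you package as exhibiting the bag-least cognizant solution $J^\ast$ (every cognizant solution dominates these multiplicities, and the instance with exactly these multiplicities is itself a cognizant solution).
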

\begin{proof}
	We first show that for every fact $\fact=p(\vec{a})$ over the target schema such that $\ontology,\database\models \fact$, $\prov(\ontology,\database,\mathbb{N},\annot_\mathbb{N},\fact)$ is equal to the multiplicity $n_\fact$ of the empty tuple in the cognizant certain answers of the Boolean query $\fact$ \wrt $\ontology$ and $\database$. 
	\begin{itemize}
		\item The set of derivation trees for $\fact$ correspond precisely to the set of pairs $(\datrule,h)$ such that $\datrule\in\ontology$ is of the form $\phi(\vec{x},\vec{y} ) \rightarrow  p(\vec{x})$ and $h$ is an homomorphism from $\phi(\vec{x},\vec{y} )$ to $\database$ such that $h(\vec{x})=\vec{a}$. Hence $\prov(\ontology,\database,\mathbb{N},\annot_\mathbb{N},\fact)$ is equal to the sum over all s-t tgd $\datrule:=\phi(\vec{x},\vec{y} ) \rightarrow  p(\vec{x})$ in $\ontology$ of the multiplicity of $\vec{a}$ in the answer to $\phi(\vec{x},\vec{y} )$ on $\database$. 
		
		\item Let $J$ be a cognizant solution for $\database$ \wrt $\ontology$. 
		The multiplicity of $\fact$ in $J$ is at least the sum over the rules $\datrule:=\phi(\vec{x},\vec{y} ) \rightarrow  p(\vec{x})$ in $\ontology$ of the multiplicities of $\ans$ in the answers of $\phi(\vec{x},\vec{y} )$ on $\database$. Hence the multiplicity of $\fact$ in $J$ is greater or equal to $\prov(\ontology,\database,\mathbb{N},\annot_\mathbb{N},\fact)$. 
		Since this is true for every cognizant solution $J$, it follows that $\prov(\ontology,\database,\mathbb{N},\annot_\mathbb{N},\fact)\leq n_\fact$.
		
		\item If $J$ is a cognizant solution for $\database$ \wrt $\ontology$, then $J'$ obtained from $J$ by setting the multiplicity of $\fact$ to $\prov(\ontology,\database,\mathbb{N},\annot_\mathbb{N},\fact)$ is also a cognizant solution. Indeed, for each 
		$\datrule:=\phi(\vec{x},\vec{y} ) \rightarrow  p(\vec{x})$ in $\ontology$, $J_\datrule=\{(p(\vec{b}), n_{\datrule,\vec{b}} )\mid h:\phi(\vec{x},\vec{y} )\mapsto D, h(\vec{x})=\vec{b}, n_{\datrule,\vec{b}} \text{ multiplicity of }\vec{b}\text{ in answers to }\phi(\vec{x},\vec{y} )\text{ on }\database\}$ satisfies $\datrule$, and $\sum_{\datrule:=\phi(\vec{x},\vec{y} ) \rightarrow  p(\vec{x})\in\ontology}n_{\datrule,\vec{a}}=\prov(\ontology,\database,\mathbb{N},\annot_\mathbb{N},\fact)$. 
		It follows that $\prov(\ontology,\database,\mathbb{N},\annot_\mathbb{N},\fact)\geq n_\fact$.
	\end{itemize}
	Hence, the multiplicity $n_\fact$ of the empty tuple in the cognizant certain answers of the Boolean query $\fact$ \wrt $\ontology$ and $\database$ is equal to $\prov(\ontology,\database,\mathbb{N},\annot_\mathbb{N},\fact)$. 
	We now show that $\prov(\ontology\cup\{q\rightarrow\mn{goal}\},\database,\mathbb{N},\annot_\mathbb{N},\mn{goal})$ is equal to the multiplicity $n_q$ of the empty tuple in the cognizant certain answers of $q$ \wrt $\ontology$ and $\database$.
	Let $q=\exists \vec{x}\bigwedge_{i=1}^n p_i(\vec{x})$. 
	\begin{itemize}
		\item Let $J=\{\fact\mid\ontology,\database\models\fact\}$. 
		For each homomorphism $h:\bigwedge_{i=1}^n p_i(\vec{x})\mapsto J$, the derivation trees for $\mn{goal}$ with root labeled by $(\mn{goal}, q\rightarrow\mn{goal}, h)$ have children $h(p_1(\vec{x}))=\fact_1,\dots,h(p_n(\vec{x}))=\fact_n \in J$ and correspond to the choice of $t_1,\dots,t_n$ where each $t_i$ is a derivation tree for $\fact_i$. 
		Hence, 
		\begin{align*}
			\prov(\ontology\cup\{q\rightarrow\mn{goal}\},\database,\mathbb{N},\annot_\mathbb{N},\mn{goal})=&\sum_{h:\bigwedge_{i=1}^n p_i(\vec{x})\mapsto J} \sum_{(t_1,\dots,t_n)\in\drvtr{\fact_1}\times\dots\times\drvtr{\fact_n}} \ann(t_i)\\
			=&\sum_{h:\bigwedge_{i=1}^n p_i(\vec{x})\mapsto J} \prod_{i=1}^n \prov(\ontology,\database,\mathbb{N},\annot_\mathbb{N},\fact_i).
		\end{align*}
		\item Let $J$ be a cognizant solution for $\database$ \wrt $\ontology$. The multiplicity of the empty tuple in the answers of $q$ over $J$ is the sum over the homomorphisms $h:q\mapsto J$ of the product of the multiplicities of $\fact_i=h(p_i(\vec{x}))$ in $J$. Moreover, every such $\fact_i$ is such that $\ontology,\database\models\fact_i$ so its multiplicity in $J$ is at least $\prov(\ontology,\database,\mathbb{N},\annot_\mathbb{N},\fact_i)$. 
		Since this holds for any cognizant solution $J$, it follows that $n_q\geq \prov(\ontology\cup\{q\rightarrow\mn{goal}\},\database,\mathbb{N},\annot_\mathbb{N},\mn{goal})$.
		
		\item A cognizant solution $J$ for $\database$ \wrt $\ontology$ can be obtained by setting the multiplicity of each fact on the target schema $\alpha$ to $\prov(\ontology,\database,\mathbb{N},\annot_\mathbb{N},\fact)$. 
		It follows that $n_q\leq \sum_{h:\bigwedge_{i=1}^n p_i(\vec{x})\mapsto J} \prod_{i=1}^n \prov(\ontology,\database,\mathbb{N},\annot_\mathbb{N},\fact_i)=\prov(\ontology\cup\{q\rightarrow\mn{goal}\},\database,\mathbb{N},\annot_\mathbb{N},\mn{goal})$.
	\end{itemize}
	Hence the multiplicity $n_q$ of the empty tuple in the cognizant certain answers of $q$ \wrt $\ontology$ and $\database$ is equal to $\prov(\ontology\cup\{q\rightarrow\mn{goal}\},\database,\mathbb{N},\annot_\mathbb{N},\mn{goal})$.
\end{proof}

We show below that the incognizant semantics coincides with $\provmodtot$ with the counting semiring. The examples we gave in Section \ref{subsec:model-based-sem} to show that $\provmodtot$ and $\provmodmon$ differ show that this is not the case of $\provmodmon$.
\begin{proposition}\label{prop:provmodtot-capture-incognizant}
	For every $\ontology$, $\database$, $\annot_\mathbb{N}$, and $q$ as required, $\provmodtot(\ontology\cup\{q\rightarrow\mn{goal} \},\database,\mathbb{N},\annot_\mathbb{N},\mn{goal})$ is equal to the multiplicity  of the empty tuple in the incognizant certain answers of $q$ \wrt $\ontology$ and~$\database$.
\end{proposition}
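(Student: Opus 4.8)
The plan is to match the least $\semiringshort$-annotated model of $\ontology'\df\ontology\cup\{q\rightarrow\mn{goal}\}$ with the pointwise-minimal incognizant solution of $\ontology$, and then read the multiplicity of $\mn{goal}$ off both sides. First I would use the property recalled above that, since $\ontology$ consists of s-t tgds, source and target predicates are disjoint; hence every derivation is non-recursive of depth at most two, all the multiplicities at play are finite, and I may safely carry out the computation in the $\omega$-continuous semiring $\mathbb{N}^\infty$ (where $\sqsubseteq$ is the usual order and all greatest lower bounds exist, so that $\provmodtot$ is well defined), recovering the finite natural-number value at the end. Writing $\phi_q$ for the matrix of $q$ and $J(\beta)$ for the multiplicity of a fact $\beta$ in a bag $J$, I would also note that the target part of any $\mathbb{N}^\infty$-annotated model of $\ontology'$ is an incognizant solution of $\ontology$, and that the annotation of $\mn{goal}$ is then forced by condition~(2) for $q\rightarrow\mn{goal}$ to dominate the $q$-multiplicity over that solution.

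The crux is that both definitions reduce to the \emph{same} family of per-rule, maximum-combined lower bounds. On the model side, condition~(2) is imposed separately for each rule $\datrule_i\df\phi_i(\vec x,\vec y)\rightarrow H(\vec x)$, forcing for every $\vec a$
\[
\sum_{h:\,h(\vec x)=\vec a}\ \prod_{\beta\in h(\phi_i)}\modannot^I(\beta)\ \sqsubseteq\ \modannot^I(H(\vec a)),
\]
so that when several rules share the head predicate $H$ the binding constraint on $\modannot^I(H(\vec a))$ is their \emph{maximum}. On the solution side, the incognizant condition requires $J(H(\vec a))$ to dominate the bag-multiplicity $c_i(\vec a)\df\sum_{h:\,h(\vec x)=\vec a}\prod_{\beta\in h(\phi_i)}\annot_\mathbb{N}(\beta)$ of $\vec a$ as an answer to the body of $\datrule_i$ over $(\database,\mathbb{N},\annot_\mathbb{N})$, again for all rules at once --- this is precisely the maximum-based union characterising the incognizant semantics, and the point where it diverges from the sum-based $\atann$.

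I would then exhibit the shared minimal object. Let $J^\star$ assign each derivable target fact $H(\vec a)$ the multiplicity $\max_i c_i(\vec a)$ and carry no further facts, and let $(\inter^\star,\modannot^\star)$ annotate every source fact $\beta$ by $\annot_\mathbb{N}(\beta)$, every target fact $H(\vec a)$ by $\max_i c_i(\vec a)$, and set $\modannot^\star(\mn{goal})\df\sum_{h:\phi_q\rightarrow\inter^\star}\prod_{\beta\in h(\phi_q)}\modannot^\star(\beta)$. A direct check against the per-rule bounds shows that $J^\star$ is an incognizant solution and $(\inter^\star,\modannot^\star)$ a $\semiringshort$-annotated model. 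Because the multiplicity of the CQ $q$ over a bag is a sum of products of fact multiplicities, hence monotone in each of them, the bag intersection over all incognizant solutions is attained at $J^\star$; thus the incognizant certain-answer multiplicity of $q$ equals $\sum_{h:\phi_q\rightarrow J^\star}\prod_\beta J^\star(\beta)$, which by construction is exactly $\modannot^\star(\mn{goal})$.

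Finally I would close the two inequalities by a short induction over the two-level structure: condition~(1) gives $\modannot^\star\sqsubseteq\modannot^I$ on source facts, the per-rule bounds propagate it to target facts, and monotonicity of the $\mn{goal}$ rule then yields $\modannot^\star(\mn{goal})\sqsubseteq\modannot^I(\mn{goal})$ for every model; so $\modannot^\star(\mn{goal})$ is a lower bound on the set over which $\provmodtot$ takes its infimum, giving $\modannot^\star(\mn{goal})\sqsubseteq\provmodtot(\ontology',\database,\mathbb{N},\annot_\mathbb{N},\mn{goal})$, while membership of $(\inter^\star,\modannot^\star)$ in that set gives the reverse inequality. Combining yields equality with the incognizant certain-answer multiplicity. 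The step I expect to require most care is confirming that $(\inter^\star,\modannot^\star)$ really is a model --- specifically that the maximum-over-rules annotation on target facts discharges every individual rule instance of condition~(2) --- together with the accompanying fact that the infimum defining $\provmodtot$ is \emph{realised} by this least model rather than merely approached; this is exactly where disjointness of source and target, by excluding recursion and keeping all values finite, does the work.
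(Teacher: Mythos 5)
Your proposal is correct and follows essentially the same route as the paper's proof: both arguments observe that the per-rule model constraints and the incognizant conditions each reduce to dominating the maximum, over rules, of the bag multiplicities of rule bodies, and both exhibit the canonical minimal witness (source facts annotated by $\annot_\mathbb{N}$, target facts by the max over rules, $\mn{goal}$ by the CQ sum) that is simultaneously an annotated model and an incognizant solution at which the infimum and the minimum are attained. The only difference is organizational --- the paper first proves the fact-level equality and then composes over homomorphisms of $q$, whereas you build the global minimal object in one step --- which does not change the substance of the argument.
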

\begin{proof}
	We first show that for every fact $\fact=p(\vec{a})$ over the target schema such that $\ontology,\database\models \fact$, $\provmodtot(\ontology,\database,\mathbb{N},\annot_\mathbb{N},\fact)$ is equal to the multiplicity $n_\fact$ of the empty tuple in the incognizant certain answers of the Boolean query $\fact$ \wrt $\ontology$ and $\database$. 
	\begin{itemize}
		\item We show that $\provmodtot(\ontology,\database,\mathbb{N},\annot_\mathbb{N},\fact)$ is equal to the maximum over all s-t tgd $\phi(\vec{x},\vec{y} ) \rightarrow  p(\vec{x})$ in $\ontology$ of the multiplicity of $\vec{a}$ in the answer to $\phi(\vec{x},\vec{y} )$ on $\database$. 
		Let $\phi(\vec{x},\vec{y} ) \rightarrow  p(\vec{x})$ in $\ontology$. 
		For every model $(\inter,\modannot^I)$ of $(\ontology,\database,\mathbb{N},\annot_\mathbb{N})$, 
		$\Sigma_{h':\phi(\vec{x},\vec{y})\mapsto I, h'(\vec{x})=\vec{a}} \Pi_{\beta\in h'(\phi(\vec{x},\vec{y}))} \modannot^I(\beta) \leq \modannot^I(\fact)$ and for every $\beta\in h'(\phi(\vec{x},\vec{y}))$, $\beta\in\database$ because $\phi(\vec{x},\vec{y} ) \rightarrow  p(\vec{x})$ is a s-t tgd, so $\modannot^I(\beta)\geq \annot_\mathbb{N}(\beta)$. Hence 
		$\modannot^I(\fact)$ is greater or equal to the multiplicity of $\vec{a}$ in the answer to $\phi(\vec{x},\vec{y} )$ on $\database$. 
		Moreover, the interpretation that annotates each $\beta\in\database$ by $\annot_\mathbb{N}(\beta)$ and each $\gamma$ produced by applying some s-t tgd by such maximal multiplicity is a model of $(\ontology,\database,\mathbb{N},\annot_\mathbb{N})$. 
		
		\item We now show that $n_\fact$ is precisely the maximum over all s-t tgd $\phi(\vec{x},\vec{y} ) \rightarrow  p(\vec{x})$ in $\ontology$ of the multiplicity of $\vec{a}$ in the answer to $\phi(\vec{x},\vec{y} )$ on $\database$. 
		Every incognizant solution $J$ for $\database$ \wrt $\ontology$ is such that the multiplicity of $\fact$ is at least the multiplicity of $\ans$ in the answer of $\phi(\vec{x},\vec{y} )$ on $\database$ for every $\phi(\vec{x},\vec{y} ) \rightarrow  p(\vec{x})$ in $\ontology$. 
		Moreover, if $J$ is a incognizant solution for $\database$ \wrt $\ontology$, then $J'$ obtained from $J$ by setting the multiplicity of $\fact$ to this maximal multiplicity is also a incognizant solution, as it still satisfies all s-t tgds. 
	\end{itemize}
	Hence, $n_\fact=\provmodtot(\ontology,\database,\mathbb{N},\annot_\mathbb{N},\fact)$.\smallskip

	$\provmodtot(\ontology\cup\{q\rightarrow\mn{goal}\},\database,\mathbb{N},\annot_\mathbb{N},\mn{goal})$ is the minimal $\modannot^I(\mn{goal})$ over the models $(\inter,\modannot^I)$ of $\ontology\cup\{q\rightarrow\mn{goal}\}$ and $(\database,\mathbb{N},\annot_\mathbb{N})$. Since such models are models of $\ontology$ and $(\database,\mathbb{N},\annot_\mathbb{N})$, $\provmodtot(\ontology\cup\{q\rightarrow\mn{goal}\},\database,\mathbb{N},\annot_\mathbb{N},\mn{goal})$ is equal to the sum of $\prod_{\fact_i\in h(q)}\provmodtot(\ontology,\database,\mathbb{N},\annot_\mathbb{N},\fact_i)$ where $h$ ranges over the homomorphisms from $q$ to the set $J$ of facts entailed by $\ontology$ and $\database$.
	
	The multiplicity of the empty tuple in the incognizant certain answers of $q$ \wrt $\ontology$ and $\database$ is the minimum over the incognizant solutions of the multiplicity of the empty tuple in the answers of $q$. 
	Given a incognizant solution $J$, the multiplicity of the empty tuple in the answers of $q$ over $J$ is the sum over the homomorphisms $h:q\mapsto J$ of the product of the multiplicities of $\fact_i=h(p_i(\vec{x}))$. Moreover, such $\fact_i$ are such that $\ontology,\database\models\fact_i$ so their minimal multiplicities in some incognizant solution are $n_{\fact_i}=\provmodtot(\ontology,\database,\mathbb{N},\annot_\mathbb{N},\fact_i)$. 
	
	It follows that the multiplicity of the empty tuple in the incognizant certain answers of the BCQ $\fact$ \wrt $\ontology$ and $\database$ is equal to $\provmodtot(\ontology\cup\{q\rightarrow\mn{goal}\},\database,\mathbb{N},\annot_\mathbb{N},\mn{goal})$.
\end{proof}

\subsubsection{Connection between Description Logics Bag Semantics and Provenance Semantics}

\citeauthor{DBLP:conf/ijcai/NikolaouKKKGH17} \shortcite{DBLP:conf/ijcai/NikolaouKKKGH17} defined a bag semantics for the description logic DL-Lite then extended it to the full ontology-based data access setting \shortcite{DBLP:journals/ai/NikolaouKKKGH19}. 
We focus on the DL-Lite case with unique name assumption considered in \cite{DBLP:conf/ijcai/NikolaouKKKGH17} for simplicity, but mapping rules compatible with our setting could be added. 

A bag ABox corresponds to a database annotated with integers $(\database,\mathbb{N}^\infty,\annot_\mathbb{N})$. 
A bag interpretation $\Imc=\langle \Delta^\Imc, \cdot^\Imc\rangle$ can also be seen as a (possibly infinite) set of facts annotated with elements from $\mathbb{N}^\infty$. The interpretation function extends to concepts and roles as follows: $(P^-)^\Imc(u,v)= P^\Imc(v,u)$ and $(\exists R)^\Imc(u)=\sum_{v\in \Delta^\Imc} R^\Imc(u,v)$. 
$\Imc$ is a model of a bag ABox if the multiplicity of every fact in $\Imc$ is at least its multiplicity in the ABox; it is a model of a TBox if it satisfies all its concept and role inclusions where $C\sqsubseteq D$ is satisfied iff $C^\Imc(\vec{x})\leq D^\Imc(\vec{x})$ for every $\vec{x}$. 

The bag answers of a CQ $q=\exists \vec{y}\phi(\vec{x},\vec{y})$ over a bag interpretation $\Imc$ are defined by $q^\Imc(\vec{a})=\sum_{\nu\in V}\prod_{S(\vec{t})\in\phi(\vec{x},\vec{y})}  S^\Imc(\nu(\vec{t}))$ where $V$ is the set of all valuations $\nu:\vec{x}\cup\vec{y}\mapsto \Delta^\Imc$ such that $\nu(\vec{x})=\vec{a}^\Imc$ and $\nu(a)=a^\Imc$ for every constant $a$. 
Finally the bag certain answers to $q$ is the bag-intersection of $q^\Imc$ over all models $\Imc$ of the ABox and TBox, \ie the multiplicity of a certain answer is the minimum of its multiplicities over the models.

To compare the bag semantics defined by \citeauthor{DBLP:conf/ijcai/NikolaouKKKGH17} \shortcite{DBLP:conf/ijcai/NikolaouKKKGH17} for DL-Lite and a provenance $\prov$ with the extended counting semiring $\mathbb{N}^\infty$, we consider the case where $\ontology$ is a set of Datalog rules which are formulated in DL-Lite, \ie use only unary and binary predicates and contains a single atom in body and head.

We show below that the DL-Lite bag semantics coincides with $\provmodtot$ with the (extended) counting semiring. The examples we gave in Section \ref{subsec:model-based-sem} to show that $\provmodtot$ and $\provmodmon$ differ can be easily adapted to show that $\provmodmon$ does not coincide with the DL-Lite bag semantics (just replace $\mn{goal}$ by $H(x)$ in $\ontology$ and consider BCQ $q=\exists x H(x)$), and the example we gave in Section \ref{subsec:model-based-sem} to show that the cognizant and incognizant semantics differ shows that the execution/derivation-tree based semantics do not coincide with the DL-Lite bag semantics either. 
\begin{proposition}\label{prop:provmodtot-capture-DL}
	For every set $\ontology$ of Datalog rules which are formulated in DL-Lite, $\database$, $\annot_\mathbb{N}$, and BCQ $q$, $\provmodtot(\ontology\cup\{q\rightarrow \mn{goal}\},\database,\mathbb{N}^\infty,\annot_\mathbb{N},\mn{goal})$ is equal to the multiplicity of the empty tuple in the bag certain answers to $q$ over $(\ontology,\database,\mathbb{N}^\infty,\annot_\mathbb{N})$.
\end{proposition}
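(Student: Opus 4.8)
The plan is to follow the same two-stage strategy as in the proof of Proposition~\ref{prop:provmodtot-capture-incognizant}: first establish the claim for a single entailed target fact $\fact$, viewed as a Boolean query, and then lift it to an arbitrary BCQ $q$ through the goal rule $q\rightarrow\mn{goal}$. The genuinely new ingredient, compared to the s-t tgd case, is to reconcile the two notions of model involved, namely the $\mathbb{N}^\infty$-annotated interpretations $(\inter,\modannot^I)$ underlying $\provmodtot$ and the DL-Lite bag interpretations $\Imc$.

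First I would show that the two satisfaction conditions coincide rule by rule. A bag interpretation restricted to the named domain is exactly an $\mathbb{N}^\infty$-annotated interpretation via $\modannot^I(\beta)=\beta^\Imc$. For a concept inclusion $A\sqsubseteq B$, translated to $A(x)\rightarrow B(x)$, the body carries no existential variable, so condition~(2) of $\provmodtot$ reduces to $\modannot^I(A(a))\sqsubseteq\modannot^I(B(a))$, matching $A^\Imc(a)\le B^\Imc(a)$; role inclusions (including inverse roles) behave identically. The decisive case is an axiom $\exists R\sqsubseteq A$, translated to $R(x,y)\rightarrow A(x)$: here the existential variable $y$ is summed over in condition~(2), giving $\sum_{b}\modannot^I(R(a,b))\sqsubseteq\modannot^I(A(a))$, which is precisely $(\exists R)^\Imc(a)=\sum_{v}R^\Imc(a,v)\le A^\Imc(a)$. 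Thus the $\provmodtot$-models over the named domain are exactly the bag models over that domain.

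The main obstacle is that DL-Lite bag interpretations may contain anonymous domain elements, whereas $\provmodtot$-interpretations range over facts on the named constants. I would argue that anonymous elements can never lower the certain multiplicity: since every axiom is a Datalog rule with no existential in the head, no anonymous element is forced, and any anonymous element can only contribute nonnegative summands to some $(\exists R)^\Imc$ or to a bag answer $q^\Imc$. Hence a minimal bag model can always be taken over the named domain, so the minimum defining the bag certain answer coincides with the infimum over named-domain interpretations defining $\provmodtot$. This yields $\provmodtot(\ontology,\database,\mathbb{N}^\infty,\annot_\mathbb{N},\fact)=n_\fact$, where $n_\fact$ is the bag certain-answer multiplicity of the Boolean query $\fact$, exactly as in the single-fact step of Proposition~\ref{prop:provmodtot-capture-incognizant}.

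Finally I would lift the result to an arbitrary BCQ $q=\exists\vec{y}\,\phi(\vec{y})$. In a $\provmodtot$-model of $\ontology\cup\{q\rightarrow\mn{goal}\}$, condition~(2) applied to the goal rule forces $\modannot^I(\mn{goal})$ to dominate $\sum_{\nu}\prod_{S(\vec{t})\in\phi}\modannot^I(S(\nu(\vec{t})))$, summing over all homomorphisms $\nu:\phi(\vec{y})\rightarrow\inter$; this is exactly the bag answer $q^\Imc=\sum_{\nu}\prod_{S(\vec{t})\in\phi}S^\Imc(\nu(\vec{t}))$. Taking the infimum over models on the $\provmodtot$ side and the minimum over models on the DL-Lite side, and combining the single-fact equality with the named-domain reduction above, shows that both quantities equal $\sum_{\nu}\prod_{i}n_{\fact_i}$ with $\fact_i=\nu(p_i(\vec{y}))$ and $\nu$ ranging over homomorphisms of $q$ into the set of entailed target facts, which is the desired equality.
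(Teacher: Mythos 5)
Your overall strategy is workable and the first two ingredients are right: the rule-by-rule matching of the two satisfaction conditions (including the decisive $\exists R\sqsubseteq B$ case, where the sum over homomorphisms agreeing on the frontier variable is exactly $(\exists R)^\Imc(a)=\sum_v R^\Imc(a,v)$) is precisely the case analysis the paper performs, and your reduction to named-domain bag models is sound --- since no rule has an existential head, dropping anonymous elements preserves every inclusion and can only decrease $(\exists R)^\Imc$ and $q^\Imc()$, so the certain multiplicity is unchanged. Where you diverge from the paper is the global structure: the paper never decomposes into single facts. It proves the two inequalities directly, transforming an arbitrary bag model $\Imc$ into an annotated model with $\modannot^I(\mn{goal})=q^\Imc()$ (keeping anonymous elements as extra constants), and an arbitrary annotated model of $\ontology\cup\{q\rightarrow\mn{goal}\}$ into a named-domain bag model with $q^\Imc()\sqsubseteq\modannot^I(\mn{goal})$. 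This avoids any appeal to canonical or minimal models.

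That avoidance matters, and it is where your proof has a genuine gap. Your final step asserts that both quantities equal $\sum_\nu\prod_i n_{\fact_i}$, i.e., that an infimum of a sum of products equals the sum of products of the individual infima. The direction $\inf_I\sum_\nu\prod_i\modannot^I(\fact_i)\ \geq\ \sum_\nu\prod_i n_{\fact_i}$ is automatic, but the converse requires a single model attaining all the infima $n_{\fact_i}$ simultaneously --- exactly the kind of exchange that fails for $\provmodtot$ in general (it is the reason $\provmodtot\neq\atann$). In the incognizant proposition you are imitating, the simultaneous minimal solution is trivial to construct because s-t tgds are non-recursive; DL-Lite programs are recursive (e.g., $A\sqsubseteq B$ together with $B\sqsubseteq A$), so this must be proved. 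It is in fact true over $\mathbb{N}^\infty$: models restricted to the (finitely many) entailed facts are closed under pointwise minimum, because $\sum_{h'}\prod_\beta\min(\modannot^{I_1}(\beta),\modannot^{I_2}(\beta))\leq\min_j\modannot^{I_j}(h(H(\vec{x})))$, so a simultaneously minimal annotated model exists, and the corresponding argument is needed on the bag side as well --- but none of this appears in your proof. Alternatively, once your model correspondence is in place you can drop the closed form entirely: for any annotated model of $\ontology$, the least admissible annotation of $\mn{goal}$ is $\sum_{\nu:q\mapsto\inter}\prod_{\beta\in\nu(q)}\modannot^I(\beta)$, so both sides are infima of the same functional over corresponding model classes and equality follows with no minimal-model argument; that is, in essence, the paper's route.
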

\begin{proof}
	Let $n$ be the multiplicity of the empty tuple in the bag certain answers to $q$ over $(\ontology,\database,\mathbb{N}^\infty,\annot_\mathbb{N})$. 
	
	Let $\Imc=\langle \Delta^\Imc, \cdot^\Imc\rangle$ be a model of $\ontology$ and $(\database,\mathbb{N}^\infty,\annot_\mathbb{N})$ seen as a DL-Lite TBox and bag ABox. 
	Assume w.l.o.g. that $\domain(\database)\subseteq \Delta$ and that for every $a\in\domain(\database)$, $a^\Imc=a$. 
	Let $(\inter,\modannot^I)$ be the annotated interpretation defined by $\inter=\{ S(\vec{a}) \mid S^\Imc(\vec{a})>0\}\cup\{\mn{goal}\}$ and $\modannot^I(S(\vec{a}))=S^\Imc(\vec{a})$ for every $S(\vec{a})\in \inter$, $\modannot^I(\mn{goal})=q^\Imc() $. 
	We show that $(\inter,\modannot^I)$ is an annotated model of $\ontology$ and $(\database,\mathbb{N}^\infty,\annot_\mathbb{N})$. 
	\begin{itemize}
		\item Since $\Imc$ is a model of $(\database,\mathbb{N}^\infty,\annot_\mathbb{N})$, then for every $S(\vec{a})\in\database$, $S^\Imc(\vec{a}^\Imc)\geq \annot_\mathbb{N}(S(\vec{a}))$. Hence $\inter$ fulfills point (1) of the definition of annotated models. 
		\item Let $\datrule\df\phi(\vec{x},\vec{y})\rightarrow H(\vec{x})$ be in $\ontology$ and $h:\phi(\vec{x},\vec{y})\mapsto \inter$. We are in one of the following cases.
		\begin{itemize}
			\item $\datrule=A(x)\rightarrow B(x)$, $h(x)=a$, $A(a)\in\inter$: since $\Imc$ satisfies $A\sqsubseteq B$, $A^\Imc(a)\leq B^\Imc(a)$, so $B(a)\in\inter$ and $\modannot^I(A(a))\leq \modannot^I(B(a))$.
			\item $\datrule=R(x,y)\rightarrow S(x,y)$, $h(x)=a$, $h(y)=b$, $R(a,b)\in\inter$: since $\Imc$ satisfies $R\sqsubseteq S$, $R^\Imc(a,b)\leq S^\Imc(a,b)$, so $S(a,b)\in\inter$ and $\modannot^I(R(a,b))\leq \modannot^I(S(a,b))$.
			\item $\datrule=R(x,y)\rightarrow S(y,x)$, $h(x)=a$, $h(y)=b$, $R(a,b)\in\inter$: since $\Imc$ satisfies $R\sqsubseteq S^-$, $R^\Imc(a,b)\leq (S^-)^\Imc(a,b)=S^\Imc(b,a)$, so $S(b,a)\in\inter$ and $\modannot^I(R(a,b))\leq \modannot^I(S(b,a))$.
			\item $\datrule=R(x,y) \rightarrow B(x)$, $h(x)=a$, $h(y)=b$, $R(a,b)\in\inter$: since $\Imc$ satisfies $\exists R\sqsubseteq B$, $(\exists R)^\Imc(a)\leq B^\Imc(a)$ with $(\exists R)^\Imc(a)=\sum_{v\in \Delta^\Imc} R^\Imc(a,v)$, so $B(a)\in\inter$ and $\sum_{R(a,v)\in\inter} \modannot^I(R(a,v))\leq \modannot^I(B(a))$.
			\item $\datrule=R(y,x) \rightarrow B(x)$, $h(x)=a$, $h(y)=b$, $R(b,a)\in\inter$: since $\Imc$ satisfies $\exists R^-\sqsubseteq B$, $(\exists R^-)^\Imc(a)\leq B^\Imc(a)$ with $(\exists R^-)^\Imc(a)=\sum_{v\in \Delta^\Imc} R^\Imc(v,a)$, so $B(a)\in\inter$ and $\sum_{R(v,a)\in\inter} \modannot^I(R(v,a))\leq \modannot^I(B(a))$. 
		\end{itemize} 
		In all cases, we have shown that $h(H(\vec{x}))\in\inter$ and $\sum_{h':\phi(\vec{x},\vec{y})\mapsto I, h'(\vec{x})=h(\vec{x})} \prod_{\beta\in h'(\phi(\vec{x},\vec{y}))} \modannot^I(\beta) \sqsubseteq \modannot^I(h(H(\vec{x})))$. 
		
		\item Finally, consider $q\rightarrow \mn{goal}$ and assume that there is a homomorphism $h$ from $q$ to $\inter$. Since $\modannot^I(\mn{goal})=q^\Imc() = \sum_{\nu\in V}\prod_{S(\vec{t})\in q}  S^\Imc(\nu(\vec{t}))$ where $V$ is the set of all valuations $\nu:\vec{x}\mapsto \Delta^\Imc$ such that $\nu(a)=a^\Imc$ for every constant $a$, $\modannot^I(\mn{goal}) \geq \sum_{h':q\mapsto I}\prod_{S(\vec{t})\in q}  S^\Imc(h'(\vec{t}))= \sum_{h':q\mapsto I}\prod_{\beta\in h'(q)} \modannot^I(\beta)$. 
		
		Hence $\inter$ fulfills point (2) of the definition of annotated models. 
	\end{itemize} 
	By definition of $\provmodtot(\ontology\cup\{q\rightarrow \mn{goal}\},\database,\mathbb{N}^\infty,\annot_\mathbb{N},\mn{goal})$, it follows that $\provmodtot(\ontology\cup\{q\rightarrow \mn{goal}\},\database,\mathbb{N}^\infty,\annot_\mathbb{N},\mn{goal})\leq \modannot^I(\mn{goal})$. Hence $\provmodtot(\ontology\cup\{q\rightarrow \mn{goal}\},\database,\mathbb{N}^\infty,\annot_\mathbb{N},\mn{goal})\leq q^\Imc()$. Since this is true for any model $\Imc$ and $n$ is the minimal $q^\Imc()$ over all models, it follows that $\provmodtot(\ontology\cup\{q\rightarrow \mn{goal}\},\database,\mathbb{N}^\infty,\annot_\mathbb{N},\mn{goal})\leq n$. \smallskip
	
	In the other direction, let $(\inter,\modannot^I)$ be an annotated model of $\ontology\cup\{q\rightarrow \mn{goal}\}$ and $(\database,\mathbb{N}^\infty,\annot_\mathbb{N})$. 
	Let $\Imc=\langle \Delta^\Imc, \cdot^\Imc\rangle$ be defined by $\Delta^\Imc=\domain(\database)$ and $S^\Imc(\vec{x})=\modannot^I(S(\vec{x}))$ if $S(\vec{x})\in\inter$, $S^\Imc(\vec{x})=0$ otherwise. We show that $\Imc$ is a model of $\ontology$ and $(\database,\mathbb{N}^\infty,\annot_\mathbb{N})$ seen as a DL-Lite TBox and bag ABox. 
	\begin{itemize}
		\item Since for every $\fact\in\database$, $\annot_\mathbb{N}(\fact)\leq\modannot^I(\fact)$, the multiplicity of $\alpha$ in $\Imc$ is at least its multiplicity $\annot_\mathbb{N}(\fact)$ in the ABox, so $\Imc$ is a model of $(\database,\mathbb{N}^\infty,\annot_\mathbb{N})$ seen as a bag ABox.
		\item We show that $\Imc$ satisfies all concept and role inclusions corresponding to rules in $\ontology$.
		\begin{itemize}
			\item For simple concept inclusion $A\sqsubseteq B$, \ie $A(x)\rightarrow B(x)$, since for every $a\in\Delta^\Imc$, $A^\Imc(a)=\modannot^I(A(a))$, $B^\Imc(a)=\modannot^I(B(a))$ and $(\inter,\modannot^I)$ is a model of $\ontology$, $A^\Imc(a)\leq B^\Imc(a)$ so $\Imc$ satisfies $A\sqsubseteq B$.
			\item The role inclusion cases $R\sqsubseteq S$ and $R\sqsubseteq S^-$ are similar to $A\sqsubseteq B$.
			\item For the case $\exists R\sqsubseteq B$, \ie $R(x,y)\rightarrow B(x)$, for every $a\in\Delta^\Imc$, $(\exists R)^\Imc(a)=\sum_{v\in \Delta^\Imc} R^\Imc(a,v)=\sum_{R(a,v)\in\inter} \modannot^I(R(a,v))$ and since $(\inter,\modannot^I)$ is a model of $\ontology$, $\sum_{R(a,v)\in\inter} \modannot^I(R(a,v))\leq \modannot^I(B(a))$ so $(\exists R)^\Imc(a)\leq B^\Imc(a)$ and $\Imc$ satisfies $\exists R\sqsubseteq B$.
			\item The case $\exists R^-\sqsubseteq B$ is similar.
		\end{itemize}
	\end{itemize}
	$q^\Imc()=\sum_{\nu\in V}\prod_{S(\vec{t})\in q}  S^\Imc(\nu(\vec{t}))$ where $V$ is the set of all valuations $\nu:\vec{x}\mapsto \Delta^\Imc$ such that $\nu(a)=a^\Imc$ for every constant $a$, so $q^\Imc()= \sum_{h:q\mapsto \inter}\prod_{\beta\in h(q)}  \modannot^I(\beta)$. Since $(\inter,\modannot^I)$ is a model of $\ontology\cup\{q\rightarrow \mn{goal}\}$, it follows that $q^\Imc() \leq \modannot^I(\mn{goal})$. 
	Hence, by definition of $n$, $n\leq q^\Imc() \leq \modannot^I(\mn{goal})$. 
	Since this is true for any model $(\inter,\modannot^I)$, it follows that $n\leq \provmodtot(\ontology\cup\{q\rightarrow \mn{goal}\},\database,\mathbb{N}^\infty,\annot_\mathbb{N},\mn{goal})$, so that $n = \provmodtot(\ontology\cup\{q\rightarrow \mn{goal}\},\database,\mathbb{N}^\infty,\annot_\mathbb{N},\mn{goal})$.
\end{proof}

\subsubsection{Proof of Proposition \ref{prop:provmodtot-capture-incognizant-and-DL}} 
Proposition \ref{prop:provmodtot-capture-incognizant-and-DL} follows from Propositions \ref{prop:provmodtot-capture-incognizant} and \ref{prop:provmodtot-capture-DL}. In particular, note that in the DL-Lite case, if the database facts are annotated with elements from $\mathbb{N}$ rather than $\mathbb{N}^\infty$, then $\provmodtot(\ontology\cup\{q\rightarrow \mn{goal}\},\database,\mathbb{N}^\infty,\annot_\mathbb{N},\mn{goal})=\provmodtot(\ontology\cup\{q\rightarrow \mn{goal}\},\database,\mathbb{N},\annot_\mathbb{N},\mn{goal})$ since there exist finite models of $\ontology$ and $\database$ (because $\ontology$ contains only Datalog rules) such that all facts annotations are in $\mathbb{N}$.
\propprovmodtotcaptureincognizantandDL*

\subsection{Model-Based Semantics}\label{app:model-based-relations}

We start by showing a few lemmas that will be useful to prove results from Section \ref{subsec:model-based-sem} as well as later in the proofs of the results in Table \ref{fig:prop-def}.

\begin{lemma}\label{provmodtot-upperbound}
	The annotated interpretation $(\inter_0,\modannot^{I_0})$ defined by $\inter_0=\{\fact\mid \ontology,\database\models \fact\}$ and $\modannot^{I_0}(\fact)=\atann(\ontology, \database, \semiringshort, \annot,\fact)$ for every $\fact\in \inter_0$
	is a model of $\ontology$ and $(\database,\semiringshort, \annot)$. 
	It follows that $\provmodtot(\ontology, \database, \semiringshort, \annot,\alpha)\sqsubseteq \atann(\ontology, \database, \semiringshort, \annot,\fact)$. 
\end{lemma}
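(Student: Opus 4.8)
The plan is to verify directly that $(\inter_0,\modannot^{I_0})$ satisfies the two conditions defining a $\semiringshort$-annotated model of $\ontology$ and $(\database,\semiringshort,\annot)$, and then to read off the inequality from the fact that $\provmodtot$ is an infimum over all such models. The starting observation is that $\inter_0=\{\fact\mid\ontology,\database\models\fact\}$ is exactly the least model of $\ontology$ and $\database$: it is finite (no rule introduces new terms, so all entailed facts live over $\domain(\database)$) and closed under rule application. This immediately gives the structural parts of both conditions, namely $\database\subseteq\inter_0$ and, for condition~2, that $h(H(\vec x))\in\inter_0$ for every rule $\phi(\vec x,\vec y)\to H(\vec x)$ and every homomorphism $h:\phi(\vec x,\vec y)\to\inter_0$. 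It remains to check the two annotation inequalities.

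For condition~1, I would note that for $\fact\in\database$ the single-node tree consisting of the leaf $\fact$ is a derivation tree of $\fact$, with $\ann(t)=\annot(\fact)$. Hence $\annot(\fact)$ is one of the summands of $\atann(\ontology,\database,\semiringshort,\annot,\fact)=\sum_{t\in\drvtrsig}\ann(t)$, so writing $\atann(\ontology,\database,\semiringshort,\annot,\fact)=\annot(\fact)\plus c$ with $c$ the sum of the remaining tree annotations yields $\annot(\fact)\sqsubseteq\modannot^{I_0}(\fact)$, as required.

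For the annotation inequality in condition~2, fix a rule $r:\beta_1\wedge\dots\wedge\beta_n\to H(\vec x)$ and a homomorphism $h$ with $h(\vec x)=\vec a$, and set $\fact'=h(H(\vec x))$. The key step is to expand, using distributivity (valid in $\omega$-continuous semirings, and involving only finitely many homomorphisms $h'$ and finitely many factors since $\inter_0$ is finite),
\[
\sum_{\substack{h':\phi\to\inter_0\\ h'(\vec x)=\vec a}}\prod_{i=1}^{n}\modannot^{I_0}(h'(\beta_i))
=\sum_{\substack{h':\phi\to\inter_0\\ h'(\vec x)=\vec a}}\prod_{i=1}^{n}\Big(\sum_{t_i\in\drvtr{h'(\beta_i)}}\ann(t_i)\Big)
=\sum_{h',\,(t_1,\dots,t_n)}\prod_{i=1}^{n}\ann(t_i).
\]
The plan is then to recognise that each pair $(h',(t_1,\dots,t_n))$ determines exactly one derivation tree $t$ of $\fact'$ with root $(\fact',r,h')$ and child subtrees $t_1,\dots,t_n$; this is a bijection onto the set of derivation trees of $\fact'$ that apply $r$ at the root, and it satisfies $\ann(t)=\prod_{i=1}^{n}\ann(t_i)$ because the leaves of $t$ are the disjoint union of the leaves of the $t_i$ (using commutativity of $\ktimes$). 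Consequently the displayed quantity equals the partial sum of $\ann(t)$ over those $t\in\drvtr{\fact'}$ that apply $r$ at the root, which is $\sqsubseteq\atann(\ontology,\database,\semiringshort,\annot,\fact')=\modannot^{I_0}(\fact')$ since the full sum adds the (well-defined, by $\omega$-continuity) annotations of all remaining trees. This establishes condition~2.

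The main obstacle is this correspondence together with the bookkeeping of multiplicities: one must read the product in condition~2 as ranging over the body atoms $\beta_1,\dots,\beta_n$ \emph{counted with multiplicity} (matching the set-annotated definition, where the body is written $\beta_1\wedge\dots\wedge\beta_n$ and the product is $\prod_{i=1}^{n}$), so that it lines up with the bijection $f$ between the children of a node and the atoms of $\phi$ — otherwise repeated or coinciding body facts would be counted inconsistently on the two sides. Once both conditions hold, $(\inter_0,\modannot^{I_0})$ is a model, and since $\provmodtot(\ontology,\database,\semiringshort,\annot,\fact)=\inf\{\modannot^I(\fact)\mid(\inter,\modannot^I)\models(\ontology,\database,\semiringshort,\annot)\}$ is a lower bound of this set, we obtain $\provmodtot(\ontology,\database,\semiringshort,\annot,\fact)\sqsubseteq\modannot^{I_0}(\fact)=\atann(\ontology,\database,\semiringshort,\annot,\fact)$, which is the stated conclusion.
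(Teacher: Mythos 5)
Your proof is correct and follows essentially the same route as the paper's: the single-node derivation tree gives condition 1, and for condition 2 the paper likewise uses distributivity to turn the product of tree-sums into a sum over tuples of subtrees, the correspondence between such tuples (together with $h'$) and derivation trees applying the rule at the root, and the observation that this partial sum of tree annotations is $\sqsubseteq$ the full sum defining $\atann$, before concluding via the infimum defining $\provmodtot$. Your explicit remark that the body product must be read with multiplicity (one factor per body atom, lining up with the bijection $f$ to the node's children) addresses a corner case the paper's proof glosses over — there the product ranges over the \emph{set} of distinct instantiated facts $\gamma^{h'}_1,\dots,\gamma^{h'}_{k_{h'}}$ — and your reading is the one under which the argument goes through when two body atoms collapse to the same fact.
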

\begin{proof}
	It is easy to check point (1) of the definition: $\database\subseteq \inter_0$ and for every $\fact\in\database$, there exists a derivation tree in $\drvtr{\fact}$ that consists of a single root node labelled $\fact$, so $ \annot(\fact)\sqsubseteq \modannot^{I_0}(\fact)$. 
	
	For point (2), let $\phi(\vec{x},\vec{y} ) \rightarrow H(\vec{x})$ be a rule in $\ontology$ and $h$ be a homomorphism from $\phi(\vec{x},\vec{y})$ to $\inter_0$. 
	By construction of $\inter_0$, for every $\beta\in h(\phi(\vec{x},\vec{y}))$, $\ontology,\database\models \beta$.
	Hence $\ontology,\database\models h(\phi(\vec{x},\vec{y}))$ so $\ontology,\database\models h(H(\vec{x}))$. 
	It follows that $h(H(\vec{x}))\in \inter_0$ by definition of $\inter_0$. 
	
	Let $h(H(\vec{x}))=\alpha$, and let $\Smc$ be the set of all homomorphisms $h':\phi(\vec{x},\vec{y})\mapsto \inter_0$ such that $h'(\vec{x})=h(\vec{x})$. For each such $h'\in \Smc$, let $\gamma^{h'}_1,\dots,\gamma^{h'}_{k_{h'}}$ be the set of facts from $\inter_0$ such that $h'(\phi(\vec{x},\vec{y}))=\gamma^{h'}_1\wedge\dots\wedge \gamma^{h'}_{k_{h'}}$. 
	\begin{enumerate}
		\item By definition of $(\inter_0,\modannot^{I_0})$, $\modannot^{I_0}(\alpha)= \Sigma_{t\in \drvtr{\alpha}} \ann(t)$.
		
		\item For every $h'\in \Smc$, and every $(t_1,\dots,t_{k_{h'}}) \in \drvtr{\gamma^{h'}_1}\times\dots\times  \drvtr{\gamma^{h'}_{k_{h'}}}$, there is a derivation tree $t\in \drvtr{\alpha}$ whose root is $(\alpha,\datrule,h')$ and has subtrees $t_1,\dots,t_{k_{h'}}$. 
		
		\item It follows that $\Sigma_{h'\in \Smc} \Sigma_{(t_1,\dots,t_{k_{h'}}) \in \drvtr{\gamma^{h'}_1}\times\dots\times  \drvtr{\gamma^{h'}_{k_{h'}}}} \Pi_{i=1}^{k_{h'}}\ann(t_i) \sqsubseteq \modannot^{I_0}(\alpha)$. 
		
		\item  By definition of $(\inter_0,\modannot^{I_0})$, for every $h'\in \Smc$ and $1\leq i\leq k_{h'}$, $\modannot^{I_0}(\gamma^{h'}_i)= \Sigma_{t\in \drvtr{\gamma^{h'}_i}}\ann(t)$. 
		
		\item Hence for every $h'\in \Smc$, $\Pi_{\beta\in h'(\phi(\vec{x},\vec{y}))} \modannot^{\inter_0}(\beta) = \Pi_{i=1}^{k_{h'}} \Sigma_{t\in \drvtr{\gamma^{h'}_i}}\ann(t)= \Sigma_{(t_1,\dots,t_{k_{h'}}) \in \drvtr{\gamma^{h'}_1}\times\dots\times  \drvtr{\gamma^{h'}_{k_{h'}}}} \Pi_{i=1}^{k_{h'}}\ann(t_i)$.
	\end{enumerate}
	It follows from (3) and (5) that  $\Sigma_{h':\phi(\vec{x},\vec{y})\mapsto \inter_0, h'(\vec{x})=h(\vec{x})} \Pi_{\beta\in h'(\phi(\vec{x},\vec{y}))} \modannot^{\inter_0}(\beta) \sqsubseteq \modannot^{\inter_0}(h(H(\vec{x})))$.  
	Hence $(\inter_0,\modannot^{I_0})$ is a model of $\ontology$ and $(\database,\semiringshort,\annot)$.
\end{proof}

\begin{lemma}\label{provmodmon-upperbound}
	The set-annotated interpretation $(\inter_0,\modannot^{I_0})$ defined by $\inter_0=\{\fact\mid \ontology,\database\models \fact\}$ and $\modannot^{I_0}(\fact)=\{\ann(t) \mid t\in\drvtr{\fact}\}$ for every $\fact\in \inter_0$ 
	is a model of $\ontology$ and $(\database,\semiringshort, \annot)$. 
\end{lemma}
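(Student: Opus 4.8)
The plan is to verify directly that $(\inter_0,\modannot^{I_0})$ satisfies the two conditions in the definition of a $\semiringshort$-set-annotated model, following closely the structure of the proof of Lemma~\ref{provmodtot-upperbound} but replacing sums of tree annotations by \emph{sets} of tree annotations.

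For condition~(1), note that $\inter_0=\{\fact\mid \ontology,\database\models \fact\}$ trivially contains $\database$, since every fact of $\database$ is entailed by $\ontology$ and $\database$. For each $\fact\in\database$, the single-node tree whose root is the leaf labelled $\fact$ belongs to $\drvtr{\fact}$ and has annotation $\ann(t)=\annot(\fact)$; hence $\annot(\fact)\in\modannot^{I_0}(\fact)$, as required.

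For condition~(2), fix a rule $\datrule\df\phi(\vec{x},\vec{y})\rightarrow H(\vec{x})$ in $\ontology$ and a homomorphism $h:\phi(\vec{x},\vec{y})\mapsto \inter_0$. First I would argue that $h(H(\vec{x}))\in\inter_0$: by construction every $\beta\in h(\phi(\vec{x},\vec{y}))$ is entailed by $\ontology$ and $\database$, so $\ontology,\database\models h(\phi(\vec{x},\vec{y}))$, and applying $\datrule$ yields $\ontology,\database\models h(H(\vec{x}))$. Writing $\alpha\df h(H(\vec{x}))$ and $h(\phi(\vec{x},\vec{y}))=\beta_1\wedge\dots\wedge\beta_n$, it remains to show that every product $\prod_{i=1}^n k_i$ with $(k_1,\dots,k_n)\in\modannot^{I_0}(\beta_1)\times\dots\times\modannot^{I_0}(\beta_n)$ lies in $\modannot^{I_0}(\alpha)$. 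By definition of $\modannot^{I_0}$, each $k_i$ equals $\ann(t_i)$ for some $t_i\in\drvtr{\beta_i}$. The key step is then to assemble these subtrees: the tree $t$ whose root is $(\alpha,\datrule,h)$ and whose children are $t_1,\dots,t_n$ (attached to the body atoms via the bijection from the derivation-tree definition) is itself a derivation tree in $\drvtr{\alpha}$. Since the leaves of $t$ are exactly the disjoint union of the leaves of the $t_i$, the annotation is multiplicative, $\ann(t)=\prod_{i=1}^n \ann(t_i)=\prod_{i=1}^n k_i$, and therefore $\prod_{i=1}^n k_i\in\modannot^{I_0}(\alpha)$, establishing the inclusion.

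I expect the only delicate point to be the bookkeeping between the conjuncts $\beta_1,\dots,\beta_n$ of $h(\phi(\vec{x},\vec{y}))$ and the atoms of $\phi$: the bijection in the derivation-tree definition ranges over the atoms of the body, so one must take the $\beta_i$ with the multiplicity induced by the body atoms (rather than as a set) in order for the product over the $t_i$ to match $\ann(t)$ exactly. Once this correspondence is set up correctly, the multiplicativity of $\ann$ over subtree composition makes the inclusion immediate, and both conditions hold, so $(\inter_0,\modannot^{I_0})$ is a model of $\ontology$ and $(\database,\semiringshort,\annot)$.
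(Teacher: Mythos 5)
Your proof is correct and takes essentially the same approach as the paper's: condition (1) via the single-node derivation tree of each database fact, and condition (2) by assembling arbitrary subtrees $t_i\in\drvtr{\beta_i}$ under a new root $(\alpha,\datrule,h)$ so that $\ann(t)=\prod_{i=1}^n\ann(t_i)=\prod_{i=1}^n k_i$ lies in $\modannot^{I_0}(\alpha)$. Your bookkeeping remark about taking the conjuncts $\beta_1,\dots,\beta_n$ with the multiplicity induced by the body atoms matches the convention the paper uses implicitly when it writes $h(\phi(\vec{x},\vec{y}))=\gamma_1\wedge\dots\wedge\gamma_k$ and pairs one subtree with each atom via the bijection in the derivation-tree definition.
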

\begin{proof}
	Let $\inter_0=\{\fact\mid \ontology,\database\models \fact\}$ and $\modannot^{I_0}(\fact)=\{\ann(t) \mid t\in\drvtr{\fact}\}$ for every $\fact\in \inter_0$. 
	We show that $(\inter_0,\modannot^{I_0})$ is a model of $\ontology$ and $(\database,\semiringshort,\annot)$. 
	
	It is easy to check point (1) of the definition: $\database\subseteq \inter_0$ and for every $\fact\in\database$, there exists a derivation tree in $\drvtr{\fact}$ that consists of a single root node labelled $\fact$, so $ \annot(\fact)\in \modannot^{I_0}(\fact)$. 
	
	For point (2), let $\phi(\vec{x},\vec{y} ) \rightarrow H(\vec{x})$ be a rule in $\ontology$ and $h$ be a homomorphism from $\phi(\vec{x},\vec{y})$ to $\inter_0$. 
	By construction of $\inter_0$, for every $\beta\in h(\phi(\vec{x},\vec{y}))$, $\ontology,\database\models \beta$.
	Hence $\ontology,\database\models h(\phi(\vec{x},\vec{y}))$ so $\ontology,\database\models h(H(\vec{x}))$. 
	It follows that $h(H(\vec{x}))\in \inter_0$. 
	Let $h(H(\vec{x}))=\alpha$ and $h(\phi(\vec{x},\vec{y}))=\gamma_1\wedge\dots\wedge\gamma_k$.
	\begin{enumerate}
		\item By definition of $(\inter_0,\modannot^{I_0})$, $\modannot^{I_0}(\alpha)=\{\ann(t) \mid t\in\drvtr{\alpha}\}$.
		
		\item For every $(t_1,\dots,t_k) \in \drvtr{\gamma_1}\times\dots\times  \drvtr{\gamma_{k}}$, there is a derivation tree $t\in \drvtr{\alpha}$ whose root is $(\alpha,\datrule,h')$ and has subtrees $t_1,\dots,t_{k}$. 
		
		\item It follows that $\{ \Pi_{i=1}^{k}\ann(t_i) \mid (t_1,\dots,t_{k}) \in \drvtr{\gamma_1}\times\dots\times  \drvtr{\gamma_{k}}\}\subseteq \modannot^{I_0}(\alpha)$. 
		
		\item  By definition of $(\inter_0,\modannot^{I_0})$, for every $1\leq i\leq k$, $\modannot^{I_0}(\gamma_i)=\{\ann(t)\mid t\in \drvtr{\gamma_i}\}$. 
	\end{enumerate}
	It follows from (3) and (4) that $\{ \Pi_{i=1}^{k} m_i \mid (m_1,\dots,m_k)\in \modannot^{I_0}(\gamma_1)\times\dots\times\modannot^{I_0}(\gamma_k) \}\subseteq \modannot^{\inter_0}(h(H(\vec{x}))$.  
	Hence $(\inter_0,\modannot^{I_0})$ is a model of $\ontology$ and $(\database,\semiringshort,\annot)$ and  $\bigcap_{(\inter,\modannot^I)\models(\ontology,\database,\semiringshort,\annot)}\modannot^I(\fact)=\{\ann(t) \mid t\in \drvtr{\fact}\}$.
\end{proof}

\begin{lemma}\label{ann-model-entailment}
	For both annotated interpretations and set-annotated interpretations, $\ontology,\database\models \fact$ if and only if $\fact\in \inter$ for every model $(\inter,\modannot^I)$ of $\ontology$ and $(\database, \semiringshort, \annot)$. 
\end{lemma}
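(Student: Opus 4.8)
The plan is to prove both directions by passing between annotated models and ordinary models of $\ontology$ and $\database$, the bridge being the simple observation that forgetting the annotations of a (set-)annotated model leaves a classical model behind.

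First I would handle the forward direction. Let $(\inter,\modannot^I)$ be any model of $\ontology$ and $(\database,\semiringshort,\annot)$, in either the annotated or the set-annotated sense. I claim that the bare set of facts $\inter$ is a classical model of $\ontology$ and $\database$. Condition~(1) of both definitions gives $\database\subseteq\inter$, so $\inter$ is a model of $\database$; condition~(2) states that for every rule $\phi(\vec{x},\vec{y})\rightarrow H(\vec{x})$ in $\ontology$ and every homomorphism $h:\phi(\vec{x},\vec{y})\mapsto\inter$ one has $h(H(\vec{x}))\in\inter$, which is exactly classical satisfaction of the normalized rule by $\inter$. The accompanying annotation constraints in condition~(2) are irrelevant here and can be dropped. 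Consequently, if $\ontology,\database\models\fact$ then $\fact$ lies in every classical model, and in particular $\fact\in\inter$.

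For the converse I would argue by contraposition: assuming $\ontology,\database\not\models\fact$, I exhibit a single (set-)annotated model whose fact set omits $\fact$. The canonical witness is the least model $\inter_0=\{\beta\mid\ontology,\database\models\beta\}$, for which $\fact\notin\inter_0$ by assumption. By Lemma~\ref{provmodtot-upperbound}, the pair $(\inter_0,\modannot^{I_0})$ with $\modannot^{I_0}(\beta)=\atann(\ontology,\database,\semiringshort,\annot,\beta)$ is an annotated model, and by Lemma~\ref{provmodmon-upperbound} the pair $(\inter_0,\modannot^{I_0})$ with $\modannot^{I_0}(\beta)=\{\ann(t)\mid t\in\drvtr{\beta}\}$ is a set-annotated model. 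Either one is a model of $\ontology$ and $(\database,\semiringshort,\annot)$ from which $\fact$ is absent, so it is false that $\fact\in\inter$ for every model; contraposition then yields the claim. (Alternatively, one could annotate an arbitrary classical model directly, but routing through $\inter_0$ and the two upper-bound lemmas is the most economical.)

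The argument is essentially bookkeeping, so I do not expect a serious obstacle. The one point needing care is in the forward direction: condition~(2) bundles two requirements, the membership $h(H(\vec{x}))\in\inter$ and an order constraint on annotations, and it is precisely (and only) the membership part that recovers classical rule satisfaction. I would also flag that the converse relies on $\inter_0$ being the least classical model and on Lemmas~\ref{provmodtot-upperbound} and~\ref{provmodmon-upperbound} being available within the semiring classes on which $\provmodtot$ and $\provmodmon$ are respectively defined, so that $\atann(\ontology,\database,\semiringshort,\annot,\beta)$ and the sets $\{\ann(t)\mid t\in\drvtr{\beta}\}$ are well-defined.
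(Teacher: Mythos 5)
Your proof is correct and follows essentially the same route as the paper's: the forward direction by observing that stripping annotations from a (set-)annotated model leaves a classical model, and the converse by annotating the least model $\inter_0=\{\beta\mid\ontology,\database\models\beta\}$ via Lemmas~\ref{provmodtot-upperbound} and~\ref{provmodmon-upperbound}. If anything, your contrapositive phrasing is slightly more precise than the paper's, which states the second step for an arbitrary classical model $\inter$ even though the cited lemmas only construct the annotation for $\inter_0$ — which, as you note, is all that is needed.
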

\begin{proof}
	This follows from the facts that
	\begin{enumerate}
		\item every (set-)annotated model $(\inter,\modannot^I)$ of $\ontology$ and $(\database, \semiringshort, \annot)$ is such that $\inter$ is a model of $\ontology$ and $\database$ by definition of annotated and set-annotated models; and 
		\item for every model $\inter$ of $\ontology$ and $\database$, there exists a (set-)annotated model $(\inter,\modannot^I)$ of $\ontology$ and $(\database, \semiringshort, \annot)$: 
		\begin{itemize}
			\item For $\provmodtot$ such a model can be obtained by setting  $\modannot^{I}(\fact)=\atann(\ontology, \database, \semiringshort, \annot,\fact)$ by Lemma \ref{provmodtot-upperbound}.
			\item For $\provmodmon$ such a model can be obtained by setting  $\modannot^{I}(\fact)=\{\ann(t) \mid t\in\drvtr{\fact}\}$ by Lemma \ref{provmodmon-upperbound}.\qedhere
		\end{itemize}
	\end{enumerate}
\end{proof}

\begin{lemma}\label{lem:greatest-lower-bound-implies-two-ways-ineq-is-eq}
	If $\semiringshort=(K,\plus,\times,\zero,\one)$ is a commutative $\omega$-continuous semiring such that for every $x,y\in K$, the greatest lower bound of $x$ and $y$ is well defined (\ie there exists a unique element $z\in K$ such that $z\sqsubseteq x$, $z\sqsubseteq y$ and every $z'$ such that $z'\sqsubseteq x$ and $z'\sqsubseteq y$ is such that $z'\sqsubseteq z$), then for every $a,b\in K$, $a\sqsubseteq b$ and $b\sqsubseteq a$ implies that $a=b$. 
	In particular, $a\sqsubseteq 0$ implies that $a=0$.
\end{lemma}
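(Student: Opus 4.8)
The plan is to prove the antisymmetry of $\sqsubseteq$ directly from the \emph{uniqueness} of greatest lower bounds. (One could instead observe that $\omega$-continuity already requires $\sqsubseteq$ to be a partial order, so antisymmetry is immediate; but since the statement isolates the glb hypothesis, I would give a self-contained argument that makes the role of that hypothesis explicit.) First I would record two elementary facts about $\sqsubseteq$, whose definition is $a\sqsubseteq b$ iff $a\plus c=b$ for some $c\in K$: it is reflexive, since $a\plus 0=a$ yields $a\sqsubseteq a$; and $0$ is a least element, since $0\plus a=a$ yields $0\sqsubseteq a$ for every $a$.

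For the main claim, I would assume $a\sqsubseteq b$ and $b\sqsubseteq a$, and show that \emph{both} $a$ and $b$ qualify as a greatest lower bound of the pair $\{a,b\}$. The element $a$ is a lower bound because $a\sqsubseteq a$ (reflexivity) and $a\sqsubseteq b$ (hypothesis), and it is a \emph{greatest} lower bound because any common lower bound $z'$ of $a$ and $b$ satisfies in particular $z'\sqsubseteq a$. The symmetric argument, using $b\sqsubseteq b$ and $b\sqsubseteq a$, shows that $b$ is also a greatest lower bound of $\{a,b\}$. Since the hypothesis guarantees this greatest lower bound is \emph{unique}, I would conclude $a=b$. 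The ``in particular'' statement is then the special case $b=0$: combining the always-valid $0\sqsubseteq a$ with the assumption $a\sqsubseteq 0$ gives $a=0$.

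The only subtlety, and the step I would be most careful about, is to avoid the circular move of introducing the glb $z$ of $\{a,b\}$ and then inferring $a=z$ from $a\sqsubseteq z$ and $z\sqsubseteq a$ — this would silently presuppose the very antisymmetry we are trying to establish. Exhibiting $a$ and $b$ themselves as greatest lower bounds and appealing to uniqueness sidesteps this entirely, so the argument uses nothing beyond reflexivity of $\sqsubseteq$ and the glb hypothesis.
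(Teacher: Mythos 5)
Your proof is correct and takes essentially the same route as the paper's: both arguments show that under the hypotheses $a\sqsubseteq b$ and $b\sqsubseteq a$, each of $a$ and $b$ qualifies as a greatest lower bound of the pair, and then conclude $a=b$ from the assumed uniqueness. Your explicit treatment of reflexivity, of $0$ being a least element, and of the ``in particular'' case only spells out steps the paper leaves implicit.
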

\begin{proof}
	Let $a,b\in K$ such $a\sqsubseteq b$ and $b\sqsubseteq a$. Since
	\begin{itemize}
		\item $a\sqsubseteq a$, $a\sqsubseteq b$ and every $z'$ such that $z'\sqsubseteq a$ and $z'\sqsubseteq b$ is such that $z'\sqsubseteq a$; and 
		\item $b\sqsubseteq a$, $b\sqsubseteq b$ and every $z'$ such that $z'\sqsubseteq a$ and $z'\sqsubseteq b$ is such that $z'\sqsubseteq b$,
	\end{itemize}
	then both $a$ and $b$ are the (unique) greatest lower bound of $a$ and $b$. 
\end{proof}

\begin{lemma}\label{provmodtot-lowerbound}
	For every $\alpha$ such that $\ontology,\database\models \alpha$ and $t\in \drvtr{\alpha}$, for every annotated model $(\inter,\modannot^I)$ of $\ontology$ and $(\database,\semiringshort, \annot)$, $\ann(t)\sqsubseteq \modannot^I(\alpha)$, where $\ann(t)=\Pi_{\beta\text{ is a leaf of $t$}} \annot(\beta)$. 
	It follows that for every $t\in \drvtr{\alpha}$, $\ann(t)\sqsubseteq\provmodtot(\ontology, \database, \semiringshort, \annot,\alpha)$. 
\end{lemma}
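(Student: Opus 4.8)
The plan is to establish the displayed inequality $\ann(t) \sqsubseteq \modannot^I(\alpha)$ by structural induction on the derivation tree $t \in \drvtr{\alpha}$, fixing an arbitrary annotated model $(\inter,\modannot^I)$ of $\ontology$ and $(\database,\semiringshort,\annot)$; the concluding ``it follows'' claim will then drop out from the definition of $\provmodtot$ as a greatest lower bound. For the base case, $t$ is a single node labelled by a fact $\alpha\in\database$, so $\ann(t)=\annot(\alpha)$, and condition~(1) in the definition of an annotated model gives exactly $\annot(\alpha)\sqsubseteq\modannot^I(\alpha)$.

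For the inductive step, suppose the root of $t$ is labelled $(\alpha,\datrule,h)$ with $\datrule=\phi(\vec{x},\vec{y})\rightarrow H(\vec{x})$ and $h(H(\vec{x}))=\alpha$. Writing the atoms of $\phi(\vec{x},\vec{y})$ as $q_1,\dots,q_n$, the children of the root are derivation trees $t_1,\dots,t_n$ (some possibly single leaves), where each $t_i$ derives $\gamma_i:=h(q_i)$. Since the leaves of $t$ are exactly those of $t_1,\dots,t_n$ taken together, we have $\ann(t)=\prod_{i=1}^n\ann(t_i)$. Each $\gamma_i$ is entailed by $\ontology,\database$ because it has the derivation tree $t_i$, so by Lemma~\ref{ann-model-entailment} we get $\gamma_i\in\inter$, and the induction hypothesis applies to yield $\ann(t_i)\sqsubseteq\modannot^I(\gamma_i)$.

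I then combine these inequalities. First I use that the semiring multiplication is monotone for $\sqsubseteq$: if $a\sqsubseteq b$ and $a'\sqsubseteq b'$, writing $b=a+c$ and $b'=a'+c'$ and expanding $bb'$ exhibits $aa'\sqsubseteq bb'$. Applying this factor by factor gives $\ann(t)=\prod_i\ann(t_i)\sqsubseteq\prod_i\modannot^I(\gamma_i)$. The right-hand product is precisely the summand $\prod_{\beta\in h(\phi(\vec{x},\vec{y}))}\modannot^I(\beta)$ obtained by taking $h'=h$ in the sum of condition~(2) (which is applicable, since every $\gamma_i\in\inter$ means $h$ maps $\phi(\vec{x},\vec{y})$ into $\inter$). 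As each summand of a sum is $\sqsubseteq$ the whole sum, and condition~(2) bounds that sum by $\modannot^I(h(H(\vec{x})))=\modannot^I(\alpha)$, transitivity of $\sqsubseteq$ gives $\ann(t)\sqsubseteq\modannot^I(\alpha)$. Finally, since this holds for every annotated model, $\ann(t)$ is a lower bound of $\{\modannot^I(\alpha)\mid(\inter,\modannot^I)\models(\ontology,\database,\semiringshort,\annot)\}$ and hence lies below its greatest lower bound $\provmodtot(\ontology,\database,\semiringshort,\annot,\alpha)$.

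The main obstacle is the bookkeeping in the inductive step: I must ensure that the product $\prod_i\modannot^I(\gamma_i)$, which ranges over the body atoms (one factor per child), genuinely matches the factor $\prod_{\beta\in h(\phi(\vec{x},\vec{y}))}\modannot^I(\beta)$ appearing in condition~(2). This is immediate when $h$ is injective on the body atoms, and otherwise depends on reading that product as ranging over the body conjunction rather than over the set of distinct image facts. The two remaining delicate points, both routine, are the monotonicity of multiplication for $\sqsubseteq$ and the fact that each summand of a possibly infinite sum is dominated by the sum, for which the $\omega$-continuity of $\semiringshort$ is used.
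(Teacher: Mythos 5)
Your proof is correct and follows essentially the same route as the paper's: induction on the derivation tree, with the base case handled by condition (1) of annotated models, the inductive step by applying the induction hypothesis to the subtrees and extracting the summand $h'=h$ from condition (2), and the final claim obtained from the greatest-lower-bound definition of $\provmodtot$. The only differences are ones of rigor, to your credit: the paper's proof silently uses monotonicity of $\times$ under $\sqsubseteq$, the summand-below-sum fact, and transitivity, all of which you prove or state explicitly, and it also glosses over the set-versus-conjunction reading of $\prod_{\beta\in h'(\phi(\vec{x},\vec{y}))}\modannot^I(\beta)$, which you rightly flag as the point on which the matching of the two products (one factor per child versus one factor per distinct image fact) depends when $h$ identifies body atoms.
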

\begin{proof}
	We show by induction that for every $n\in\mathbb{N}$, for every $\alpha$ such that $\ontology,\database\models \alpha$ and $t\in \drvtr{\alpha}$ which contains at most $n$ inner nodes (\ie nodes that have children), for every model $(\inter,\modannot^I)$ of $\ontology$ and $(\database,\semiringshort, \annot)$, $\ann(t)\sqsubseteq \modannot^I(\alpha)$. 
	\begin{itemize}
		\item Base case: $n=0$. Let $\alpha$ be such that $\ontology,\database\models \alpha$ and let $t\in \drvtrsig$ do not contain any inner node. 
		In this case, $t$ consists of a single node labeled with $\alpha$. Thus 
		$\ann(t)=\annot(\alpha)$. 
		Moreover, $\alpha\in\database$ so for every model $(\inter,\modannot^I)$ of $\ontology$ and $(\database,\semiringshort, \annot)$, $\annot(\alpha)\sqsubseteq \modannot^I(\alpha)$. 
		
		\item Induction step: assume that the property is true for some $n$ and let $\alpha$ be such that $\ontology,\database\models \alpha$ and $t\in \drvtrsig$ contain at most $n+1$ inner nodes. 
		By definition of a derivation tree, the root of $t$ is of the form $(\alpha,\datrule,h)$  for some rule $\datrule=\phi(\vec{x},\vec{y})\rightarrow H(\vec{x})$ that belongs to $\ontology$ and its children are of the form $(\gamma_1,\datrule_1, h_1),\dots,(\gamma_k, \datrule,_k,h_k)$ where $\gamma_1\wedge\dots\wedge\gamma_k = h(\phi(\vec{x},\vec{y}))$  and $\alpha=h(H(\vec{x}))$. 
		For $1\leq i\leq k$, let $t_i$ be the subtree of $t$ rooted in $(\gamma_i,\datrule_i,h_i)$. 
		$t_i$ is a derivation tree of $\gamma_i$ \wrt $\ontology,\database$ so $\ontology,\database\models \gamma_i$. 
		Moreover $t_i$ contains at most $n$ inner nodes.  
		Let $(\inter,\modannot^I)$ be a model of $\ontology$ and $(\database,\semiringshort, \annot)$. For $1\leq i\leq k$, $\gamma_i\in\inter$ by Lemma \ref{ann-model-entailment} and 
		by induction hypothesis, $\ann(t_i) \sqsubseteq \modannot^I(\gamma_i)$. 
		Moreover, since $h$ is a homomorphism from the body of $\datrule$ to $\{\gamma_1,\dots,\gamma_k\}$ such that $h(H(\vec{x}))=\alpha$, then $\Pi_{i=1}^k  \modannot^I(\gamma_i) \sqsubseteq \modannot^I(\alpha)$. 
		It follows that $\Pi_{i=1}^k  \ann(t_i) \sqsubseteq \modannot^I(\alpha)$, hence $\ann(t)\sqsubseteq\modannot^I(\alpha)$. \qedhere
	\end{itemize}
\end{proof}

\begin{lemma}\label{provmodmon-lemma}
	$\provmodmon(\ontology, \database, \semiringshort, \annot,\fact)= \sum_{\{\ann(t) \mid t\in \drvtr{\fact}\}} \ann(t)$, where $\ann(t)=\Pi_{\beta\text{ is a leaf of $t$}} \annot(\beta)$. 
\end{lemma}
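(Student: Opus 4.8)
The plan is to reduce the claim to the set identity
$$\bigcap_{(\inter,\modannot^I)\models(\ontology,\database,\semiringshort,\annot)}\modannot^I(\fact)=\{\ann(t)\mid t\in\drvtr{\fact}\},$$
after which the conclusion is immediate: by definition $\provmodmon(\ontology,\database,\semiringshort,\annot,\fact)$ sums each element $k$ of the left-hand set exactly once, so it equals the sum of each distinct value $\ann(t)$ taken once, which is precisely $\sum_{\{\ann(t)\mid t\in\drvtr{\fact}\}}\ann(t)$. Both sides are sums indexed by a set, so repeated tree-annotations are automatically collapsed, and the possibly infinite sum is well-defined since $\semiringshort$ is $\omega$-continuous.

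For the inclusion $\subseteq$, I would invoke Lemma~\ref{provmodmon-upperbound}, which exhibits the set-annotated interpretation $(\inter_0,\modannot^{I_0})$ with $\modannot^{I_0}(\fact)=\{\ann(t)\mid t\in\drvtr{\fact}\}$ and shows it is a model of $\ontology$ and $(\database,\semiringshort,\annot)$. Since the intersection ranges over all models, it is in particular contained in $\modannot^{I_0}(\fact)=\{\ann(t)\mid t\in\drvtr{\fact}\}$.

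The substantive direction is $\supseteq$: for every derivation tree $t\in\drvtr{\fact}$ and every set-annotated model $(\inter,\modannot^I)$, one must show $\ann(t)\in\modannot^I(\fact)$. This is the set-membership analogue of Lemma~\ref{provmodtot-lowerbound}, and I would prove it by induction on the number of inner nodes of $t$. In the base case $t$ is a single leaf labeled $\fact\in\database$, so $\ann(t)=\annot(\fact)$, which lies in $\modannot^I(\fact)$ by condition~(1) of the set-annotated model definition. In the inductive step, the root of $t$ is $(\fact,\datrule,h)$ for a rule $\datrule\df\phi(\vec{x},\vec{y})\rightarrow H(\vec{x})$ with $h(H(\vec{x}))=\fact$, and its children are subtrees $t_1,\dots,t_k$ deriving facts $\gamma_1,\dots,\gamma_k$ with $\gamma_1\wedge\dots\wedge\gamma_k=h(\phi(\vec{x},\vec{y}))$; each $t_i$ has strictly fewer inner nodes. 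Since $\ontology,\database\models\gamma_i$, Lemma~\ref{ann-model-entailment} gives $\gamma_i\in\inter$, and the induction hypothesis yields $\ann(t_i)\in\modannot^I(\gamma_i)$. Condition~(2) of the set-annotated model definition then forces every product $\Pi_{i=1}^k k_i$ with $k_i\in\modannot^I(\gamma_i)$ to lie in $\modannot^I(\fact)$; taking $k_i=\ann(t_i)$ shows that $\ann(t)=\Pi_{i=1}^k\ann(t_i)$ is such a product, so $\ann(t)\in\modannot^I(\fact)$.

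The only delicate point is the bookkeeping in the inductive step: one must match the children of the derivation tree, as specified by the bijection in the derivation-tree definition, with the body atoms appearing in condition~(2), using that the facts labeling the children are exactly $h(\phi(\vec{x},\vec{y}))$. Apart from this, the argument is a direct transcription of the proof of Lemma~\ref{provmodtot-lowerbound}, with the natural order $\sqsubseteq$ replaced by set membership $\in$ and the annotated-model inequality replaced by the set-containment of condition~(2). Combining the two inclusions yields the set identity, which by the remark in the first paragraph completes the proof.
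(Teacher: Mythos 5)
Your proof is correct and follows essentially the same route as the paper's: one inclusion via the canonical model of Lemma~\ref{provmodmon-upperbound}, and the other via induction on the number of inner nodes of a derivation tree, using Lemma~\ref{ann-model-entailment} and condition~(2) of the set-annotated model definition, exactly mirroring Lemma~\ref{provmodtot-lowerbound} with $\sqsubseteq$ replaced by $\in$. No gaps to report.
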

\begin{proof}
	We show by induction that for every $n\in\mathbb{N}$, for every $\alpha$ such that $\ontology,\database\models \alpha$ and $t\in \drvtr{\alpha}$ which contains at most $n$ inner nodes (\ie nodes that have children), for every set-annotated model $(\inter,\modannot^I)$ of $\ontology$ and $(\database,\semiringshort, \annot)$, $\ann(t)\in \modannot^I(\alpha)$. 
	\begin{itemize}
		\item Base case: $n=0$. Let $\alpha$ be such that $\ontology,\database\models \alpha$ and let $t\in \drvtrsig$ do not contain any inner node. 
		In this case, $t$ consists of a single node labeled with $\alpha$. Thus 
		$\ann(t)=\annot(\alpha)$. 
		Moreover, $\alpha\in\database$ so for every model $(\inter,\modannot^I)$ of $\ontology$ and $(\database,\semiringshort, \annot)$, $\annot(\alpha)\in\modannot^I(\alpha)$. 
		
		\item Induction step: assume that the property is true for some $n$ and let $\alpha$ be such that $\ontology,\database\models \alpha$ and $t\in \drvtrsig$ contain at most $n+1$ inner nodes. 
		By definition of a derivation tree, the root of $t$ is of the form $(\alpha,\datrule,h)$  for some rule $\datrule=\phi(\vec{x},\vec{y})\rightarrow H(\vec{x})$ that belongs to $\ontology$ and its children are of the form $(\gamma_1,\datrule_1, h_1),\dots,(\gamma_k, \datrule,_k,h_k)$ where $\gamma_1\wedge\dots\wedge\gamma_k = h(\phi(\vec{x},\vec{y}))$  and $\alpha=h(H(\vec{x}))$. 
		For $1\leq i\leq k$, let $t_i$ be the subtree of $t$ rooted in $(\gamma_i,\datrule_i,h_i)$. 
		$t_i$ is a derivation tree of $\gamma_i$ \wrt $\ontology,\database$ so $\ontology,\database\models \gamma_i$. 
		Moreover $t_i$ contains at most $n$ inner nodes.  
		Let $(\inter,\modannot^I)$ be a model of $\ontology$ and $(\database,\semiringshort, \annot)$. For $1\leq i\leq k$, $\gamma_i\in\inter$ by Lemma \ref{ann-model-entailment} and 
		by induction hypothesis, $\ann(t_i) \in \modannot^I(\gamma_i)$. 
		Moreover, since $h$ is a homomorphism from the body of $\datrule$ to $\{\gamma_1,\dots,\gamma_k\}$ such that $h(H(\vec{x}))=\alpha$, then $\{\Pi_{i=1}^k  m_i\mid (m_1,\dots, m_k)\in \modannot^I(\gamma_1)\times\dots\times\modannot^I(\gamma_k) \} \subseteq \modannot^I(\alpha)$. 
		It follows that $\Pi_{i=1}^k  \ann(t_i) \in \modannot^I(\alpha)$, hence $\ann(t)\in\modannot^I(\alpha)$. 
	\end{itemize}
	Hence for every $\alpha$ such that $\ontology,\database\models \alpha$ and $t\in \drvtr{\alpha}$, for every model $(\inter,\modannot^I)$ of $\ontology$ and $(\database,\semiringshort, \annot)$, $\ann(t)\in \modannot^I(\alpha)$. Therefore $\{\ann(t) \mid t\in \drvtr{\fact}\}\subseteq \bigcap_{(\inter,\modannot^I)\models(\ontology,\database,\semiringshort,\annot)}\modannot^I(\fact)$. 
	
	Since by Lemma \ref{provmodmon-upperbound} the set-annotated interpretation $(\inter_0,\modannot^{I_0})$ defined by $\inter_0=\{\fact\mid \ontology,\database\models \fact\}$ and $\modannot^{I_0}(\fact)=\{\ann(t) \mid t\in\drvtr{\fact}\}$ for every $\fact\in \inter_0$ is a model of $\ontology$ and $(\database,\semiringshort, \annot)$, it follows that $ \bigcap_{(\inter,\modannot^I)\models(\ontology,\database,\semiringshort,\annot)}\modannot^I(\fact)=\{\ann(t) \mid t\in \drvtr{\fact}\}$.
\end{proof}

\propmodelBasedRelationship*
\begin{proof}
	$\provmodtot\sqsubseteq \atann$ follows from Lemma \ref{provmodtot-upperbound} and $\provmodtot\sqsubseteq \atann$ follows from Lemma \ref{provmodmon-lemma}.
\end{proof}

\propmodelBasedSameAsTreeBasedForIdempotent*
\begin{proof}
	Let $\semiringshort$ be a commutative $\plus\,$-idempotent $\omega$-continuous semiring. In this case, 
	\begin{align*}
		\atann(\ontology, \database, \semiringshort, \annot,\alpha)=&\Sigma_{t\in \drvtrsig}\ann(t)\\
		= &\sum_{\{\ann(t) \mid t\in \drvtr{\fact}\}} \ann(t)
	\end{align*}
	By Lemma~\ref{provmodmon-lemma}, it follows that $\provmodmon(\ontology, \database, \semiringshort, \annot,\fact)=\atann(\ontology, \database, \semiringshort, \annot,\alpha)$.

	Moreover, by Lemma~\ref{provmodtot-upperbound}, it follows that 
	$\provmodtot(\ontology, \database, \semiringshort, \annot,\fact)\sqsubseteq \sum_{\{\ann(t) \mid t\in \drvtr{\fact}\}} \ann(t)$, 
	and by Lemma \ref{provmodtot-lowerbound}, for every $t\in \drvtr{\fact}$, $\ann(t)\sqsubseteq \provmodtot(\ontology, \database, \semiringshort, \annot,\fact)$, \ie $\provmodtot(\ontology, \database, \semiringshort, \annot,\fact)=\ann(t)+S_t$ for some $S_t\in\semiringshort$. 
	Since $\semiringshort$ is $\plus\,$-idempotent, 
	\begin{align*}
		\provmodtot(\ontology, \database, \semiringshort, \annot,\fact)=&\sum_{\{\ann(t) \mid t\in \drvtr{\fact}\}} \provmodtot(\ontology, \database, \semiringshort, \annot,\fact)\\
		=& \sum_{\{\ann(t) \mid t\in \drvtr{\fact}\}} (\ann(t)+S_t)\\
		=& \sum_{\{\ann(t) \mid t\in \drvtr{\fact}\}} \ann(t)+\sum_{\{\ann(t) \mid t\in \drvtr{\fact}\}}S_t
	\end{align*}
	Hence $\sum_{\{\ann(t) \mid t\in \drvtr{\fact}\}} \ann(t)\sqsubseteq \provmodtot(\ontology, \database, \semiringshort, \annot,\fact)$. 
	Since $\provmodtot$ is defined for $\semiringshort$ such that for every $x,y\in K$, the greatest lower bound of $x$ and $y$ is well defined, by Lemma \ref{lem:greatest-lower-bound-implies-two-ways-ineq-is-eq}, 
	$\sum_{\{\ann(t) \mid t\in \drvtr{\fact}\}} \ann(t)\sqsubseteq \provmodtot(\ontology, \database, \semiringshort, \annot,\fact)$ and $\provmodtot(\ontology, \database, \semiringshort, \annot,\fact)\sqsubseteq \sum_{\{\ann(t) \mid t\in \drvtr{\fact}\}} \ann(t)$ implies that $\provmodtot(\ontology, \database, \semiringshort, \annot,\fact) =\sum_{\{\ann(t) \mid t\in \drvtr{\fact}\}} \ann(t)=\atann(\ontology, \database, \semiringshort, \annot,\alpha)$.
\end{proof}

\propmodelBasedVSTreeBased*
\begin{proof}
	By Lemma~\ref{provmodtot-upperbound}, $\provmodtot(\ontology, \database, \mathbb{N}^\infty\llbracket\semiringVars\rrbracket, \annot_\semiringVars,\fact)\sqsubseteq \atann(\ontology, \database, \mathbb{N}^\infty\llbracket\semiringVars\rrbracket, \annot_\semiringVars,\fact)$, \ie $\atann(\ontology, \database, \mathbb{N}^\infty\llbracket\semiringVars\rrbracket, \annot_\semiringVars,\fact)=\provmodtot(\ontology, \database, \mathbb{N}^\infty\llbracket\semiringVars\rrbracket, \annot_\semiringVars,\fact)+S$ for some $S\in \mathbb{N}^\infty\llbracket\semiringVars\rrbracket$. 
	Moreover, for every monomial $m$ in $S$, $m$ occurs in $\atann(\ontology, \database, \mathbb{N}^\infty\llbracket\semiringVars\rrbracket, \annot_\semiringVars,\fact)$ so there exists $t\in \drvtr{\fact}$ such that $m=\ann(t)$. By Lemma \ref{provmodtot-lowerbound}, $m\sqsubseteq \provmodtot(\ontology, \database, \mathbb{N}^\infty\llbracket\semiringVars\rrbracket, \annot_\semiringVars,\fact)$. 
	
	By Lemma~\ref{provmodmon-lemma}, $\provmodmon(\ontology, \database,  \mathbb{N}^\infty\llbracket\semiringVars\rrbracket, \annot_\semiringVars,\fact)= \sum_{\{\ann(t) \mid t\in \drvtr{\fact}\}} \ann(t)$ where $\ann(t)=\Pi_{\beta\text{ is a leaf of $t$}} \annot_\semiringVars(\beta)$. 
	It is easy to see that this is exactly the sum of monomials that occur in $\atann(\ontology, \database, \mathbb{N}^\infty\llbracket\semiringVars\rrbracket, \annot_\semiringVars,\fact)$.
\end{proof}

This final lemma shows that $\provmodtot$ and $\provmodmon$ fulfill the conditions of Definition \ref{def:provenance}.

\begin{lemma}\label{ann-model-null-prov}
	If $\ontology,\database\not\models \fact$ then $\provmodtot(\ontology,\database, \semiringshort, \annot,\fact)=\provmodmon(\ontology,\database, \semiringshort, \annot,\fact)=0$. 
	
	If $\semiringshort$ is positive, for $\prov=\provmodtot$ and $\prov=\provmodmon$, $\prov(\ontology, \database, \semiringshort, \annot,\fact)=0$ implies $\ontology, \database\not\models \fact$. 
\end{lemma}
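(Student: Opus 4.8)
The plan is to verify directly the two clauses of Definition~\ref{def:provenance}, using the model-theoretic characterisation of entailment from Lemma~\ref{ann-model-entailment} together with the explicit descriptions of the two semantics supplied by Lemmas~\ref{provmodtot-lowerbound} and~\ref{provmodmon-lemma}. Throughout, I adopt the natural convention (analogous to the one used for the union operator in Section~\ref{sec:semantics}, where $\annot(\alpha)=0$ for $\alpha\notin\inter$) that a fact absent from an interpretation carries the empty annotation: for an annotated model $(\inter,\modannot^\inter)$ with $\fact\notin\inter$ we read $\modannot^\inter(\fact)=\kzero$, and for a set-annotated model we read $\modannot^\inter(\fact)=\emptyset$.

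For the first implication, assume $\ontology,\database\not\models\fact$. By the contrapositive of Lemma~\ref{ann-model-entailment} there is a model $(\inter,\modannot^\inter)$ of $\ontology$ and $(\database,\semiringshort,\annot)$ with $\fact\notin\inter$. For $\provmodtot$, this model contributes the value $\kzero$ to the set over which the infimum is taken; since $\kzero\sqsubseteq x$ for every $x\in K$ (because $\kzero\plus x=x$), $\kzero$ is the minimum of that set and hence its infimum, so $\provmodtot(\ontology,\database,\semiringshort,\annot,\fact)=\kzero$. For $\provmodmon$, the same model contributes the empty annotation set, so the intersection $\bigcap_{(\inter',\modannot^{\inter'})\models(\ontology,\database,\semiringshort,\annot)}\modannot^{\inter'}(\fact)$ is empty and the (empty) sum equals $\kzero$.

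For the second implication I argue by contraposition: assuming $\semiringshort$ is positive and $\ontology,\database\models\fact$, I show the provenance is non-zero. Since $\fact$ is entailed, $\drvtr{\fact}$ contains at least one derivation tree $t$, and $\ann(t)=\prod_{\beta\text{ leaf of }t}\annot(\beta)$ is a product of elements of $K\setminus\{\kzero\}$, so positivity gives $\ann(t)\neq\kzero$. For $\provmodtot$, Lemma~\ref{provmodtot-lowerbound} yields $\ann(t)\sqsubseteq\provmodtot(\ontology,\database,\semiringshort,\annot,\fact)$; were the right-hand side $\kzero$ we would get $\ann(t)\sqsubseteq\kzero$, i.e.\ $\ann(t)\plus c=\kzero$ for some $c$, whence $\ann(t)=\kzero$ by positivity, a contradiction. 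For $\provmodmon$, Lemma~\ref{provmodmon-lemma} expresses the value as a possibly infinite sum in which $\ann(t)$ occurs as a summand, so $\ann(t)$ lies below the total sum for $\sqsubseteq$, and the same argument forces the sum to be non-zero.

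The manipulations of $\sqsubseteq$ are routine; the two points that need care are the bookkeeping convention for $\modannot^\inter(\fact)$ when $\fact\notin\inter$ (which is exactly what drives the first implication and must be stated explicitly, since otherwise models that gratuitously include a non-entailed $\fact$ could spoil the infimum or the intersection) and, for $\provmodmon$, the justification that a single non-zero summand forces an infinite sum to be non-zero. The latter is the main obstacle: it relies on $\sqsubseteq$ being a partial order on the $\omega$-continuous semiring, so that each partial sum, and hence by passing to the least upper bound the whole sum, dominates $\ann(t)$, after which positivity (through $a\sqsubseteq\kzero\Rightarrow a=\kzero$) closes the argument.
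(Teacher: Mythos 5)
Your proof is correct and takes essentially the same route as the paper's: the first implication is obtained by exhibiting a model omitting $\fact$ (via Lemma~\ref{ann-model-entailment}), which contributes $0$ to the infimum set for $\provmodtot$ and $\emptyset$ to the intersection for $\provmodmon$, and the second implication rests on Lemma~\ref{provmodtot-lowerbound} and Lemma~\ref{provmodmon-lemma} together with positivity, exactly as in the paper (which argues directly rather than by contraposition). The convention you state explicitly for $\modannot^I(\fact)$ when $\fact\notin\inter$ is left implicit in the paper but is relied upon there in the same way, so this is a presentational rather than substantive difference.
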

\begin{proof}
	In $\provmodtot$ case,  by Lemma \ref{provmodtot-upperbound}, $\provmodtot(\ontology, \database, \semiringshort, \annot,\alpha)\sqsubseteq \atann(\ontology, \database, \semiringshort, \annot,\fact)$. 
	If $\ontology,\database\not\models \fact$, $\atann(\ontology, \database, \semiringshort, \annot,\fact)=0$ so by Lemma \ref{lem:greatest-lower-bound-implies-two-ways-ineq-is-eq}, $\provmodtot(\ontology, \database, \semiringshort, \annot,\alpha)=0$. 
	By Lemma \ref{provmodtot-lowerbound}, 
	for every $t\in \drvtr{\alpha}$, $\ann(t)\sqsubseteq\provmodtot(\ontology, \database, \semiringshort, \annot,\alpha)$. 
	Hence, if $\semiringshort$ is positive, $\provmodtot(\ontology, \database, \semiringshort, \annot,\fact)=0$ implies that $\ann(t)=0$ for every $t\in \drvtr{\alpha}$. Since databse facts cannot be annotated with $0$, this means that $\drvtr{\alpha}=\emptyset$ and $\ontology, \database\not\models \fact$. 
	
	In $\provmodmon$ case, by Lemma \ref{ann-model-entailment}, $\ontology,\database\not\models \fact$ if and only if there exists a set-annotated model $(\inter,\modannot^I)$ of $\ontology$ and $(\database, \semiringshort, \annot)$ such that $\fact\notin \inter$. Hence, 
	\begin{itemize}
		\item $\ontology,\database\not\models \fact$ implies that $\bigcap_{(\inter,\modannot^I)\models (\ontology,\database,\semiringshort, \annot)} \modannot^I(\fact)=\emptyset$ and $\provmodmon(\ontology, \database, \semiringshort, \annot,\fact)= \Sigma_{k\in \bigcap_{(\inter,\modannot^I)\models (\ontology,\database,\semiringshort, \annot)} \modannot^I(\fact)} k=0$; and 
		\item if $\semiringshort$ is positive, $\provmodmon(\ontology, \database, \semiringshort, \annot,\fact)=0$ implies that $\bigcap_{(\inter,\modannot^I)\models (\ontology,\database,\semiringshort, \annot)} \modannot^I(\fact)=\emptyset$ and $\ontology,\database\not\models \fact$.\qedhere
	\end{itemize}
\end{proof}

\subsection{Execution- and Tree-Based Semantics}

\subsubsection{Naive Evaluation /\ All Trees}

We denote $(D,\semiringshort,\naiveannot)$ by $(\naive^{0},\semiringshort,\naiveannot^0)$, and for $i\ge 1$ we denote $\naive^{i}(\ontology,\database,\semiringshort,\naiveannot)$ by $(\naive^{i},\semiringshort,\naiveannot^i)$.
\begin{lemma}
	\label{lem:deductiontree-iteration}
	For every fact $\alpha \in \naive^{i}$, it holds that 
	\[
	\naiveannot^i(\alpha) =	\sum_{\substack{ t\in  \drvtrsig\\ \text{ is of depth}\le i}}   \ann(t)
	\]
\end{lemma}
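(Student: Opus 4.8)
The plan is to induct on the number $i$ of applications of the consequence operator, following the recursive definition of $\naive^{i}$. In the base case $i=0$ we have $\naive^{0}=(\database,\semiringshort,\annot)$, so $\naiveannot^{0}=\annot$; for $\fact\in\database$ the unique derivation tree of depth $\le 0$ is the single leaf labelled $\fact$, whose annotation $\ann(t)$ equals $\annot(\fact)$, and the identity holds.

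For the inductive step I would unfold the two operators in $\naive^{i+1}=\consop(\naive^{i})\cup(\database,\semiringshort,\annot)$. Let $(\inter',\semiringshort,\annot')$ denote $\consop(\naive^{i}(\ontology,\database,\semiringshort,\annot))$; the definition of the union operator then gives, for any $\fact\in\naive^{i+1}$,
\[
\naiveannot^{i+1}(\fact)=\annot'(\fact)\plus\annot(\fact),
\]
with the convention that a summand is $\kzero$ when $\fact$ lies outside the corresponding fact set. Writing $\phi$ for $\phi(\vec{x},\vec{y})$, the definition of $\consop$ applied to $(\naive^{i},\semiringshort,\naiveannot^{i})$ yields
\[
\annot'(\fact)=\sum_{(\datrule,h)}\ \prod_{\beta\in h(\phi)}\naiveannot^{i}(\beta),
\]
where $(\datrule,h)$ ranges over the rules $\datrule:\phi\rightarrow H(\vec{x})$ of $\ontology$ and homomorphisms $h:\phi\rightarrow\naive^{i}$ with $h(H(\vec{x}))=\fact$. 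Substituting the induction hypothesis for each $\naiveannot^{i}(\beta)$ and using distributivity of $\times$ over $+$, each product of sums becomes a single sum, ranging over tuples $(t_\beta)_\beta$ of derivation trees of depth $\le i$ (one per body atom), of the monomials $\prod_\beta\ann(t_\beta)$.

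The key step is to read this back as a sum over derivation trees of $\fact$: every triple $(\datrule,h,(t_\beta)_\beta)$ assembles into a unique tree $t$ with root $(\fact,\datrule,h)$ and child subtrees $(t_\beta)_\beta$, of depth $1+\max_\beta\dep(t_\beta)\le i+1$ and with leaf set the union of those of the $t_\beta$, so that $\ann(t)=\prod_\beta\ann(t_\beta)$; conversely every tree for $\fact$ of depth between $1$ and $i+1$ splits uniquely at its root into such a triple, its subtrees having depth $\le i$ and hence being rooted at facts of $\naive^{i}$ by the induction hypothesis. This identifies $\annot'(\fact)$ with the sum of $\ann(t)$ over all trees of $\fact$ of depth in $[1,i+1]$, while the extra term $\annot(\fact)$ supplies exactly the depth-$0$ tree when $\fact\in\database$ (and nothing otherwise), so that the total is the sum over all trees of depth $\le i+1$. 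I expect the main obstacle to be making this bijection and the depth bookkeeping watertight: checking that the depth-$0$ contribution comes solely from the union with $(\database,\semiringshort,\annot)$ while $\consop$ contributes precisely the trees of depth $\ge 1$, so that nothing is double-counted or missed, and that the distributive rewriting matches the exact convention used for $h(\phi)$ in the definition of $\consop$.
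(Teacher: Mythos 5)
Your proof is correct and takes essentially the same route as the paper's: induction on $i$, with the inductive step matching one application of $\consop$ (together with the union with the annotated database) against derivation trees of depth one greater, by decomposing each tree at its root into a rule, a homomorphism, and a tuple of shallower subtrees. If anything, the paper's own proof is sketchier—it asserts exactly the root-decomposition/distributivity step that you carry out explicitly (and your flags about the depth-$0$ bookkeeping and the set-of-facts convention for $h(\phi)$ in $\consop$ are precisely the points the paper glosses over)—so your added detail only strengthens the argument.
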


\begin{proof}
	We prove the claim by induction on $i$.
	\paragraph{Induction Basis}
	If $i=0$ then $	\naiveannot^0 = D$ and the claim holds since
	$\naiveannot^0(\alpha) = \lambda(\alpha)$ if $\alpha\in D$, or $0$ otherwise. And, by definition, 	$\sum_{\substack{ t\in  \drvtrsig\\ \text{ is of depth}\le 0}}   \ann(t) = \lambda(\alpha)$ if 
	$\alpha\in D$, or $0$ otherwise.
	
	\paragraph{Induction Step}
	By definition we have	\[
	\naiveannot^i(\alpha) =	\sum_{\substack{ t\in  \drvtrsig\\ \text{ is of depth}\le i}}   \ann(t) = 
	\sum_{\substack{ t\in  \drvtrsig\\ \text{ is of depth }\le i-1}}   \ann(t) \oplus
	\sum_{\substack{ t\in  \drvtrsig\\ \text{ is of depth } i}}   \ann(t)
	\]
	By definition, 
	\[
	\sum_{\substack{ t\in  \drvtrsig\\ \text{ is of depth } \le i}}   \ann(t) = 
	\sum_{\substack{ t\in  \drvtrsig\\ \text{ is of depth } \le i-1}}   \ann(t) \oplus
	\sum_{\substack{ t\in  \drvtrsig\\ \text{ is of depth } i}}   \ann(t)
	\]
	By induction hypothesis,
	\[
	\sum_{\substack{ t\in  \drvtrsig\\ \text{ is of depth } \le i}}   \ann(t) = 
	\naiveannot^{i-1}(\alpha) \oplus
	\sum_{\substack{ t\in  \drvtrsig\\ \text{ is of depth } i}}   \ann(t)
	\]
	
	Note that from the definition $\naive^{i}(\ontology,\database,\semiringshort,\annot) \df \consop(\naive^{i-1}(\ontology,\database,\semiringshort,\annot))\cup (\database,\semiringshort,\annot)$
	we can conclude (using a simple induction) that 
	\[
	\naive^{i}(\ontology,\database,\semiringshort,\annot) = \underbrace{\consop( \cdots \consop( }_{i\text{ times}} \naive^{0}(\ontology,\database,\semiringshort,\annot) \,)\cdots ) \oplus 
	\underbrace{\consop( \cdots \consop( }_{i-1\text{ times}} \naive^{0}(\ontology,\database,\semiringshort,\annot) \,)\cdots ) \oplus 
	\cdots \oplus \naive^{0}(\ontology,\database,\semiringshort,\annot)
	\]
	By the definition of $\consop$ and the above we conclude the desired equivalence.
\end{proof}

\propnaive*
\begin{proof}
	Let us denote $\naive^{i}(\ontology,\database,\semiringshort,\annot)$ by $(\naive^{i},\semiringshort,\naiveannot^i)$.
	From the convergence of the naive Datalog evalutaion algoritm~\cite{DBLP:books/aw/AbiteboulHV95}, we can conclude that there is a $k$ such that  $\naive^{\ell} =  \naive^{k}$ for every $\ell\ge k$. 
	Let $\alpha \in  \naive^{k}$. It suffices to show that $\naiveannot^i(\alpha) $ converges. 
	By Lemma~\ref{lem:deductiontree-iteration}, 
	\[
	\naiveannot^i(\alpha) = \sum_{\substack{ t\in  \drvtrsig\\ \text{ is of depth } 0}}   \ann(t) \oplus
	\cdots \oplus 
	\sum_{\substack{ t\in  \drvtrsig\\ \text{ is of depth } i}}   \ann(t).
	\]
	Since $\semiringshort$ is $\omega$-continuous this sum converges, and therefore $\sup_{i} \naiveannot^i(\alpha)$ exists, which concludes the proof.
\end{proof}

\propexecutiontreeconnection*
\begin{proof}
	By Lemma \ref{lem:deductiontree-iteration}, the sequence of provenance of a fact $\alpha$ over the $\naive^{i}(\ontology,\database,\semiringshort,\annot)$ converge to the sum of the derivation trees of $\alpha$. 
	Thus, we can conclude that the sequence of  $\naive^{i}(\ontology,\database,\semiringshort,\annot)$ converge to an annotated database that the  $\consop(\naive^{\infty}(\ontology,\database,\semiringshort,\annot)) \cup (\database,\semiringshort,\annot)$ is equal to  $\naive^{\infty}(\ontology,\database,\semiringshort,\annot)$ and therefore, 
	$\provnaive$ is equal to $\atann$.
\end{proof}

\subsubsection{Optimized Naive Evaluation /\ Minimal Depth Trees}
\propopti*
\begin{proof}
	The existence of $k$ such that $\opti^{\ell}= \opti^k$ for every $\ell \ge k$ is a consequence of the convergence of the original seminaive Datalog algorithm~\cite{DBLP:books/aw/AbiteboulHV95}.
	It suffices to show that for the same $k$ and for every fact $\beta \in \opti^{k}$  if holds that $\semiannot^k(\beta) =\semiannot^{\ell}(\alpha)$ whenever $\ell \ge k$.
	This, indeed, follows directly from the definition of the operator $\diffop$. 
\end{proof}

\begin{lemma}
	\label{lem:oh}
	For every fact $\beta \in \opti^{i}$ where $(\opti^{i},\semiringshort,\opti^i) \df \opti^{i}(\ontology,\database,\semiringshort,\annot)$, the following holds
	\[
	\opti^i(\beta) = 	\sum_{\substack{ t\in  \drvtrsig\\ \text{ is of minimal depth with depth } \le i}}   \ann(t)
	\]
\end{lemma}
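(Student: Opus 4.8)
The plan is to prove the identity by induction on $i$, closely paralleling the proof of Lemma~\ref{lem:deductiontree-iteration} for the na\"ive evaluation while carrying along the extra minimal-depth bookkeeping. Throughout I write $(\opti^{i},\semiringshort,\optiannot^i)\df\opti^{i}(\ontology,\database,\semiringshort,\annot)$ and use the two facts baked into the definition of $\opti$: until the first index at which $\fact$ appears, the recurrence $\opti^{i+1}=\consop(\opti^{i})\cup(\database,\semiringshort,\annot)$ is literally the na\"ive step, whereas from that index on $\opti$ is frozen, the stopping index being exactly the minimal depth of $\fact$ by Proposition~\ref{prop:opti}.

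For the base case $i=0$ we have $\opti^{0}=\database$ and $\optiannot^{0}(\beta)=\annot(\beta)$; the only derivation tree of depth $\le 0$ is the single leaf $\beta$, which is trivially of minimal depth, so both sides equal $\annot(\beta)$. For the inductive step I would split on the guard of the recurrence. If $\fact\in\opti^{i}$ then $\opti^{i+1}=\opti^{i}$; since every $\beta\in\opti^{i}$ has minimal depth $\le i$, all its minimal-depth trees already have depth $\le i$, so raising the budget to $i+1$ introduces no new minimal-depth tree and both sides are unchanged. If $\fact\notin\opti^{i}$, I would expand $\optiannot^{i+1}(\beta)$ via the definition of $\consop$ as a sum over rules $\datrule\df\phi(\vec{x},\vec{y})\rightarrow H(\vec{x})$ and homomorphisms $h$ with $h(H(\vec{x}))=\beta$ of $\prod_{\gamma\in h(\phi(\vec{x},\vec{y}))}\optiannot^{i}(\gamma)$, augmented by $\annot(\beta)$ when $\beta\in\database$, then apply the induction hypothesis to each child annotation $\optiannot^{i}(\gamma)$ and distribute the products over the sums. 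This rewrites $\optiannot^{i+1}(\beta)$ as a sum over trees obtained by attaching, below a fresh root $(\beta,\datrule,h)$, one admissible subtree of depth $\le i$ for each body atom.

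The main obstacle is exactly the minimal-depth identification at this last step, and it is genuinely delicate. If the induction hypothesis supplies, for each body atom $\gamma$, only its \emph{minimal-depth} subtrees, then attaching them below a fresh root yields trees all of whose proper subtrees are minimal-depth, i.e. \emph{hereditary} minimal-depth trees, which is the set characterising $\rmdtann$ rather than $\mdtann$. The point I must pin down is that a minimal-depth tree of $\beta$ need not have minimal-depth subtrees: a child may be realised by a deeper, non-minimal subtree provided it still fits the global depth budget, so the set actually assembled by $\consop$ is strictly larger than the hereditary one. Reconciling this requires careful depth accounting, relating the minimal depth of $\beta$ to those of the body atoms through the identity that the minimal depth of $\beta$ equals one plus the minimum over applicable rules of the largest minimal depth among its body atoms, and using that $\opti$ halts precisely at the minimal depth of $\fact$ so that no deeper derivation of $\fact$ is ever folded into the annotation. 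Once this accounting is settled, summing over $\beta\in\opti^{k}$ at the stopping index $k$ gives the stated characterisation, and specialising to $\beta=\fact$ recovers $\optiannot^{k}(\fact)=\mdtann(\ontology,\database,\semiringshort,\annot,\fact)$, the instance feeding Proposition~\ref{prop:executiontreeconnection_mdt}.
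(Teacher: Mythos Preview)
Your overall plan---induction on $i$ with a case split on the guard---is the same approach as the paper's, whose proof is the single sentence ``The proof is a direct proof by induction on $i$.'' You are right to sense trouble in the inductive step, but the diagnosis is slightly off and the ``reconciliation'' you sketch cannot succeed.

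The difficulty is not merely that gluing minimal-depth subtrees under a fresh root produces hereditary minimal-depth trees. The deeper problem is that the statement, read for an arbitrary $\beta\in\opti^{i}$, is false. Until the target $\fact$ first appears, the optimized recurrence is literally the na\"ive one, so by Lemma~\ref{lem:deductiontree-iteration} one has $\optiannot^{i}(\beta)=\sum_{t\in\drvtr{\beta},\ \dep(t)\le i}\ann(t)$, a sum over \emph{all} derivation trees of $\beta$ of depth at most $i$, not only the minimal-depth ones. Concretely, take $\ontology=\{A(x)\rightarrow B(x),\ C(x)\rightarrow A(x)\}$, $\database=\{A(a),C(a)\}$ with target $\fact=B(a)$: the stopping index is $k=1$, yet $\optiannot^{1}(A(a))=\annot(A(a))+\annot(C(a))$, whereas the sum over minimal-depth trees of $A(a)$ is just $\annot(A(a))$. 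Thus the induction hypothesis you intend to invoke on the children $\gamma$ is itself invalid in general, and no amount of depth accounting repairs it.

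What is both true and sufficient for Proposition~\ref{prop:executiontreeconnection_mdt} is the single instance $\beta=\fact$. If $k$ is the first index with $\fact\in\opti^{k}$, then $\optiannot^{i}(\fact)=\optiannot^{k}(\fact)=\naiveannot^{k}(\fact)$ for every $i\ge k$, and by Lemma~\ref{lem:deductiontree-iteration} this equals $\sum_{t\in\drvtr{\fact},\ \dep(t)\le k}\ann(t)$; since $k$ is precisely the minimal depth of $\fact$, this is exactly the sum over the minimal-depth trees of $\fact$. No separate induction beyond Lemma~\ref{lem:deductiontree-iteration} is needed. Your instinct to specialise to $\beta=\fact$ at the end is therefore the right move, but it should be the whole argument rather than a corollary of a general claim that does not hold.
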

\begin{proof}
	The proof is a direct proof by induction on $i$.
\end{proof}

\propexecutiontreeconnectionmdt*
\begin{proof}
	The proof is straightforward from Lemma~\ref{lem:oh}.
\end{proof}

\subsubsection{Seminaive Evaluation /\ Hereditary Minimal Depth Trees}
We denote $(D,\semiringshort,\semiannot)$ by $(\semi^{0},\semiringshort,\semiannot^0)$, and for $i\ge 1$ we denote $\naive^{i}(\ontology,\database,\semiringshort,\semiannot)$ by $(\semi^{i},\semiringshort,\semiannot^i)$.

\propsemi*
\begin{proof}
	The existence of $k$ such that $\semi^{\ell}= \semi^k$ for every $\ell \ge k$ is a consequence of the convergence of the original seminaive Datalog algorithm~\cite{DBLP:books/aw/AbiteboulHV95}.
	It suffices to show that for the same $k$ and for every fact $\alpha \in \semi^{k}$  if holds that $\semiannot^k(\alpha) =\semiannot^{\ell}(\alpha)$ whenever $\ell \ge k$.
	This, indeed, follows directly from the definition of the operator $\diffop$. 
\end{proof}

\begin{lemma}
	\label{lem:sh}
	For every fact $\alpha \in \semi^{i}$ where $(\semi^{i},\semiringshort,\annot^i) \df \semi^{i}(\ontology,\database,\semiringshort,\annot)$, the following holds
	\[
	\annot^i(\alpha) = 	\sum_{\substack{ t\in  \drvtrsig\\ \text{ is of hereditary minimal depth with depth } \le i}}   \ann(t)
	\]
\end{lemma}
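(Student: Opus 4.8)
The plan is to prove the identity by induction on $i$, mirroring the proofs of Lemma~\ref{lem:deductiontree-iteration} and Lemma~\ref{lem:oh} but additionally tracking the hereditary constraint. Throughout I write $d(\beta)$ for the minimal depth of a derivation tree of $\beta$ in $\drvtr{\beta}$, and I rely on the classical fact (inherited from the correctness of the seminaive algorithm, \cf~\cite{DBLP:books/aw/AbiteboulHV95}) that $\semi^i$ contains exactly the facts $\beta$ with $d(\beta)\le i$. I also use a simple observation: since a hereditary minimal-depth tree for $\beta$ is in particular a minimal-depth tree for $\beta$, every such tree has depth exactly $d(\beta)$; hence for a fact with $d(\beta)\le i$, the hereditary minimal-depth trees of depth $\le i$ are precisely \emph{all} hereditary minimal-depth trees of $\beta$. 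The base case $i=0$ is immediate: $\semi^0=(\database,\semiringshort,\annot)$, so $\annot^0(\fact)=\annot(\fact)$ for $\fact\in\database$, while the only derivation tree of depth $\le 0$ is the single leaf labeled $\fact$, which is trivially hereditary minimal-depth with $\ann$-value $\annot(\fact)$.

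For the induction step I would fix $\fact$ and use $\semi^{i+1}=\semi^i\cup\diffop(\semi^i)$. If $\fact\in\semi^i$, then $\fact\notin(\semi^i)_{\diffop}$, so the union leaves its annotation untouched and $\annot^{i+1}(\fact)=\annot^i(\fact)$; since $d(\fact)\le i$ no hereditary minimal-depth tree of depth $i+1$ exists, so the right-hand sides for $i$ and $i+1$ coincide and the claim follows from the induction hypothesis. The interesting case is $\fact\in(\semi^i)_{\diffop}$, i.e.\ $\fact$ is derived for the first time at step $i+1$, so $d(\fact)=i+1$. Here $\annot^{i+1}(\fact)=\annot_{\consop}(\fact)$ computed over $(\semi^i,\semiringshort,\annot^i)$, namely the sum, over all $(\datrule,h)$ with $\datrule=\phi(\vec x,\vec y)\to H(\vec x)$, $h(H(\vec x))=\fact$ and $h(\phi)\subseteq\semi^i$, of $\prod_{\beta\in h(\phi)}\annot^i(\beta)$. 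Applying the induction hypothesis to each factor and distributing products over sums (exactly the assembly/bijection step of Lemma~\ref{lem:deductiontree-iteration}) rewrites this as a sum, over choices of a rule application $(\datrule,h)$ for $\fact$ together with hereditary minimal-depth subtrees $t_\beta$ of depth $\le i$ for the body facts, of $\prod_\beta\ann(t_\beta)$; each choice assembles into a derivation tree $t$ of $\fact$ with root $(\fact,\datrule,h)$ and these subtrees, and $\ann(t)=\prod_\beta\ann(t_\beta)$.

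It then remains to identify these assembled trees with the hereditary minimal-depth trees of $\fact$, and the step I expect to be the main obstacle is showing that every assembled tree $t$ is actually of \emph{minimal} depth for $\fact$. Its depth is $1+\max_\beta d(\beta)$ over the body facts $\beta$, which all satisfy $d(\beta)\le i$. If one had $\max_\beta d(\beta)\le i-1$, then all body facts would lie in $\semi^{i-1}$ and this rule application would already fire over $\semi^{i-1}$, forcing $\fact\in\semi^i$ and contradicting $d(\fact)=i+1$; hence $\max_\beta d(\beta)=i$ for every rule application contributing to the sum, so $\dep(t)=i+1=d(\fact)$, making $t$ minimal-depth, and since its subtrees are hereditary minimal-depth, $t$ is hereditary minimal-depth. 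Conversely, any hereditary minimal-depth tree of $\fact$ has depth $d(\fact)=i+1$, its root is some $(\fact,\datrule,h)$, and its immediate subtrees are hereditary minimal-depth of depth $\le i$ (hence for body facts in $\semi^i$), so it is exactly one of the assembled trees. This yields $\annot^{i+1}(\fact)=\sum_{t}\ann(t)$ over the hereditary minimal-depth trees of depth $\le i+1$, completing the induction.
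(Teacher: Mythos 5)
Your proof is correct and takes essentially the same approach as the paper: the paper's own proof of this lemma is just the one-line claim that it is ``a direct proof by induction on $i$'', and your argument supplies exactly that induction in full detail. In particular, the step you single out as the main obstacle --- showing that any rule application firing for the first time at stage $i+1$ must use some body fact of minimal depth exactly $i$, so that the assembled tree has depth $i+1=d(\fact)$ and is therefore minimal-depth --- is precisely the detail the paper leaves implicit.
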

\begin{proof}
	The proof is a direct proof by induction on $i$.
\end{proof}

\propexecutiontreeconnectionhmdt*
\begin{proof}
	The proof is straightforward from Lemma~\ref{lem:sh}.
\end{proof}

\proptreeconnections*
\begin{proof}
	It is straightforward that 
	$ \mtann \sqsubseteq \atann \ \text{ and }\quad  \mdtann \sqsubseteq \atann$.
	Since every hereditary minimal depth tree is also non-recursive, we have $\rmdtann \sqsubseteq \mtann $. Since every hereditary minimal depth tree is also a minimal depth tree, we have $ \mdtann \sqsubseteq \atann$.
\end{proof}

\subsection{Non-Recursive Tree-Based Semantics}
\propalltreenonrecabsorptive*
\begin{proof}
	Assume that $\semiringshort$ is a commutative absorptive $\omega$-continuous semiring. 
	The sum $\atann(\ontology,\database,\semiringshort,\annot,\fact)$ of the annotations of the (possibly infinitely many) derivations trees in $\drvtrsig$ is defined as the supremum of the set of the sums of the annotations of any finite subset of $\drvtrsig$. We show that this supremum coincides with $\mtann(\ontology,\database,\semiringshort,\annot,\fact)$.
	
	It is clear that $\mtann(\ontology,\database,\semiringshort,\annot,\fact)$ is a lower bound of the supremum, since the set of non-recursive trees is a subset of $\drvtrsig$. 
	
	Conversely, let $\{t_1,\ldots,t_n\}$ be a finite subset of $\drvtrsig$. Let us assume that $\{t_1,\ldots,t_n\}$ contains all the non-recursive trees of $\drvtrsig$. Let us define a non-recursive version $t'_i$ of $t_i$ as follows. A simplification of $t_i$ is obtained from $t_i$ by picking a node $(\beta,\datrule,h)$ which has a descendant $n$ of the form $(\beta,\datrule', h')$ or $\beta$, and by replacing the subtree rooted in $(\beta,\datrule,h)$ by the subtree rooted in $n$. $t'_i$ is any tree obtained from $t_i$ on which no simplification is performable. It holds that $\ann(t_i) = \ann(t'_i)\times e$ for some $e\in K$, which is the product of the labels of the leaves that have been removed from $t_i$ through successive steps of simplification. Hence, by absorptivity of $\semiring$, $\Sigma_{i=1}^n \ann(t_i)=\Sigma_{i=1}^n \ann(t'_i)=\mtann(\ontology,\database,\semiringshort,\annot,\fact)$. 
	Since this is true for any finite subset $\{t_1,\ldots,t_n\}$ of $\drvtrsig$ that contains the non-recursive trees of $\drvtrsig$, this concludes the proof.
\end{proof}
\OMIT{
	\subsubsection{Connections between Execution and Tree-Based Semantics}
	
	The proof of Proposition~\ref{prop:executiontreeconnection} is based on the following lemmas.
	\begin{lemma}
		\label{lem:sh}
		For every fact $\alpha \in \semi^{i}$ where $(\semi^{i},\semiringshort,\annot^i) \df \semi^{i}(\ontology,\database,\semiringshort,\annot)$, the following holds
		\[
		\annot^i(\alpha) = 	\sum_{\substack{ t\in  \drvtrsig\\ \text{ is of hereditary minimal depth with depth } \le i}}   \ann(t)
		\]
	\end{lemma}
	\begin{proof}
		The proof is a direct proof by induction on $i$.
	\end{proof}
	\begin{lemma}
		\label{lem:oh}
		For every fact $\beta \in \opti^{i}$ where $(\opti^{i},\semiringshort,\opti^i) \df \opti^{i}(\ontology,\database,\semiringshort,\annot)$, the following holds
		\[
		\opti^i(\beta) = 	\sum_{\substack{ t\in  \drvtrsig\\ \text{ is of minimal depth with depth } \le i}}   \ann(t)
		\]
	\end{lemma}
	\begin{proof}
		The proof is a direct proof by induction on $i$.
	\end{proof}
	
	We are now ready to prove the following:
	\propexecutiontreeconnection*	
	\begin{proof}
		By Lemma \ref{lem:deductiontree-iteration}, the sequence of provenance of a fact $\alpha$ over the $\naive^{i}(\ontology,\database,\semiringshort,\annot)$ converge to the sum of the derivation trees of $\alpha$. 
		Thus, we can conclude that the sequence of  $\naive^{i}(\ontology,\database,\semiringshort,\annot)$ converge to an annotated database that the  $\consop(\naive^{\infty}(\ontology,\database,\semiringshort,\annot)) \cup (\database,\semiringshort,\annot)$ is equal to  $\naive^{\infty}(\ontology,\database,\semiringshort,\annot)$ and therefore, 
		$\provnaive$ is equal to $\atann$.
		
		With Lemmas \ref{lem:sh} and  \ref{lem:oh}, we can conclude that $\provsemi=\rmdtann$ and $\provopti=\mdtann$, respectively, directly.
	\end{proof}

}

\section{Discussion and Proofs for Section \ref{sec:properties}}

\subsection{Commutation with Homomorphisms and Universal Semirings}\label{app:commutation-universal}

The following proposition explicits the connection between the satisfaction of the Commutation with Homomorphisms Property or Commutation with $\omega$-Continuous Homomorphisms Property by a provenance semantics $\prov$ and the ability to use a provenance semiring universal for its semiring domain to factor the computations.

\begin{proposition}
	If $\prov$ satisfies the Commutation with  (resp.\ $\omega$-Continuous) Homomorphisms Property and $\mi{Prov}(\semiringVars)$ is universal for its semiring domain (resp.\ which contains only $\omega$-continuous semirings), $\prov(\ontology, \database,\semiringshort, \annot, \fact)=h(\prov(\ontology, \database,\mi{Prov}(\semiringVars),\annot_\semiringVars,\fact))$ where  $h$ is the unique (resp.\ $\omega$-continuous) semiring homomorphism that extends $\nu:\semiringVars\rightarrow K$ where $\nu(x)=\annot(\annot_\semiringVars^-(x))$ for every $x\in \semiringVars$.
	
	Conversely, if $S$ is a set of (resp.\ $\omega$-continuous) semirings such that $\mi{Prov}(\semiringVars)$ is universal for $S$ and for every $\semiringshort\in S$, it holds that $\prov(\ontology, \database,\semiringshort, \annot, \fact)=h(\prov(\ontology, \database,\mi{Prov}(\semiringVars),\annot_\semiringVars,\fact))$ where  $h$ is the unique (resp.\ $\omega$-continuous) semiring homomorphism that extends $\nu:\semiringVars\rightarrow K$ where $\nu(x)=\annot(\annot_\semiringVars^-(x))$ for every $x\in \semiringVars$, then the restriction $\prov_S$ of $\prov$ to the semiring domain $S$ satisfies the Commutation with (resp.\ $\omega$-Continuous) Homomorphisms Property.
\end{proposition}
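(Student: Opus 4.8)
The plan is to prove both directions by leveraging the uniqueness of the specializing homomorphism together with the relevant Commutation property. Throughout, for a semiring $\semiringshort$ in the domain I write $h$ for the unique (resp.\ $\omega$-continuous) homomorphism $\mi{Prov}(\semiringVars)\to\semiringshort$ extending the valuation $\nu(x)=\annot(\annot_\semiringVars^{-1}(x))$; this is well-defined because $\annot_\semiringVars$ is injective, and because $\mi{Prov}(\semiringVars)$ is itself commutative (resp.\ $\omega$-continuous) it belongs to the semiring domain, so that $\prov(\ontology,\database,\mi{Prov}(\semiringVars),\annot_\semiringVars,\fact)$ is defined and the Commutation property may be applied with $\mi{Prov}(\semiringVars)$ as source.

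For the forward direction I first record the bookkeeping identity $h\circ\annot_\semiringVars=\annot$: for every $\beta\in\database$, the element $\annot_\semiringVars(\beta)$ is a variable, so $h(\annot_\semiringVars(\beta))=\nu(\annot_\semiringVars(\beta))=\annot(\annot_\semiringVars^{-1}(\annot_\semiringVars(\beta)))=\annot(\beta)$ by injectivity of $\annot_\semiringVars$. Then I instantiate the Commutation with (resp.\ $\omega$-Continuous) Homomorphisms Property at $\semiringshort_1=\mi{Prov}(\semiringVars)$, $\semiringshort_2=\semiringshort$, annotation $\annot_\semiringVars$, and the homomorphism $h$, which gives $h(\prov(\ontology,\database,\mi{Prov}(\semiringVars),\annot_\semiringVars,\fact))=\prov(\ontology,\database,\semiringshort,h\circ\annot_\semiringVars,\fact)=\prov(\ontology,\database,\semiringshort,\annot,\fact)$, exactly the claimed factoring.

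For the converse I take an arbitrary (resp.\ $\omega$-continuous) homomorphism $g:\semiringshort_1\to\semiringshort_2$ with $\semiringshort_1,\semiringshort_2\in S$, and let $h_1,h_2$ be the specializing homomorphisms into $\semiringshort_1$ and $\semiringshort_2$ arising from the valuations $\nu_1(x)=\annot(\annot_\semiringVars^{-1}(x))$ and $\nu_2(x)=(g\circ\annot)(\annot_\semiringVars^{-1}(x))$. Writing $P:=\prov(\ontology,\database,\mi{Prov}(\semiringVars),\annot_\semiringVars,\fact)$, the hypothesis gives $\prov(\ontology,\database,\semiringshort_1,\annot,\fact)=h_1(P)$ and $\prov(\ontology,\database,\semiringshort_2,g\circ\annot,\fact)=h_2(P)$. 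The decisive step is to show $g\circ h_1=h_2$: both are (resp.\ $\omega$-continuous) homomorphisms $\mi{Prov}(\semiringVars)\to\semiringshort_2$, and they agree on variables because $(g\circ h_1)(x)=g(\nu_1(x))=g(\annot(\annot_\semiringVars^{-1}(x)))=\nu_2(x)$; hence by the uniqueness clause in the definition of specialization they coincide. Applying $g$ to the first equation then yields $g(\prov(\ontology,\database,\semiringshort_1,\annot,\fact))=g(h_1(P))=h_2(P)=\prov(\ontology,\database,\semiringshort_2,g\circ\annot,\fact)$, which is precisely Commutation with Homomorphisms for $\prov_S$.

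The main obstacle, and the only step beyond unfolding definitions, is the identity $g\circ h_1=h_2$ in the converse, where the uniqueness of the specializing homomorphism is indispensable. In the $\omega$-continuous case this step carries the extra requirement that $g\circ h_1$ be itself $\omega$-continuous so as to qualify as \emph{the} unique $\omega$-continuous extension of $\nu_2$; I would discharge this by noting that a composition of $\omega$-continuous homomorphisms is again $\omega$-continuous (here $h_1$ is $\omega$-continuous by specialization and $g$ is $\omega$-continuous by hypothesis). Everything else is routine: the well-definedness of $\nu$ and the identity $h\circ\annot_\semiringVars=\annot$ both rest only on injectivity of $\annot_\semiringVars$, and the membership of $\mi{Prov}(\semiringVars)$ in the domain is what makes the forward application of the Commutation property legitimate.
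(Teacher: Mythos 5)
Your proof is correct and follows essentially the same route as the paper's: the forward direction instantiates the Commutation property with the specializing homomorphism $h$ and the identity $h\circ\annot_\semiringVars=\annot$, and the converse establishes $g\circ h_1=h_2$ via the uniqueness clause of specialization before chaining the two instances of the hypothesis. Your two explicit remarks — that $\mi{Prov}(\semiringVars)$ must lie in the semiring domain for the forward application, and that composition preserves $\omega$-continuity in the converse — are points the paper leaves implicit (it merely says the $\omega$-continuous case is "similar"), so they are welcome refinements rather than deviations.
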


\begin{proof}
	Assume that $\prov$ satisfies the Commutation with Homomorphisms Property and let $\mi{Prov}(\semiringVars)$ be a universal semiring for the semiring domain of $\prov$.
	Let $\ontology$ be a Datalog program, $(\database, \semiringshort, \annot)$ be an annotated database and $\alpha$ be a fact. 
	Let $\annot_\semiringVars$ associate a distinct variable from $\semiringVars$ to each fact of $\database$ and $h: \mi{Prov}(\semiringVars) \rightarrow K$ be the unique semiring homomorphism that extends $\nu:\semiringVars\rightarrow K$ where $\nu(x)=\annot(\annot_\semiringVars^-(x))$ for every $x\in \semiringVars$ (the existence of $h$ is guaranteed by the fact that $\mi{Prov}(\semiringVars)$ specializes correctly to $\semiringshort$ by definition of a universal semiring). 
	By the Commutation with Homomorphisms Property, we have 
	$h(\prov(\ontology, \database,\mi{Prov}(\semiringVars),\annot_\semiringVars,\alpha))=\prov(\ontology, \database,\semiringshort, h\circ\annot_\semiringVars, \alpha)$. 
	Moreover, for every $\beta\in \database$, $h\circ\annot_\semiringVars(\beta) =\nu(\annot_\semiringVars(\beta))= \annot(\annot_\semiringVars^-(\annot_\semiringVars(\beta)))=\annot(\beta)$, so $h\circ\annot_\semiringVars=\annot$. 
	Hence $h(\prov(\ontology, \database,\mi{Prov}(\semiringVars),\annot_\semiringVars,\fact))=\prov(\ontology, \database,\semiringshort, \annot, \fact)$.
	
	Assume that $S$ is a set of semirings such that $\mi{Prov}(\semiringVars)$ is universal for $S$ and for every $\semiringshort\in S$, it holds that $\prov(\ontology, \database,\semiringshort, \annot, \fact)=h(\prov(\ontology, \database,\mi{Prov}(\semiringVars),\annot_\semiringVars,\fact))$ where  $h$ is the unique semiring homomorphism that extends $\nu:\semiringVars\rightarrow K$ where $\nu(x)=\annot(\annot_\semiringVars^-(x))$ for every $x\in \semiringVars$, and let $\prov_S$ be the restriction of $\prov$ to the semiring domain~$S$. 
	Let $\semiringshort_1$ and $\semiringshort_2$ be two commutative semirings in $S$ such that there is a semiring homomorphism $h$ from $\semiringshort_1$ to $\semiringshort_2$.
	Let $\ontology$ be a Datalog program, $(\database, \semiringshort_1, \annot)$ be an annotated database and $\alpha$ be a fact. 
	Let $h_1: \mi{Prov}(\semiringVars) \rightarrow K_1$ (resp.\ $h_2: \mi{Prov}(\semiringVars) \rightarrow K_2$) be the unique semiring homomorphism that extends $\nu_1:\semiringVars\rightarrow K_1$ (resp.\ $\nu_2:\semiringVars\rightarrow K_2$) where $\nu_1(x)=\annot(\annot_\semiringVars^-(x))$ for every $x\in \semiringVars$ and $\nu_2(x)=h(\nu_1(x))=h(\annot(\annot_\semiringVars^-(x)))$. 
	Applying the hypothesis with $\semiringshort_1$ gives 
	$\prov_S(\ontology, \database,\semiringshort_1,\annot,\alpha)=h_1(\prov_S(\ontology, \database,\mi{Prov}(\semiringVars),\annot_\semiringVars,\alpha))$.   
	Hence $h(\prov_S(\ontology, \database,\semiringshort_1,\annot,\alpha))=h(h_1(\prov_S(\ontology, \database,\mi{Prov}(\semiringVars),\annot_\semiringVars,\alpha)))$. 
	Since for every $x\in\semiringVars$, $h(h_1(x))=h(\nu_1(x))=\nu_2(x)$, and $h_2$ is the unique semiring homomorphism that extends $\nu_2$, it follows that $h\circ h_1=h_2$. 
	Thus $h(\prov_S(\ontology, \database,\semiringshort_1,\annot,\alpha))=h_2(\prov_S(\ontology, \database,\mi{Prov}(\semiringVars),\annot_\semiringVars,\alpha))$. 
	Moreover, applying the hypothesis with $\semiringshort_2$ gives $h_2(\prov_S(\ontology, \database,\mi{Prov}(\semiringVars),\annot_\semiringVars,\alpha))=\prov_S(\ontology, \database,\semiringshort_2,h\circ \annot,\alpha)$. 
	Hence $h(\prov_S(\ontology, \database,\semiringshort_1,\annot,\alpha))=\prov_S(\ontology, \database,\semiringshort_2,h\circ \annot,\alpha)$ and 
	$\prov_S$ satisfies the Commutation with Homomorphisms Property.
	
	The proof for the $\omega$-continuous case is similar, but all semirings are assumed to be $\omega$-continuous.
\end{proof}

\subsection{Proof of Proposition \ref{prop:grounding-joint-alt-to-algebra}}
\propgroundingjointalttoalgebra*
\begin{proof}
	Assume that $\prov$ satisfies Properties \ref{prop:joint-alternative} and \ref{prop-parsimony}, and is such that $\prov(\ontology, \database, \semiringshort, \annot, \fact)=\prov(\ontology_\database, \database, \semiringshort, \annot,\fact)$ where $\ontology_\database$ is the grounding of $\ontology$ \wrt $\database$. 
	Let $\ontology$ be a UCQ defined Datalog program with nullary predicate $\mn{goal}$ in rule heads, and $(\database, \semiringshort, \annot)$ be an annotated database that does not contain $\mn{goal}$.  
	By assumption, 
	$\prov(\ontology, \database, \semiringshort, \annot,\mn{goal})=\prov(\ontology_\database, \database, \semiringshort, \annot,\mn{goal})$. 
	Since $\ontology$ is UCQ defined, 
	$\ontology_\database=\{\bigwedge_{j=1}^{n_i} \alpha^i_j\rightarrow \mn{goal} \mid 1\leq i\leq m\}$. 
	By Property~\ref{prop:joint-alternative}, 
	$\prov(\ontology_\database, \database, \semiringshort, \annot,\mn{goal})= \Sigma_{i=1}^m \Pi_{j=1}^{n_i} \prov(\emptyset, \database, \semiringshort, \annot, \alpha^i_j)$. 
	Either (i) $\alpha^i_j\in\database$ and by Property \ref{prop-parsimony} $\prov(\emptyset, \database, \semiringshort, \annot, \alpha^i_j)=\annot(\alpha^i_j)$, or (ii) $\emptyset,\database\not\models\alpha^i_j$ and $\prov(\emptyset, \database, \semiringshort, \annot, \alpha^i_j)=0$ by point (\ref{defprov:null}) of Definition \ref{def:provenance}. 
	Each product $\Pi_{j=1}^{n_i} \prov(\emptyset, \database, \semiringshort, \annot, \alpha^i_j)$ is then either equal to $0$ is some of the $\alpha^i_j$ does not belong to $\database$, or equal to $\Pi_{j=1}^{n_i} \annot(\alpha^i_j)$. 
	Note that since $\bigwedge_{j=1}^{n_i} \alpha^i_j\rightarrow \mn{goal}$ is an instantiation of some rule $\phi(\vec{y} ) \rightarrow \mn{goal}\in\Sigma$, it is the case that all $\alpha^i_j$ in such a product belong to $\database$ exactly when there exists a homomorphism $h$ from $\phi(\vec{y} )$ to $\database$ such that $h(\phi(\vec{y} ))= \bigwedge_{j=1}^{n_i} \alpha^i_j$. 
	It follows that $\prov(\ontology, \database, \semiringshort, \annot,\mn{goal}) = \Sigma_{\phi(\vec{y} ) \rightarrow \mn{goal}\in\Sigma, h : \phi(\vec{y} ) \rightarrow \database} \Pi_{p(\vec{y})\in \phi(\vec{y} )} \annot(h(p(\vec{y})))$ where $h : \phi(\vec{y} ) \rightarrow \database$ denotes that $h$ is a homomorphism from $\phi(\vec{y} )$ to $\database$. 
	This is precisely the relational database provenance of the equivalent UCQ $Q(\Sigma)= \bigvee_{\phi(\vec{y} ) \rightarrow \mn{goal}\in\Sigma} \exists \vec{y} \phi(\vec{y} )$ over $(\database, \semiringshort, \annot)$. 
\end{proof}

\subsection{Usable Facts Definition}\label{app:def-usable}
We formalize usability of a fact with the following construction. 
The adornment of a predicate $p$ by $\fact$ is the (fresh) predicate $p^\fact$. The adornment of an atom $p(t_1,\ldots,t_n)$ by $\fact$ is $p^\fact(t_1,\ldots,t_n)$. An adornment of a rule $\phi(\vec{x},\vec{y} ) \rightarrow p(\vec{x})$ by $\fact$ is a rule of the shape $\phi^\fact(\vec{x},\vec{y} )\rightarrow p(\vec{x})$ or of the shape $\phi^\fact(\vec{x},\vec{y} )\rightarrow  p^\fact(\vec{x})$,
where $\phi^\fact(\vec{x},\vec{y} )$ is equal to $\phi(\vec{x},\vec{y} )$, except for one atom which has been replaced by its adornment by $\fact$. 
A fact $\fact$ is \emph{adornment-usable} to derive $\beta$ w.r.t. $\ontology$ and $\database$ if $\database^\fact,\ontology \cup \ontology^\fact \models \beta^\fact$ where 
$\database^\fact = \database \cup \{\fact^\fact\}$ 
and $\ontology^\fact$ is the set of adornment of rules from $\ontology$ by $\fact$. 

\begin{proposition}
	A fact $\fact$ is usable to derive $\beta$ w.r.t. $\ontology$ and $\database$ if and only if it is adornment-usable to derive $\beta$ w.r.t. $\ontology$ and $\database$.
\end{proposition}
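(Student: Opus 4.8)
The plan is to prove both directions through a tight correspondence between the derivation trees of $\beta$ in which $\fact$ occurs and the derivation trees of $\beta^\fact$ \wrt $\ontology\cup\ontology^\fact$ and $\database^\fact$. The structural observation underlying everything is that a fact over a fresh predicate $p^\fact$ can be produced in only two ways: either it is the injected leaf $\fact^\fact\in\database^\fact$ (the unique adorned database fact), or it is the head of a tainted-head adorned rule $\phi^\fact(\vec x,\vec y)\rightarrow p^\fact(\vec x)$. Since, by definition, $\phi^\fact$ adorns exactly one body atom, each application of such a rule has exactly one adorned child. Hence every derivation of an adorned fact carries a unique \emph{taint path}: descending repeatedly into the single adorned child, one necessarily reaches a leaf labelled $\fact^\fact$. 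I would record this as a preliminary lemma, as it drives both implications.

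For the forward direction (usable $\Rightarrow$ adornment-usable), I would take $t\in\drvtr{\beta}$ in which $\fact$ labels some node $n$ and taint the path from $n$ to the root: relabel the node $n$ by $\fact^\fact$, relabel each interior node $(\gamma,\datrule,h)$ on this path as $(\gamma^\fact,\datrule^\fact,h)$, where $\datrule^\fact$ is the tainted-head adornment of $\datrule$ adorning precisely the body atom matched to the on-path child, and leave every off-path node unchanged so that it is still derived by rules of $\ontology$ from facts of $\database$. Because adornment changes neither arities nor the homomorphisms, all node conditions of the derivation-tree definition are preserved and the new root is labelled $\beta^\fact$, whence $\database^\fact,\ontology\cup\ontology^\fact\models\beta^\fact$. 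Only tainted-head adorned rules are needed here; the untainted-head ones play no role in this implication.

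For the backward direction (adornment-usable $\Rightarrow$ usable), I would take any derivation tree $t'$ of $\beta^\fact$, locate its taint path (which bottoms out at a leaf $\fact^\fact$ by the lemma), and \emph{de-taint} the whole tree by rewriting every adorned predicate $p^\fact$ back to $p$. De-tainting maps $\fact^\fact$ to $\fact$ and maps every adorned rule---of either shape---to the underlying rule $\phi\rightarrow p$ of $\ontology$, since $\phi^\fact$ differs from $\phi$ only by one renamed predicate. The outcome is a tree over $\ontology$ and $\database$ with root $\beta$ in which the former $\fact^\fact$ leaf now carries the label $\fact$, witnessing that $\fact$ occurs.

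The step I expect to be the crux is guaranteeing that the de-tainted object is a \emph{genuine} derivation tree of $\beta$, \ie one whose leaves are actual database facts. This is exactly where the scope of $\fact$ enters: the de-tainted leaf $\fact$ is legitimate precisely when $\fact\in\database$, which is the case in which usability is invoked (the Non-Usable Facts property ranges over annotations of database facts). For a general entailed $\fact$ with $\ontology,\database\models\fact$ but $\fact\notin\database$, this leaf must be expanded into an actual derivation subtree of $\fact$, after which $\fact$ occurs as an internal node; one should also note that the equivalence genuinely breaks for an $\fact$ that is neither a database fact nor derivable, since $\fact^\fact$ can then still fire adorned rules and spuriously yield $\beta^\fact$. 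I would therefore phrase the statement for database (or entailed) facts and finish with the routine verification that adornment preserves and reflects the homomorphism and children-bijection conditions, making tainting and de-tainting mutually inverse on the trees in question.
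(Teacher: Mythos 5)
Your proof is correct and takes essentially the same approach as the paper's: the forward direction is the paper's induction recast as a direct ``taint the path from the occurrence of $\fact$ to the root'' construction, and the backward direction uses the same two observations (de-adorning yields a derivation tree, and every derivation of an adorned fact must bottom out at the unique adorned database fact $\fact^\fact$). Your closing caveat --- that de-adorning only yields a genuine derivation tree when $\fact\in\database$ (or, after grafting a derivation subtree, when $\fact$ is entailed), and that the equivalence actually fails for a fact that is neither --- is a scope restriction the paper's own proof silently assumes (it phrases usability via ``a leaf equal to $\fact$''), so your treatment is if anything the more careful one.
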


\begin{proof}
	Let us assume that $\fact$ is usable to derive $\beta$ w.r.t. $\ontology$ and $\database$, and let $t$ be a derivation tree for $\beta$ having a leaf equal to $\fact$. We proof that $\fact$ is adornment-usable to derive $\beta$ w.r.t. $\ontology$ and $\database$ by induction on the depth of $t$.
	
	\begin{itemize}
		\item Depth $0$: the derivation tree is restricted to $\fact = \beta$. Hence $\beta^\fact = \fact^\fact$, and the derivation tree restricted to a single node $\fact^\fact$ witnesses that $D^\fact, \Sigma \cup \Sigma^\fact \models \beta^\fact$
		\item Depth $k \geq 1$: we assume the result to be true for any derivation tree of depth up to $k-1$. Let us consider a child $(\gamma,r_\gamma,h_\gamma)$ (or $\fact$) of $(\beta,r_\beta,h_\beta)$ in $t$ that has $\fact$ as a descendant. By induction assumption, there exists a derivation tree of $\gamma^\fact$ w.r.t. $\ontology$ and $\database$. Let $p(t_1,\ldots,t_n)$ be an antecedent of $\gamma$ by $h_\beta$, and let $r_\beta^\alpha$ be the adornment of $r_\beta$ replacing $p(t_1,\ldots,t_n)$ by $p^\fact(t_1,\ldots,t_n)$. The structure obtained from $t$ by modifying the root to $(\beta,r_\beta^\fact,h_\beta)$, leaving descendants unchanged, except for the subtree rooted in $\gamma$ that is replaced by the derivation tree of $\gamma^\fact$ is a derivation tree for $\beta^\fact$ w.r.t. $D^\fact$ and $\Sigma \cup \Sigma^\fact$, hence showing that $D^\fact, \Sigma \cup \Sigma^\fact \models \beta^\fact$. 
	\end{itemize}
	
	We now show that any adornment-usable fact is usable, thanks to the following two observations:
	\begin{itemize}
		\item removing adornments from facts and rules in a derivation tree of $\beta^\fact$ w.r.t. $D^\fact$ and $\Sigma \cup \Sigma^\fact$ results in a derivation tree of $\beta$ w.r.t. $D$ and~$\Sigma$;
		\item any derivation tree of $\beta^\fact$ contains $\fact^\fact$ as a leaf: indeed, a node can have an adorned atom only if one of its child has an adorned atom, or if it is a leaf and is adorned. As $\fact^\fact$ is the only adorned atom of $D^\fact$, this concludes the proof.\qedhere
	\end{itemize}
\end{proof}


\section{Proofs of Table \ref{fig:prop-def} Results, $\atann$, $\mtann$, $\mdtann$ and $\rmdtann$ Cases}\label{app:prooftable}
We prove here the positive results in the $\atann$, $\mtann$, $\mdtann$ and $\rmdtann$ columns in Table \ref{fig:prop-def}. Counter-examples are given in Section~\ref{sec:analysis} for the properties not satisfied by some of these provenance semantics. 
We go over all the properties and analyze each with respect to $\atann$, $\mtann$, $\mdtann$ and $\rmdtann$. 

\subsection{Algebra Consistency}
\propAlgebraConsistency

\begin{proposition}\label{prop:ac_at}
	$\atann$, $\mdtann$, $\rmdtann$ and $\mtann$ satisfy the Algebra Consistency Property.
\end{proposition}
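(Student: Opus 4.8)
The plan is to show that on a UCQ-defined program all four tree-based semantics not only coincide but reduce to the relational UCQ provenance. The starting point is a structural observation: since $\ontology$ is UCQ-defined, its only head predicate is $H$, this $H$ occurs in no rule body, and by assumption $H\notin\schema(\database)$. Hence every atom appearing in a rule body uses a predicate other than $H$, so it can only be matched by a database fact—no rule could ever derive it. I would make this precise from the derivation-tree definition: for a body atom $q(\vec z)$ the corresponding child is either an inner node $(h(q(\vec z)),\datrule',h')$ or a leaf labeled $h(q(\vec z))$; since no rule has a head predicate other than $H$, the inner-node case is impossible, so the child must be a leaf with $h(q(\vec z))\in\database$. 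It follows that every derivation tree for $H(\vec a)$ has depth exactly $1$: a root $(H(\vec a),\datrule,h)$ whose children are database leaves, with no depth-$0$ tree available because $H(\vec a)\notin\database$.

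First I would compute $\atann$. Grouping the trees for $H(\vec a)$ by the root rule $\datrule=\phi(\vec x,\vec y)\rightarrow H(\vec x)$ and the witnessing homomorphism $h$ with $h(\vec x)=\vec a$, and using $\ann(t)=\prod_{\beta\text{ leaf of }t}\annot(\beta)=\prod_{q(\vec z)\in\phi}\annot(h(q(\vec z)))$, the sum over trees unfolds into $\sum_{\datrule}\sum_{h:\phi\to\database,\,h(\vec x)=\vec a}\prod_{q(\vec z)\in\phi}\annot(h(q(\vec z)))$. This is precisely the relational provenance of $Q^{\ontology}(\vec a)=\bigcup_{\phi\rightarrow H(\vec x)}\exists\vec y\,\phi$: a sum over the CQ-disjuncts, and within each disjunct a sum over homomorphisms of the product of the annotations of the matched facts, where the bijection between children and body atoms guarantees that the product respects multiplicities. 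This settles Algebra Consistency for $\atann$.

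It then remains to argue that $\mdtann$, $\rmdtann$, and $\mtann$ all agree with $\atann$ here, which falls out immediately from the depth-$1$ structure. All trees share depth $1$, so all are of minimal depth and $\mdtann=\atann$. The only inner node of each tree is its root, whose subtree (the whole tree) is minimal-depth, while its children are depth-$0$ trees for database facts, hence minimal-depth for those facts; thus every tree is hereditary minimal-depth and $\rmdtann=\atann$. Finally, every root-to-leaf path consists of the root $H(\vec a)$ and a single leaf whose predicate is not $H$, so the two facts on any path differ; no tree repeats a fact along a path, every tree is non-recursive, and $\mtann=\atann$. Combining, all four semantics equal the relational provenance of $Q^{\ontology}(\vec a)$.

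The only delicate point, and the one I would state with care, is the structural claim that body atoms force database leaves; everything else is bookkeeping. It rests on $H$ being the unique head predicate, absent from all bodies, together with $H\notin\schema(\database)$, and it is exactly what collapses the whole analysis to depth-$1$ trees, on which the three notions of admissible tree (minimal-depth, hereditary minimal-depth, non-recursive) all coincide with the set of all trees.
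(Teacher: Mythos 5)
Your proof is correct, and its core is exactly the paper's argument: on a UCQ-defined program with $H\notin\schema(\database)$, every derivation tree of $H(\vec a)$ has depth one, hence every tree is simultaneously of minimal depth, hereditary minimal depth, and non-recursive, so $\mdtann$, $\rmdtann$ and $\mtann$ all collapse to $\atann$. Where you diverge is the base case for $\atann$: the paper simply cites Green and Tannen (PODS 2017) for the fact that $\atann$ satisfies Algebra Consistency, whereas you re-derive it by exhibiting the bijection between derivation trees of $H(\vec a)$ and pairs $(\datrule,h)$ with $\datrule=\phi(\vec x,\vec y)\rightarrow H(\vec x)$ and $h:\phi\rightarrow\database$, $h(\vec x)=\vec a$, and then unfolding $\sum_{t\in\drvtr{H(\vec a)}}\ann(t)$ into $\sum_{\datrule}\sum_{h}\prod_{q(\vec z)\in\phi}\annot(h(q(\vec z)))$, which is the relational UCQ provenance. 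Your version is self-contained and makes explicit two points the paper leaves implicit (why body atoms can only be matched by database leaves, and why the leaf product respects atom multiplicities via the child--atom bijection), at the cost of redoing a known result; the paper's version is shorter but rests on the external reference. Both are sound.
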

\begin{proof}
	It is shown by~\cite{DBLP:conf/pods/GreenT17} that $\atann$ satisfies the Algebra Consistency Property.
	Note that all possible derivation trees of $H(\vec a)$ \wrt $\ontology$ and $\database$ are of depth one.  Hence they are of minimal depth,  hereditary minimal depth, and are non-recursive, which completes the proof for $\mdtann$, $\rmdtann$, and $\mtann$, respectively.
\end{proof}

\subsection{Boolean Compatibility}

\begin{proposition}\label{prop:bc_at}
	$\atann$ satisfies the Boolean Compatibility Property.
\end{proposition}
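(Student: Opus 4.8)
The plan is to evaluate $\atann$ directly on the semiring $\mi{PosBool}(\semiringVars)$ and to check that the resulting positive Boolean expression is logically equivalent to the one prescribed by the Boolean Compatibility Property. Unfolding the definition of $\atann$ with $+$ read as $\vee$ and $\times$ read as $\wedge$ gives
\[
\atann(\ontology,\database,\mi{PosBool}(\semiringVars),\annot_X,\fact)=\bigvee_{t\in\drvtrsig}\ \bigwedge_{v\text{ leaf of }t}\annot_X(v).
\]
Both this expression and the target $\bigvee_{D'\subseteq\database,\,\ontology,D'\models\fact}\bigwedge_{\beta\in D'}\annot_X(\beta)$ are elements of $\mi{PosBool}(\semiringVars)$, i.e.\ monotone Boolean functions, so it suffices to show they agree under every truth assignment. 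Since only variables in the finite set $\annot_X(\database)$ can occur (all leaves are facts of $\database$), the possibly infinite disjunction on the left is in fact finite, which removes any $\omega$-continuity concern; alternatively one may first invoke Proposition~\ref{prop:alltree-nonrec-absorptive}, as $\mi{PosBool}(\semiringVars)$ is commutative, absorptive and $\omega$-continuous, to restrict to the finite set of non-recursive trees.

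Next I would fix a truth assignment $\nu:\semiringVars\to\{\false,\true\}$. Since $\annot_X$ is injective, $\nu$ determines the subset $D_\nu\df\{\beta\in\database\mid\nu(\annot_X(\beta))=\true\}$, and conversely every subset of $\database$ arises this way. Under $\nu$, the conjunction $\bigwedge_{v\text{ leaf of }t}\annot_X(v)$ evaluates to $\true$ exactly when all leaves of $t$ lie in $D_\nu$, that is, when $t$ is a derivation tree for $\fact$ using only facts of $D_\nu$. Hence the left-hand side is $\true$ under $\nu$ iff $\fact$ has a derivation tree all of whose leaves are in $D_\nu$, which by the tree characterisation of Datalog entailment is equivalent to $\ontology,D_\nu\models\fact$.

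For the right-hand side, under $\nu$ the disjunct indexed by $D'$ is $\true$ iff every $\beta\in D'$ satisfies $\nu(\annot_X(\beta))=\true$, i.e.\ iff $D'\subseteq D_\nu$. Thus the right-hand side is $\true$ under $\nu$ iff there is some $D'\subseteq D_\nu$ with $\ontology,D'\models\fact$, and by monotonicity of Datalog entailment this holds iff $\ontology,D_\nu\models\fact$. As both sides evaluate to $\true$ under $\nu$ precisely when $\ontology,D_\nu\models\fact$, and $\nu$ was arbitrary, the two positive Boolean expressions coincide, which establishes the property.

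The step I expect to require the most care is the equivalence between ``$\fact$ has a derivation tree with all leaves in $D_\nu$'' and $\ontology,D_\nu\models\fact$: one direction is soundness of derivation trees, the other is completeness, and one must verify that restricting the admissible leaves to $D_\nu$ corresponds exactly to replacing the database $\database$ by $D_\nu$ in the entailment relation. The supporting monotonicity argument on the right-hand side, and the claim that equality in $\mi{PosBool}(\semiringVars)$ amounts to logical equivalence of monotone functions, are routine.
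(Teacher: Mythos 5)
Your proof is correct, but it takes a genuinely different route from the paper: the paper does not argue this proposition at all, it simply cites the result from Green and Tannen (PODS 2017), whereas you give a self-contained elementary argument. Your route — unfold $\atann$ on $\mi{PosBool}(\semiringVars)$ into $\bigvee_{t\in\drvtrsig}\bigwedge_{v \text{ leaf of } t}\annot_X(v)$, observe that equality in $\mi{PosBool}(\semiringVars)$ is logical equivalence of monotone Boolean functions, and then check each truth assignment $\nu$ by reducing both sides to the single condition $\ontology, D_\nu \models \fact$ (soundness/completeness of derivation trees on the left, monotonicity of Datalog entailment on the right) — is sound, and the two points that need care are handled: the infinite disjunction collapses because only finitely many distinct conjunctions over $\annot_X(\database)$ can occur and $\vee$ is idempotent (or, as you note, one can invoke Proposition~\ref{prop:alltree-nonrec-absorptive} since $\mi{PosBool}(\semiringVars)$ is absorptive, which is precisely the device the paper uses in the other direction to transfer Boolean Compatibility from $\atann$ to $\mtann$), and the bridge between ``derivation tree with all leaves in $D_\nu$'' and ``$\ontology, D_\nu\models\fact$'' is exactly the equivalence of the tree-based and model-based Datalog semantics stated in the preliminaries. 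What your approach buys is transparency: it makes explicit the convergence issue that the citation hides and keeps the whole argument inside the paper's own definitions; what the citation buys is brevity and a pointer to the general relational-provenance framework from which the fact follows. One cosmetic remark: injectivity of $\annot_X$ is not actually needed in your truth-assignment argument (you never need every subset of $\database$ to arise as some $D_\nu$, only that every $\nu$ induces some $D_\nu$), so that sentence can be dropped.
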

\begin{proof}
	It is shown in~\cite{DBLP:conf/pods/GreenT17} that $\atann$ satisfies the Boolean Compatibility Property.
\end{proof}

\begin{proposition}
	$\mtann$ satisfies the Boolean Compatibility Property.
\end{proposition}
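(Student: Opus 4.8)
The plan is to avoid a direct computation and instead reduce the claim to the corresponding fact already established for $\atann$. The key observation is that the semiring at hand, $\mi{PosBool}(\semiringVars)=(\mi{PosBool}(\semiringVars),\vee,\wedge,\false,\true)$, is a commutative absorptive $\omega$-continuous semiring, which is exactly the hypothesis under which $\mtann$ and $\atann$ coincide (Proposition~\ref{prop:alltree-nonrec-absorptive}). Combined with the fact that $\atann$ satisfies Boolean Compatibility (Proposition~\ref{prop:bc_at}), this will immediately transfer the property to $\mtann$.

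First I would verify the three required properties of $\mi{PosBool}(\semiringVars)$. Commutativity of the multiplication $\wedge$ is immediate, so the semiring is commutative. Absorptivity is precisely the Boolean absorption law: for all $a,b$ we have $a\wedge b\vee a=a$, i.e.\ $a\times b\plus a=a$, so $\mi{PosBool}(\semiringVars)$ is absorptive. For $\omega$-continuity I would note that the natural order $\sqsubseteq$ coincides with logical entailment (since $a\vee c=b$ is solvable for $c$ exactly when $a$ entails $b$), that positive (monotone) Boolean functions over $\semiringVars$ form a complete lattice so that every $\omega$-chain has a least upper bound, and that $\vee$ and $\wedge$ are continuous with respect to these suprema. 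This is in any case implicit in the statement of Proposition~\ref{prop:bc_at}, which already presupposes that $\atann$ is defined on $\mi{PosBool}(\semiringVars)$.

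Given these properties, Proposition~\ref{prop:alltree-nonrec-absorptive} yields
\[
\mtann(\ontology,\database,\mi{PosBool}(\semiringVars),\annot_X,\fact)=\atann(\ontology,\database,\mi{PosBool}(\semiringVars),\annot_X,\fact)
\]
for every $\ontology$, $(\database,\mi{PosBool}(\semiringVars),\annot_X)$ and $\fact$. Since by Proposition~\ref{prop:bc_at} the right-hand side equals $\bigvee_{D'\subseteq\database,\,\ontology,D'\models\fact}\bigwedge_{\beta\in D'}\annot_X(\beta)$, the same identity holds for $\mtann$, which is exactly the Boolean Compatibility Property.

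The main obstacle — if one prefers not to lean on the implicit $\omega$-continuity assumption — is the verification that least upper bounds of $\omega$-chains in $\mi{PosBool}(\semiringVars)$ exist and that the operations are continuous with respect to them. A self-contained alternative avoids Proposition~\ref{prop:alltree-nonrec-absorptive} altogether: expand $\mtann$ on $\mi{PosBool}(\semiringVars)$ directly as $\bigvee_{t}\bigwedge_{v}\annot_X(v)$, the disjunction ranging over non-recursive trees $t\in\drvtrsig$ with $v$ ranging over the leaves of $t$, and match each disjunct to the witnessing subset $D'$ of leaf facts (so $\ontology,D'\models\fact$), and conversely show every minimal witness $D'$ with $\ontology,D'\models\fact$ admits a non-recursive derivation tree whose leaf set is contained in $D'$; absorption then collapses any redundant repeated leaves, yielding the stated disjunction. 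I would present the reduction via the two propositions as the primary argument, since it is shortest and reuses machinery already in place.
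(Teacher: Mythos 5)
Your proof is correct and follows essentially the same route as the paper: the paper's proof likewise invokes absorptivity of $\mi{PosBool}(\semiringVars)$ to conclude that $\mtann$ and $\atann$ coincide on this semiring, and then transfers Boolean Compatibility from $\atann$. Your additional verification of the semiring axioms and the sketched self-contained alternative are fine but not needed beyond what the paper records.
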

\begin{proof}
	As $\mi{PosBool}(\semiringVars)$ is absorptive, $\mtann$ and $\atann$ coincide on $\mi{PosBool}(\semiringVars)$. Hence $\mtann$ satisfies the Boolean Compatibility Property.
\end{proof}

\subsection{Commutation with Homomorphisms}
If there is a semiring homomorphism $h$ from $\semiringshort_1$ to $\semiringshort_2$, then $h(\prov(\ontology, \database,\semiringshort_1,\annot,\fact))=\prov(\ontology, \database,\semiringshort_2, h\circ\annot, \fact)$.

\begin{proposition}\label{prop:comh_md}
	$  \mtann, \mdtann$ and $\rmdtann$ satisfy the Commutation with Homomorphisms Property.
\end{proposition}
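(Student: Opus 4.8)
The plan is to exploit the feature that distinguishes these three semantics from $\atann$: each of $\mtann$, $\mdtann$ and $\rmdtann$ is defined as a sum over a \emph{finite} set of derivation trees, and that set depends only on the combinatorial data $\ontology$, $\database$ and $\fact$ — never on the semiring or the annotation function. Since a semiring homomorphism commutes with finite sums and finite products by definition (together with $h(\kzero)=\kzero$ and $h(\kone)=\kone$), it will push through the entire computation termwise, which is exactly what the property demands.

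First I would establish the finiteness claim. As Datalog creates no new terms, every fact derivable from $\ontology$ and $\database$ lies over the finite set $\domain(\database)$ together with the constants occurring in $\ontology$; hence there are only finitely many distinct facts. In a non-recursive tree no fact labels two nodes on a single root-to-leaf path, so every such path has length bounded by the number of distinct facts, which bounds the depth, while the branching is bounded by the maximal body size of a rule in $\ontology$. Thus there are finitely many non-recursive trees for $\fact$, so $\mtann$ is a finite sum. For $\mdtann$, all trees summed over share the single minimal depth value, and with bounded branching there are again finitely many; and by Proposition~\ref{prop:treeconnections} every hereditary minimal-depth tree is in particular a minimal-depth tree, so the set summed over by $\rmdtann$ is finite as well.

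Next I would fix a semiring homomorphism $h\colon\semiringshort_1\rightarrow\semiringshort_2$ and write $\mathcal{T}$ for the relevant finite set of trees (non-recursive, minimal-depth, or hereditary minimal-depth, according to which semantics is being treated). The decisive observation is that $\mathcal{T}$ is determined purely by $\ontology$, $\database$ and $\fact$, hence is \emph{the same} whether we work over $\semiringshort_1$ or $\semiringshort_2$. The computation then reads
\[ h(\prov(\ontology,\database,\semiringshort_1,\annot,\fact)) = h\Big(\sum_{t\in\mathcal{T}} \prod_{v \text{ leaf of } t}\annot(v)\Big) = \sum_{t\in\mathcal{T}} \prod_{v \text{ leaf of } t} h(\annot(v)) = \prov(\ontology,\database,\semiringshort_2, h\circ\annot,\fact), \]
where the middle equality is just repeated application of the homomorphism identities $h(a\kplus b)=h(a)\plus h(b)$ and $h(a\ktimes b)=h(a)\times h(b)$ across the finitely many summands and factors, together with $h\circ\annot=(h\circ\annot)$ applied leafwise.

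The only real obstacle is the finiteness argument for the non-recursive case; everything afterwards is the routine termwise propagation of $h$. It is worth stressing that finiteness is precisely what separates these three semantics from $\atann$: there the sum over $\drvtrsig$ may be infinite, a bare semiring homomorphism need not preserve the relevant suprema, and the example following the proposition in Section~\ref{sec:analysis} exhibits an $h$ for which commutation genuinely fails — which is why $\atann$ satisfies only the \emph{Commutation with $\omega$-Continuous Homomorphisms} Property rather than the unrestricted one.
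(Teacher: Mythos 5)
Your proof is correct and takes essentially the same route as the paper's: both express each of $\mtann$, $\mdtann$ and $\rmdtann$ as a finite sum of finite products of leaf annotations over a set of derivation trees that depends only on $\ontology$, $\database$ and $\fact$, and then push the homomorphism through termwise using $h(a\plus b)=h(a)\plus h(b)$, $h(a\times b)=h(a)\times h(b)$, $h(\kzero)=\kzero$ and $h(\kone)=\kone$. The only difference is that you spell out the finiteness of the tree sets (bounded depth via non-recursiveness or fixed minimal depth, bounded branching, finitely many possible labels) and their independence from the semiring, points the paper's proof simply asserts with ``since the sum is finite.''
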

\begin{proof}
	By definition, we have  
	\[\mtann(\ontology, \database,\semiringshort_1,\annot,\fact) = \sum_{ t\in  \drvtrsig}   \ann(t)
	\]
	By definition of $ \ann(t)$ we have
	\[\mtann(\ontology, \database,\semiringshort_1,\annot,\fact) = \sum_{ t\in  \drvtrsig}   \prod_{v\text{ is a leaf of $t$}} \annot(v)
	\]	
	Since $h$ is a homomorphism and since the sum is finite, we have 
	\[h(\mtann(\ontology, \database,\semiringshort_1,\annot,\fact)) = \sum_{ t\in  \drvtrsig}   \prod_{v\text{ is a leaf of $t$}} h(\annot(v))
	\]	
	which is, in turn, equal to $\prov(\ontology, \database,\semiringshort_2, h\circ\annot, \fact)$.
	In a similar way one can proof the same claim also for  $\mdtann, \rmdtann$. 
\end{proof}

\subsection{Commutation with $\omega$-Continuous Homomorphisms}

\begin{proposition}\label{prop:comconth_at}
	$\atann,  \mtann, \mdtann$ and $\rmdtann$ satisfy the Commutation with $\omega$-Continuous Homomorphisms.
\end{proposition}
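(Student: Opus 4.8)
The plan is to separate the three finite-sum semantics $\mtann$, $\mdtann$, $\rmdtann$ from $\atann$, since only the latter may involve a genuinely infinite sum. For $\mtann$, $\mdtann$ and $\rmdtann$ there is nothing new to do: Proposition~\ref{prop:comh_md} already establishes the \emph{full} Commutation with Homomorphisms Property for these three, valid for every semiring homomorphism $h$. An $\omega$-continuous homomorphism between $\omega$-continuous semirings is in particular a semiring homomorphism, so the Commutation with $\omega$-Continuous Homomorphisms Property is a weakening of something we have already verified and holds a fortiori. (The finite-sum argument applies precisely because a non-recursive tree cannot repeat a fact along a root-to-leaf path, so its depth is bounded by the number of entailed facts; hence for a fixed $\fact$ there are only finitely many non-recursive, minimal-depth, or hereditary minimal-depth trees.)

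The only genuine case is $\atann$, where I would argue through $\omega$-continuity. First I would stratify the derivation trees by depth: for each $i\in\mathbb{N}$ set
$$S_i := \sum_{\substack{t \in \drvtrsig \\ \dep(t) \le i}} \ann(t),$$
computed in $\semiringshort_1$. Each $S_i$ is a \emph{finite} sum, there are only finitely many trees of depth at most $i$, and $S_0 \sqsubseteq S_1 \sqsubseteq \cdots$ forms an $\omega$-chain. By Lemma~\ref{lem:deductiontree-iteration} together with Proposition~\ref{prop:executiontreeconnection} (equivalently, directly from the definition of the infinite sum as the supremum of its finite partial sums), we have $\atann(\ontology, \database, \semiringshort_1, \annot, \fact) = \sup_{i} S_i$.

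I would then push $h$ through this expression in two moves. Because $h$ is $\omega$-continuous it commutes with the least upper bound of the chain, giving $h(\sup_{i} S_i) = \sup_{i} h(S_i)$. Because $h$ is a semiring homomorphism and each $S_i$ is a finite sum of finite products of annotations, $h$ distributes over each $S_i$ exactly as in the proof of Proposition~\ref{prop:comh_md}, yielding
$$h(S_i) = \sum_{\substack{t \in \drvtrsig \\ \dep(t) \le i}} \prod_{v \text{ leaf of } t} h(\annot(v)) =: S'_i,$$
the corresponding depth-bounded partial sum for the annotation $h \circ \annot$ in $\semiringshort_2$. Taking the supremum over $i$ recovers $\sup_{i} S'_i = \atann(\ontology, \database, \semiringshort_2, h \circ \annot, \fact)$, which is the desired equality.

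The one delicate point, and the step I expect to be the crux, is the first equality for $\atann$: justifying that the infinite provenance coincides with the supremum of the depth-bounded \emph{finite} partial sums, so that the $\omega$-continuity hypothesis on $h$ can be invoked on an honest $\omega$-chain rather than an arbitrary infinite sum. This is exactly what Lemma~\ref{lem:deductiontree-iteration} and the convergence underlying Proposition~\ref{prop:executiontreeconnection} provide, after which the argument is routine. The restriction to $\omega$-continuous $h$ is essential here: as the counterexample in Section~\ref{sec:analysis} shows, a non-$\omega$-continuous homomorphism need not commute with this supremum, which is precisely why $\atann$ fails the unrestricted property while satisfying the $\omega$-continuous one.
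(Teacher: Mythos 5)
Your proposal is correct, and for $\mtann$, $\mdtann$ and $\rmdtann$ it coincides with the paper's proof: both simply observe that these semantics already satisfy the unrestricted Commutation with Homomorphisms Property (Proposition~\ref{prop:comh_md}), of which the $\omega$-continuous variant is a weakening; your added justification that the relevant tree sets are finite is exactly the hypothesis that makes Proposition~\ref{prop:comh_md} go through. Where you genuinely diverge is the $\atann$ case: the paper dispatches it with a citation to \cite{DBLP:conf/pods/GreenT17}, whereas you give a self-contained argument built from the paper's own machinery --- the depth-stratified partial sums $S_i$ of Lemma~\ref{lem:deductiontree-iteration}, the identification $\atann = \sup_i S_i$ via Propositions~\ref{prop:naive} and~\ref{prop:executiontreeconnection}, then $\omega$-continuity of $h$ to commute with the supremum of the chain and ordinary homomorphism algebra on each finite $S_i$. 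Your route costs a few more lines but buys two things the citation does not: it isolates precisely the single step where $\omega$-continuity of $h$ is consumed (commuting with $\sup_i$), which makes it immediately transparent why the unrestricted property fails for $\atann$ (consistent with the counterexample in Section~\ref{sec:analysis}); and it keeps the proof internal to the paper's formalism rather than requiring the reader to match definitions against the cited work. The one point worth stating explicitly if you write this up is the cofinality fact you gesture at parenthetically: the infinite sum defining $\atann$ is the supremum of \emph{all} finite partial sums, and the depth-bounded sums $S_i$ form a cofinal $\omega$-chain among them (any finite set of trees has bounded depth), so the two suprema agree; this is the bridge that lets you replace an arbitrary infinite sum by an honest $\omega$-chain.
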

\begin{proof}
	The case of $  \mtann, \mdtann$ and $\rmdtann$ is a direct consequence of Proposition~\ref{prop:comh_md}. 
	The case of $\atann$ has been proved in \cite{DBLP:conf/pods/GreenT17}.
\end{proof}

\subsection{Joint and Alternative Use}
\sloppy{
	$\prov$ satisfies the \emph{Joint and Alternative Use Property} if for all tuples of facts $(\fact^1_1,\cdots, \fact^1_{n_1})$, $\dots$, $(\fact^m_1,\cdots, \fact^m_{n_m})$, $\prov(\ontology', \database, \semiringshort, \annot,\goal)= \Sigma_{i=1}^m \Pi_{j=1}^{n_i}
	\prov(\ontology, \database, \semiringshort, \annot,\fact^i_j)$ where $\ontology'=\ontology\cup\{\bigwedge_{j=1}^{n_i}\fact^i_j\rightarrow\goal| 1\le i \le m \}$ and $\goal$ is a nullary predicate such that $\goal\notin\schema(\ontology)\cup\schema(\database)$.
}

\begin{proposition}\label{prop:janda_at}
	$\atann$ satisfies the Joint and Alternative Use Property.
\end{proposition}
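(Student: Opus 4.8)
The plan is to compute $\atann(\ontology',\database,\semiringshort,\annot,\goal)$ directly from its definition by analysing the shape of every derivation tree of $\goal$ \wrt $\ontology'$ and $\database$, and then regrouping the resulting (possibly infinite) sum of monomials. The crucial structural observation is that, since $\goal$ is a fresh nullary predicate occurring in no rule of $\ontology$, in no body of the added rules, and in no fact of $\database$, the label $\goal$ can appear only at the root of such a tree, and the only rules whose head is $\goal$ are the added rules $\datrule_i := \bigwedge_{j=1}^{n_i}\fact^i_j\rightarrow\goal$. Hence every $t\in T^{\ontology'}_{\database}(\goal)$ has a root labelled $(\goal,\datrule_i,h)$ for exactly one index $i$, and the subtrees hanging below its children are derivation trees for the body facts $\fact^i_1,\dots,\fact^i_{n_i}$. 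Moreover, because $\goal$ never occurs inside these subtrees, no added rule can be applied below the root, so (using $\ontology\subseteq\ontology'$ for the converse) these subtrees range exactly over $T^{\ontology}_{\database}(\fact^i_j)$, the derivation trees \wrt the original program.

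First I would make this precise as a bijection between $T^{\ontology'}_{\database}(\goal)$ and the disjoint union $\biguplus_{i=1}^m \big(T^{\ontology}_{\database}(\fact^i_1)\times\cdots\times T^{\ontology}_{\database}(\fact^i_{n_i})\big)$, sending $t$ to the pair $(i,(t_1,\dots,t_{n_i}))$ consisting of its root-rule index and its child subtrees, where the bijection $f$ of the derivation-tree definition matches children to body atoms. The second ingredient is that $\ann$ is multiplicative along this decomposition: the root of $t$ is an internal node, so the leaves of $t$ form exactly the disjoint union of the leaves of $t_1,\dots,t_{n_i}$, whence $\ann(t)=\prod_{j=1}^{n_i}\ann(t_j)$, the product being well defined by commutativity of $\ktimes$.

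Combining these two facts, the defining sum becomes
\[
\atann(\ontology',\database,\semiringshort,\annot,\goal)=\sum_{i=1}^m\ \sum_{(t_1,\dots,t_{n_i})}\ \prod_{j=1}^{n_i}\ann(t_j),
\]
and the last step is to factor each inner sum as a product of sums, $\sum_{(t_1,\dots,t_{n_i})}\prod_{j}\ann(t_j)=\prod_{j=1}^{n_i}\sum_{t_j\in T^{\ontology}_{\database}(\fact^i_j)}\ann(t_j)=\prod_{j=1}^{n_i}\atann(\ontology,\database,\semiringshort,\annot,\fact^i_j)$, giving the claimed identity. I expect the main obstacle to be exactly this interchange of a finite product with the infinite sums: when the sets $T^{\ontology}_{\database}(\fact^i_j)$ are infinite, finite distributivity does not suffice, and one must invoke $\omega$-continuity of $\semiringshort$ (the semiring domain of $\atann$), i.e.\ that $\ktimes$ and $\kplus$ commute with least upper bounds of $\omega$-chains, to push the product through the suprema defining the sums. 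I would also note the minor bookkeeping point that the added rules (and the facts within each tuple) may be assumed distinct so that the root-rule index $i$ is recovered unambiguously; otherwise identical summands are merged in the obvious way.
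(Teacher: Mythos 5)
Your proof is correct and takes essentially the same route as the paper's: decompose each derivation tree of $\goal$ \wrt $\ontology'$ into its root-rule index $i$ and the tuple of child subtrees, observe these subtrees are exactly derivation trees of the $\fact^i_j$ \wrt $\ontology$, and regroup the sum accordingly. You are in fact more explicit than the paper on two points it passes over silently: the interchange of the finite product with possibly infinite sums (justified by $\omega$-continuity of $\semiringshort$) and the identification of trees \wrt $\ontology'$ with trees \wrt $\ontology$ below the root.
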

\begin{proof}
	For $\alpha_j^i$ let us denote the set of its derivation trees \wrt $\ontology'$ and $\database$ by  $T(j,i)$.
	The derivation trees for $\goal$ \wrt $\ontology'$ and $\database$ are exactly those of the following form:
	
	\begin{tikzpicture}
		[level distance=1.0cm,
		level 1/.style={sibling distance=1.8cm},
		level 2/.style={sibling distance=1.8cm}]
		\node {$\goal$}
		child {node {$\alpha_1^i$}
			child{node[itria] {$\small{t_{(1,i)}}$}}						
		}
		child{ edge from parent[draw=none] node[draw=none] (ellipsis) {$\ldots$} }
		child {node {$\alpha_{n_i}^i$}	
			child{node[itria] {$\tiny{t_{(n_i,i)}}$}}	
		};
	\end{tikzpicture}

	where $1 \le i \le m$, and $t_{(1,i)} \in T(1,i), \cdots, t_{(n_i,i)} \in T(n_i,i)$.
	Therefore, by definition we have: 
	$$\atann(\ontology', \database, \semiringshort, \annot,\goal) = \Sigma_{i=1}^{m} \Pi_{j=1}^{n_i}
	\sum_{ t\in  T(j,i)}   \ann(t). $$
	
	\noindent
	In addition, by definition of $\atann$ it holds that 
	$$\atann(\ontology', \database, \semiringshort, \annot,\fact^i_j) = 
	\sum_{ t\in  T(j,i)}   \ann(t) $$ for every $i$ and $j$.
	By definition of $\ontology'$, we can conclude that 
	$$\atann(\ontology, \database, \semiringshort, \annot,\fact^i_j) = 
	\sum_{ t\in  T(j,i)}   \ann(t) $$ for every $i$ and $j$.
	This concludes the proof.  
\end{proof}

\begin{proposition}\label{prop:janda_mt}
	$\mtann$ satisfies the Joint and Alternative Use Property.
\end{proposition}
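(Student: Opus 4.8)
The plan is to mirror the argument for $\atann$ (Proposition~\ref{prop:janda_at}), adding the bookkeeping needed to track non-recursivity, and to exploit the fact that there are only finitely many non-recursive trees so that no $\omega$-continuity assumption is required. First I would observe that, since $\goal$ is a fresh nullary predicate occurring in no rule body and in no atom of $\database$, every derivation tree $t$ for $\goal$ \wrt $\ontology'$ and $\database$ has its root labeled $(\goal,\datrule,h)$ for one of the added rules $\datrule\df\bigwedge_{j=1}^{n_i}\fact^i_j\rightarrow\goal$, and its immediate subtrees $t_{(1,i)},\dots,t_{(n_i,i)}$ are derivation trees for $\fact^i_1,\dots,\fact^i_{n_i}$. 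Moreover $\goal$ appears nowhere below the root, so these subtrees use only rules of $\ontology$; hence they are exactly the derivation trees for the $\fact^i_j$ \wrt $\ontology$ and $\database$, as in the $\atann$ case.

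The key new step is to show that $t$ is non-recursive if and only if every subtree $t_{(j,i)}$ is non-recursive. For this I would note that a forbidden pair of equally-labeled ancestor/descendant nodes cannot involve the root (the label $\goal$ occurs only there), nor can it straddle two distinct subtrees (nodes in different subtrees are incomparable for the ancestor--descendant order); therefore any such pair lies entirely within a single $t_{(j,i)}$. This yields a bijection between the non-recursive trees for $\goal$ \wrt $\ontology'$ and the tuples $(t_{(1,i)},\dots,t_{(n_i,i)})$, ranging over $1\le i\le m$, of non-recursive trees for the respective $\fact^i_j$ \wrt $\ontology$. Since the root is not a leaf, the leaves of $t$ are precisely those of the subtrees, so $\ann(t)=\prod_{j=1}^{n_i}\ann(t_{(j,i)})$.

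It then remains to sum. Partitioning the non-recursive trees for $\goal$ by the root rule index $i$ and decomposing each into its subtree tuple gives
\[
\mtann(\ontology',\database,\semiringshort,\annot,\goal)=\sum_{i=1}^m\ \sum_{\substack{t_{(j,i)}\text{ non-recursive for }\fact^i_j\\ 1\le j\le n_i}}\ \prod_{j=1}^{n_i}\ann(t_{(j,i)}).
\]
Because every fact has only finitely many non-recursive derivation trees (a non-recursive tree repeats no fact along a branch, so its depth is bounded by the finite number of entailed facts, and its branching is bounded by the rule bodies), each inner sum is finite; applying the distributivity of $\ktimes$ over $\kplus$ in the commutative semiring $\semiringshort$ turns the sum of products over tuples into a product of sums, namely $\sum_{i=1}^m\prod_{j=1}^{n_i}\mtann(\ontology,\database,\semiringshort,\annot,\fact^i_j)$, as required.

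The main obstacle is the equivalence \emph{$t$ non-recursive iff all subtrees non-recursive}: it is exactly here that the freshness and nullarity of $\goal$ are used, and one must verify both that the root cannot participate in a recursive pair and that recursion cannot arise across sibling subtrees. A secondary point worth stating explicitly is the finiteness of the set of non-recursive trees, which is what allows the distributivity step to go through for an arbitrary commutative semiring without invoking $\omega$-continuity, and which also underlies the Any Semiring Property of $\mtann$.
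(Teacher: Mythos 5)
Your proof is correct and takes essentially the same route as the paper's: decompose every derivation tree of $\goal$ w.r.t.\ $\ontology'$ into a root step for one added rule plus a tuple of subtrees for the $\fact^i_j$ (which, by freshness of $\goal$, are exactly derivation trees w.r.t.\ $\ontology$), observe that the whole tree is non-recursive exactly when all its subtrees are, and conclude by distributivity. The paper states the non-recursivity correspondence in a single line and leaves the freshness and finiteness considerations implicit, whereas you justify them explicitly (including why no $\omega$-continuity is needed) — a more careful write-up of the same argument, not a different one.
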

\begin{proof}
	We use similar notation as those used in the proof of Proposition~\ref{prop:janda_at}.
	Let $T'(j,i) \subseteq T(j,i)$ be the set of non-recursive trees in $T(j,i)$ (those in which a fact is not a descendant of itself).
	Note that if each $t_{(j,i)}$ is non-recursive then so is the tree that is depicted in the proof of Proposition~\ref{prop:janda_at}, and therefore we have
	$$\mtann(\ontology', \database, \semiringshort, \annot,\goal) = \Sigma_{i=1}^{m} \Pi_{j=1}^{n_i}
	\sum_{ t\in  T'(j,i)}   \ann(t). $$
	
	\noindent
	In addition, by definition of $\mtann$ it holds that 
	$$\mtann(\ontology', \database, \semiringshort, \annot,\fact^i_j) = 
	\sum_{ t\in  T'(j,i)}   \ann(t) $$ for every $i$ and $j$.
	By definition of $\ontology'$, we can conclude that 
	$$\mtann(\ontology, \database, \semiringshort, \annot,\fact^i_j) = 
	\sum_{ t\in  T'(j,i)}   \ann(t) $$ for every $i$ and $j$.
	This concludes the proof.  
\end{proof}

\subsection{Joint Use}
$\prov$ satisfies the \emph{Joint Use Property} if for all facts $\fact_1,\cdots, \fact_n$, $\prov(\ontology', \database, \semiringshort, \annot,\goal)= \Pi_{j=1}^n \prov(\ontology, \database, \semiringshort, \annot,\fact_j)$ where  $\ontology'=\ontology\cup\{\bigwedge_{j=1}^n\fact_j \rightarrow\goal\}$ and $\goal$ is a nullary predicate such that $\goal\notin\schema(\ontology)\cup\schema(\database)$.

The next propositions are a consequence of the following straightforward observation:
\begin{proposition}
	A 
	provenance semantics that satisfies the Joint and Alternative Use Property satisfies also the Joint Use Property.
\end{proposition}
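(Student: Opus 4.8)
The plan is to observe that the Joint Use Property is nothing but the restriction of the Joint and Alternative Use Property to the case of a single tuple of facts, i.e.\ the case $m=1$. The entire argument therefore reduces to instantiating the more general property correctly and checking that the two statements coincide; there is no genuine mathematical content beyond a substitution.

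Concretely, I would start from arbitrary facts $\fact_1,\dots,\fact_n$ as in the Joint Use Property, and then invoke the Joint and Alternative Use Property with the single tuple obtained by setting $m=1$, $n_1=n$, and $\fact^1_j\df\fact_j$ for $1\le j\le n$. With this choice, the program produced in the Joint and Alternative Use Property is $\ontology'=\ontology\cup\{\bigwedge_{j=1}^{n_1}\fact^1_j\rightarrow\goal\}=\ontology\cup\{\bigwedge_{j=1}^{n}\fact_j\rightarrow\goal\}$, which is exactly the program $\ontology'$ appearing in the Joint Use Property.

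The next step is to simplify the right-hand side. Since the outer sum $\Sigma_{i=1}^{m}$ now ranges over the single index $i=1$, it collapses to $\Sigma_{i=1}^{1}\Pi_{j=1}^{n_1}\prov(\ontology,\database,\semiringshort,\annot,\fact^1_j)=\Pi_{j=1}^{n}\prov(\ontology,\database,\semiringshort,\annot,\fact_j)$. Hence the equation supplied by the Joint and Alternative Use Property reads $\prov(\ontology',\database,\semiringshort,\annot,\goal)=\Pi_{j=1}^{n}\prov(\ontology,\database,\semiringshort,\annot,\fact_j)$, which is precisely the equation required by the Joint Use Property. As this substitution is available for every choice of $\fact_1,\dots,\fact_n$, the Joint Use Property holds whenever the Joint and Alternative Use Property does.

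I do not expect any real obstacle here: the one point worth checking is that the freshness condition on the nullary predicate $\goal$ (namely $\goal\notin\schema(\ontology)\cup\schema(\database)$) is the same in both properties, so that the instantiation of the stronger property is legitimate. Since this requirement is stated identically in the two statements, the substitution is immediately valid, and the proof is complete.
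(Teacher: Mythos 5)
Your proof is correct and matches the paper's intent exactly: the paper states this proposition as a ``straightforward observation'' with no written proof, and the intended argument is precisely your instantiation of the Joint and Alternative Use Property with $m=1$, $n_1=n$, under which the added rule set and the right-hand side collapse to those of the Joint Use Property. Your extra check that the freshness condition on $\goal$ is identical in both properties is a sensible (if minor) point of care.
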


\begin{proposition}
	$\atann$ satisfies the Joint Use Property.
\end{proposition}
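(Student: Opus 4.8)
The plan is to observe that the Joint Use Property is simply the degenerate case $m=1$ of the Joint and Alternative Use Property, and then invoke the results already established just above. Concretely, the Joint and Alternative Use Property, instantiated with a single tuple of facts $(\fact_1,\dots,\fact_n)$ (that is, with $m=1$, $n_1 = n$, and $\fact^1_j = \fact_j$), states exactly that
\[
\atann(\ontology', \database, \semiringshort, \annot,\goal)= \Pi_{j=1}^{n} \atann(\ontology, \database, \semiringshort, \annot,\fact_j),
\]
where $\ontology'=\ontology\cup\{\bigwedge_{j=1}^{n}\fact_j\rightarrow\goal\}$, which is precisely the conclusion required by the Joint Use Property.

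The two ingredients I would cite are already in place. First, Proposition~\ref{prop:janda_at} establishes that $\atann$ satisfies the Joint and Alternative Use Property, by matching the derivation trees of $\goal$ \wrt $\ontology'$ against tuples of derivation trees for the body facts. Second, the preceding observation (stated just before this proposition) records that any provenance semantics satisfying the Joint and Alternative Use Property also satisfies the Joint Use Property, exactly because the latter is the single-disjunct instance of the former. Combining these two facts yields the claim in one line.

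There is essentially no obstacle here: the statement is a direct corollary, so the proof is a one-sentence appeal to Proposition~\ref{prop:janda_at} together with the general implication from Joint and Alternative Use to Joint Use. If one preferred a self-contained argument rather than quoting the implication, the only step to spell out would be to restrict the tree-matching argument of Proposition~\ref{prop:janda_at} to the case of a single rule $\bigwedge_{j=1}^{n}\fact_j\rightarrow\goal$, noting that every derivation tree for $\goal$ \wrt $\ontology'$ decomposes uniquely into a root $\goal$ whose children carry independently chosen derivation trees $t_j \in T(j)$ for each $\fact_j$, so that $\atann(\ontology',\database,\semiringshort,\annot,\goal) = \Pi_{j=1}^{n}\sum_{t\in T(j)}\ann(t) = \Pi_{j=1}^{n}\atann(\ontology,\database,\semiringshort,\annot,\fact_j)$.
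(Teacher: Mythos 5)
Your proof is correct and takes exactly the paper's route: the paper also dispatches this proposition as a straightforward consequence of Proposition~\ref{prop:janda_at} combined with the immediately preceding observation that any provenance semantics satisfying the Joint and Alternative Use Property also satisfies the Joint Use Property (the $m=1$ instance). Nothing further is needed.
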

\begin{proof}
	This is a straightforward consequence of Proposition~\ref{prop:janda_at}.
\end{proof}

\begin{proposition}
	$\mtann$ satisfies the Joint Use Property.
\end{proposition}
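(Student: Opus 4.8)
The plan is to obtain this as an immediate corollary of the already-established fact that $\mtann$ satisfies the Joint and Alternative Use Property (Proposition~\ref{prop:janda_mt}), exactly mirroring the argument given above for $\atann$. The Joint Use Property is precisely the instance of the Joint and Alternative Use Property in which the collection of tuples consists of a single tuple, i.e.\ $m=1$: taking the lone tuple to be $(\fact_1,\dots,\fact_n)$, the outer sum $\Sigma_{i=1}^m$ collapses, the rule set $\ontology'=\ontology\cup\{\bigwedge_{j=1}^{n_i}\fact^i_j\rightarrow\goal\mid 1\le i\le m\}$ becomes exactly $\ontology\cup\{\bigwedge_{j=1}^n\fact_j\rightarrow\goal\}$, and the defining equation reduces to $\prov(\ontology', \database, \semiringshort, \annot,\goal)= \Pi_{j=1}^n \prov(\ontology, \database, \semiringshort, \annot,\fact_j)$, which is the Joint Use equation.

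Consequently, I would simply invoke the structural implication recorded just above (that any provenance semantics satisfying Joint and Alternative Use satisfies Joint Use) together with Proposition~\ref{prop:janda_mt}. No new computation with derivation trees is required, since the non-recursiveness bookkeeping and the tree-combination argument were already carried out in the proof of Proposition~\ref{prop:janda_mt}. The only point worth a sentence of care is to note that the reduction to $m=1$ genuinely yields the same instance of $\ontology'$, so that the hypothesis of Proposition~\ref{prop:janda_mt} applies verbatim; there is no hidden obstacle here, as restricting to a single tuple cannot introduce any new derivation trees or affect non-recursiveness.

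A concrete write-up is therefore the one-line proof:

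\begin{proof}
	This is a straightforward consequence of Proposition~\ref{prop:janda_mt}.
\end{proof}

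The main (and essentially only) subtlety I expect is purely presentational rather than mathematical: ensuring the cross-reference points to the Joint and Alternative Use result for $\mtann$ (Proposition~\ref{prop:janda_mt}) and not to the $\atann$ version, so that the corollary chain is correct and the reader is not misled into re-deriving the tree identities.
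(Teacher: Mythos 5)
Your proposal is correct and matches the paper's proof exactly: the paper also records the observation that the Joint Use Property is the $m=1$ instance of the Joint and Alternative Use Property, and then proves the statement with the same one-line appeal to Proposition~\ref{prop:janda_mt}. Nothing further is needed.
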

\begin{proof}
	This is a straightforward consequence of Proposition~\ref{prop:janda_mt}.
\end{proof}

\begin{proposition}
	$\rmdtann$ satisfies the Joint Use Property.
\end{proposition}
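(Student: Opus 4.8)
The plan is to use the tree-based characterization of $\rmdtann$ directly, exploiting that $\goal$ is a fresh nullary predicate. First I would observe that, since $\goal\notin\schema(\ontology)\cup\schema(\database)$, the only rule of $\ontology'$ whose head is $\goal$ is $\datrule\df\bigwedge_{j=1}^n\fact_j\rightarrow\goal$, and $\goal$ occurs in no rule body. Hence every derivation tree for $\goal$ \wrt $\ontology'$ and $\database$ has a root labelled $(\goal,\datrule,h)$ with exactly $n$ children, the $j$-th being a derivation tree $t_{\fact_j}$ for $\fact_j$. Moreover the derivation trees for each $\fact_j$ are literally the same in $\ontology$ and $\ontology'$, since no derivation of $\fact_j$ can use $\datrule$ (which only produces $\goal$); in particular the minimal depth of $\fact_j$ is the same in both programs.

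The crux is to characterize which of these trees are hereditary minimal-depth for $\goal$. I would prove the equivalence: a derivation tree $t$ for $\goal$ is hereditary minimal-depth iff each of its $n$ immediate subtrees $t_{\fact_j}$ is a hereditary minimal-depth tree for $\fact_j$. For the forward direction, the node condition applied to each child already forces $t_{\fact_j}$ to be minimal-depth for $\fact_j$, and the condition applied to the deeper nodes gives hereditary minimal-depth-ness of $t_{\fact_j}$. For the converse, the single rule for $\goal$ gives that the minimal depth of $\goal$ equals $1+\max_j d_j$, where $d_j$ denotes the minimal depth of $\fact_j$; since each hereditary minimal-depth $t_{\fact_j}$ has depth exactly $d_j$, the assembled tree has depth $1+\max_j d_j$ and is therefore itself minimal-depth, while all its proper subtrees are minimal-depth by assumption. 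This yields a bijection between hereditary minimal-depth trees for $\goal$ \wrt $\ontology'$ and tuples $(t_{\fact_1},\dots,t_{\fact_n})$ of hereditary minimal-depth trees for the $\fact_j$ \wrt $\ontology$.

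Finally I would compute. Since the root $(\goal,\datrule,h)$ contributes no leaf, $\ann(t)=\prod_{j=1}^n\ann(t_{\fact_j})$, so summing over all hereditary minimal-depth trees for $\goal$ and using commutativity and distributivity of the semiring gives
\[
\rmdtann(\ontology',\database,\semiringshort,\annot,\goal)=\sum_{(t_{\fact_1},\dots,t_{\fact_n})}\prod_{j=1}^n\ann(t_{\fact_j})=\prod_{j=1}^n\sum_{t_{\fact_j}}\ann(t_{\fact_j})=\prod_{j=1}^n\rmdtann(\ontology,\database,\semiringshort,\annot,\fact_j),
\]
where the inner sums range over hereditary minimal-depth trees for $\fact_j$. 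The degenerate case where some $\fact_j$ is not entailed is immediate: then neither $\goal$ nor that $\fact_j$ has any derivation tree, so both sides are $\kzero$.

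The main obstacle I anticipate is the converse direction of the characterization, i.e.\ checking that stitching together hereditary minimal-depth subtrees produces a globally minimal-depth tree for $\goal$. This is exactly the point where $\rmdtann$ behaves better than $\mdtann$: for $\mdtann$ a globally minimal-depth tree for $\goal$ may contain a subtree for some $\fact_j$ that is \emph{not} minimal-depth for $\fact_j$ (merely shallow enough not to dominate the overall depth), which breaks the factorization; the hereditary constraint is precisely what removes this slack and restores the product form.
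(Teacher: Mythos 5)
Your proof is correct and follows essentially the same route as the paper's: both decompose every derivation tree of $\goal$ into the forced root rule plus one subtree per $\fact_j$, characterize the hereditary minimal-depth trees for $\goal$ as exactly those whose immediate subtrees are hereditary minimal-depth for the $\fact_j$ (resting on the same key observation that all minimal-depth trees of a fact share one depth $d_j$, so the assembled tree attains the minimum $1+\max_j d_j$), and then factor the sum of annotations into a product by distributivity. You merely spell out the two directions of the characterization and the degenerate non-entailment case, which the paper's proof states without detail.
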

\begin{proof}
	A key observation that is based on the definition of (hereditary) minimal depth trees is that all minimal depth derivation trees of a fact $\alpha_i$ are of the same depth; we denote this depth by $d_i$. 
	Note that the derivation trees of $\goal$ \wrt $\ontology'$ and $\database$ are of the form 
	
	\begin{tikzpicture}
		\node {$\goal$}
		child {node {$\alpha_1$}
			child{edge from parent[draw=none] node[itria] {$\small{t_i}$}}						
		}
		child{ edge from parent[draw=none] node[draw=none] (ellipsis) {$\ldots$} }
		child {node {$\alpha_{n}$}	
			child{ edge from parent[draw=none] node[itria] {$\small{t_n}$}}	
		};
	\end{tikzpicture}
	
	where $t_i \in \drvtr{\alpha_i}$.
	Since we are interested in $\rmdtann(\ontology', \database, \semiringshort, \annot,\goal)$, we restrict the discussion only to those derivation trees of $\goal$ which are of hereditary minimal depth.
	Notice that these are exactly those trees for which each $t_i$ is a derivation tree of $\alpha_i$ of hereditary minimal depth. 
	Let us denote by $T_i$ the set of derivation trees for $\alpha_i$ of hereditary minimal depth. Then the derivation trees that we take into account in computing $\rmdtann(\ontology', \database, \semiringshort, \annot,\goal)$ are of those depicted above with $t_i\in T_i$.
	Thus, we get $\rmdtann(\ontology', \database, \semiringshort, \annot,\goal)= \Pi_{j=1}^n \rmdtann(\ontology, \database, \semiringshort, \annot,\fact_j)$, which completes the proof.
\end{proof}

\subsection{Alternative Use}
\sloppy{$\prov$ satisfies the \emph{Alternative Use Property} if for all facts $\fact_1,\cdots, \fact_m$, $\prov(\ontology', \database, \semiringshort, \annot,\goal)= \Sigma_{i=1}^m
	\prov(\ontology, \database, \semiringshort, \annot,\fact_i)$ where $\ontology'=\ontology\cup\{\fact_i\rightarrow\goal| 1\le i \le m \}$  and $\goal$ is a nullary predicate such that $\goal\notin\schema(\ontology)\cup\schema(\database)$. }

The next propositions are a consequence of the following straightforward observation:
\begin{proposition}
	A 
	provenance semantics that satisfies the Joint and Alternative Use Property satisfies also the Alternative Use Property.
\end{proposition}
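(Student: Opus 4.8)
The plan is to recognize that the Alternative Use Property is nothing more than the special case of the Joint and Alternative Use Property in which every tuple is a singleton. So the whole argument is a single instantiation, and there is essentially no obstacle beyond checking that the two syntactic constructions of $\ontology'$ agree under this choice.

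Concretely, I would start from arbitrary facts $\fact_1,\cdots,\fact_m$ as in the Alternative Use Property, together with the associated program $\ontology'=\ontology\cup\{\fact_i\rightarrow\goal\mid 1\le i \le m\}$. I then apply the Joint and Alternative Use Property to the $m$ \emph{singleton} tuples $(\fact_1),\dots,(\fact_m)$, that is, I set $n_i=1$ and $\fact^i_1=\fact_i$ for each $1\le i\le m$. For this choice, the program produced by the Joint and Alternative Use Property, namely $\ontology\cup\{\bigwedge_{j=1}^{n_i}\fact^i_j\rightarrow\goal\mid 1\le i \le m\}$, has each body $\bigwedge_{j=1}^{1}\fact^i_j$ equal to $\fact_i$, so it coincides exactly with the program $\ontology'$ appearing in the Alternative Use Property.

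Finally, on the right-hand side each inner product $\Pi_{j=1}^{n_i}\prov(\ontology, \database, \semiringshort, \annot,\fact^i_j)$ reduces to the single factor $\prov(\ontology, \database, \semiringshort, \annot,\fact_i)$, since $n_i=1$. Hence the identity guaranteed by the Joint and Alternative Use Property specializes to $\prov(\ontology', \database, \semiringshort, \annot,\goal)=\Sigma_{i=1}^m \prov(\ontology, \database, \semiringshort, \annot,\fact_i)$, which is precisely the Alternative Use Property. The only point worth verifying — and it is immediate — is that the singleton substitution makes the two definitions of $\ontology'$ literally equal, so that $\goal$ is derived by the same program in both properties.
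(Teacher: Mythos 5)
Your proof is correct and matches the paper's intent exactly: the paper treats this proposition as a ``straightforward observation'' with no written proof, and the intended argument is precisely your instantiation of the Joint and Alternative Use Property with the singleton tuples $(\fact_1),\dots,(\fact_m)$, under which both the program $\ontology'$ and the right-hand side collapse to those of the Alternative Use Property. Nothing is missing; you have simply made explicit what the paper leaves implicit.
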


\begin{proposition}\label{prop:al_at}
	$\atann$ satisfies the Alternative Use Property.
\end{proposition}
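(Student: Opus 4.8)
The plan is to derive this as an immediate corollary of the already-established Joint and Alternative Use Property for $\atann$. The key observation is that the Alternative Use Property is precisely the special case of the Joint and Alternative Use Property in which every tuple $(\fact^i_1,\cdots,\fact^i_{n_i})$ is a singleton, i.e.\ $n_i=1$ for all $i$. Under this specialization, the program $\ontology'=\ontology\cup\{\bigwedge_{j=1}^{n_i}\fact^i_j\rightarrow\goal\mid 1\le i\le m\}$ appearing in Property~\ref{prop:joint-alternative} reduces to exactly $\ontology\cup\{\fact_i\rightarrow\goal\mid 1\le i\le m\}$, the program named $\ontology'$ in the Alternative Use Property. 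Simultaneously, the right-hand side $\Sigma_{i=1}^m\Pi_{j=1}^{n_i}\prov(\ontology,\database,\semiringshort,\annot,\fact^i_j)$ collapses to $\Sigma_{i=1}^m\prov(\ontology,\database,\semiringshort,\annot,\fact_i)$, since each inner product contains a single factor.

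Concretely, I would simply invoke the preceding observation, that any provenance semantics satisfying the Joint and Alternative Use Property also satisfies the Alternative Use Property, together with Proposition~\ref{prop:janda_at}, which establishes that $\atann$ satisfies the Joint and Alternative Use Property. Composing these two facts yields the claim directly, with no further computation required.

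There is essentially no obstacle here: the entire content has already been discharged in the proof of Proposition~\ref{prop:janda_at}, where the shape of the derivation trees for $\goal$ with respect to $\ontology'$ was analyzed and shown to factor as a sum over the rules $\bigwedge_{j}\fact^i_j\rightarrow\goal$ of products of the tree-sums for the individual body facts. Restricting that analysis to single-atom bodies is immediate and introduces no new case distinctions. Thus the proof is a one-line consequence, and the only thing to be careful about is to state clearly that the singleton specialization of $\ontology'$ coincides with the program in the Alternative Use Property so that the two statements are genuinely about the same object.

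\begin{proof}
	This is a straightforward consequence of Proposition~\ref{prop:janda_at}.
\end{proof}
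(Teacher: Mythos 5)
Your proof is correct and matches the paper's own argument exactly: the paper likewise derives Proposition~\ref{prop:al_at} as an immediate consequence of Proposition~\ref{prop:janda_at}, via the observation that the Alternative Use Property is the singleton ($n_i=1$) specialization of the Joint and Alternative Use Property. Your explicit check that the specialized program $\ontology'$ and the collapsed sum coincide with those in the Alternative Use Property is a fine (and slightly more careful) spelling-out of what the paper leaves as a one-line remark.
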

\begin{proof}
	This is a straightforward consequence of Proposition~\ref{prop:janda_at}.
\end{proof}

\begin{proposition}\label{prop:al_nrt}
	$\mtann$ satisfies the Alternative Use Property.
\end{proposition}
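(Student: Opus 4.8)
The plan is to derive this immediately from the fact that $\mtann$ already satisfies the stronger Joint and Alternative Use Property (Proposition~\ref{prop:janda_mt}), following exactly the pattern used for $\atann$ in Proposition~\ref{prop:al_at}. The key observation is the one stated just above: any provenance semantics satisfying the Joint and Alternative Use Property automatically satisfies the Alternative Use Property, because the latter is precisely the special case of the former in which every tuple of facts is a singleton.

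Concretely, given facts $\fact_1,\dots,\fact_m$, I would instantiate the Joint and Alternative Use Property with the $m$ tuples having $n_i = 1$ and $\fact^i_1 = \fact_i$ for each $1\le i\le m$. Then the program $\ontology'=\ontology\cup\{\bigwedge_{j=1}^{n_i}\fact^i_j\rightarrow\goal\mid 1\le i \le m\}$ coincides with $\ontology\cup\{\fact_i\rightarrow\goal\mid 1\le i \le m\}$, which is exactly the program appearing in the Alternative Use Property, and the right-hand side collapses because each inner product over $j$ ranges over a single factor:
\[
\mtann(\ontology', \database, \semiringshort, \annot,\goal)= \Sigma_{i=1}^m \Pi_{j=1}^{1} \mtann(\ontology, \database, \semiringshort, \annot,\fact^i_j)= \Sigma_{i=1}^m \mtann(\ontology, \database, \semiringshort, \annot,\fact_i).
\]
This is literally the statement of the Alternative Use Property.

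Since the claim is a direct specialization of an already-established property, there is no genuine obstacle to overcome: all of the real work has been carried out in the proof of Proposition~\ref{prop:janda_mt}, where the non-recursive derivation trees of $\goal$ \wrt $\ontology'$ were shown to factor through the non-recursive trees of the constituent facts. The only thing to verify here is the purely syntactic bookkeeping that setting each $n_i=1$ turns the outer sum-over-products into a plain sum, which is immediate. Hence I would simply record the proof as a straightforward consequence of Proposition~\ref{prop:janda_mt}.
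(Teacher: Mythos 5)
Your proof is correct and is exactly the paper's argument: the paper also derives this proposition as a direct consequence of Proposition~\ref{prop:janda_mt}, via the observation that the Alternative Use Property is the special case of the Joint and Alternative Use Property with all $n_i=1$. Your explicit spelling-out of that singleton specialization is a fine (and slightly more detailed) rendering of the same one-line proof.
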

\begin{proof}
	This is a straightforward consequence of Proposition~\ref{prop:janda_mt}.
\end{proof}

\subsection{Self}
$\prov$ satisfies the \emph{Self  Property} if for every $\fact\in \database$, there exists $e\in K$ such that \(\prov(\ontology, \database, \semiringshort, \annot,\fact)= \annot(\fact)\plus e\).

\begin{proposition}
	$\atann , \mtann, \mdtann$ and $ \rmdtann$  satisfy the Self Property.
\end{proposition}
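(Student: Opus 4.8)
The plan is to exhibit, for every $\fact\in\database$, a single distinguished derivation tree whose annotation is exactly $\annot(\fact)$, and then to split it off from the sum defining each semantics. The key observation I would start from is that, since $\fact\in\database$, the tree $t_\fact$ consisting of a single node labeled $\fact$ is a valid derivation tree of $\fact$ \wrt $\ontology$ and $\database$: its root is a leaf labeled by a fact of $\database$ equal to $\fact$. Its annotation is $\ann(t_\fact)=\annot(\fact)$, its depth is $0$, and, having no inner node, it is simultaneously of minimal depth, hereditary minimal-depth, and non-recursive. Recalling that the Self Property is exactly the relation $\annot(\fact)\sqsubseteq\prov(\ontology,\database,\semiringshort,\annot,\fact)$, it suffices to write each provenance value as $\annot(\fact)\plus e$ for some $e\in K$.

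Next I would treat the four semantics separately. For $\mdtann$ and $\rmdtann$ the argument is sharpest: since $\fact\in\database$ the minimal depth over $\drvtrsig$ is $0$, and the only tree of depth $0$ is $t_\fact$, so $t_\fact$ is the unique minimal-depth (hence also the unique hereditary minimal-depth) tree; therefore $\mdtann$ and $\rmdtann$ both equal $\ann(t_\fact)=\annot(\fact)$, and $e=\kzero$ works. For $\mtann$ the sum ranges over the non-recursive trees of $\drvtrsig$, a family that contains $t_\fact$ (and is in fact finite, since a non-recursive tree has no repeated fact along any branch and branching is bounded); separating this term gives $\mtann=\annot(\fact)\plus e$ with $e$ the sum of $\ann(t)$ over the remaining non-recursive trees. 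For $\atann$ the same separation gives $\atann=\annot(\fact)\plus e$ with $e=\sum_{t\in\drvtrsig,\,t\neq t_\fact}\ann(t)$.

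The only point that needs care — and the main obstacle — is the case of $\atann$, where $\drvtrsig$ may be infinite. Here I would invoke the $\omega$-continuity of $\semiringshort$ (under which $\atann$ is defined): infinite sums are well-defined as least upper bounds of $\omega$-chains of partial sums and admit the usual rearrangement, so pulling out the single term $\ann(t_\fact)$ leaves a well-defined residual $e\in K$ with $\atann=\ann(t_\fact)\plus e$. Everything else is routine, since for $\mtann$, $\mdtann$ and $\rmdtann$ the relevant tree sets are finite (indeed singletons for the latter two).
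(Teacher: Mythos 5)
Your proof is correct and follows essentially the same route as the paper's: both rest on the observation that, since $\fact\in\database$, the single-node tree labeled $\fact$ is a valid derivation tree with annotation $\annot(\fact)$ that qualifies under all four semantics (minimal-depth, hereditary minimal-depth, non-recursive, and all-trees), so the corresponding sum splits as $\annot(\fact)\plus e$. Your additional remarks (uniqueness of the depth-$0$ tree for $\mdtann$ and $\rmdtann$, finiteness of the non-recursive tree set for $\mtann$, and $\omega$-continuity justifying the rearrangement for $\atann$) are sound refinements of the same argument, which the paper leaves implicit.
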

\begin{proof}
	Since $\alpha\in\database$, it holds that there exists a derivation tree of $\alpha$ \wrt $\ontology$ and $\database$ that consists of a single node $\alpha$. The claim follows directly.
\end{proof}

\subsection{Parsimony}
$\prov$ satisfies the \emph{Parsimony Property} if when $\fact$ belongs to $\database$ and does not occur in any rule head in the grounding $\ontology_\database$ of $\ontology$ \wrt $\database$, then $\prov(\ontology, \database, \semiringshort, \annot,\fact)= \annot(\fact)$.

\begin{proposition}\label{prop:tree-sat-parsimony}
	$\atann , \mtann, \mdtann$ and $ \rmdtann$  satisfy the Parsimony Property.
\end{proposition}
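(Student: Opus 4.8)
The plan is to show that, under the hypothesis of the property, the fact $\fact$ has a \emph{unique} derivation tree \wrt $\ontology$ and $\database$, namely the single-node tree, after which the equality $\prov(\ontology,\database,\semiringshort,\annot,\fact)=\annot(\fact)$ becomes immediate for all four tree-based semantics simultaneously. First I would observe that since $\fact\in\database$, the tree $t_0$ consisting of a single node labeled by $\fact$ belongs to $\drvtrsig$, and by definition $\ann(t_0)=\prod_{v\text{ leaf of }t_0}\annot(v)=\annot(\fact)$. This tree has depth $0$, so it is trivially of minimal depth, of hereditary minimal depth, and non-recursive; hence it is taken into account by each of $\atann$, $\mdtann$, $\rmdtann$, and $\mtann$.

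Next I would prove that $t_0$ is in fact the \emph{only} derivation tree for $\fact$. Suppose toward a contradiction that some $t\in\drvtrsig$ has a non-leaf root. Then this root is a triple $(\fact,\datrule,h)$ with $\datrule=\phi(\vec{x},\vec{y})\rightarrow H(\vec{x})\in\ontology$ and $h$ a homomorphism from $\phi(\vec{x},\vec{y})$ to the facts labeling the children, satisfying $h(H(\vec{x}))=\fact$. The key auxiliary observation is that every fact occurring in a derivation tree \wrt $\ontology$ and $\database$ has all its terms in $\domain(\database)$: this follows by induction on the tree from the fact that rules are range-restricted (the head variables $\vec{x}$ are among the body variables $\vec{x}\cup\vec{y}$) together with the requirement that leaves are labeled by facts of $\database$. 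Consequently $h$ maps $\vec{x}\cup\vec{y}$ into $\domain(\database)$, so $h(\phi(\vec{x},\vec{y}))\rightarrow h(H(\vec{x}))=\fact$ is a rule of the grounding $\ontology_\database$ whose head is $\fact$. This contradicts the assumption that $\fact$ does not occur in any rule head of $\ontology_\database$.

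It follows that $\drvtrsig=\{t_0\}$. Since each of the four semantics sums $\ann(t)$ over a subset of $\drvtrsig$ that contains $t_0$ (as noted in the first paragraph), we obtain $\atann(\ontology,\database,\semiringshort,\annot,\fact)=\mdtann(\ontology,\database,\semiringshort,\annot,\fact)=\rmdtann(\ontology,\database,\semiringshort,\annot,\fact)=\mtann(\ontology,\database,\semiringshort,\annot,\fact)=\ann(t_0)=\annot(\fact)$, as required. The only genuinely delicate point is the auxiliary observation that all facts appearing in a derivation tree are ground over $\domain(\database)$, which is what licenses reading off the root of a hypothetical non-trivial tree as a grounded rule in $\ontology_\database$; everything else is bookkeeping. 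I expect this step to be the main (and essentially only) obstacle, and it is resolved directly from the normalized, range-restricted form of Datalog rules assumed in the Preliminaries.
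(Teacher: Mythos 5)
Your proposal is correct and takes essentially the same approach as the paper: the paper's own proof simply asserts that, under the hypothesis, every derivation tree of $\fact$ must consist of the single node $\fact$ (both root and leaf), so that each of the four sums trivially reduces to $\annot(\fact)$, and your argument merely supplies, via the grounding $\ontology_\database$, the justification that the paper leaves implicit. The only caveat is that your auxiliary observation (every fact in a derivation tree has all its terms in $\domain(\database)$) tacitly assumes that rules do not introduce constants outside $\domain(\database)$; since the same assumption is equally implicit in the paper's definition of $\ontology_\database$ and in its one-line proof, this is not a gap relative to the paper.
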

\begin{proof}
	It holds, by the definition of a derivation tree, that all derivation trees of $\alpha$ must consist of a single node  $\alpha$ (which is both root and leaf). 
	That is, there is a single derivation tree for  $\alpha$.
	Hence, the claim follows directly. 
\end{proof}

\subsection{Necessary Facts}
$\prov$ satisfies the \emph{Necessary Facts Property} if 
$\prov(\ontology, \database, \semiringshort, \annot,\fact) =\Pi_{\beta\in \mi{Nec}}\annot({\beta})\times e$ for some $e\in K$, where $\mi{Nec}$ is the set of facts necessary to $\ontology, \database \models \fact$.
\begin{proposition}\label{tree-sat-necfacts}
	$\atann , \mtann, \mdtann$ and $ \rmdtann$ satisfy the Necessary Facts Property.
\end{proposition}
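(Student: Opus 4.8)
The plan is to exploit one structural observation that unifies all four semantics: each of them is a (possibly infinite) sum of the form $\sum_{t\in S}\ann(t)$, where $S\subseteq\drvtrsig$ is a set of genuine derivation trees of $\fact$ (namely $S=\drvtrsig$ for $\atann$, the minimal-depth trees for $\mdtann$, the hereditary minimal-depth trees for $\rmdtann$, and the non-recursive trees for $\mtann$). The key claim I would establish first is that every necessary fact occurs as a leaf in \emph{every} derivation tree of $\fact$, and hence in every tree of each such $S$.

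First I would prove this claim by contraposition. Suppose $\beta\in\database$ labels no leaf of some $t\in\drvtrsig$. Since all leaves of a derivation tree are labelled by database facts, all leaves of $t$ lie in $\database\setminus\{\beta\}$, so $t$ is itself a valid derivation tree of $\fact$ \wrt $\ontology$ and $\database\setminus\{\beta\}$; hence $\ontology,\database\setminus\{\beta\}\models\fact$ and $\beta\notin\mi{Nec}$. Consequently, for every $\beta\in\mi{Nec}$ and every $t\in\drvtrsig$, the product $\ann(t)=\prod_{v\text{ leaf of }t}\annot(v)$ contains at least one factor $\annot(\beta)$. Grouping one such occurrence of each necessary fact, I can write $\ann(t)=\big(\Pi_{\beta\in\mi{Nec}}\annot(\beta)\big)\times e_t$, where $e_t$ is the product of the remaining leaf annotations; here commutativity of $\times$ lets me reorder the factors freely.

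Next I would factor this common product out of the sum. For $\mdtann$ and $\rmdtann$ the set $S$ consists of trees of a single fixed (minimal) depth, and for $\mtann$ of trees in which no fact repeats along any root-to-leaf branch; in all three cases, since the database and program are finite, only finitely many distinct facts are entailed, the depth of such trees is bounded, and hence $S$ is finite. Distributivity then gives $\prov(\ontology,\database,\semiringshort,\annot,\fact)=\big(\Pi_{\beta\in\mi{Nec}}\annot(\beta)\big)\times\sum_{t\in S}e_t$, so I take $e:=\sum_{t\in S}e_t$. I would also dispatch the degenerate case $\ontology,\database\not\models\fact$ separately: then $\drvtrsig$ is empty, $\prov(\ontology,\database,\semiringshort,\annot,\fact)=\kzero$ by point~\ref{defprov:null} of Definition~\ref{def:provenance}, and taking $e=\kzero$ works regardless of $\mi{Nec}$.

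The main obstacle is the infinite-sum case for $\atann$, where $S=\drvtrsig$ may be infinite. Its semiring domain consists of commutative $\omega$-continuous semirings, in which $\times$ distributes over suprema of $\omega$-chains; I would therefore view the infinite sum as the supremum of its finite partial sums, factor $\Pi_{\beta\in\mi{Nec}}\annot(\beta)$ out of each finite partial sum as above, and then pass to the supremum, using precisely the $\omega$-continuity identity $a\times\sup((a_i)_{i\in\mathbb{N}})=\sup((a\times a_i)_{i\in\mathbb{N}})$. Making this limiting step rigorous is the only nontrivial point; the remaining work—keeping exactly one occurrence of each necessary fact in the common product while absorbing any leftover occurrences into $e_t$—is routine bookkeeping justified by commutativity of the semiring.
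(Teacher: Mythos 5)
Your proof is correct and follows essentially the same route as the paper's: the key step in both is showing, by contraposition, that every necessary fact labels a leaf of \emph{every} derivation tree of $\fact$, after which the common product $\Pi_{\beta\in \mi{Nec}}\annot(\beta)$ is factored out of the sum defining each of $\atann$, $\mtann$, $\mdtann$, $\rmdtann$. The only difference is that you spell out the factorization carefully (finiteness of the relevant tree sets for $\mtann$, $\mdtann$, $\rmdtann$, and $\omega$-continuity for the possibly infinite sum in $\atann$), details the paper compresses into ``by definition''.
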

\begin{proof}
	We start by showing that if $\beta$ is necessary to $\ontology, \database \models \fact$ then all derivation trees of $\fact$ has $\beta$ as a leaf. 
	Assume to the contrary that this is not the case, and let $t$ be a derivation tree whose leaves $\alpha_1,\cdots , \alpha_n $ are such that for every $i$ it holds that  $\alpha_i \ne \beta$. By definition, it holds that $\ontology, \{\alpha_1,\cdots , \alpha_n\} \models \fact$, which contradicts $\beta$ being a necessary fact.
	We can conclude that each Necessary Facts appear in every derivation tree of $\fact$. Therefore, by definition we have $\prov(\ontology, \database, \semiringshort, \annot,\fact) =\Pi_{\beta\in \mi{Nec}}\annot({\beta})\times e$ for some $e\in K$, for $\prov \in \{ \atann , \mtann, \mdtann, \rmdtann \}$.
\end{proof}

\subsection{Non-Usable Facts}
$\prov$ satisfies the \emph{Non-Usable Facts Property} if for every $\annot'$ that differs from $\annot$ only on facts that are not usable to $\ontology,\database\models\fact$, 
$\prov(\ontology, \database, \semiringshort, \annot,\fact)= \prov(\ontology, \database, \semiringshort, \annot',\fact)$.

\begin{proposition}
	$\atann , \mtann, \mdtann$ and $ \rmdtann$ satisfy the Non-Usable Facts Property.
\end{proposition}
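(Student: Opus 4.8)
The plan is to exploit the common shape of all four semantics: each is defined as a sum $\sum_{t\in\mathcal{T}(\fact)}\ann(t)$ ranging over a subset $\mathcal{T}(\fact)\subseteq\drvtrsig$, where the subset in question --- all derivation trees for $\atann$, non-recursive trees for $\mtann$, minimal-depth trees for $\mdtann$, and hereditary minimal-depth trees for $\rmdtann$ --- is carved out by purely \emph{structural} conditions on the trees (non-recursiveness, minimality of depth, hereditary minimality). These conditions depend only on $\ontology$, $\database$ and $\fact$, and not on the annotation function. Consequently, replacing $\annot$ by $\annot'$ leaves the index set $\mathcal{T}(\fact)$ of the summation unchanged; it can only affect the value $\ann(t)$ of each individual summand.

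Next I would pin down which summands actually change. Recall $\ann(t)=\prod_{v\text{ leaf of }t}\annot(v)$. For every $t\in\drvtrsig$, each leaf of $t$ is labelled by a fact from $\database$ that, by the very existence of $t$, occurs in a derivation tree for $\fact$; hence every leaf label is \emph{usable} to $\ontology,\database\models\fact$ by definition of usability. Since by hypothesis $\annot$ and $\annot'$ differ only on facts that are \emph{not} usable, they agree on every leaf label of $t$, so that $\prod_{v\text{ leaf of }t}\annot(v)=\prod_{v\text{ leaf of }t}\annot'(v)$. Thus the annotation $\ann(t)$ of each tree $t\in\drvtrsig$ is the same whether it is computed with $\annot$ or with $\annot'$.

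Combining the two observations, for each of the four semantics the summation ranges over the same tree set $\mathcal{T}(\fact)$ under both annotations, and is taken of identical summands; hence the two sums coincide, giving $\prov(\ontology,\database,\semiringshort,\annot,\fact)=\prov(\ontology,\database,\semiringshort,\annot',\fact)$. For $\mdtann$ and $\rmdtann$ the sums are finite and the equality is immediate; for $\atann$ and $\mtann$ the sum may be infinite, but since the defining supremum of finite partial sums is taken over the same index set with termwise-equal contributions, the two suprema agree as well. The argument is essentially direct from the definitions, so I do not anticipate any real obstacle; the only point requiring a moment's care is to confirm that the restrictions defining $\mtann$, $\mdtann$ and $\rmdtann$ are annotation-independent, so that perturbing $\annot$ neither adds nor removes a tree from the summation range.
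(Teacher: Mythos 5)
Your proof is correct and follows essentially the same route as the paper's: observe that every leaf of every derivation tree of $\fact$ is by definition a usable fact, so $\annot$ and $\annot'$ agree on all leaves, hence $\ann(t)$ is unchanged for every tree $t$ and the defining sums coincide for all four semantics. Your two explicit refinements --- that the tree-set restrictions (non-recursive, minimal depth, hereditary minimal depth) are annotation-independent, and that the suprema for the possibly infinite sums of $\atann$ and $\mtann$ agree termwise --- are points the paper leaves implicit, but they do not change the argument.
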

\begin{proof}
	A fact is not usable if it does not occur in every derivation tree. To put it the other way around, for every derivation tree $t$ it holds that if $\beta$ occurs in $t$ it is usable. 
	By definition, for every derivation tree $t$ we have
	\[
	\ann(t) := \prod_{v\text{ is a leaf of $t$}} \annot(v)
	\]
	By the previous observation and the way $\annot'$ is defined, it holds that for every $t$,	
	\[
	\ann(t) := \prod_{v\text{ is a leaf of $t$}} \annot'(v)
	\]
	Therefore, we have the desired equivalence $\prov(\ontology, \database, \semiringshort, \annot,\fact)= \prov(\ontology, \database, \semiringshort, \annot',\fact)$ for $\prov \in \{ \atann , \mtann, \mdtann, \rmdtann \}$.
\end{proof}

\subsection{Insertion}
\sloppy{
	$\prov$ satisfies the \emph{Insertion Property} if for every $(\database',\semiringshort,\annot')$ such that $\database\cap\database'=\emptyset$, there exists $e\in K$ such that $\prov(\ontology, \database\cup\database',\semiringshort, \annot\cup\annot', \fact)=\prov(\ontology, \database,\semiringshort, \annot, \fact) \plus \prov(\ontology, \database',\semiringshort, \annot', \fact) \plus e$.
}

\begin{proposition}\label{prop:in_nr}
	$\atann$ satisfies the Insertion Property.
\end{proposition}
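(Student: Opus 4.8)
The plan is to exploit the characterization of $\atann$ as a sum over the set of all derivation trees,
$$\atann(\ontology,\database,\semiringshort,\annot,\fact)=\sum_{t\in T^{\ontology}_{\database}(\fact)}\ann(t),\qquad \ann(t)=\prod_{v\text{ leaf of }t}\annot(v),$$
and to classify the derivation trees for $\fact$ over the enlarged database according to the origin of their leaves. First I would note that since $\database\cap\database'=\emptyset$ the annotation $\annot\cup\annot'$ is well defined, restricting to $\annot$ on $\database$ and to $\annot'$ on $\database'$. Recall also that the semiring domain of $\atann$ consists of commutative $\omega$-continuous semirings, so throughout $\semiringshort$ may be assumed commutative and $\omega$-continuous, which guarantees that every sum appearing below is a well-defined element of $K$.

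The key combinatorial step is a partition. Every leaf of a tree in $T^{\ontology}_{\database\cup\database'}(\fact)$ is labeled by a fact of $\database\cup\database'$, hence by disjointness lies in exactly one of $\database$ and $\database'$. I would therefore split $T^{\ontology}_{\database\cup\database'}(\fact)$ into three disjoint classes: $T_1$, the trees all of whose leaves belong to $\database$; $T_2$, those all of whose leaves belong to $\database'$; and $T_3$, the remaining \emph{mixed} trees, having leaves in both. The crucial identifications are $T_1=T^{\ontology}_{\database}(\fact)$ and $T_2=T^{\ontology}_{\database'}(\fact)$: the validity of a derivation tree depends on $\ontology$ only through its internal nodes and on the database only through the requirement that each leaf be a database fact, so a tree whose leaves all lie in $\database$ is already a valid derivation tree \wrt $\database$, and conversely every $t\in T^{\ontology}_{\database}(\fact)$ has all leaves in $\database\subseteq\database\cup\database'$ and thus lands in $T_1$ (symmetrically for $\database'$). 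Moreover, for $t\in T_1$ one has $\ann(t)$ computed with $\annot\cup\annot'$ equal to $\ann(t)$ computed with $\annot$, because $\annot\cup\annot'$ agrees with $\annot$ on every leaf of $t$, and likewise on $T_2$ with $\annot'$.

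With the partition in hand, I would reorganize the (possibly infinite) sum: the set of derivation trees is countable and $\semiringshort$ is $\omega$-continuous, so the sum over $T^{\ontology}_{\database\cup\database'}(\fact)$ equals the sum over $T_1$ plus that over $T_2$ plus that over $T_3$. The first two summands are exactly $\atann(\ontology,\database,\semiringshort,\annot,\fact)$ and $\atann(\ontology,\database',\semiringshort,\annot',\fact)$ by the previous paragraph, and setting $e:=\sum_{t\in T_3}\ann(t)\in K$ yields
$$\atann(\ontology,\database\cup\database',\semiringshort,\annot\cup\annot',\fact)=\atann(\ontology,\database,\semiringshort,\annot,\fact)\plus\atann(\ontology,\database',\semiringshort,\annot',\fact)\plus e,$$
which is exactly the Insertion Property (the case $\ontology,\database\cup\database'\not\models\fact$ is subsumed, all three tree sets being empty). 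The hard part, and the step I expect to deserve the most care, is justifying this regrouping of an infinite sum: I must argue that in a commutative $\omega$-continuous semiring a countable sum can be split along a partition of its index set without changing its supremum — invoking the continuity axioms $a\plus\sup((a_i)_i)=\sup((a\plus a_i)_i)$ rather than treating the rearrangement as automatic. The verification that $T_1,T_2,T_3$ genuinely partition $T^{\ontology}_{\database\cup\database'}(\fact)$ and coincide with the single-database tree sets is routine but should be stated explicitly.
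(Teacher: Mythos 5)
Your proof is correct and follows essentially the same route as the paper's: the paper likewise partitions the derivation trees of $\fact$ \wrt $\database\cup\database'$ into those with all leaves in $\database$, all leaves in $\database'$, and mixed trees, identifies the first two partial sums with $\atann(\ontology,\database,\semiringshort,\annot,\fact)$ and $\atann(\ontology,\database',\semiringshort,\annot',\fact)$, and takes $e$ to be the sum over the mixed trees. If anything, your write-up is more explicit than the paper's on the two points it leaves implicit (the bijection between the leaf-restricted tree classes and the single-database tree sets, and the justification for regrouping an infinite sum via $\omega$-continuity).
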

\begin{proof}
	Let us analyze the derivation trees of $\fact$ \wrt $(\database\cup\database',\semiringshort, \annot\cup\annot')$.
	We divide the derivation trees of $\fact$ \wrt $(\database\cup\database',\semiringshort, \annot\cup\annot')$ to three groups according to their leaves:
	\begin{itemize}
		\item[$(i)$] all leaves are elements in $D$,
		\item[$(ii)$] all leaves are elements in $D'$,
		\item[$(iii)$] at least one leaf is from $D$ and at least one is from $D'$.
	\end{itemize}
	We denote class $(i)$ by $T_D$, class $(ii)$ by $T_{D'}$, and class $(iii)$ by $T_{D,D'}$.
	We can change the order of summation to obtain
	\begin{equation}\label{eq1}
		\atann(\ontology,\database,\semiringshort,\annot,\fact)= \sum_{ t\in  \drvtrsig \cap T_D}   \ann(t)  +\sum_{ t\in  \drvtrsig\cap T_{D'}}   \ann(t)  +\sum_{ t\in  \drvtrsig\cap T_{D,D'}}   \ann(t)  
	\end{equation}	
	
	Notice that for $t\in T_D$ it holds that $\ann(t)$ \wrt $(\database\cup\database',\semiringshort, \annot\cup\annot')$ is the same as $\ann(t)$ \wrt $(\database,\semiringshort, \annot)$; for $t\in T_{D'}$ it holds that $\ann(t)$ \wrt $(\database\cup\database',\semiringshort, \annot\cup\annot')$ is the same as $\ann(t)$ \wrt $(\database',\semiringshort, \annot')$;
	Using the definition of $\atann$, we can replace the partial sums in equation~\ref{eq1} and obtain
	\begin{equation}\label{eq2}
		\atann(\ontology,\database\cup \database',\semiringshort,\annot\cup \annot',\fact)=  \atann(\ontology, \database,\semiringshort, \annot, \fact) \plus \atann(\ontology, \database',\semiringshort, \annot', \fact) \plus \sum_{ t\in  \drvtrsig\cap T_{D,D'}}   \ann(t)  
	\end{equation}
	which completes the proof.
\end{proof}

\begin{proposition}
	$\mtann$ satisfies the Insertion Property.
\end{proposition}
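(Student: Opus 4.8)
The plan is to mirror the argument just given for $\atann$ (Proposition~\ref{prop:in_nr}), restricting attention throughout to non-recursive derivation trees. First I would record that $\mtann$ is well-defined over any commutative semiring, so that the element $e$ we are about to construct genuinely lies in $K$: since a non-recursive tree cannot repeat a fact label along any root-to-leaf path, and only finitely many facts are entailed by $\ontology$ and $\database\cup\database'$, the set of non-recursive derivation trees of $\fact$ is finite, and every sum appearing below is therefore a finite sum.

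Next I would partition the non-recursive derivation trees of $\fact$ \wrt $(\database\cup\database',\semiringshort,\annot\cup\annot')$ into the same three classes used for $\atann$, according to the origin of their leaves: $T_D$ (all leaves in $\database$), $T_{D'}$ (all leaves in $\database'$), and $T_{D,D'}$ (at least one leaf in each). Since $\database\cap\database'=\emptyset$ these classes are disjoint and exhaust all such trees, so rearranging the finite sum gives
\[
\mtann(\ontology,\database\cup\database',\semiringshort,\annot\cup\annot',\fact)= \sum_{t\in T_D}\ann(t)\plus \sum_{t\in T_{D'}}\ann(t)\plus \sum_{t\in T_{D,D'}}\ann(t).
\]
The key step is to identify the first two partial sums. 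The crucial observation is that being non-recursive is a purely structural property of a tree's fact labels, independent of the underlying database. Hence a non-recursive tree \wrt $\database\cup\database'$ all of whose leaves lie in $\database$ is exactly a non-recursive derivation tree \wrt $(\database,\semiringshort,\annot)$, and conversely; moreover, because $\database\cap\database'=\emptyset$, the map $\annot\cup\annot'$ agrees with $\annot$ on $\database$, so $\ann(t)$ is unchanged. This yields $\sum_{t\in T_D}\ann(t)=\mtann(\ontology,\database,\semiringshort,\annot,\fact)$, and symmetrically $\sum_{t\in T_{D'}}\ann(t)=\mtann(\ontology,\database',\semiringshort,\annot',\fact)$. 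Setting $e=\sum_{t\in T_{D,D'}}\ann(t)$ then delivers the decomposition required by the Insertion Property, namely $\mtann(\ontology,\database,\semiringshort,\annot,\fact)\plus\mtann(\ontology,\database',\semiringshort,\annot',\fact)\sqsubseteq\mtann(\ontology,\database\cup\database',\semiringshort,\annot\cup\annot',\fact)$.

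I do not anticipate a real obstacle. The only point requiring care, relative to the $\atann$ case, is checking that the restriction to non-recursive trees is compatible with the leaf-origin partition, \ie that cutting down to non-recursive trees commutes with the three-way split. Since non-recursiveness depends only on the fact labels along paths and not on which database supplies a given leaf, this compatibility is immediate, and the $\atann$ argument carries over with ``non-recursive'' inserted throughout.
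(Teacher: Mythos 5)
Your proof is correct and follows essentially the same route as the paper's: the paper likewise reuses the three-way leaf-origin partition $T_D$, $T_{D'}$, $T_{D,D'}$ from the $\atann$ case, restricts each class to non-recursive trees, and justifies the identification of the first two partial sums by exactly your observation that non-recursiveness is a structural property preserved when passing between $\database\cup\database'$ and the sub-databases (using $\database\cap\database'=\emptyset$). Your additional remarks on finiteness of the set of non-recursive trees and well-definedness over any commutative semiring are sound and merely make explicit what the paper leaves implicit.
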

\begin{proof}
	We use here the same notations as used in the proof of Proposition~\ref{prop:in_nr}. 
	By similar arguments we have
	\begin{equation}
		\mtann(\ontology,\database,\semiringshort,\annot,\fact)= 
		\sum_{\substack{ t\in  \drvtrsig \cap T_D \\ \text{is non-recursive}} }  \ann(t)  +\sum_{\substack{ t\in  \drvtrsig\cap T_{D'}\\ \text{is non-recursive} }  }\ann(t)  
		+\sum_{\substack{ t\in  \drvtrsig\cap T_{D,D'}\\ \text{is non-recursive} }  } \ann(t)  
	\end{equation}	
	We note that this holds because of the non-recursiveness of trees that are taken into account and since $D\cap D' = \emptyset$.
	The rest of the proof is similar to that of Proposition~\ref{prop:in_nr}.
\end{proof}

\subsection{Deletion}
$\prov$ satisfies the \emph{Deletion Property} if for every provenance semiring $\mi{Prov}(\semiringVars)$ and $\database'\subseteq\database$, if $\annot'$ is the restriction of $\annot_X$ to $\database'$ and $\Delta=\database\setminus\database'$,  
then 
$\prov(\ontology, \database',\mi{Prov}(\semiringVars), \annot', \fact)$ is equal to the partial evaluation of $\prov(\ontology, \database, \mi{Prov}(\semiringVars), \annot_X, \fact)$ obtained by setting the annotations of facts in $\Delta$ to $\zero$: $\prov(\ontology, \database, \mi{Prov}(\semiringVars), \annot_X, \fact)[\{\annot_X(x) = \zero\}_{x \in \Delta}]$.

\begin{proposition}\label{dp_at}
	$\atann$ satisfies the Deletion Property.
\end{proposition}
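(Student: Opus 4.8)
The plan is to unfold the definition of $\atann$ over the provenance semiring and to exploit that substituting $\zero$ for a variable annihilates exactly the monomials (equivalently, the derivation trees) that contain it. Write $\drvtr{\fact}$ for the derivation trees of $\fact$ with respect to $\ontology$ and $\database$, and $T^{\ontology}_{\database'}(\fact)$ for those with respect to $\ontology$ and $\database'$. The first, purely combinatorial, observation I would establish is that, since the leaves of a derivation tree are labelled by database facts and $\database'\subseteq\database$, the set $T^{\ontology}_{\database'}(\fact)$ is exactly the subset of $\drvtr{\fact}$ consisting of the trees all of whose leaves lie in $\database'$. For any such tree, $\ann(t)$ takes the same value whether computed with $\annot_X$ or with its restriction $\annot'$, since every leaf already belongs to $\database'$.

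Next I would treat the partial evaluation as a homomorphism. Let $\sigma$ be the map sending $\annot_X(x)$ to $\zero$ for each $x\in\Delta$ and fixing the remaining variables; it is the unique $\omega$-continuous semiring endomorphism of $\mathbb{N}^\infty\llbracket\semiringVars\rrbracket$ extending that valuation, so it commutes both with the (possibly infinite) sum defining $\atann$ and with the finite products defining the $\ann(t)$. Applying $\sigma$ to $\atann(\ontology,\database,\mi{Prov}(\semiringVars),\annot_X,\fact)$ therefore gives
\[
\sigma\bigl(\atann(\ontology,\database,\mi{Prov}(\semiringVars),\annot_X,\fact)\bigr)=\sum_{t\in\drvtr{\fact}}\ \prod_{v\text{ leaf of }t}\sigma(\annot_X(v)).
\]

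I would then split this sum according to whether a tree uses a leaf in $\Delta$. If $t$ has a leaf $v\in\Delta$, then $\sigma(\annot_X(v))=\zero$, and since $\zero$ is annihilating for $\times$ the whole product vanishes; such trees contribute $\zero$. If all leaves of $t$ lie in $\database'$, then $\sigma$ fixes each $\annot_X(v)=\annot'(v)$, so the product equals $\ann(t)$ computed with $\annot'$. By the combinatorial observation the surviving trees are exactly $T^{\ontology}_{\database'}(\fact)$, so the right-hand side equals $\sum_{t\in T^{\ontology}_{\database'}(\fact)}\ann(t)=\atann(\ontology,\database',\mi{Prov}(\semiringVars),\annot',\fact)$, which is precisely the Deletion Property.

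The only delicate step is the interchange of $\sigma$ with the infinite sum; this is where I expect the main (though mild) effort, and I would justify it by noting that $\sigma$ is $\omega$-continuous and that $\atann$ is defined on $\mathbb{N}^\infty\llbracket\semiringVars\rrbracket$ as the least upper bound of its finite partial sums, which $\sigma$ preserves. All remaining manipulations are routine rearrangements of a sum of products of leaf annotations.
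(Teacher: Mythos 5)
Your proof is correct and follows essentially the same route as the paper's: both decompose $\drvtr{\fact}$ into trees having a leaf in $\Delta$ (whose annotations are annihilated by setting those variables to $\zero$) and trees with all leaves in $\database'$ (whose annotations are unchanged and sum exactly to $\atann(\ontology,\database',\mi{Prov}(\semiringVars),\annot',\fact)$). The only difference is that you explicitly justify exchanging the partial evaluation with the infinite sum by viewing it as an $\omega$-continuous homomorphism, a step the paper's proof leaves implicit.
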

\begin{proof}
	Let us denote the set of derivation trees of $\alpha$ \wrt $\ontology$ and $\database$ whose has at least one leaf from $\Delta$ by $T_{\Delta}$, and all other derivation trees of $\alpha$ (\ie those which have all their leaves in $D'$) by $T$.
	Then, by definition, we have 
	\begin{equation}
		\atann(\ontology,\database,\mi{Prov}(\semiringVars),\annot_X,\fact)= 
		\sum_{ t\in  \drvtrsig \cap T_{\Delta}}   \ann(t)  +\sum_{ t\in  \drvtrsig\cap T}   \ann(t)
	\end{equation}
	Therefore, the partial evaluation would result in 
	$
	\sum_{ t\in  \drvtrsig\cap T}   \ann(t)
	$
	which is, by definition, equals to 
	$	\atann(\ontology,\database',\mi{Prov}(\semiringVars),\annot',\fact)$.
\end{proof}

\begin{proposition}
	$\mtann$ satisfies the Deletion Property.
\end{proposition}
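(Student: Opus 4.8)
The plan is to adapt, essentially verbatim, the argument used for $\atann$ in Proposition~\ref{dp_at}, restricting attention throughout to non-recursive trees. The key preliminary observation I would establish is that non-recursiveness is a purely \emph{structural} property of a derivation tree: whether a tree contains two nodes labeled by the same fact with one a descendant of the other depends only on the fact-labels of its nodes, not on the database over which it is built. Combined with the fact that the internal nodes of a derivation tree are constrained solely by the rules of $\ontology$ and by homomorphisms (the database entering only through the leaf labels), this shows that the non-recursive derivation trees of $\fact$ \wrt $\ontology$ and $\database'$ are \emph{exactly} the non-recursive derivation trees of $\fact$ \wrt $\ontology$ and $\database$ all of whose leaves lie in $\database'$, i.e.\ none of whose leaves lie in $\Delta$.

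With this identification in hand, I would reuse the partition from Proposition~\ref{dp_at}: let $T_\Delta$ denote the derivation trees of $\fact$ \wrt $\ontology$ and $\database$ that have at least one leaf in $\Delta$, and let $T$ collect the remaining ones (those with all leaves in $\database'$). By definition of $\mtann$,
\[
\mtann(\ontology,\database,\mi{Prov}(\semiringVars),\annot_X,\fact)=
\sum_{\substack{ t\in  \drvtrsig \cap T_{\Delta}\\ \text{is non-recursive}}}   \ann(t)
+\sum_{\substack{ t\in  \drvtrsig\cap T\\ \text{is non-recursive}}}   \ann(t).
\]
The partial evaluation that sets the annotation of every fact in $\Delta$ to $\zero$ annihilates each summand of the first sum, since any $t\in T_\Delta$ has a leaf in $\Delta$, so its monomial $\ann(t)$ contains the factor $\zero$. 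What survives is precisely the second sum, which by the structural observation above equals $\mtann(\ontology,\database',\mi{Prov}(\semiringVars),\annot',\fact)$. This is exactly the statement of the Deletion Property.

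The only step requiring genuine care is the structural observation itself, namely that moving from $\database$ to $\database'$ neither creates nor destroys non-recursive trees except by discarding those with a leaf in $\Delta$. I would justify it in two short moves: first, derivation trees over $\database'$ coincide with derivation trees over $\database$ whose leaves all belong to $\database'$ (the rule/homomorphism conditions are database-independent away from the leaves); second, non-recursiveness, being determined entirely by the pattern of fact-labels, is invariant under this identification. Everything past this point is bookkeeping identical to the $\atann$ case, so I do not expect any further obstacle.
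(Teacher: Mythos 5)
Your proposal is correct and follows essentially the same route as the paper: the paper's proof is literally an appeal to the argument for $\atann$ (partition the trees into those with a leaf in $\Delta$ and those without, and observe the partial evaluation annihilates the former), carried out within the set of non-recursive trees. Your explicit structural observation---that non-recursiveness depends only on node labels and that derivation trees over $\database'$ are exactly those over $\database$ with all leaves in $\database'$---is precisely the (implicit) justification the paper's ``similarly'' relies on, so there is no gap.
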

\begin{proof}
	The proof is obtained similarly to that of~\ref{dp_at}.
\end{proof}


\section{Proofs of Table \ref{fig:prop-def} Results, $\provmodtot$ and $\provmodmon$ Cases}\label{app:proofTableMod}

We prove here the positive results in the $\provmodtot$ and $\provmodmon$ columns in Table \ref{fig:prop-def} and give counter-examples to show that $\provmodtot$ does not satisfy the Necessary Facts Property and the Any $\omega$-Continuous Semiring Property. Counter-examples are given in Section \ref{sec:analysis} for the other properties not satisfied by $\provmodtot$ or $\provmodmon$.  
We will make use of the lemmas shown in Appendix \ref{app:model-based-relations}.

\subsection{$\provmodtot$ Case}\label{app:proofs-for-provmodtot-properties}

\begin{proposition}
	$\provmodtot$ satisfies the Boolean Compatibility Property.
\end{proposition}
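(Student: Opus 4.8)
The plan is to avoid reasoning about annotated models directly and instead reduce the claim to the already-established behaviour of $\atann$, exploiting the fact that the semiring $\mi{PosBool}(\semiringVars)$ is idempotent. Concretely, I would first record the algebraic properties of $\mi{PosBool}(\semiringVars)$: its addition $\vee$ and multiplication $\wedge$ are commutative, the semiring is $\plus$-idempotent (indeed also $\times$-idempotent and absorptive), and it is $\omega$-continuous. I would also check that $\mi{PosBool}(\semiringVars)$ lies in the semiring domain of $\provmodtot$, i.e.\ that greatest lower bounds exist: under the natural order $\sqsubseteq$, which here coincides with the lattice order induced by $\vee$ (so $a \sqsubseteq b$ iff $a \vee b = b$), the greatest lower bound of any family is its meet, computed by conjunction. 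This guarantees that $\provmodtot(\ontology,\database,\mi{PosBool}(\semiringVars),\annot_X,\fact)$ is well defined.

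Next I would invoke Proposition~\ref{prop:modelBasedSameAsTreeBasedForIdempotent}: since $\mi{PosBool}(\semiringVars)$ is a commutative $\plus$-idempotent $\omega$-continuous semiring, $\provmodtot$, $\provmodmon$ and $\atann$ all coincide on it. Hence $\provmodtot(\ontology,\database,\mi{PosBool}(\semiringVars),\annot_X,\fact) = \atann(\ontology,\database,\mi{PosBool}(\semiringVars),\annot_X,\fact)$ for every $\ontology$, $(\database,\mi{PosBool}(\semiringVars),\annot_X)$ and $\fact$. Finally, by Proposition~\ref{prop:bc_at} (which records that $\atann$ satisfies the Boolean Compatibility Property, following \cite{DBLP:conf/pods/GreenT17}), the right-hand side equals $\bigvee_{D'\subseteq\database,\ \ontology,D'\models\fact}\bigwedge_{\beta\in D'}\annot_X(\beta)$, which is exactly the expression required by Property~\ref{prop-boolean}. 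Chaining the two equalities yields the claim.

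The only real obstacle I anticipate is the bookkeeping for domain membership: Proposition~\ref{prop:modelBasedSameAsTreeBasedForIdempotent} presupposes that $\provmodtot$ is defined on the semiring at hand, so I must make sure that the infimum condition genuinely holds for $\mi{PosBool}(\semiringVars)$ even when $\semiringVars$ is infinite, i.e.\ that arbitrary meets of positive Boolean expressions stay within the semiring. Everything else is a direct application of results already in place: the identification of $\atann$ with the subset-based Boolean provenance is precisely Proposition~\ref{prop:bc_at} and needs no re-proof here. If one preferred a self-contained argument avoiding the idempotence reduction, the alternative would be to characterise the models $(\inter,\modannot^I)$ over $\mi{PosBool}(\semiringVars)$ directly and show that their infimum at $\fact$ matches the disjunction over entailing subsets; but the reduction above is shorter and reuses the machinery already developed for $\atann$.
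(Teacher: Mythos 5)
Your proof is correct and matches the paper's own argument exactly: both reduce the claim via Proposition~\ref{prop:modelBasedSameAsTreeBasedForIdempotent} (coincidence of $\provmodtot$ and $\atann$ on commutative $\plus$-idempotent $\omega$-continuous semirings, of which $\mi{PosBool}(\semiringVars)$ is one) and then invoke Proposition~\ref{prop:bc_at} for $\atann$. The extra care you take about greatest lower bounds existing in $\mi{PosBool}(\semiringVars)$ is reasonable bookkeeping that the paper leaves implicit, but it does not change the substance of the argument.
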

\begin{proof}
	Since $\mi{PosBool}(\semiringVars)$ is a commutative $\omega$-continuous $\kplus$-idempotent semiring, by Proposition \ref{prop:modelBasedSameAsTreeBasedForIdempotent}, $\provmodtot(\ontology, \database, \mi{PosBool}(\semiringVars), \annot,\fact)= \atann(\ontology, \database,\mi{PosBool}(\semiringVars), \annot,\fact)$. 
	It follows by Proposition \ref{prop:bc_at} that 
	$\provmodtot(\ontology, \database,\mi{PosBool}(\semiringVars), \annot_X, \fact)=\bigvee_{D'\subseteq \database, \ontology, D' \models \fact}\bigwedge_{\beta\in D'}\annot_X(\beta)$. 
\end{proof}

\begin{proposition}\label{prop:modtot-sat-self}
	$\provmodtot$ satisfies the Self Property.
\end{proposition}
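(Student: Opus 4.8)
The plan is to unfold the definition of $\provmodtot$ as an infimum over annotated models and then exhibit $\annot(\fact)$ as a lower bound of the set being minimized. Recall that the Self Property asks, for $\fact\in\database$, that $\annot(\fact)\sqsubseteq\provmodtot(\ontology,\database,\semiringshort,\annot,\fact)$, and that by definition $\provmodtot(\ontology,\database,\semiringshort,\annot,\fact)=\inf(\{\modannot^I(\fact)\mid(\inter,\modannot^I)\models(\ontology,\database,\semiringshort,\annot)\})$.

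First I would invoke point (1) of the definition of a $\semiringshort$-annotated model: every model $(\inter,\modannot^I)$ of $\ontology$ and $(\database,\semiringshort,\annot)$ satisfies $\annot(\beta)\sqsubseteq\modannot^I(\beta)$ for every $\beta\in\database$. In particular, since $\fact\in\database$, we obtain $\annot(\fact)\sqsubseteq\modannot^I(\fact)$ for \emph{every} model $(\inter,\modannot^I)$. Thus $\annot(\fact)$ is a lower bound, in the $\sqsubseteq$-order, of the set $\{\modannot^I(\fact)\mid(\inter,\modannot^I)\models(\ontology,\database,\semiringshort,\annot)\}$.

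It then remains to note that this set is non-empty, so that its greatest lower bound is exactly $\provmodtot(\ontology,\database,\semiringshort,\annot,\fact)$; non-emptiness follows from the explicit model constructed in Lemma~\ref{provmodtot-upperbound}. By the defining property of the greatest lower bound, any lower bound $z'$ of a set satisfies $z'\sqsubseteq\inf$; applying this with $z'=\annot(\fact)$ gives $\annot(\fact)\sqsubseteq\provmodtot(\ontology,\database,\semiringshort,\annot,\fact)$, which is the Self Property.

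I expect no real obstacle here: the result is a direct consequence of the first model condition together with the fact that $\provmodtot$ is defined as an infimum over models. The only genuinely necessary bookkeeping step is confirming that the family of models is non-empty so that the infimum is well-defined, which is already supplied by Lemma~\ref{provmodtot-upperbound}.
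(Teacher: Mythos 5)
Your proof is correct and takes essentially the same approach as the paper's: both invoke point (1) of the definition of $\semiringshort$-annotated models to get $\annot(\fact)\sqsubseteq\modannot^I(\fact)$ for every model $(\inter,\modannot^I)$, and then conclude from the definition of $\provmodtot$ as a greatest lower bound. The paper compresses the lower-bound step into ``it follows straightforwardly from the definition of $\provmodtot$''; your explicit check of non-emptiness via Lemma~\ref{provmodtot-upperbound} is harmless extra bookkeeping.
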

\begin{proof}
	If $\fact\in \database$, for every model $(\inter,\modannot^I)$ of $(\database, \semiringshort, \annot)$, $\annot(\fact)\sqsubseteq\modannot^I(\fact)$ so it follows straightforwardly from the definition of $\provmodtot$ that $\annot(\fact)\sqsubseteq\provmodtot(\ontology, \database, \semiringshort, \annot,\fact)$.
\end{proof}

\begin{proposition}
	$\provmodtot$ satisfies the Parsimony Property. 
\end{proposition}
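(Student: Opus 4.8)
The plan is to prove the two inequalities $\annot(\fact) \sqsubseteq \provmodtot(\ontology, \database, \semiringshort, \annot,\fact)$ and $\provmodtot(\ontology, \database, \semiringshort, \annot,\fact) \sqsubseteq \annot(\fact)$, and then to conclude equality by antisymmetry of $\sqsubseteq$. Antisymmetry is available in this setting because $\provmodtot$ is defined only on semirings for which greatest lower bounds exist, so Lemma~\ref{lem:greatest-lower-bound-implies-two-ways-ineq-is-eq} applies and yields $\provmodtot(\ontology, \database, \semiringshort, \annot,\fact) = \annot(\fact)$ from the two inequalities.

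The first inequality is immediate from the Self Property: by Proposition~\ref{prop:modtot-sat-self} we have $\annot(\fact) \sqsubseteq \provmodtot(\ontology, \database, \semiringshort, \annot,\fact)$ for every $\fact \in \database$, which holds here by the hypothesis of the Parsimony Property.

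For the reverse inequality, I would exhibit a single annotated model whose annotation of $\fact$ equals $\annot(\fact)$; since $\provmodtot$ is the infimum of $\modannot^I(\fact)$ over all models, any such model witnesses $\provmodtot(\ontology, \database, \semiringshort, \annot,\fact) \sqsubseteq \annot(\fact)$. The natural candidate is the model $(\inter_0, \modannot^{I_0})$ of Lemma~\ref{provmodtot-upperbound}, given by $\inter_0 = \{\beta \mid \ontology, \database \models \beta\}$ and $\modannot^{I_0}(\beta) = \atann(\ontology, \database, \semiringshort, \annot, \beta)$. The key computation is $\modannot^{I_0}(\fact) = \atann(\ontology, \database, \semiringshort, \annot, \fact) = \annot(\fact)$: since $\fact$ does not occur in any rule head of the grounding $\ontology_\database$, it cannot be produced by any rule application, so --- exactly as in the proof of Proposition~\ref{prop:tree-sat-parsimony} for $\atann$ --- the only element of $\drvtrsig$ is the single node labelled $\fact$, whence $\atann(\ontology, \database, \semiringshort, \annot, \fact) = \annot(\fact)$.

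I do not expect a genuine obstacle: the substance is carried entirely by Lemma~\ref{provmodtot-upperbound} (existence of an $\atann$-valued model), the single-tree argument imported from Proposition~\ref{prop:tree-sat-parsimony}, and the antisymmetry granted by Lemma~\ref{lem:greatest-lower-bound-implies-two-ways-ineq-is-eq}. The only point requiring slight care is to observe that the grounding condition on $\fact$ genuinely blocks \emph{every} rule instantiation (including those mapping distinct body variables to the same element of $\domain(\database)$), which is precisely what rules out any non-trivial derivation tree for $\fact$ and thus makes the single-node tree the unique one.
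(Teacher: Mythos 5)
Your proof is correct and takes essentially the same route as the paper's: the Self Property (Proposition~\ref{prop:modtot-sat-self}) gives $\annot(\fact)\sqsubseteq\provmodtot(\ontology,\database,\semiringshort,\annot,\fact)$, and the $\atann$-annotated model of Lemma~\ref{provmodtot-upperbound}, combined with the single-node derivation-tree argument of Proposition~\ref{prop:tree-sat-parsimony}, witnesses the reverse bound, whence equality. The only (cosmetic) difference is that you make the final antisymmetry step explicit via Lemma~\ref{lem:greatest-lower-bound-implies-two-ways-ineq-is-eq}, which the paper leaves implicit in its concluding ``it follows'' sentence.
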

\begin{proof}
	Assume that $\fact\in\database$ and does not occur in any rule head in the grounding $\ontology_\database$ of $\ontology$ \wrt $\database$. 
	By Proposition \ref{prop:modtot-sat-self}, $\annot(\fact)\sqsubseteq \provmodtot(\ontology, \database, \semiringshort, \annot,\fact)$.  
	Moreover, by Lemma \ref{provmodtot-upperbound}, 
	the interpretation $(\inter_0,\modannot^{I_0})$ defined by $\inter_0=\{\beta\mid \ontology,\database\models \beta\}$ and $\modannot^{I_0}(\beta)=\atann(\ontology, \database, \semiringshort, \annot,\beta)$ for every $\beta\in \inter_0$
	is a model of $\ontology$ and $(\database,\semiringshort, \annot)$, and by Proposition \ref{prop:tree-sat-parsimony} $\atann(\ontology, \database, \semiringshort, \annot,\fact)=\annot(\fact)$ so $\modannot^{I_0}(\fact)=\annot(\fact)$. 
	It follows that $\provmodtot(\ontology, \database, \semiringshort, \annot,\fact)= \annot(\fact)$.
\end{proof}

\begin{lemma}\label{lem:provmodtot-prov-is-model}
	The interpretation $(\inter_0,\modannot^{I_0})$ defined by $\inter_0=\{\fact\mid \ontology,\database\models \fact\}$ and $\modannot^{I_0}(\fact)=\provmodtot(\ontology, \database, \semiringshort, \annot,\fact)$ for every $\fact\in \inter_0$ is a model of $\ontology$ and $(\database, \semiringshort, \annot)$. 
\end{lemma}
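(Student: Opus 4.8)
The plan is to verify directly the two defining conditions of a $\semiringshort$-annotated model for the pair $(\inter_0,\modannot^{I_0})$, exploiting that each value $\modannot^{I_0}(\fact)=\provmodtot(\ontology,\database,\semiringshort,\annot,\fact)$ is, by definition, the greatest lower bound of $\modannot^I(\fact)$ over all models $(\inter,\modannot^I)$ of $\ontology$ and $(\database,\semiringshort,\annot)$. The guiding intuition is that a pointwise infimum of models is again a model; the only real work is checking that the monotonicity of the semiring operations with respect to $\sqsubseteq$ survives the passage to the infimum.

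Condition (1) is immediate: every $\fact\in\database$ satisfies $\ontology,\database\models\fact$ (via the single-node derivation tree), so $\database\subseteq\inter_0$, and $\annot(\fact)\sqsubseteq\modannot^{I_0}(\fact)$ is precisely the Self Property already established in Proposition~\ref{prop:modtot-sat-self}, whose argument is that $\annot(\fact)$ is a lower bound of $\{\modannot^I(\fact)\}$ and hence lies below the infimum. For condition (2), fix a rule $\phi(\vec{x},\vec{y})\rightarrow H(\vec{x})$ in $\ontology$ and a homomorphism $h:\phi(\vec{x},\vec{y})\mapsto\inter_0$, and write $\alpha=h(H(\vec{x}))$. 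Since $\inter_0$ is the set of entailed facts, it is closed under rule application, so $\alpha\in\inter_0$, which discharges the membership requirement of condition (2).

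The crux is the annotation inequality
\[\sum_{h':\phi(\vec{x},\vec{y})\mapsto \inter_0,\,h'(\vec{x})=h(\vec{x})}\ \prod_{\beta\in h'(\phi(\vec{x},\vec{y}))}\modannot^{I_0}(\beta)\ \sqsubseteq\ \modannot^{I_0}(\alpha).\]
As the right-hand side is the infimum of $\modannot^I(\alpha)$ over all models, it suffices to show the left-hand side is a lower bound, that is, $\sqsubseteq\modannot^I(\alpha)$ for an arbitrary fixed model $(\inter,\modannot^I)$; the claim then follows by taking the infimum. I would chain three monotonicity steps. First, $\inter_0\subseteq\inter$ by Lemma~\ref{ann-model-entailment}, so each $h'$ mapping into $\inter_0$ is in particular a homomorphism into $\inter$ agreeing with $h$ on $\vec{x}$. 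Second, for each such $h'$ and each $\beta\in h'(\phi(\vec{x},\vec{y}))$ we have $\beta\in\inter_0$, hence $\modannot^{I_0}(\beta)=\provmodtot(\ontology,\database,\semiringshort,\annot,\beta)\sqsubseteq\modannot^I(\beta)$; termwise monotonicity of the finite product and of the sum then gives $\sum_{h'\mapsto\inter_0}\prod_{\beta}\modannot^{I_0}(\beta)\sqsubseteq\sum_{h'\mapsto\inter_0}\prod_{\beta}\modannot^I(\beta)$. Third, since the homomorphisms into $\inter_0$ form a subfamily of those into $\inter$, and adding further summands only increases a sum in the $\sqsubseteq$-order (as $a\sqsubseteq a\plus b$ for every $b$), this sum is $\sqsubseteq\sum_{h''\mapsto\inter,\,h''(\vec{x})=h(\vec{x})}\prod_{\beta}\modannot^I(\beta)$, which is $\sqsubseteq\modannot^I(\alpha)$ by condition (2) for the model $(\inter,\modannot^I)$.

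The main obstacle I anticipate is the careful treatment of the possibly infinite sums over homomorphisms: I must justify that both termwise domination and passage to a subfamily of summands respect $\sqsubseteq$ when a sum is the supremum of its finite partial sums. This is exactly where $\omega$-continuity enters, since each finite partial sum can be compared directly and the supremum of the smaller directed family lies below that of the larger. Products raise no such issue because rule bodies are finite. With these monotonicity facts in hand, the three-step chaining and the concluding infimum argument are routine.
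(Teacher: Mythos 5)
Your proof is correct and takes essentially the same route as the paper's: condition (1) via the Self Property, membership in $\inter_0$ via closure of the entailed facts under rules, and the annotation inequality by showing the left-hand side is a lower bound of $\modannot^I(\alpha)$ over every model $(\inter,\modannot^I)$ (using $\inter_0\subseteq\inter$ from Lemma~\ref{ann-model-entailment}, termwise monotonicity of $\sqsubseteq$ under finite sums and products, and the fact that a sub-sum is $\sqsubseteq$ the full sum) before passing to the infimum. Your additional care about infinite sums and $\omega$-continuity is a sound elaboration of a point the paper leaves implicit rather than a different argument.
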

\begin{proof}
	Point (1) of the definition is easy to check: $\database\subseteq \inter_0$ and for every $\fact\in\database$, $ \annot(\fact)\sqsubseteq \modannot^{I_0}(\fact)=\provmodtot(\ontology, \database, \semiringshort, \annot,\fact)$ by Proposition \ref{prop:modtot-sat-self}. 
	
	For point (2), let $\phi(\vec{x},\vec{y} ) \rightarrow H(\vec{x})$ be a rule in $\ontology$ and $h$ be a homomorphism from $\phi(\vec{x},\vec{y})$ to $\inter_0$. 
	By construction of $\inter_0$, for every $\beta\in h(\phi(\vec{x},\vec{y}))$, $\ontology,\database\models \beta$.
	Hence $\ontology,\database\models h(\phi(\vec{x},\vec{y}))$ so $\ontology,\database\models h(H(\vec{x}))$. 
	It follows that $h(H(\vec{x}))\in \inter_0$. 
	
	Let $h(H(\vec{x}))=\alpha$, and let $\Smc$ be the set of all homomorphisms $h':\phi(\vec{x},\vec{y})\mapsto \inter_0$ such that $h'(\vec{x})=h(\vec{x})$. For each such $h'\in \Smc$, let $\gamma^{h'}_1,\dots,\gamma^{h'}_{k_{h'}}$ be the set of facts from $\inter_0$ such that $h'(\phi(\vec{x},\vec{y}))=\gamma^{h'}_1\wedge\dots\wedge \gamma^{h'}_{k_{h'}}$. 
	Let $(\inter,\modannot^I)$ be a model of $\ontology$ and $(\database, \semiringshort, \annot)$. 
	\begin{enumerate}
		\item By definition of $\provmodtot$, for every $h'$ and $i$, $\provmodtot(\ontology, \database, \semiringshort, \annot,\gamma^{h'}_i)\sqsubseteq \modannot^I(\gamma^{h'}_i)$.
		\item Since $(\inter,\modannot^I)$ is a model of $\ontology$ and $(\database, \semiringshort, \annot)$, by definition of $\inter_0$ and Lemma \ref{ann-model-entailment}, $\inter_0\subseteq \inter$. It follows in particular that all homomorphisms from $\Smc$ are also homomorphisms from $\phi(\vec{x},\vec{y})$ to $\inter$. Thus $\Sigma_{h'\in\Smc} \Pi_{i=1}^{k_{h'}} \modannot^{\inter}(\gamma^{h'}_i) \sqsubseteq \modannot^{\inter}(\alpha)$.
		\item (1) and (2) imply that $\Sigma_{h'\in\Smc} \Pi_{i=1}^{k_{h'}} \provmodtot(\ontology, \database, \semiringshort, \annot,\gamma^{h'}_i) \sqsubseteq \modannot^{\inter}(\alpha)$. 
	\end{enumerate}
	Hence  $\Sigma_{h'\in\Smc} \Pi_{i=1}^{k_{h'}} \provmodtot(\ontology, \database, \semiringshort, \annot,\gamma^{h'}_i) \sqsubseteq \provmodtot(\ontology, \database, \semiringshort, \annot,\alpha)$, \ie  $\Sigma_{h':\phi(\vec{x},\vec{y})\mapsto \inter_0, h'(\vec{x})=h(\vec{x})} \Pi_{\beta\in h'(\phi(\vec{x},\vec{y}))} \modannot^{\inter_0}(\beta) \sqsubseteq \modannot^{\inter_0}(h(H(\vec{x}))$.  
\end{proof}

\begin{proposition}
	$\provmodtot$ satisfies the Joint Use Property.
\end{proposition}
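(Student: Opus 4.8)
The plan is to prove the two inequalities $\provmodtot(\ontology', \database, \semiringshort, \annot,\goal)\sqsubseteq \prod_{j=1}^n \provmodtot(\ontology, \database, \semiringshort, \annot,\fact_j)$ and $\prod_{j=1}^n \provmodtot(\ontology, \database, \semiringshort, \annot,\fact_j) \sqsubseteq \provmodtot(\ontology', \database, \semiringshort, \annot,\goal)$ separately, then conclude equality by Lemma~\ref{lem:greatest-lower-bound-implies-two-ways-ineq-is-eq} (antisymmetry of $\sqsubseteq$, available since $\provmodtot$ is defined only on semirings with well-defined greatest lower bounds). Write $P\df\prod_{j=1}^n \provmodtot(\ontology, \database, \semiringshort, \annot,\fact_j)$. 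Since $\goal$ is fresh and occurs only in the head of the single new rule $\bigwedge_{j=1}^n\fact_j\rightarrow\goal$, I would first record that $\ontology',\database\models\goal$ iff $\ontology,\database\models\fact_j$ for all $j$. If some $\fact_j$ is not entailed, then $\provmodtot(\ontology,\database,\semiringshort,\annot,\fact_j)=0$ by Lemma~\ref{ann-model-null-prov}, so $P=0$, and likewise $\provmodtot(\ontology',\database,\semiringshort,\annot,\goal)=0$; both sides vanish. So I assume all $\fact_j$ are entailed, which by Lemma~\ref{ann-model-entailment} means $\fact_j\in\inter$ in every model of $\ontology$ and $\goal\in\inter$ in every model of $\ontology'$.

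For the first inequality I would exhibit one model of $\ontology'$ whose annotation of $\goal$ equals $P$. I take the interpretation $(\inter_0,\modannot^{I_0})$ of Lemma~\ref{lem:provmodtot-prov-is-model}, a model of $\ontology$ with $\modannot^{I_0}(\fact)=\provmodtot(\ontology,\database,\semiringshort,\annot,\fact)$, and extend it to $(\inter_0\cup\{\goal\},\modannot')$ by setting $\modannot'(\goal)\df P=\prod_{j=1}^n\modannot^{I_0}(\fact_j)$. Checking that this is a model of $\ontology'$ is routine: the rules of $\ontology$ are unaffected because $\goal$ uses a fresh predicate, so adding $\goal$ creates no new body homomorphisms and alters no relevant annotation; and for the new rule the unique (identity) body homomorphism produces exactly the requirement $P\sqsubseteq P$. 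As $\provmodtot(\ontology',\database,\semiringshort,\annot,\goal)$ is the infimum over all models of $\ontology'$, it is $\sqsubseteq \modannot'(\goal)=P$.

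For the second inequality I would show that $P$ is a \emph{lower bound} of $\{\modannot^I(\goal)\mid(\inter,\modannot^I)\models(\ontology',\database,\semiringshort,\annot)\}$, whence $P\sqsubseteq\provmodtot(\ontology',\database,\semiringshort,\annot,\goal)$ since the latter is the greatest lower bound. Fix such a model $(\inter,\modannot^I)$; it is also a model of $\ontology$, so by definition of the infimum $\provmodtot(\ontology,\database,\semiringshort,\annot,\fact_j)\sqsubseteq\modannot^I(\fact_j)$ for each $j$. Multiplication is monotone for $\sqsubseteq$ (if $a+c=b$ then $ad+cd=bd$, so $a\sqsubseteq b$ implies $ad\sqsubseteq bd$), giving $P\sqsubseteq\prod_{j=1}^n\modannot^I(\fact_j)$. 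Applying condition~(2) of the model definition to the new rule (its identity body homomorphism exists since every $\fact_j\in\inter$) yields $\prod_{j=1}^n\modannot^I(\fact_j)\sqsubseteq\modannot^I(\goal)$, so $P\sqsubseteq\modannot^I(\goal)$ by transitivity.

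The main obstacle is not a calculation but correctly marrying the order-theoretic infimum structure with the two bounds: the $\sqsubseteq P$ direction needs a single concrete witnessing model (reusing Lemma~\ref{lem:provmodtot-prov-is-model} and verifying the extension remains a model), while the $P\sqsubseteq$ direction hinges on monotonicity of $\times$ under $\sqsubseteq$ holding uniformly across all models of $\ontology'$. Dispatching the non-entailment case first is essential so that $\goal$ and all $\fact_j$ genuinely belong to the interpretations involved, which is what legitimizes invoking condition~(2) for the new rule in both directions.
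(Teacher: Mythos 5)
Your proof is correct and follows essentially the same route as the paper's: the same dispatch of the non-entailed case via Lemma~\ref{ann-model-null-prov}, the same lower-bound argument over all models of $\ontology'$ (restricting to $\ontology$ and applying condition~(2) to the new rule), and the same witness model obtained by extending the interpretation of Lemma~\ref{lem:provmodtot-prov-is-model} with $\goal$ annotated by the product. The only cosmetic differences are that you make the monotonicity of $\times$ under $\sqsubseteq$ explicit and close via the antisymmetry lemma, where the paper concludes directly that a lower bound attained by a model is the infimum.
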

\begin{proof}
	Let $(\fact_1,\dots,\fact_n)$ be a tuple of facts and $\ontology'=\ontology\cup\{\bigwedge_{j=1}^n \fact_j\rightarrow\mn{goal}\}$. 
	If $\ontology,\database\not\models\fact_j$ for some $\fact_j$, then $\ontology,\database\not\models\mn{goal}$, and by Lemma \ref{ann-model-null-prov} $\provmodtot(\ontology, \database, \semiringshort, \annot,\fact_j)=0$ and $\provmodtot(\ontology', \database, \semiringshort, \annot,\mn{goal})=0$. We next assume that $\ontology,\database\models\fact_j$ for every $\fact_j$. 
	
	Let $(\inter,\modannot^I)$ be a model of $\ontology'$ and $(\database, \semiringshort, \annot)$. 
	Since $(\inter,\modannot^I)$ is also a model of $\ontology$, then by Lemma \ref{ann-model-entailment}, $\{\fact_1,\dots,\fact_n\}\subseteq \inter$, and 
	by definition of $\provmodtot$, for every $\fact_j$, it holds that $\provmodtot(\ontology, \database, \semiringshort, \annot,\fact_j)\sqsubseteq \modannot^I(\fact_j)$. 
	Moreover, since there is a homomorphism from the body of $\bigwedge_{j=1}^n \fact_j\rightarrow\mn{goal}$ to $\inter$, then $\mn{goal}\in\inter$, and $\Pi_{j=1}^n \modannot^I(\fact_j)\sqsubseteq  \modannot^I(\mn{goal})$. 
	Hence $\Pi_{j=1}^n \provmodtot(\ontology, \database, \semiringshort, \annot,\fact_j)\sqsubseteq  \modannot^I(\mn{goal})$. 
	
	Let $(\inter_0,\modannot^{I_0})$ be the model of $\ontology$ and $(\database,\semiringshort,\annot)$ defined in Lemma \ref{lem:provmodtot-prov-is-model} and $(\inter_1,\modannot^{I_1})$ be such that $\inter_1=\inter_0\cup\{\mn{goal}\}$ and $\modannot^{I_1}(\mn{goal})=\Pi_{j=1}^n \provmodtot(\ontology, \database, \semiringshort, \annot,\fact_j)$. 
	It is easy to check that $(\inter_1,\modannot^{I_1})$ is a model of $\ontology'$ and $(\database, \semiringshort, \annot)$ as it satisfies point (2) of the definition for rule $\bigwedge_{j=1}^n \fact_j\rightarrow\mn{goal}$ by construction. 
	It follows that $\provmodtot(\ontology', \database, \semiringshort, \annot,\mn{goal})=\Pi_{j=1}^n\provmodtot(\ontology, \database, \semiringshort, \annot,\fact_j)$.
\end{proof}

\begin{proposition}
	$\provmodtot$ satisfies the Non-Usable Facts Property. 
\end{proposition}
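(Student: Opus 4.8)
The plan is to exploit the canonical model of $\provmodtot$ and to show that its restriction to the usable facts is governed by a closed subsystem that never mentions the annotations of non-usable facts. Throughout, assume $\ontology,\database\models\fact$, since otherwise both sides are $0$ by Lemma~\ref{ann-model-null-prov}. Write $U$ for the set of facts usable to $\ontology,\database\models\fact$; note $\fact\in U$, as $\fact$ labels the root of each of its own derivation trees. Let $\inter_0=\{\gamma\mid\ontology,\database\models\gamma\}$, and let $(\inter_0,\modannot_0)$ and $(\inter_0,\modannot_0')$ be the canonical models of $(\ontology,\database,\semiringshort,\annot)$ and $(\ontology,\database,\semiringshort,\annot')$ given by Lemma~\ref{lem:provmodtot-prov-is-model}, so that $\modannot_0(\gamma)=\provmodtot(\ontology,\database,\semiringshort,\annot,\gamma)$ and similarly for $\modannot_0'$; the underlying set $\inter_0$ is the same for both since entailment ignores annotations. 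It then suffices to prove $\modannot_0|_U=\modannot_0'|_U$, which gives the claim at $\fact\in U$.

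First I would establish two structural facts. (A) If $\gamma\in U$ and a fact $\beta$ occurs in some derivation tree of $\gamma$, then $\beta\in U$: taking a tree $t$ of $\fact$ in which $\gamma$ labels a node, one replaces the subtree rooted at that node (or the corresponding leaf, when $\gamma\in\database$) by a derivation tree of $\gamma$ containing $\beta$, obtaining a derivation tree of $\fact$ containing $\beta$. (B) In the canonical model, every instantiation of a rule whose head is a fact $\gamma\in U$ uses only body facts in $U$: the relevant homomorphisms $h'$ map the body into $\inter_0$, so each body fact is entailed and admits a derivation tree; gluing these under the rule yields a derivation tree of $\gamma$ containing every body fact, and (A) then forces each body fact into $U$.

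Next I would characterize $\modannot_0|_U$ as the unique least function $g$ on $U\cap\inter_0$ satisfying the reduced conditions: (1') $\annot(\beta)\sqsubseteq g(\beta)$ for every usable database fact $\beta$, and (2') $\sum_{h'}\prod_{\beta}g(\beta)\sqsubseteq g(h(H))$ for every rule instantiation with head in $U$ (well-defined, since by (B) all body facts lie in $U$). That $\modannot_0|_U$ satisfies (1')--(2') is immediate from the model conditions of $(\inter_0,\modannot_0)$. For leastness, given any solution $g$ I would extend it to all of $\inter_0$: keep $g$ on $U$, and on the non-usable facts take the least solution of the remaining closure conditions, treating the usable body facts as constants with value $g$ and adding $\annot(\beta)$ as a base term for each non-usable database fact $\beta$; this least solution exists because $\semiringshort$ is $\omega$-continuous, exactly as in the convergence argument of Proposition~\ref{prop:naive}. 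The extension is a model of $(\ontology,\database,\semiringshort,\annot)$ by construction, so since $\modannot_0=\provmodtot(\ontology,\database,\semiringshort,\annot,\cdot)$ is the greatest lower bound of the model values, it lies below this extension; restricting to $U$ gives $\modannot_0|_U\sqsubseteq g$. Uniqueness of the least solution follows from antisymmetry of $\sqsubseteq$ (Lemma~\ref{lem:greatest-lower-bound-implies-two-ways-ineq-is-eq}).

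Finally, the reduced system depends only on $\ontology$, $\database$, the set $U$, and the values of $\annot$ on usable database facts, since condition (1') is the only place an annotation enters. As $\annot$ and $\annot'$ agree on usable facts, the two reduced systems coincide, so $\modannot_0|_U$ and $\modannot_0'|_U$ are the same least solution; evaluating at $\fact$ yields $\provmodtot(\ontology,\database,\semiringshort,\annot,\fact)=\provmodtot(\ontology,\database,\semiringshort,\annot',\fact)$. I expect the main obstacle to be the leastness/completion step: the closure condition of a non-usable head may mix usable and non-usable body facts, so one cannot simply reuse $\modannot_0$ on the non-usable part, and it is the $\omega$-continuous least-fixpoint completion together with the greatest-lower-bound property of $\provmodtot$ that makes the restriction argument go through.
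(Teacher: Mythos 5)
Your proof is correct, but it completes the argument along a genuinely different route than the paper. Both proofs start from the canonical model of Lemma~\ref{lem:provmodtot-prov-is-model}, whose values are exactly the $\provmodtot$ values, and both rest on the same structural observation, which you isolate as (A)/(B) and which the paper establishes inside its proof by gluing derivation trees of body facts under a rule: every rule instantiation whose head is usable has all of its body facts usable, so the model constraints restricted to usable facts form a closed subsystem that never mentions the annotation of any non-usable fact. The difference lies in how this observation is turned into equality of the two infima. The paper argues by contradiction and infinite descent: a constraint on $\modannot^{I_0}(\fact)$ ``depending'' on a non-usable fact is shown to force an infinite sequence of pairwise distinct facts inside the finite set $\inter_0$; invariance under changing non-usable annotations is then read off, with the notion of dependence and the concluding step left informal. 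You instead characterize $\provmodtot$ on the usable facts as the least solution of the reduced system, prove leastness by extending an arbitrary solution $g$ to a full model via an $\omega$-continuous Kleene-style completion on the non-usable part, and conclude by antisymmetry (Lemma~\ref{lem:greatest-lower-bound-implies-two-ways-ineq-is-eq}). This makes explicit precisely the step the paper glosses over --- why closure of the constraint system on usable facts implies that the greatest lower bound over all models is insensitive to non-usable annotations --- at the cost of the extension construction. One small imprecision on your side: what your completion actually produces (base terms plus sum-based closure, as in Proposition~\ref{prop:naive}) need not be the \emph{least} solution of the inequality constraints on the non-usable part in a non-idempotent semiring, since summing over all rule instantiations over-satisfies each individual constraint; this is harmless, because your argument only needs the extension to be \emph{some} model, not a least one.
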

\begin{proof}
	By Lemma \ref{lem:provmodtot-prov-is-model}, 
	the interpretation $(\inter_0,\modannot^{I_0})$ defined by $\inter_0=\{\fact\mid \ontology,\database\models \fact\}$ and $\modannot^{I_0}(\fact)=\provmodtot(\ontology, \database, \semiringshort, \annot,\fact)$ for every $\fact\in \inter_0$ is a model of $\ontology$ and $(\database, \semiringshort, \annot)$. 
	We show that for every $\fact$ such that $\ontology,\database\models\fact$, $\modannot^{I_0}(\fact)$ does not depend on non-usable facts, in the sense that none of the constraints of the form `$X \sqsubseteq \modannot^{I_0}(\fact)$' that $\modannot^{I_0}(\fact)$ has to fulfill according to the definition of annotated models involves non-usable facts. 
	This will show that $\provmodtot(\ontology, \database, \semiringshort, \annot,\fact)$ is not impacted by the annotations of non-usable facts. 
	
	Let $\fact$ be a fact such that $\ontology,\database\models\fact$.
	Assume for a contradiction that there exists a constraint $\Cmc$ on $\modannot^{I_0}(\fact)$ such that $\Cmc$ involves some non-usable fact. 
	\begin{enumerate}
		\item $\Cmc$ cannot be of the form `$\annot(\fact)\sqsubseteq \modannot^{I_0}(\fact)$', because it would mean that $\fact\in\database$ and hence is usable to $\ontology,\database\models\fact$, so that $\Cmc$ does not involved any non-usable fact. 
		\item Hence $\Cmc$ is of the form `$\sum_{h':\phi(\vec{x},\vec{y})\mapsto {I_0}, h'(\vec{x})=h(\vec{x})} \prod_{\beta\in h'(\phi(\vec{x},\vec{y}))} \modannot^{I_0}(\beta) \sqsubseteq \modannot^{I_0}(\fact)$' with $\datrule:=\phi(\vec{x},\vec{y} ) \rightarrow H(\vec{x})$ a rule in $\ontology$ and $h$ a homomorphism from $\phi(\vec{x},\vec{y})$ to $\inter_0$. 
		
		\item By assumption, there exists $h':\phi(\vec{x},\vec{y})\mapsto {I_0}, \ h'(\vec{x})=h(\vec{x})$ and $\beta_0\in h'(\phi(\vec{x},\vec{y}))$ such that $\modannot^{I_0}(\beta_0)$ depends on non-usable facts. 
		
		\item We can choose such $\beta_0\neq\fact$. Otherwise, if the only $\beta\in h'(\phi(\vec{x},\vec{y}))$ such that $\modannot^{I_0}(\beta)$ depends on non-usable facts is $\fact$ itself, the constraint $\Cmc$ does not add any dependance on non-usable facts to $\modannot^{I_0}(\fact)$. 
		
		\item For every $\beta\in h'(\phi(\vec{x},\vec{y}))$, since $\beta\in\inter_0$, then $\ontology,\database\models\beta$ by construction of $(\inter_0,\modannot^{I_0})$. 
		Hence there exists a derivation tree $t_0$ of $\fact$ with root $(\fact,\datrule,h')$ having children of the form $(\beta,\rho,g)$ with $\beta\in h'(\phi(\vec{x},\vec{y}))$ (including $\beta_0$), $\rho\in\ontology$, and $g$ homomorphism. 
		
		\item  By (3), there is a constraint $\Cmc_0$ on $\modannot^{I_0}(\beta_0)$ involving non-usable facts. 
		
		\item $\Cmc_0$ cannot be of the form `$\annot(\beta_0)\sqsubseteq \modannot^{I_0}(\beta_0)$' because in this case $\beta_0$ would be in $\database$ and hence usable as a leaf of $t_0\in\drvtr{\fact}$ by (5). 
		
		\item Thus $\Cmc_0$ is of the form `$\sum_{h_0':\phi_0(\vec{x},\vec{y})\mapsto {I_0}, h_0'(\vec{x})=h_0(\vec{x})} \prod_{\beta\in h_0'(\phi(\vec{x},\vec{y}))} \modannot^{I_0}(\beta) \sqsubseteq \modannot^{I_0}(\beta_0)$' with $\datrule_0:=\phi_0(\vec{x},\vec{y} ) \rightarrow H_0(\vec{x})$ a rule in $\ontology$ and $h_0$ a homomorphism from $\phi_0(\vec{x},\vec{y})$ to $\inter_0$.

		\item It follows that there exists $h_0':\phi_0(\vec{x},\vec{y})\mapsto {I_0}, \ h'_0(\vec{x})=h_0(\vec{x})$ and $\beta_1\in h_0'(\phi_0(\vec{x},\vec{y}))$ such that $\modannot^{I_0}(\beta_1)$ depends on non-usable facts. 
		
		\item We can choose such $\beta_1$ such that $\beta_1\neq\beta_0$ and $\beta_1\neq\fact$. Indeed, if the only $\beta\in h_0'(\phi_0(\vec{x},\vec{y}))$ such that $\modannot^{I_0}(\beta)$ depends on non-usable facts is equal to $\beta_0$ or $\fact$, $\Cmc_0$ does not add any dependance on non-usable facts to $\modannot^{I_0}(\beta_0)$.
		
		\item For every $\beta\in h_0'(\phi_0(\vec{x},\vec{y}))$, since $\beta\in\inter_0$, then $\ontology,\database\models\beta$ by construction of $(\inter_0,\modannot^{I_0})$. 
		Hence there exists a derivation tree $t_1$ of $\fact$ with root $(\fact,\datrule,h')$ having children of the form $(\beta,\rho,g)$ with $\beta\in h'(\phi(\vec{x},\vec{y}))$, among which $(\beta_0,\datrule_0,h'_0)$ has children of the form $(\beta,\rho,g)$ with $\beta\in h_0'(\phi_0(\vec{x},\vec{y}))$ (including $\beta_1$). 
		
		\item  By (9), there is a constraint $\Cmc_1$ on $\modannot^{I_0}(\beta_1)$ involving non-usable facts. 
		
		\item By repeating this process, we can build an infinite sequence of distinct facts $\fact,\beta_0,\beta_1,\beta_2,\dots$ that are all in $\inter_0$. This is a contradiction because $\inter_0$ is finite.
	\end{enumerate}
	We conclude that there is no constraint $\Cmc$ on $\modannot^{I_0}(\fact)$ such that $\Cmc$ involves some non-usable fact. 
	
	It follows that if $\annot'$ differs from $\annot$ only on facts that are not usable to $\ontology,\database\models\fact$, 
	$\provmodtot(\ontology, \database, \semiringshort, \annot,\fact)= \provmodtot(\ontology, \database, \semiringshort, \annot',\fact)$. 
\end{proof}

\begin{proposition}
	$\provmodtot$ satisfies the Deletion Property.
\end{proposition}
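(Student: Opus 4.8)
The plan is to realize the partial evaluation as a semiring endomorphism and prove the required equality by two $\sqsubseteq$-inequalities. Let $\eta:\mi{Prov}(\semiringVars)\to\mi{Prov}(\semiringVars)$ be the map induced by setting $\annot_X(\beta)=\zero$ for each $\beta\in\Delta$ and fixing every other variable, so that the partial evaluation in the statement is exactly $\eta(\provmodtot(\ontology,\database,\mi{Prov}(\semiringVars),\annot_X,\fact))$. I record three facts about $\eta$ used throughout: it is a semiring homomorphism, hence preserves $\sqsubseteq$ (from $a\plus c=b$ we get $\eta(a)\plus\eta(c)=\eta(b)$); it is $\omega$-continuous and satisfies $\eta(p)\sqsubseteq p$ for all $p$, since $p$ decomposes as $\eta(p)$ plus the sum of its monomials carrying a $\Delta$-variable; and $\eta\circ\annot_X$ equals $\annot'$ on $\database'$ while annihilating every $\Delta$-variable. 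Writing $P=\provmodtot(\ontology,\database,\mi{Prov}(\semiringVars),\annot_X,\fact)$ and $P'=\provmodtot(\ontology,\database',\mi{Prov}(\semiringVars),\annot',\fact)$, it suffices to show $P'\sqsubseteq\eta(P)$ and $\eta(P)\sqsubseteq P'$; then Lemma~\ref{lem:greatest-lower-bound-implies-two-ways-ineq-is-eq} yields $P'=\eta(P)$.

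For $P'\sqsubseteq\eta(P)$, I first prove a transport lemma: if $(\inter,\modannot^I)\models(\ontology,\database,\mi{Prov}(\semiringVars),\annot_X)$ then $(\inter,\eta\circ\modannot^I)\models(\ontology,\database',\mi{Prov}(\semiringVars),\annot')$. Condition (1) holds because $\database'\subseteq\database\subseteq\inter$ and, for $\beta\in\database'$, $\annot'(\beta)=\eta(\annot_X(\beta))\sqsubseteq\eta(\modannot^I(\beta))$; condition (2) follows by applying $\eta$ to each rule inequality of the full model and using that $\eta$ is a homomorphism preserving $\sqsubseteq$. Applying this to the canonical full model of Lemma~\ref{lem:provmodtot-prov-is-model}, whose value at $\fact$ is precisely $P$, produces a restricted model with value $\eta(P)$ at $\fact$, so the defining infimum gives $P'\sqsubseteq\eta(P)$.

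For $\eta(P)\sqsubseteq P'$, I show that $\eta(P)$ is a lower bound of $\{\modannot^I(\fact)\mid(\inter,\modannot^I)\models(\ontology,\database',\mi{Prov}(\semiringVars),\annot')\}$, hence lies below the infimum $P'$. Fix such a restricted model $(\inter,\modannot^I)$; note $\inter$ satisfies $\ontology$ set-theoretically. I extend it to a full model $(\inter^\ast,\modannot^\ast)$ as the least fixpoint, in the $\omega$-continuous semiring, of the annotation-aware consequence conditions seeded by $\modannot^I$ on $\inter$ together with $\annot_X$ on $\Delta$ (this satisfies condition (1) for $\database$ since $\annot_X=\annot'\sqsubseteq\modannot^I$ on $\database'$). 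Since $P$ is the infimum over full models, $P\sqsubseteq\modannot^\ast(\fact)$, so $\eta(P)\sqsubseteq\eta(\modannot^\ast(\fact))$, and it remains to prove $\eta(\modannot^\ast(\beta))\sqsubseteq\modannot^I(\beta)$ for every $\beta$ (with $\modannot^I(\beta)=\zero$ when $\beta\notin\inter$). This is an induction on the fixpoint stages: a one-step contribution $\sum\prod\modannot^{(i)}(\gamma)$ maps under $\eta$ to $\sum\prod\eta(\modannot^{(i)}(\gamma))\sqsubseteq\sum\prod\modannot^I(\gamma)$ by the induction hypothesis; any term using a fact $\gamma\in\Delta\setminus\inter$ has $\eta(\modannot^{(i)}(\gamma))\sqsubseteq\modannot^I(\gamma)=\zero$, forcing that factor to $\zero$ (Lemma~\ref{lem:greatest-lower-bound-implies-two-ways-ineq-is-eq}) so the term vanishes, while a term whose body maps into $\inter$ is bounded by $\modannot^I$ of its head via condition (2) of the restricted model. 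Passing to the limit using $\omega$-continuity of $\eta$ closes the induction.

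The main obstacle is this last inequality $\eta(P)\sqsubseteq P'$: it cannot be read off the canonical model, and since $\provmodtot$ fails Commutation with Homomorphisms in general, one genuinely has to build the extension $\modannot^\ast$ and track that every annotation increment created by reinstating the $\Delta$-facts carries a $\Delta$-variable and is therefore annihilated by $\eta$. The two facts $\eta(p)\sqsubseteq p$ and ``$a\sqsubseteq\zero$ implies $a=\zero$'' (Lemma~\ref{lem:greatest-lower-bound-implies-two-ways-ineq-is-eq}) are exactly what make the bookkeeping go through, and $\omega$-continuity of $\eta$ is what lets the stagewise bound survive the least-fixpoint limit.
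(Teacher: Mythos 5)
Your first direction is sound: the transport lemma (post-composing a model of the full annotated database with $\eta$ gives a model of the restricted one, since $\eta$ is a homomorphism fixing the variables of $\database'$), applied to the canonical model of Lemma~\ref{lem:provmodtot-prov-is-model}, correctly yields $P'\sqsubseteq\eta(P)$. The genuine gap is in the converse direction $\eta(P)\sqsubseteq P'$. Your induction claims that $\eta(\modannot^{(i)}(\beta))\sqsubseteq\modannot^I(\beta)$ survives one application of the annotation-aware consequence operator because each term whose body maps into $\inter$ is ``bounded by $\modannot^I$ of its head via condition (2)''. But condition (2) bounds each rule together with each head binding \emph{separately} by the head annotation, whereas the consequence operator \emph{sums} these groups across distinct rules and also re-adds the seed; in a non-$\plus$-idempotent semiring a sum of elements each $\sqsubseteq c$ need not be $\sqsubseteq c$. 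Concretely, take $\ontology=\{E(x)\rightarrow A(x),\ E(x)\rightarrow B(x),\ A(x)\rightarrow\goal,\ B(x)\rightarrow\goal\}$, $\database=\database'=\{E(a)\}$ (so $\Delta=\emptyset$ and $\eta$ is the identity), $\mi{Prov}(\semiringVars)=\mathbb{N}^\infty\llbracket\semiringVars\rrbracket$, $\annot_X(E(a))=e$, and the restricted model $\inter=\{E(a),A(a),B(a),\goal\}$ with $\modannot^I(\beta)=e$ for all four facts (every constraint of condition (2) reads $e\sqsubseteq e$, so this is a model, indeed the least one). One step of your fixpoint already gives $\modannot^{(1)}(\goal)\sqsupseteq\modannot^{(0)}(A(a))\plus\modannot^{(0)}(B(a))=2e$, and $2e\not\sqsubseteq e$ in $\mathbb{N}^\infty\llbracket\semiringVars\rrbracket$; hence the induction hypothesis fails at stage one, the limit satisfies $\eta(\modannot^\ast(\goal))\not\sqsubseteq\modannot^I(\goal)$, and the chain $\eta(P)\sqsubseteq\eta(\modannot^\ast(\fact))\sqsubseteq\modannot^I(\fact)$ collapses at its second link. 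Note that this happens exactly in the non-$\plus$-idempotent case, which is the only case at issue, and it is the same phenomenon that makes $\provmodtot$ violate Insertion and Commutation with Homomorphisms.

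The proposition itself is true, but it is proved along a different route in the paper, which never extends restricted models: if $\mi{Prov}(\semiringVars)$ is $\plus$-idempotent, then $\provmodtot=\atann$ by Proposition~\ref{prop:modelBasedSameAsTreeBasedForIdempotent} and the property is inherited from the $\atann$ case (Proposition~\ref{dp_at}); otherwise, Lemmas~\ref{provmodtot-upperbound} and~\ref{provmodtot-lowerbound} show that every monomial occurring in $\provmodtot(\ontology,\database,\mi{Prov}(\semiringVars),\annot_X,\fact)$ equals $\ann(t)$ for some $t\in\drvtrsig$, so the expression decomposes as $\provmodtot(\ontology,\database',\mi{Prov}(\semiringVars),\annot',\fact)\plus e$ where $e$ is a sum of annotations of trees having a leaf in $\Delta$, and the partial evaluation annihilates exactly $e$. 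If you want to keep your model-theoretic strategy, the extension of $(\inter,\modannot^I)$ must be built as $\modannot^\ast=\modannot^I\plus\delta$, where $\delta$ is the least solution collecting only the \emph{new} increments: for each rule instance, only those products of the binomial expansion of $\prod_\gamma(\modannot^I\plus\delta)(\gamma)$ in which at least one factor is a $\delta$-value or a $\Delta$-seed. Then every monomial of $\delta$ carries a $\Delta$-variable by induction, $(\inter^\ast,\modannot^I\plus\delta)$ is a full model because the $\modannot^I$-part of each constraint is absorbed by condition (2) of the restricted model and the remainder by $\delta$, and $\eta(\modannot^\ast)=\eta(\modannot^I)\sqsubseteq\modannot^I$ holds by construction rather than by an induction that re-derives, and then erroneously re-sums, the constraints $\modannot^I$ already satisfies.
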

\begin{proof}
	Let $\mi{Prov}(\semiringVars)$ be a provenance semiring, $\database'\subseteq\database$, $\annot_X'$ be the restriction of $\annot_X$ to $\database'$ and $\Delta=\database\setminus\database'$. 
	
	If $\mi{Prov}(\semiringVars)$ is $\plus\,$-idempotent, by Proposition \ref{prop:modelBasedSameAsTreeBasedForIdempotent}, 
	$\provmodtot(\ontology, \database, \mi{Prov}(\semiringVars), \annot_X,\fact)=\atann(\ontology, \database, \mi{Prov}(\semiringVars), \annot_X,\fact)$ and $\provmodtot(\ontology, \database, \mi{Prov}(\semiringVars), \annot'_X,\fact)=\atann(\ontology, \database, \mi{Prov}(\semiringVars), \annot'_X,\fact)$ so by Proposition \ref{dp_at}, $\provmodtot(\ontology, \database',\mi{Prov}(\semiringVars), \annot_X', \fact)$ is equal to the partial evaluation of $\provmodtot(\ontology, \database, \mi{Prov}(\semiringVars), \annot_X, \fact)$ obtained by setting the annotations of facts in $\Delta$ to $\zero$.
	
	If $\mi{Prov}(\semiringVars)$ is not $\plus\,$-idempotent, then $\provmodtot(\ontology, \database, \mi{Prov}(\semiringVars), \annot_X, \fact)$ is a sum of monomials $m$ such that there exists $t\in \drvtr{\fact}$ such that $m=\ann(t)$. Indeed, 
	by Lemma~\ref{provmodtot-upperbound}, $\atann(\ontology, \database,  \mi{Prov}(\semiringVars), \annot_\semiringVars,\fact)=\provmodtot(\ontology, \database,  \mi{Prov}(\semiringVars), \annot_\semiringVars,\fact)+S$ for some $S\in \mi{Prov}(\semiringVars)$, and by Lemma \ref{provmodtot-lowerbound}, for every monomial $m$ in $S$, $m\sqsubseteq \provmodtot(\ontology, \database,  \mi{Prov}(\semiringVars), \annot_\semiringVars,\fact)$ because $m$ occurs in $\atann(\ontology, \database,  \mi{Prov}(\semiringVars), \annot_\semiringVars,\fact)$ so there exists $t\in \drvtr{\fact}$ such that $m=\ann(t)$. 
	It follows that $\provmodtot(\ontology, \database, \mi{Prov}(\semiringVars), \annot_X, \fact)=\provmodtot(\ontology, \database', \mi{Prov}(\semiringVars), \annot'_X, \fact)+ e$ where $e$ is a sum of products of the form $\Pi_{\beta\text{ is a leaf of $t$}} \annot_\semiringVars(\beta)$ for some $t\in\drvtrsig\setminus T^\ontology_{\database'}(\fact)$. 
	Moreover, by definition of $\Delta$ and the sets of derivation trees $\drvtrsig$ and $T^\ontology_{\database'}(\fact)$, it holds that $\drvtrsig\setminus T^\ontology_{\database'}(\fact)= \{t\in \drvtrsig \mid \exists \beta\text{ leaf of }t, \beta\in \Delta\} $. 
	Hence $\provmodtot(\ontology, \database',\mi{Prov}(\semiringVars), \annot_X', \fact)$ is equal to the partial evaluation of $\provmodtot(\ontology, \database, \mi{Prov}(\semiringVars), \annot_X, \fact)$ obtained by setting the annotations of facts in $\Delta$ to $\zero$, which makes $e$ evaluate to $0$.
\end{proof}

The following example shows that $\provmodtot$ does not satisfy the Necessary Facts Property.
\begin{example}\label{app:example-semiring-necfacts}
	Let $\semiringshort=\semiring$ be defined as follows. We show in Lemma \ref{app:lemma-example-semiring-necfacts} that $\semiringshort$ is a commutative $\omega$-continuous semiring such that the greatest lower bound of every pair of elements is well-defined.
	\begin{itemize}
		\item $K=\{0,1,\infty, a, b, c, d, e,f\}$;
		\item $\kzero=0$, $\kone=1$;
		\item $\kplus$ is defined by
		\begin{itemize}
			\item for every $x\in K$, $0+x=x+0=x$;
			\item $b+1=1+b=a$;
			\item $c+1=1+c=a$;
			\item in every other cases, $x+y=\infty$.
		\end{itemize}
		\item $\times$ is defined by
		\begin{itemize}
			\item for every $x\in K$, $1\times x=x\times 1=x$;
			\item for every $x\in K$, $0\times x=x\times 0=0$;
			\item $d\times e=e\times d=b$;
			\item $d\times f=f\times d=c$;
			\item in every other cases, $x\times y=\infty$.
		\end{itemize}
	\end{itemize}
	Let $\ontology=\{A(x)\wedge B(x)\rightarrow \mn{goal}, A(x)\wedge C(x)\rightarrow \mn{goal}\}$ and $\database=\{A(a), B(a),C(a)\}$ with $\annot(A(a))=d$, $\annot(B(a))=e$, $\annot(C(a))=f$. $A(a)$ is the only necessary fact. We can show that $\provmodtot(\ontology,\database,\mathbb{N}^\infty\llbracket\semiringVars\rrbracket,\annot_\semiringVars,\mn{goal})=a$. 
	Indeed, the models $(\inter,\modannot^I)$ of $\ontology$ and $(\database,\semiringshort,\annot)$ are such that $d\times e=b\sqsubseteq \modannot^I(\mn{goal})$ and $d\times f=c\sqsubseteq \modannot^I(\mn{goal})$, so that the possible values for $\modannot^I(\mn{goal})$ are $a=b+1=c+1$, and $\infty=b+c$, and $a\sqsubseteq \infty$. 
	Hence $\provmodtot(\ontology,\database,\mathbb{N}^\infty\llbracket\semiringVars\rrbracket,\annot_\semiringVars,\mn{goal})\neq d\times x$ for every $x\in K$ and $\provmodtot$ does not satisfy the Necessary Facts Property.
\end{example}

\begin{lemma}\label{app:lemma-example-semiring-necfacts}
	$\semiringshort=\semiring$ defined in Example \ref{app:example-semiring-necfacts} is a commutative $\omega$-continuous semiring such that the greatest lower bound of every pair of elements is well-defined.
\end{lemma}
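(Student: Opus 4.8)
The plan is to verify the semiring axioms directly on the nine-element table, then read off the natural order $\sqsubseteq$ and use finiteness of $K$ for the two remaining claims. Commutativity of both operations is immediate, since $+$ and $\times$ are defined symmetrically. Likewise $0$ is an additive identity and an annihilator for $\times$, and $1$ is a multiplicative identity, straight from the first clauses of each definition. The substance of the argument is therefore associativity of $+$, associativity of $\times$, and distributivity, which I would establish by a finite case analysis.

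The key structural observation I would exploit is that $\infty$ acts as an absorbing element: $x+y=\infty$ for all nonzero $x,y$ except the two unordered pairs $\{1,b\}$ and $\{1,c\}$ (which give $a$), and $x\times y=\infty$ for all $x,y\notin\{0,1\}$ except the pairs $\{d,e\}$ and $\{d,f\}$ (which give $b$ and $c$). Hence in any nested sum or triple product, as soon as a single $\infty$ appears it propagates, so the only associativity and distributivity instances that are not trivially $\infty$ (or handled by the $0$/$1$ clauses) are the finitely many ones built from the listed exceptional combinations. The genuinely delicate checks are: for $+$, instances where an inner sum equals $a$, such as $(1+b)+z$ versus $1+(b+z)$, both reducing to $\infty$ for every $z\neq 0$ and agreeing at $z=0$; and for distributivity, the non-absorbing cases $1\times(1+b)=a=1+1\times b$ and $d\times(e+f)=\infty=b+c=d\times e+d\times f$. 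I expect this finite but somewhat delicate bookkeeping to be the main obstacle, since the elements $a$ (produced only by addition) and $b,c$ (produced only by multiplication) are exactly what could break an axiom if the table were slightly off; one must also confirm that $a$ never appears on one side of a distributivity instance without appearing on the other, which follows because a product equals $1$ only as $1\times 1$.

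Once $\semiringshort$ is known to be a commutative semiring, I would compute $\sqsubseteq$ explicitly and identify it as the partial order with least element $0$, greatest element $\infty$, atoms $1,b,c,d,e,f$, and covering relations $1,b,c\sqsubset a\sqsubset\infty$ and $d,e,f\sqsubset\infty$; antisymmetry is visible from this acyclic Hasse diagram, and transitivity follows from associativity of $+$. For $\omega$-continuity I would invoke finiteness of $K$: every $\omega$-chain $a_0\sqsubseteq a_1\sqsubseteq\cdots$ is eventually constant, so its supremum exists and equals its eventual value, and the two continuity equations hold because $+$ and $\times$ are monotone for $\sqsubseteq$ (an immediate consequence of the semiring axioms) and a monotone operation applied to an eventually constant chain is again eventually constant. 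Finally, for greatest lower bounds I would argue that for comparable $x,y$ the glb is the smaller element, while for incomparable $x,y$ the glb is $0$: the down-set of each atom is $\{0\}$ together with that atom, and the down-set $\{0,1,b,c,a\}$ of $a$ meets each of $\{0,d\}$, $\{0,e\}$, $\{0,f\}$ only in $0$, so incomparable elements share no nonzero lower bound. Hence every pair has a well-defined greatest lower bound, which completes the proof.
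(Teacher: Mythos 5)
Your proposal is correct, and for two of its three parts (the semiring axioms and the greatest lower bounds) it follows essentially the same route as the paper: a case analysis organized around the $0$/$1$ clauses and the observation that all other sums and products collapse to $\infty$ except the exceptional pairs $\{1,b\},\{1,c\}$ (for $+$) and $\{d,e\},\{d,f\}$ (for $\times$), followed by reading off the order $0\sqsubseteq\{1,b,c\}\sqsubseteq a\sqsubseteq\infty$, $0\sqsubseteq\{d,e,f\}\sqsubseteq\infty$ and checking that incomparable elements have only $0$ as a common lower bound. Your identification of the genuinely delicate points is sound; in particular the remark that a product equals $1$ only as $1\times 1$ is exactly what rules out the right-hand side of a distributivity instance producing $a$ when $x\neq 1$. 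Where you genuinely diverge is the $\omega$-continuity argument: the paper verifies the two continuity equations $x\plus\sup((x_i))=\sup((x\plus x_i))$ and $x\times\sup((x_i))=\sup((x\times x_i))$ by a long exhaustive case analysis over the possible eventual values of the chain, whereas you derive them abstractly from finiteness of $K$ (every $\omega$-chain in a finite poset is eventually constant, so suprema exist and equal the eventual value) together with monotonicity of $\plus$ and $\times$ with respect to $\sqsubseteq$, which is itself a formal consequence of the semiring axioms (for $\times$ via distributivity). This buys both brevity and generality: your argument shows that any finite commutative semiring whose natural preorder is antisymmetric is automatically $\omega$-continuous, a fact invisible in the paper's brute-force verification, and it is less error-prone than the paper's enumeration. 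The paper's approach, conversely, requires no structural lemma and exhibits the suprema concretely, but that is its only advantage here.
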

\begin{proof}
	First, $\semiringshort$ is a commutative semiring:
	\begin{itemize}
		\item $\kplus$ is associative: 
		Let $x,y,z\in K$ and consider $x+(y+z)$. 
		If $x=0$, $x+(y+z)=y+z = (x+y)+z$ no matter the value of $y$ and $z$. 
		If $y=0$,  $x+(y+z)=x+z=(x+y)+z$ no matter the value of $x$ and $z$, and similarly if  $z=0$. 
		If $x$, $y$ and $z$ are all distinct from $0$, $(y+z)$ is equal to $a$ or $\infty$ and in all cases $x+(y+z)=\infty$. 
		The same holds for $(x+y)$ and $(x+y)+z$, so $x+(y+z)=\infty=(x+y)+z$.
		\item It is clear from the construction that $\kplus$ is commutative and has identity $\kzero$.
		\item $\ktimes$ is associative: Let $x,y,z\in K$ and consider $x\times(y\times z)$. 
		If $x=0$, $x\times(y\times z)=0=(x\times y)\times z$. 
		If $y=0$, $x\times(y\times z)=x\times 0 = 0 = 0\times z = (x\times y)\times z$, and similarly for $z=0$. 
		If $x=1$, $x\times(y\times z)=y\times z=(x\times y)\times z$. 
		If $y=1$, $x\times(y\times z)=x\times z=(x\times y)\times z$, and similarly for $z=1$. 
		If $x$, $y$, $z$ are all different from $0$ and $1$, $y\times z$ and $x\times y$ can both be equal to $b$, $c$ or $\infty$ and in all cases, 
		$x\times(y\times z)=\infty=(x\times y)\times z$. 
		\item  It is clear from the construction that $\ktimes$ is commutative and has identity $\kone$.
		\item $\ktimes$ distributes over $\kplus$:  Let $x,y,z\in K$ and consider $x\times(y+ z)$. 
		If $x=0$, $x\times(y+ z)=0=(x\times y)+(x\times z)$. 
		If $y=0$, $x\times(y+ z)=x\times z=(x\times y)+(x\times z)$ and similarly for $z=0$. 
		If $x=1$, $x\times(y+ z)=y+z=(x\times y)+(x\times z)$. 
		If $x,y,z$ are all different from $0$ and $x\neq 1$, then $y+z$ is equal to $a$ or $\infty$ and in both cases $x\times(y+ z)=\infty$. Moreover, $x\times y$ and $x\times z$ can be equal to $\infty, a, b, c, d, e$ or $f$ and in all cases $(x\times y)+(x\times z)=\infty$
		\item It is clear from the construction that $\kzero$ is annihilating for $\ktimes$.
	\end{itemize}
	Second, $\semiringshort$ is $\omega$-continuous. The $\sqsubseteq$ relation defined by $x\sqsubseteq y$ if and only if there exists $z$ such that $x+z=y$ is as follows:
	\begin{itemize}
		\item $x\sqsubseteq x$ for every $x\in K$;
		\item $0\sqsubseteq x$ for every $x\in K$;
		\item $x\sqsubseteq \infty$ for every $x\in K$;
		\item $b\sqsubseteq a$, $c\sqsubseteq a$, $1\sqsubseteq a$.
	\end{itemize}
	It is easy to check that $\sqsubseteq$ is a partial order and every $\omega$-chain $x_0\sqsubseteq x_1\sqsubseteq \dots$ has a least upper bound $\sup((x_i)_{i\in \mathbb{N}})$. 
	Moreover, for every $x\in K$, we show that $x\plus \sup((x_i)_{i\in \mathbb{N}})=\sup((x\plus x_i)_{i\in \mathbb{N}})$ and $x\times \sup((x_i)_{i\in \mathbb{N}})=\sup((x\times x_i)_{i\in \mathbb{N}})$.
	\begin{itemize}
		\item If $x=0$, $x\plus \sup((x_i)_{i\in \mathbb{N}})=\sup((x_i)_{i\in \mathbb{N}})=\sup((x\plus x_i)_{i\in \mathbb{N}})$ and $x\times \sup((x_i)_{i\in \mathbb{N}})=0=\sup((x\times x_i)_{i\in \mathbb{N}})$.
		\item If $x=1$, $x\times \sup((x_i)_{i\in \mathbb{N}})=\sup((x_i)_{i\in \mathbb{N}})=\sup((x\times x_i)_{i\in \mathbb{N}})$. Moreover, 
		\begin{itemize}
			\item if $x_i=0$ for every $i\in \mathbb{N}$, then $x\plus \sup((x_i)_{i\in \mathbb{N}})=1\plus 0=1=\sup((1)_{i\in \mathbb{N}})=\sup((x\plus x_i)_{i\in \mathbb{N}})$, 
			\item if there exists $x_{i_0}$ such that $x_{i}=b$ (resp. $c$) for every $i\geq i_0$, then (i) $x\plus x_{i}=a$ for every $i\geq i_0$ so $\sup((x\plus x_i)_{i\in \mathbb{N}})=a$ and (ii) $\sup((x_i)_{i\in \mathbb{N}})=b$ (resp. $c$) so $x\plus \sup((x_i)_{i\in \mathbb{N}})=a=\sup((x\plus x_i)_{i\in \mathbb{N}})$,
			\item if there exists $x_{i_0}$ such that $x_{i}\in\{1,\infty,a,d,e,f\}$ for every $i\geq i_0$, then (i) $x\plus x_{i}=\infty$ for every $i\geq i_0$ so $\sup((x\plus x_i)_{i\in \mathbb{N}})=\infty$ and (ii) $\sup((x_i)_{i\in \mathbb{N}})\in\{1,\infty,a,d,e,f\}$ so $x\plus \sup((x_i)_{i\in \mathbb{N}})=\infty=\sup((x\plus x_i)_{i\in \mathbb{N}})$.
		\end{itemize}
		\item If $x\neq 0$ and $x\neq 1$: consider first the case $x\times  \sup((x_i)_{i\in \mathbb{N}})$:
		\begin{itemize}
			\item if $x_i=0$ for every $i\in \mathbb{N}$, then $x\times  \sup((x_i)_{i\in \mathbb{N}})= x\times 0=0=\sup((x\times x_i)_{i\in \mathbb{N}})$;
			\item if there exists $x_{i_0}$ such that $x_i=1$ for every $i\geq i_0$, then $x\times  \sup((x_i)_{i\in \mathbb{N}})= x\times 1 =x=\sup((x\times x_i)_{i\in \mathbb{N}})$;
			\item if there exists $x_{i_0}$ such that $x_i=d$ for every $i\geq i_0$ and $x\neq e$, $x\neq f$, then $x\times  \sup((x_i)_{i\in \mathbb{N}})= x\times d =\infty=\sup((x\times x_i)_{i\in \mathbb{N}})$;
			\item if there exists $x_{i_0}$ such that $x_i=d$ for every $i\geq i_0$ and $x= e$, then $x\times  \sup((x_i)_{i\in \mathbb{N}})= e\times d =b=\sup((x\times x_i)_{i\in \mathbb{N}})$;
			\item if there exists $x_{i_0}$ such that $x_i=d$ for every $i\geq i_0$ and $x= f$, then $x\times  \sup((x_i)_{i\in \mathbb{N}})= f\times d =c=\sup((x\times x_i)_{i\in \mathbb{N}})$;
			\item if there exists $x_{i_0}$ such that $x_i=e$ for every $i\geq i_0$ and $x\neq d$, then $x\times  \sup((x_i)_{i\in \mathbb{N}})= x\times e =\infty=\sup((x\times x_i)_{i\in \mathbb{N}})$;
			\item if there exists $x_{i_0}$ such that $x_i=e$ for every $i\geq i_0$ and $x= d$, then $x\times  \sup((x_i)_{i\in \mathbb{N}})= d\times e =b=\sup((x\times x_i)_{i\in \mathbb{N}})$;
			\item if there exists $x_{i_0}$ such that $x_i=f$ for every $i\geq i_0$ and $x\neq d$, then $x\times  \sup((x_i)_{i\in \mathbb{N}})= x\times f =\infty=\sup((x\times x_i)_{i\in \mathbb{N}})$;
			\item if there exists $x_{i_0}$ such that $x_i=f$ for every $i\geq i_0$ and $x= d$, then $x\times  \sup((x_i)_{i\in \mathbb{N}})= d\times f =c=\sup((x\times x_i)_{i\in \mathbb{N}})$;
			\item if there exists $x_{i_1}$ different from $0,1,d,e,f$, then $\sup((x_i)_{i\in \mathbb{N}})$ is different from $0,1,d,e,f$ and $x\times  \sup((x_i)_{i\in \mathbb{N}})=\infty=\sup((x\times x_i)_{i\in \mathbb{N}})$.
		\end{itemize}
		\item If $x\neq 0$ and $x\neq 1$: consider now the case $x\plus  \sup((x_i)_{i\in \mathbb{N}})$:
		\begin{itemize}
			\item if $x_i=0$ for every $i\in \mathbb{N}$, then $x\plus  \sup((x_i)_{i\in \mathbb{N}})= x\plus 0=x=\sup((x\plus x_i)_{i\in \mathbb{N}})$;
			\item if there exists $x_{i_0}\neq 0$: 
			\begin{itemize}
				\item if there exists $x_{i_1}$ different from $1,b,c$, $\sup((x_i)_{i\in \mathbb{N}})$ can be equal to  $\infty,a,d,e,f$ and in all cases, $x\plus \sup((x_i)_{i\in \mathbb{N}})=\infty$ and $x\plus x_{i_1}=\infty$ so $\sup((x\plus x_i)_{i\in \mathbb{N}})=\infty$.
				\item else, all $x_i$ are either equal to $0$ or to $x_{i_1}\in\{1,b,c\}$ (since $1,b,c$ are not comparable they cannot occur in the same $\omega$-chains) and $\sup((x_i)_{i\in \mathbb{N}})=x_{i_1}$. 
				\begin{itemize}
					\item Assume $x_{i_1}=1$. Then if $x\notin\{ b,c\}$, $x\plus x_{i_1}=\infty$ so $x\plus \sup((x_i)_{i\in \mathbb{N}})=\infty =\sup((x\plus x_i)_{i\in \mathbb{N}})$. If $x=b$ or $x=c$, then $x\plus x_{i_1}=a$ so $x\plus \sup((x_i)_{i\in \mathbb{N}})=a =\sup((x\plus x_i)_{i\in \mathbb{N}})$.
					\item Assume $x_{i_1}=b$. Then if $x\neq 1$, $x\plus x_{i_1}=\infty$ so $x\plus \sup((x_i)_{i\in \mathbb{N}})=\infty =\sup((x\plus x_i)_{i\in \mathbb{N}})$. If $x=1$, then $x\plus x_{i_1}=a$ so $x\plus \sup((x_i)_{i\in \mathbb{N}})=a =\sup((x\plus x_i)_{i\in \mathbb{N}})$.
					\item The case $x_{i_1}=c$ is similar.
				\end{itemize}
			\end{itemize}
		\end{itemize}
	\end{itemize}
	Hence $\semiringshort$ is $\omega$ continuous.\\
	Finally, for every $x,y\in K$, the greatest lower bound of $x,y$ exists.
	\begin{itemize}
		\item If $x=y$, the greatest lower bound of $x,y$ is $x$.
		\item If $x=\infty$, the greatest lower bound of $x,y$ is $y$. 
		\item If $x=a$ and $y=b$, $y=c$ or $y=1$, the greatest lower bound of $x,y$ is $y$. 
		\item Otherwise, the greatest lower bound of $x,y$ is $0$. \qedhere
	\end{itemize}
\end{proof}

The following example shows that the greatest lower bound of a pair of elements is not guaranteed to exists, even for $\omega$-continuous semirings, so that $\provmodtot$ does not satisfy the Any $\omega$-Continuous Semiring Property. 

\begin{example}\label{app:example-model-based}
	Let $\semiringshort=\semiring$ be defined as follows. We show in Lemma \ref{app:lemma-example-semiring} that $\semiringshort$ is a commutative $\omega$-continuous semiring.
	\begin{itemize}
		\item $K=\{0,1,\infty, a, b, c, d, e\}$;
		\item $\kzero=0$, $\kone=1$;
		\item $\kplus$ is defined by
		\begin{itemize}
			\item for every $x\in K$, $0+x=x+0=x$;
			\item $c+d=d+c=a$;
			\item $d+e=e+d=a$;
			\item $c+e=e+c=b$;
			\item in every other cases, $x+y=\infty$.
		\end{itemize}
		\item $\times$ is defined by
		\begin{itemize}
			\item for every $x\in K$, $1\times x=x\times 1=x$;
			\item for every $x\in K$, $0\times x=x\times 0=0$;
			\item in every other cases, $x\times y=\infty$.
		\end{itemize}
	\end{itemize}
	
	According to the $\sqsubseteq$ relation defined by $x\sqsubseteq y$ if and only if there exists $z$ such that $x+z=y$, 
	$a$ and $b$ have two lower bounds $e$ and $c$ (since $c\sqsubseteq a$, $e\sqsubseteq a$, $c\sqsubseteq b$, $e\sqsubseteq b$) which are not comparable (since $c\not\sqsubseteq e$ and $e\not\sqsubseteq c$).
\end{example}

\begin{lemma}\label{app:lemma-example-semiring}
	The semiring $\semiringshort$ of Example \ref{app:example-model-based} is a commutative $\omega$-continuous semiring. 
\end{lemma}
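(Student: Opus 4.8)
The plan is to verify the two defining requirements in turn, mirroring the structure already used for Lemma~\ref{app:lemma-example-semiring-necfacts}: first that $\semiringshort$ is a commutative semiring, then that it is $\omega$-continuous. Commutativity of both operations, the identity laws $\kzero\kplus x=x$ and $\kone\ktimes x=x$, and the annihilation law $\kzero\ktimes x=\kzero$ are immediate, since every explicit rule in the definition is symmetric. The substantive work is associativity of $\kplus$, associativity of $\ktimes$, and distributivity, and all three rest on a single collapse principle that I would isolate first: any sum of three nonzero elements equals $\infty$, and any product of two elements both different from $\kzero$ and $\kone$ equals $\infty$. Indeed, for $\kplus$-associativity, if one argument is $\kzero$ it acts as identity; otherwise $y\kplus z\in\{a,b,\infty\}$, and since no defining rule has $a$, $b$, or $\infty$ as a non-$\infty$ summand, adding a further nonzero element gives $\infty$, so both $x\kplus(y\kplus z)$ and $(x\kplus y)\kplus z$ equal $\infty$. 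For $\ktimes$-associativity, $\kzero$ annihilates and $\kone$ is neutral, while three elements all outside $\{\kzero,\kone\}$ force every partial product to $\infty$. Distributivity $x\ktimes(y\kplus z)=x\ktimes y\kplus x\ktimes z$ splits on whether $x\in\{\kzero,\kone\}$ (trivial) and otherwise on whether $y$ or $z$ is $\kzero$, the remaining cases giving $\infty$ on both sides.

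For $\omega$-continuity I would first compute the natural order $\sqsubseteq$ explicitly by solving $x\kplus z=y$ for each target $y$. This yields: $\kzero$ below everything, everything below $\infty$, the elements strictly below $a$ being exactly $\{c,d,e\}$, the elements strictly below $b$ being exactly $\{c,e\}$, while $\kone$, $c$, $d$, $e$ are pairwise incomparable (and $\kone$ is below neither $a$ nor $b$, since $\kone\kplus z\in\{\kone,\infty\}$ for all $z$). Reflexivity and transitivity of $\sqsubseteq$ hold in any semiring (using $z=\kzero$ and associativity), so the only genuine check is antisymmetry; inspecting the computed relation, no pair of distinct elements $x\neq y$ satisfies both $x\sqsubseteq y$ and $y\sqsubseteq x$, so $\sqsubseteq$ is a partial order.

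The final step exploits that $K$ is finite. Any $\omega$-chain $x_0\sqsubseteq x_1\sqsubseteq\cdots$ takes finitely many values and, being non-decreasing, is totally ordered by $\sqsubseteq$ (by transitivity), hence has a maximum which it eventually reaches and keeps (by antisymmetry); thus it is eventually constant, its least upper bound exists, and $\sup((x_i)_{i\in\mathbb{N}})$ equals its eventual value $s$. Moreover $\kplus$ and $\ktimes$ are monotone: $x_i\sqsubseteq x_{i+1}$ gives $x\kplus x_i\sqsubseteq x\kplus x_{i+1}$ directly, and $x\ktimes x_i\sqsubseteq x\ktimes x_{i+1}$ via distributivity. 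Hence $(x\kplus x_i)_{i\in\mathbb{N}}$ and $(x\ktimes x_i)_{i\in\mathbb{N}}$ are themselves chains, eventually constant at $x\kplus s$ and $x\ktimes s$, which gives $x\kplus\sup((x_i)_{i\in\mathbb{N}})=\sup((x\kplus x_i)_{i\in\mathbb{N}})$ and $x\ktimes\sup((x_i)_{i\in\mathbb{N}})=\sup((x\ktimes x_i)_{i\in\mathbb{N}})$. This general finiteness argument makes the exhaustive case analysis carried out for Lemma~\ref{app:lemma-example-semiring-necfacts} unnecessary for continuity here.

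The main obstacle is purely bookkeeping rather than conceptual: carrying out the associativity and distributivity case splits without overlooking the subcases where an argument equals $\kone$, which behaves differently from the other nonzero elements, and enumerating $\sqsubseteq$ carefully enough that the antisymmetry verification is genuinely exhaustive. Everything else follows from the collapse principle and from finiteness, so once those two computations are set down correctly the remaining verifications are routine.
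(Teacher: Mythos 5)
Your proof is correct, and the semiring-axiom half (identities, annihilation, associativity and distributivity via the ``collapse'' observations that any sum of three nonzero elements and any product of two elements outside $\{\kzero,\kone\}$ equal $\infty$) is essentially the same argument the paper gives, just organized around an explicitly stated principle rather than written out case by case. Where you genuinely diverge is the $\omega$-continuity half. The paper, after computing the order $\sqsubseteq$ exactly as you do, verifies $x\kplus\sup((x_i)_i)=\sup((x\kplus x_i)_i)$ and $x\ktimes\sup((x_i)_i)=\sup((x\ktimes x_i)_i)$ by a long exhaustive case analysis over all shapes of $\omega$-chains and all values of $x$. You instead invoke a general structural fact: since $K$ is finite and $\sqsubseteq$ is a partial order (antisymmetry being the only thing to check by inspection, as reflexivity and transitivity hold in any commutative semiring), every $\omega$-chain is eventually constant, and since $\kplus$ and $\ktimes$ are monotone with respect to $\sqsubseteq$ (the latter via distributivity), the sequences $(x\kplus x_i)_i$ and $(x\ktimes x_i)_i$ are again chains, eventually constant at $x\kplus s$ and $x\ktimes s$ where $s$ is the eventual value of $(x_i)_i$; the continuity identities follow at once. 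This buys a much shorter and less error-prone verification, and in fact proves the stronger statement that \emph{any} finite commutative semiring whose natural preorder is antisymmetric is $\omega$-continuous --- which would equally dispose of the continuity part of Lemma~\ref{app:lemma-example-semiring-necfacts}. The paper's brute-force computation, by contrast, is entirely self-contained and makes the suprema concrete, but that extra concreteness is not used anywhere else. Both routes rest on the same computed order relation, and your enumeration of it agrees with the paper's.
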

\begin{proof}
	First, $\semiringshort$ is a commutative semiring:
	\begin{itemize}
		\item $\kplus$ is associative: 
		Let $x,y,z\in K$ and consider $x+(y+z)$. 
		If $x=0$, $x+(y+z)=y+z = (x+y)+z$ no matter the value of $y$ and $z$. 
		If $y=0$,  $x+(y+z)=x+z=(x+y)+z$ no matter the value of $x$ and $z$, and similarly if  $z=0$. 
		If $x$, $y$ and $z$ are all distinct from $0$, $(y+z)$ is equal to $a$, $b$ or $\infty$ and in all cases $x+(y+z)=\infty$. 
		The same holds for $(x+y)$ and $(x+y)+z$, so $x+(y+z)=\infty=(x+y)+z$.
		\item It is clear from the construction that $\kplus$ is commutative and has identity $\kzero$.
		\item $\ktimes$ is associative: Let $x,y,z\in K$ and consider $x\times(y\times z)$. 
		If $x=0$, $x\times(y\times z)=0=(x\times y)\times z$. 
		If $y=0$, $x\times(y\times z)=x\times 0 = 0 = 0\times z = (x\times y)\times z$, and similarly for $z=0$. 
		If $x=1$, $x\times(y\times z)=y\times z=(x\times y)\times z$. 
		If $y=1$, $x\times(y\times z)=x\times z=(x\times y)\times z$, and similarly for $z=1$. 
		If $x$, $y$, $z$ are all different from $0$ and $1$, $x\times(y\times z)=x\times\infty=\infty=\infty\times z=(x\times y)\times z$.
		\item  It is clear from the construction that $\ktimes$ is commutative and has identity $\kone$.
		\item $\ktimes$ distributes over $\kplus$:  Let $x,y,z\in K$ and consider $x\times(y+ z)$. 
		If $x=0$, $x\times(y+ z)=0=(x\times y)+(x\times z)$. 
		If $y=z=0$, $x\times(y+ z)=0=(x\times y)+(x\times z)$. Note that this is the only case where $y+z=0$. 
		If $x=1$, $x\times(y+ z)=y+z=(x\times y)+(x\times z)$. 
		If $y=0$ and $z=1$, $x\times(y+ z)=x\times 1=x=(x\times y)+(x\times z)$, and similarly in the case where $y=1$ and $z=0$. Note that these two cases cover the case $y+z=1$. 
		If $x$ is different from $0$ and $1$, $y+z\neq 1$, and $y+z\neq 0$, then $x\times(y+ z)=\infty=\infty + \infty=(x\times y)+(x\times z)$. 
		\item It is clear from the construction that $\kzero$ is annihilating for $\ktimes$.
	\end{itemize}
	Second, $\semiringshort$ is $\omega$-continuous. The $\sqsubseteq$ relation defined by $x\sqsubseteq y$ if and only if there exists $z$ such that $x+z=y$ is as follows:
	\begin{itemize}
		\item $x\sqsubseteq x$ for every $x\in K$;
		\item $0\sqsubseteq x$ for every $x\in K$;
		\item $c\sqsubseteq a$; $d\sqsubseteq a$; $e\sqsubseteq a$;
		\item $c\sqsubseteq b$; $e\sqsubseteq b$;
		\item $x\sqsubseteq \infty$ for every $x\in K$.
	\end{itemize}
	It is easy to check that $\sqsubseteq$ is a partial order and every $\omega$-chain $x_0\sqsubseteq x_1\sqsubseteq \dots$ has a least upper bound $\sup((x_i)_{i\in \mathbb{N}})$. 
	Moreover, for every $x\in K$, we show that $x\plus \sup((x_i)_{i\in \mathbb{N}})=\sup((x\plus x_i)_{i\in \mathbb{N}})$ and $x\times \sup((x_i)_{i\in \mathbb{N}})=\sup((x\times x_i)_{i\in \mathbb{N}})$.
	\begin{itemize}
		\item If $x=0$, $x\plus \sup((x_i)_{i\in \mathbb{N}})=\sup((x_i)_{i\in \mathbb{N}})=\sup((x\plus x_i)_{i\in \mathbb{N}})$ and $x\times \sup((x_i)_{i\in \mathbb{N}})=0=\sup((x\times x_i)_{i\in \mathbb{N}})$.
		\item If $x=1$, $x\times \sup((x_i)_{i\in \mathbb{N}})=\sup((x_i)_{i\in \mathbb{N}})=\sup((x\times x_i)_{i\in \mathbb{N}})$. Moreover, 
		\begin{itemize}
			\item if $x_i=0$ for every $i\in \mathbb{N}$, then $x\plus \sup((x_i)_{i\in \mathbb{N}})=1\plus 0=1=\sup((1)_{i\in \mathbb{N}})=\sup((x\plus x_i)_{i\in \mathbb{N}})$, 
			\item if there exists $x_{i_0}$ such that $x_{i_0}\neq 0$, then (i) $x\plus x_{i_0}=\infty$ so $\sup((x\plus x_i)_{i\in \mathbb{N}})=\infty$ and (ii) $\sup((x_i)_{i\in \mathbb{N}})\neq 0$ so $x\plus \sup((x_i)_{i\in \mathbb{N}})=\infty=\sup((x\plus x_i)_{i\in \mathbb{N}})$.
		\end{itemize}
		\item If $x\neq 0$ and $x\neq 1$: consider first the case $x\times  \sup((x_i)_{i\in \mathbb{N}})$:
		\begin{itemize}
			\item if $x_i=0$ for every $i\in \mathbb{N}$, then $x\times  \sup((x_i)_{i\in \mathbb{N}})= x\times 0=0=\sup((x\times x_i)_{i\in \mathbb{N}})$;
			\item if there exists $x_{i_0}\neq 0$ and $x_i=1$ for every $i\geq i_0$, then $x\times  \sup((x_i)_{i\in \mathbb{N}})= x\times 1 =x=\sup((x\times x_i)_{i\in \mathbb{N}})$;
			\item if there exists $x_{i_1}$ different from $0$ and $1$, then $\sup((x_i)_{i\in \mathbb{N}})$ is different from $0$ and from $1$ and $x\times  \sup((x_i)_{i\in \mathbb{N}})=\infty=\sup((x\times x_i)_{i\in \mathbb{N}})$.
		\end{itemize}
		\item If $x\neq 0$ and $x\neq 1$: consider now the case $x\plus  \sup((x_i)_{i\in \mathbb{N}})$:
		\begin{itemize}
			\item if $x_i=0$ for every $i\in \mathbb{N}$, then $x\plus  \sup((x_i)_{i\in \mathbb{N}})= x\plus 0=x=\sup((x\plus x_i)_{i\in \mathbb{N}})$;
			\item if there exists $x_{i_0}\neq 0$: 
			\begin{itemize}
				\item if there exists $x_{i_1}$ different from $c,d,e$ (\ie $x_{i_1}$ is equal to $1$, $a$, $b$, or $\infty$), $\sup((x_i)_{i\in \mathbb{N}})$ can be equal to  $1$, $a$, $b$, or $\infty$ and in all cases, $x\plus \sup((x_i)_{i\in \mathbb{N}})=\infty$ and $x\plus x_{i_1}=\infty$ so $\sup((x\plus x_i)_{i\in \mathbb{N}})=\infty$.
				\item else, all $x_i$ are either equal to $0$ or to $x_{i_1}\in\{c, d, e\}$ (since $c,d,e$ are not comparable they cannot occur in the same $\omega$-chains) and $\sup((x_i)_{i\in \mathbb{N}})=x_{i_1}$. 
				\begin{itemize}
					\item Assume $x_{i_1}=c$. Then if $x\notin\{ d,e\}$, $x\plus x_{i_1}=\infty$ so $x\plus \sup((x_i)_{i\in \mathbb{N}})=\infty =\sup((x\plus x_i)_{i\in \mathbb{N}})$. If $x=d$, then $x\plus x_{i_1}=a$ so $x\plus \sup((x_i)_{i\in \mathbb{N}})=a =\sup((x\plus x_i)_{i\in \mathbb{N}})$. Similarly if $x=e$, $x\plus \sup((x_i)_{i\in \mathbb{N}})= b =\sup((x\plus x_i)_{i\in \mathbb{N}})$.
					\item The cases $x_{i_1}=d$ and $x_{i_1}=e$ are similar.
				\end{itemize}
			\end{itemize}
		\end{itemize}
	\end{itemize}
	Hence $\semiringshort$ is $\omega$ continuous.
\end{proof}

\subsection{$\provmodmon$ Case}

\begin{proposition}
	$\provmodmon$ satisfies the Boolean Compatibility Property.
\end{proposition}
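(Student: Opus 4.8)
The plan is to reduce this to the already-established behavior of $\atann$ on $\plus\,$-idempotent semirings, exactly as was done in the $\provmodtot$ case. First I would record the crucial structural fact: $\mi{PosBool}(\semiringVars)$ is a commutative $\omega$-continuous semiring that is moreover $\plus\,$-idempotent, since here $\plus$ is $\vee$ and $p \vee p = p$ holds for every positive Boolean expression $p$. This places $\mi{PosBool}(\semiringVars)$ squarely in the class of semirings on which the model-based and tree-based semantics are known to collapse.

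Next I would invoke Proposition~\ref{prop:modelBasedSameAsTreeBasedForIdempotent}, which states that on any commutative $\plus\,$-idempotent $\omega$-continuous semiring the three semantics $\provmodtot$, $\provmodmon$ and $\atann$ coincide. Applied to $\semiringshort = \mi{PosBool}(\semiringVars)$, this yields $\provmodmon(\ontology, \database, \mi{PosBool}(\semiringVars), \annot_X, \fact) = \atann(\ontology, \database, \mi{PosBool}(\semiringVars), \annot_X, \fact)$ for every $\ontology$, $\database$, $\annot_X$ and $\fact$. This is the single substantive step, and it is imported wholesale from an earlier result rather than re-derived.

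Finally I would appeal to Proposition~\ref{prop:bc_at} (equivalently, the result of \citeauthor{DBLP:conf/pods/GreenT17}), which establishes that $\atann$ satisfies the Boolean Compatibility Property, namely $\atann(\ontology, \database, \mi{PosBool}(\semiringVars), \annot_X, \fact) = \bigvee_{D' \subseteq \database,\, \ontology, D' \models \fact} \bigwedge_{\beta \in D'} \annot_X(\beta)$. Chaining this equality with the one from the previous paragraph gives the claim for $\provmodmon$. I expect no real obstacle here: the argument is a two-step transfer through already-proved propositions, and the only point worth verifying explicitly is the $\plus\,$-idempotence of $\mi{PosBool}(\semiringVars)$, since that is precisely the hypothesis that makes Proposition~\ref{prop:modelBasedSameAsTreeBasedForIdempotent} applicable and hence lets the set-annotated model-based semantics inherit the Boolean behavior of the all-tree semantics.
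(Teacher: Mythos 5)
Your proposal is correct and follows exactly the paper's own proof: it notes that $\mi{PosBool}(\semiringVars)$ is a commutative $\plus\,$-idempotent $\omega$-continuous semiring, applies Proposition~\ref{prop:modelBasedSameAsTreeBasedForIdempotent} to identify $\provmodmon$ with $\atann$ on that semiring, and then invokes Proposition~\ref{prop:bc_at} for the Boolean Compatibility of $\atann$. No gaps; the argument is the same two-step transfer the paper uses.
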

\begin{proof}
	Since $\mi{PosBool}(\semiringVars)$ is a commutative $\omega$-continuous $\kplus$-idempotent semiring, by Proposition \ref{prop:modelBasedSameAsTreeBasedForIdempotent}, $\provmodmon(\ontology, \database, \mi{PosBool}(\semiringVars), \annot,\fact)= \atann(\ontology, \database,\mi{PosBool}(\semiringVars), \annot,\fact)$. 
	It follows by Proposition \ref{prop:bc_at} that 
	$\provmodmon(\ontology, \database,\mi{PosBool}(\semiringVars), \annot_X, \fact)=\bigvee_{D'\subseteq \database, \ontology, D' \models \fact}\bigwedge_{\beta\in D'}\annot_X(\beta)$. 
\end{proof}

\begin{proposition}
	$\provmodmon$ satisfies the Self Property.
\end{proposition}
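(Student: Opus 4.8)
The plan is to reduce the Self Property for $\provmodmon$ to Lemma~\ref{provmodmon-lemma}, which already characterizes $\provmodmon(\ontology,\database,\semiringshort,\annot,\fact)$ as the sum over the set of annotations of all derivation trees of $\fact$. Concretely, I must exhibit, for every $\fact\in\database$, a witness $e\in K$ with $\annot(\fact)\plus e=\provmodmon(\ontology,\database,\semiringshort,\annot,\fact)$, which is exactly the meaning of $\annot(\fact)\sqsubseteq\provmodmon(\ontology,\database,\semiringshort,\annot,\fact)$.

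First I would observe that when $\fact\in\database$, the definition of a derivation tree provides a trivial tree $t_0\in\drvtr{\fact}$ consisting of a single node (both root and leaf) labeled by $\fact$. For this tree, $\ann(t_0)=\prod_{v\text{ is a leaf of }t_0}\annot(v)=\annot(\fact)$, so $\annot(\fact)$ belongs to the set $S:=\{\ann(t)\mid t\in\drvtr{\fact}\}$.

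Next I would invoke Lemma~\ref{provmodmon-lemma} to write $\provmodmon(\ontology,\database,\semiringshort,\annot,\fact)=\sum_{v\in S}v$. Since $\annot(\fact)\in S$ and $\semiringshort$ is a commutative $\omega$-continuous semiring, commutativity together with $\omega$-continuity (which guarantees that this possibly infinite sum is well-defined) lets me split off that single summand and write $\provmodmon(\ontology,\database,\semiringshort,\annot,\fact)=\annot(\fact)\plus e$ with $e:=\sum_{v\in S\setminus\{\annot(\fact)\}}v$. By the definition of $\sqsubseteq$, this witness $e$ yields $\annot(\fact)\sqsubseteq\provmodmon(\ontology,\database,\semiringshort,\annot,\fact)$, establishing the property.

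There is no substantial obstacle here: once Lemma~\ref{provmodmon-lemma} is in hand the argument is immediate, mirroring the one-line proof already given for the tree-based semantics $\atann,\mtann,\mdtann,\rmdtann$. The only points warranting minor care are that the sum defining $\provmodmon$ ranges over a \emph{set} of semiring elements, so the trivial-tree term $\annot(\fact)$ is present (and contributes exactly once), and that the $\omega$-continuity assumption on $\semiringshort$ makes both the full sum and the residual $e$ well-defined.
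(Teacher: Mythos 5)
Your proof is correct, but it takes a different route from the paper's. The paper argues directly from the definition of $\provmodmon$: point (1) of the definition of set-annotated models forces $\annot(\fact)\in\modannot^I(\fact)$ for \emph{every} model $(\inter,\modannot^I)$, hence $\annot(\fact)$ lies in the intersection $\bigcap_{(\inter,\modannot^I)\models(\ontology,\database,\semiringshort,\annot)}\modannot^I(\fact)$ and is therefore one of the summands in the sum defining $\provmodmon(\ontology,\database,\semiringshort,\annot,\fact)$, which immediately gives $\annot(\fact)\sqsubseteq\provmodmon(\ontology,\database,\semiringshort,\annot,\fact)$. You instead invoke Lemma~\ref{provmodmon-lemma}, which identifies that intersection with the set $\{\ann(t)\mid t\in\drvtr{\fact}\}$, and then exhibit the single-node derivation tree to place $\annot(\fact)$ in that set. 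Both arguments end with the same step (a member of the summed set is $\sqsubseteq$ the sum, with $\omega$-continuity making the possibly infinite residual sum well-defined), but the paper's version is more economical: it needs only the model axiom, not the full tree characterization, whose proof requires an induction. Your version has the advantage of mirroring the one-line proofs given for $\atann$, $\mtann$, $\mdtann$, $\rmdtann$ and of making the role of the trivial derivation tree explicit, at the cost of resting on a heavier lemma; your care about duplicates collapsing in a set-indexed sum and about well-definedness of the split is sound in either route.
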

\begin{proof}
	If $\fact\in \database$, for every model $(\inter,\modannot^I)$ of $(\database, \semiringshort, \annot)$, $\annot(\fact)\in\modannot^I(\fact)$ so it follows straightforwardly from the definition of $\provmodmon$ that $\annot(\fact)\sqsubseteq\provmodmon(\ontology, \database, \semiringshort, \annot,\fact)$.
\end{proof}

\begin{proposition}
	$\provmodmon$ satisfies the Parsimony Property. 
\end{proposition}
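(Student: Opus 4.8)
The plan is to reduce the claim to the tree characterization of $\provmodmon$ established in Lemma~\ref{provmodmon-lemma}, namely $\provmodmon(\ontology, \database, \semiringshort, \annot,\fact)= \sum_{\{\ann(t) \mid t\in \drvtr{\fact}\}} \ann(t)$. The key observation, already used in the proof of Proposition~\ref{prop:tree-sat-parsimony} for the tree-based semantics, is that when $\fact\in\database$ does not occur in any rule head of the grounding $\ontology_\database$, the fact $\fact$ admits exactly one derivation tree with respect to $\ontology$ and $\database$: the tree $t_0$ consisting of a single node labeled $\fact$, which is simultaneously root and leaf.

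First I would justify this uniqueness. Suppose some derivation tree for $\fact$ had a non-leaf root; by definition it would be a triple $(\fact,\datrule,h)$ for a rule $\datrule=\phi(\vec{x},\vec{y})\rightarrow H(\vec{x})\in\ontology$ with $h(H(\vec{x}))=\fact$. Since all facts appearing in a derivation tree are over $\domain(\database)$, the homomorphism $h$ maps $\vec{x}\cup\vec{y}$ into $\domain(\database)$, so $h(\phi(\vec{x},\vec{y}))\rightarrow \fact$ is a rule of $\ontology_\database$ and $\fact$ occurs as its head, contradicting the hypothesis. Hence the only admissible tree is the single-node tree $t_0$, whose unique leaf is $\fact$, giving $\ann(t_0)=\annot(\fact)$.

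I would then conclude by substituting into Lemma~\ref{provmodmon-lemma}: since $\drvtr{\fact}=\{t_0\}$, the set $\{\ann(t)\mid t\in\drvtr{\fact}\}$ is the singleton $\{\annot(\fact)\}$, so $\provmodmon(\ontology, \database, \semiringshort, \annot,\fact)=\annot(\fact)$, as required. This argument is essentially routine, and there is no real obstacle beyond the uniqueness step, which is the same observation underlying Proposition~\ref{prop:tree-sat-parsimony}. As an alternative route mirroring the $\provmodtot$ proof, one could combine the Self Property (just proved for $\provmodmon$), which yields $\annot(\fact)\sqsubseteq\provmodmon(\ontology, \database, \semiringshort, \annot,\fact)$, with the set-annotated model $(\inter_0,\modannot^{I_0})$ of Lemma~\ref{provmodmon-upperbound} specialized so that $\modannot^{I_0}(\fact)=\{\annot(\fact)\}$, to bound $\provmodmon$ from above by $\annot(\fact)$; but passing through Lemma~\ref{provmodmon-lemma} directly is the shortest path.
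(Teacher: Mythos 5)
Your proposal is correct and follows the paper's own proof: both invoke Lemma~\ref{provmodmon-lemma} to rewrite $\provmodmon(\ontology, \database, \semiringshort, \annot,\fact)$ as the sum over the set $\{\ann(t) \mid t\in \drvtr{\fact}\}$, and both then observe that under the Parsimony hypothesis $\drvtr{\fact}$ consists of the single one-node tree, yielding $\annot(\fact)$. Your explicit justification of that uniqueness (via the grounding $\ontology_\database$) merely spells out what the paper leaves implicit, so the two arguments are essentially the same.
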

\begin{proof}
	Assume that $\fact\in\database$ and does not occur in any rule head in the grounding $\ontology_\database$ of $\ontology$ \wrt $\database$. 
	By Lemma \ref{provmodmon-lemma}
	$\provmodmon(\ontology, \database, \semiringshort, \annot,\fact)= \sum_{\{\ann(t) \mid t\in \drvtr{\fact}\}} \ann(t)$, where $\ann(t)=\Pi_{\beta\text{ is a leaf of $t$}} \annot(\beta)$ and it follows from the assumptions on $\fact$ that $\drvtr{\fact}$ contains a single derivation tree which consists of a single root node. 
	Hence $\provmodmon(\ontology, \database, \semiringshort, \annot,\fact)= \annot(\fact)$.
\end{proof}

\begin{proposition}
	$\provmodmon$ satisfies the Necessary Facts Property.
\end{proposition}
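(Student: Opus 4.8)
The plan is to reduce the statement to the explicit tree-sum characterisation of $\provmodmon$ established in Lemma~\ref{provmodmon-lemma}, namely $\provmodmon(\ontology,\database,\semiringshort,\annot,\fact)=\sum_{k\in S}k$ where $S=\{\ann(t)\mid t\in\drvtr{\fact}\}$ and $\ann(t)=\prod_{v\text{ leaf of }t}\annot(v)$, and then to combine it with the leaf-containment argument already used for the tree-based semantics (the first part of the proof of Proposition~\ref{tree-sat-necfacts}). If $\ontology,\database\not\models\fact$ then $\drvtr{\fact}=\emptyset$, so $S=\emptyset$, $\provmodmon(\ontology,\database,\semiringshort,\annot,\fact)=0$, and the required identity holds with $e=0$; from now on I assume $\ontology,\database\models\fact$.

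First I would recall that every necessary fact occurs as a leaf of every derivation tree of $\fact$: if some $t\in\drvtr{\fact}$ had no leaf equal to $\beta\in\mi{Nec}$, its leaves would all lie in $\database\setminus\{\beta\}$ and $t$ would witness $\ontology,\database\setminus\{\beta\}\models\fact$, contradicting the necessity of $\beta$. Writing $N:=\prod_{\beta\in\mi{Nec}}\annot(\beta)$, it follows that for every $t\in\drvtr{\fact}$ one may split the leaf multiset of $t$ into one designated occurrence of each $\beta\in\mi{Nec}$ and the remaining leaves; by commutativity of multiplication this yields $\ann(t)=N\times e_t$ for a suitable cofactor $e_t\in K$.

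The key step is then to factor $N$ out of the sum $\sum_{k\in S}k$. The subtlety is that this sum ranges over the \emph{set} $S$ of distinct annotation values, not over the multiset of trees, so I cannot simply distribute over all trees. Instead, for each value $k\in S$ I would fix one representative tree $t_k\in\drvtr{\fact}$ with $\ann(t_k)=k$ and set $e_k:=e_{t_k}$, so that $k=N\times e_k$ holds as an exact semiring identity. Distributivity of multiplication over addition then gives $\sum_{k\in S}k=\sum_{k\in S}(N\times e_k)=N\times\bigl(\sum_{k\in S}e_k\bigr)$, which is the desired form with $e:=\sum_{k\in S}e_k$.

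I expect the main obstacle to be precisely the passage from a finite to a possibly infinite $S$: since there may be infinitely many distinct annotation values, the final distributivity step is not a bare semiring identity but must be justified through $\omega$-continuity. Concretely, $\provmodmon$ is defined on commutative $\omega$-continuous semirings (Table~\ref{fig:prop-def}), so $\sum_{k\in S}e_k$ is the least upper bound of its finite partial sums, and the law $a\times\sup(c_i)=\sup(a\times c_i)$ lets me commute $N\times(\cdot)$ with the supremum to recover $\sum_{k\in S}(N\times e_k)$. A secondary point to verify is that the arbitrary choice of representatives $t_k$ and of the designated necessary-fact occurrences is harmless: only the equality $k=N\times e_k$ for each fixed $k$ is used, and no coherence between distinct cofactors is required, so the argument never needs cancellativity of multiplication.
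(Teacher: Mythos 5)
Your proof is correct and takes essentially the same route as the paper's: reduce to the tree-sum characterisation of $\provmodmon$ given by Lemma~\ref{provmodmon-lemma}, observe that every necessary fact occurs as a leaf of every derivation tree of $\fact$, and factor $\Pi_{\beta\in \mi{Nec}}\annot(\beta)$ out of the sum. The paper's one-line ``Hence'' compresses exactly this argument; your extra care about picking representative trees for the set-valued sum and invoking $\omega$-continuity to distribute over a possibly infinite sum simply makes explicit the details the paper leaves implicit.
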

\begin{proof}
	Let $\mi{Nec}$ be the set of facts necessary to $\ontology, \database \models \fact$. 
	By Lemma \ref{provmodmon-lemma}
	$\provmodmon(\ontology, \database, \semiringshort, \annot,\fact)= \sum_{\{\ann(t) \mid t\in \drvtr{\fact}\}} \ann(t)$, where $\ann(t)=\Pi_{\beta\text{ is a leaf of $t$}} \annot(\beta)$. 
	Hence, $\provmodmon(\ontology, \database, \semiringshort, \annot,\fact) =\Pi_{\beta\in \mi{Nec}}\annot({\beta})\times e$ for some $e\in K$.
\end{proof}

\begin{proposition}
	$\provmodmon$ satisfies the Non-Usable Facts Property. 
\end{proposition}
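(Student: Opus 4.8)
The plan is to reduce the statement to Lemma~\ref{provmodmon-lemma}, which already expresses $\provmodmon$ purely in terms of the derivation trees of $\fact$: namely $\provmodmon(\ontology, \database, \semiringshort, \annot,\fact)= \sum_{\{\ann(t) \mid t\in \drvtr{\fact}\}} \ann(t)$ with $\ann(t)=\Pi_{\beta\text{ is a leaf of }t} \annot(\beta)$. Once this representation is in hand, the property becomes a statement about which facts occur as leaves in $\drvtr{\fact}$, so the argument mirrors the one already used for $\atann$, $\mtann$, $\mdtann$ and $\rmdtann$.

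First I would recall the definition of usability: a fact is usable to $\ontology,\database\models\fact$ exactly when it occurs in some tree of $\drvtr{\fact}$. Contrapositively, a non-usable fact never appears as a leaf of any $t\in \drvtr{\fact}$. Hence, for every single tree $t\in \drvtr{\fact}$, all of its leaves are usable facts, so the product $\ann(t)=\Pi_{\beta\text{ is a leaf of }t}\annot(\beta)$ only ever reads the values of $\annot$ on usable facts.

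Next I would fix an arbitrary $\annot'$ that agrees with $\annot$ on all usable facts, differing only on non-usable ones. Since the leaves of every $t\in \drvtr{\fact}$ are usable, the value $\ann(t)$ computed with $\annot$ coincides with $\ann(t)$ computed with $\annot'$, tree by tree. Consequently the set of values $\{\ann(t)\mid t\in\drvtr{\fact}\}$ is identical under the two annotations, and summing over this common set via Lemma~\ref{provmodmon-lemma} yields $\provmodmon(\ontology, \database, \semiringshort, \annot,\fact)=\provmodmon(\ontology, \database, \semiringshort, \annot',\fact)$, which is the desired conclusion.

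I do not anticipate a genuine obstacle: the content of the argument is entirely carried by Lemma~\ref{provmodmon-lemma}, and the passage from ``every leaf is usable'' to ``the tree-annotation is unchanged'' is immediate. The only point warranting a word of care is that $\provmodmon$ sums over the \emph{set} of distinct tree-annotation values rather than over trees themselves; but since each individual $\ann(t)$ is preserved, the underlying set of values is preserved as well, so the set-based sum is likewise unaffected.
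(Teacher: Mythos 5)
Your proof is correct and follows essentially the same route as the paper's: both rely on Lemma~\ref{provmodmon-lemma} to express $\provmodmon$ as a sum over $\{\ann(t)\mid t\in\drvtr{\fact}\}$, and both observe that non-usable facts never occur as leaves of any derivation tree, so each $\ann(t)$ --- and hence the set of tree-annotation values being summed --- is unchanged when passing from $\annot$ to $\annot'$. Your extra remark that the sum ranges over the \emph{set} of annotation values rather than over trees is a sensible point of care, and it is handled exactly as you suggest.
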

\begin{proof}
	Let $\annot'$ that differs from $\annot$ only on facts that are not usable to $\ontology,\database\models\fact$. 
	By Lemma \ref{provmodmon-lemma}
	$\provmodmon(\ontology, \database, \semiringshort, \annot,\fact)= \sum_{\{\ann(t) \mid t\in \drvtr{\fact}\}} \ann(t)$, where $\ann(t)=\Pi_{\beta\text{ is a leaf of $t$}} \annot(\beta)$. 
	Since $\annot$ and $\annot'$ coincide on all facts usable to $\ontology,\database\models\fact$ and only such facts occur in leaves of trees from $\drvtrsig$, it follows that $\Pi_{\beta\text{ is a leaf of $t$}} \annot(\beta)=\Pi_{\beta\text{ is a leaf of $t$}} \annot'(\beta)$. 
	Hence 
	$\provmodmon(\ontology, \database, \semiringshort, \annot,\fact)= \provmodmon(\ontology, \database, \semiringshort, \annot',\fact)$.
\end{proof}

\begin{proposition}
	$\provmodmon$ satisfies the Deletion Property.
\end{proposition}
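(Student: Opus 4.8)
The plan is to avoid reproving everything from scratch and instead leverage Lemma~\ref{provmodmon-lemma}, which already characterizes $\provmodmon$ as a sum over the \emph{set} of distinct derivation-tree annotations. Writing $S=\{\ann(t)\mid t\in\drvtr{\fact}\}$, that lemma gives $\provmodmon(\ontology,\database,\mi{Prov}(\semiringVars),\annot_X,\fact)=\sum_{m\in S}m$, so the whole task reduces to understanding how the partial evaluation $[\{\annot_X(x)=\zero\}_{x\in\Delta}]$ acts on this sum, and then re-applying the same lemma to the smaller database $\database'$.

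First I would analyze the partial evaluation monomial by monomial. Since $\annot_X$ is injective into $\semiringVars$, each $m=\ann(t)=\prod_{\beta\text{ leaf of }t}\annot_X(\beta)$ is a monomial whose variables are exactly the images of the leaves of $t$. Consequently, a monomial $m\in S$ either mentions no variable $\annot_X(x)$ with $x\in\Delta$ — in which case the substitution leaves it unchanged — or it mentions at least one such variable, in which case it is sent to $\zero$. This cleanly partitions $S$ into the subset $S_0$ of monomials mentioning no fact of $\Delta$ and its complement, and the partial evaluation of $\sum_{m\in S}m$ is therefore precisely $\sum_{m\in S_0}m$.

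The key step is then to identify $S_0$ with $\{\ann(t)\mid t\in T^\ontology_{\database'}(\fact)\}$. Here I would argue that, again by injectivity of $\annot_X$, a tree $t\in\drvtr{\fact}$ has all its leaves in $\database'$ if and only if $\ann(t)$ mentions no fact of $\Delta$; and that such trees are exactly the derivation trees of $\fact$ \wrt $\ontology$ and $\database'$, because a derivation tree is valid over $\database'$ iff all of its leaves lie in $\database'$ (the rule applications and homomorphisms are insensitive to shrinking the database). Since $\annot'$ agrees with $\annot_X$ on $\database'$, the associated annotation values coincide. Applying Lemma~\ref{provmodmon-lemma} a second time, now to $(\database',\mi{Prov}(\semiringVars),\annot')$, yields $\provmodmon(\ontology,\database',\mi{Prov}(\semiringVars),\annot',\fact)=\sum_{m\in S_0}m$, which is exactly the partial evaluation computed above.

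The main obstacle — though a routine one — is the bookkeeping that distinguishes this argument from the $\atann$ case of Proposition~\ref{dp_at}: there the sum ranges over \emph{all} trees and the split into $\Delta$-trees and $\database'$-trees is immediate, whereas here the sum ranges over distinct annotation \emph{values}, so I must verify that collapsing to distinct monomials does not merge a $\Delta$-tree with a $\database'$-tree (which is guaranteed by injectivity of $\annot_X$, since the two kinds of trees produce monomials over disjoint variable sets) and that the substitution commutes with the possibly infinite sum. I expect verifying that these two descriptions of $S_0$ genuinely coincide as sets, and that partial evaluation distributes over the (supremum-defined) infinite sum, to be the only delicate points.
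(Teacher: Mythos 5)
Your proof is correct and takes essentially the same route as the paper's: both invoke Lemma~\ref{provmodmon-lemma}, partition the derivation trees of $\fact$ \wrt $\ontology$ and $\database$ into those whose leaves all lie in $\database'$ (which are exactly $T^{\ontology}_{\database'}(\fact)$) and those with at least one leaf in $\Delta$, and observe that the partial evaluation $[\{\annot_X(x)=\zero\}_{x\in\Delta}]$ annihilates precisely the monomials of the latter while leaving those of the former untouched. If anything you are slightly more careful than the paper, which silently splits the sum over distinct annotation values into two sub-sums — a step that is only valid because, as you note via injectivity of $\annot_X$, no $\Delta$-tree monomial can coincide with a $\database'$-tree monomial (your phrase ``disjoint variable sets'' is a small overstatement, since a $\Delta$-tree may also use $\database'$-variables; the correct point is that every $\Delta$-tree monomial contains at least one $\Delta$-variable, which no $\database'$-tree monomial does).
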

\begin{proof}
	Let $\mi{Prov}(\semiringVars)$ be a provenance semiring, $\database'\subseteq\database$, $\annot_X'$ be the restriction of $\annot_X$ to $\database'$ and $\Delta=\database\setminus\database'$. 
	By Lemma \ref{provmodmon-lemma} $\provmodmon(\ontology, \database',\mi{Prov}(\semiringVars), \annot'_X,\fact)= \sum_{\{\ann(t) \mid t\in T^\ontology_{\database'}(\fact)\}}\ann'(t)$ with $\ann'(t)=\Pi_{\beta\text{ is a leaf of $t$}} \annot'_X(\beta)=\Pi_{\beta\text{ is a leaf of $t$}} \annot_X(\beta)$. 
	
	Let us denote the set of derivation trees of $\alpha$ \wrt $\ontology$ and $\database$ whose has at least one leaf from $\Delta$ by $T_{\Delta}$, and all other derivation trees of $\alpha$ (\ie those which have all their leaves in $D'$) by $T$. 
	Note that $T=T^\ontology_{\database'}(\fact)$. 
	Then, by Lemma \ref{provmodmon-lemma}, we have 
	\begin{align*}
		\provmodmon(\ontology, \database,\mi{Prov}(\semiringVars), \annot_X,\fact)= &\sum_{\{\ann(t) \mid t\in \drvtr{\fact}\}} \ann(t) \text{ with }\Pi_{\beta\text{ is a leaf of $t$}}\ann(t)= \annot_X(\beta)\\
		= &\sum_{\{\ann(t) \mid t\in T\}} \ann(t) + \sum_{\{\ann(t) \mid t\in T_{\Delta}\}} \ann(t) \\
		=&\provmodmon(\ontology, \database',\mi{Prov}(\semiringVars), \annot'_X,\fact) +\sum_{\{\ann(t) \mid t\in T_{\Delta}\}} \ann(t) 
	\end{align*}
	Hence 
	$\provmodmon(\ontology, \database',\mi{Prov}(\semiringVars), \annot_X', \fact)$ is equal to the partial evaluation of $\provmodmon(\ontology, \database, \mi{Prov}(\semiringVars), \annot_X, \fact)$ obtained by setting the annotations of facts in $\Delta$ to $\zero$, which makes $\sum_{\{\ann(t) \mid t\in T_{\Delta}\}} \ann(t)$ evaluate to $0$.
\end{proof}

\section{Other Proofs for Section \ref{sec:analysis}}

\subsection{ $\provmodtot$ and $\provmodmon$ on $\mathbb{B}\llbracket\semiringVars\rrbracket$}\label{app:booleanseriesformodlbased}

\begin{proposition}
	The semiring $\mathbb{B}\llbracket\semiringVars\rrbracket$ of formal power series with Boolean coefficients is such that for every $\omega$-continuous $\plus\, $-idempotent semiring $\semiringshort$  and $\ontology$, $(\database, \semiringshort, \annot)$ and $\alpha$, for $\prov\in\{\provmodtot,\provmodmon\}$, $$\prov(\ontology, \database,\semiringshort, \annot, \alpha)=h(\prov(\ontology, \database,\mathbb{B}\llbracket\semiringVars\rrbracket,\annot_\semiringVars,\alpha))$$ 
	where $\annot_\semiringVars$ associates a distinct variable from $\semiringVars$ to each fact of $\database$ and $h$ is the unique semiring homomorphism that extends $\nu:\semiringVars\rightarrow K$ where $\nu(x)=\annot(\annot_\semiringVars^-(x))$ for every $x\in \semiringVars$. 
\end{proposition}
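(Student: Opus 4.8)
The plan is to reduce both equalities to the all-tree semantics $\atann$ and then exploit its compatibility with $\omega$-continuous homomorphisms. First I would record the structural facts about $\mathbb{B}\llbracket\semiringVars\rrbracket$: its addition is coefficient-wise disjunction, so it is a commutative, $\plus\,$-idempotent, $\omega$-continuous semiring; and because $\plus\,$-idempotency makes a Boolean power series determined by its support, the relation $\sqsubseteq$ is support inclusion and the greatest lower bound of any family is the intersection of supports. Hence $\mathbb{B}\llbracket\semiringVars\rrbracket$ lies in the semiring domain of both $\provmodtot$ and $\provmodmon$, so the left-hand sides of the claimed equalities are well-defined. Since $\semiringshort$ is assumed $\plus\,$-idempotent and $\omega$-continuous (and in $\provmodtot$'s domain when that semantics is considered), the same applies on the target side.

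Next I would check that $h$ is an $\omega$-continuous semiring homomorphism. One route is the direct one: set $h(S)=\sum_{m\in\monom(\semiringVars),\,S(m)\neq 0}\nu(m)$, where $\nu(m)$ denotes the evaluation of the monomial $m$ under $\nu$; this possibly infinite sum exists because $\semiringshort$ is $\omega$-continuous. That $h$ respects $\plus$ uses $\plus\,$-idempotency of $\semiringshort$ to absorb monomials shared between two supports, and respecting $\times$ uses distributivity; $\omega$-continuity of $h$ holds because the supremum of an $\omega$-chain in $\mathbb{B}\llbracket\semiringVars\rrbracket$ is the union of supports while infinite sums in $\semiringshort$ are by definition suprema of finite partial sums. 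A cleaner alternative is to factor through the universal case: since $\semiringshort$ is $\plus\,$-idempotent, summing $c\ge 1$ copies of any element of $K$ returns that element, so the universal $\omega$-continuous homomorphism $\mathbb{N}^\infty\llbracket\semiringVars\rrbracket\to K$ depends only on which coefficients are nonzero; it therefore factors through the $\omega$-continuous coefficient-forgetting map $\mathbb{N}^\infty\llbracket\semiringVars\rrbracket\to\mathbb{B}\llbracket\semiringVars\rrbracket$, and the resulting factor is exactly $h$, witnessing both its existence and its $\omega$-continuity (uniqueness following because an $\omega$-continuous homomorphism is fixed by its values on monomials).

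With these preliminaries the computation is short. Because both $\semiringshort$ and $\mathbb{B}\llbracket\semiringVars\rrbracket$ are commutative $\plus\,$-idempotent $\omega$-continuous semirings, Proposition~\ref{prop:modelBasedSameAsTreeBasedForIdempotent} gives, for $\prov\in\{\provmodtot,\provmodmon\}$, both $\prov(\ontology,\database,\semiringshort,\annot,\alpha)=\atann(\ontology,\database,\semiringshort,\annot,\alpha)$ and $\prov(\ontology,\database,\mathbb{B}\llbracket\semiringVars\rrbracket,\annot_\semiringVars,\alpha)=\atann(\ontology,\database,\mathbb{B}\llbracket\semiringVars\rrbracket,\annot_\semiringVars,\alpha)$. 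Since $\atann$ satisfies the Commutation with $\omega$-Continuous Homomorphisms Property (Proposition~\ref{prop:comconth_at}) and $h$ is $\omega$-continuous, we obtain $h(\atann(\ontology,\database,\mathbb{B}\llbracket\semiringVars\rrbracket,\annot_\semiringVars,\alpha))=\atann(\ontology,\database,\semiringshort,h\circ\annot_\semiringVars,\alpha)$. Finally, injectivity of $\annot_\semiringVars$ yields $h(\annot_\semiringVars(\beta))=\nu(\annot_\semiringVars(\beta))=\annot(\beta)$ for every $\beta\in\database$, i.e.\ $h\circ\annot_\semiringVars=\annot$, so the right-hand side equals $\atann(\ontology,\database,\semiringshort,\annot,\alpha)$. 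Chaining the three identities gives $h(\prov(\ontology,\database,\mathbb{B}\llbracket\semiringVars\rrbracket,\annot_\semiringVars,\alpha))=\prov(\ontology,\database,\semiringshort,\annot,\alpha)$.

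The main obstacle is the middle step: verifying rigorously that $h$ is a genuine $\omega$-continuous homomorphism, equivalently that $\mathbb{B}\llbracket\semiringVars\rrbracket$ specializes correctly to every $\plus\,$-idempotent $\omega$-continuous $\semiringshort$. Everything else is either a direct appeal to Propositions~\ref{prop:modelBasedSameAsTreeBasedForIdempotent} and~\ref{prop:comconth_at} or the routine verification of the algebraic properties and well-defined greatest lower bounds of $\mathbb{B}\llbracket\semiringVars\rrbracket$.
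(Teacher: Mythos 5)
Your proof is correct and rests on the same two pillars as the paper's---Proposition~\ref{prop:modelBasedSameAsTreeBasedForIdempotent} applied to both $\semiringshort$ and $\mathbb{B}\llbracket\semiringVars\rrbracket$, plus the Commutation with $\omega$-Continuous Homomorphisms Property of $\atann$---but it organizes the transfer step differently. The paper never works with $h$ as a homomorphism out of $\mathbb{B}\llbracket\semiringVars\rrbracket$: it applies commutation \emph{twice}, to homomorphisms out of the universal semiring $\mathbb{N}^\infty\llbracket\semiringVars\rrbracket$, namely the evaluation $\mn{Eval}_\nu:\mathbb{N}^\infty\llbracket\semiringVars\rrbracket\to K$ and the coefficient-collapsing map $f:\mathbb{N}^\infty\llbracket\semiringVars\rrbracket\to\mathbb{B}\llbracket\semiringVars\rrbracket$, using $\plus\,$-idempotency of $\semiringshort$ only through the identity $\mn{Eval}_\nu(s)=\mn{Eval}_\nu(f(s))$ and identifying $h$ with the restriction of $\mn{Eval}_\nu$ at the very end; universality of $\mathbb{N}^\infty\llbracket\semiringVars\rrbracket$ hands over both homomorphisms for free. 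You instead apply commutation \emph{once}, directly to $h:\mathbb{B}\llbracket\semiringVars\rrbracket\to K$, which shifts the whole burden onto showing that $h$ exists, is unique, and is an $\omega$-continuous homomorphism---precisely the obstacle you flag. Your second route for discharging it (factoring the universal evaluation through $f$) is in substance the paper's argument, so that variant essentially coincides with the paper's proof; your first, direct route is more self-contained, but note that preservation of $\times$ also needs $\plus\,$-idempotency of $\semiringshort$, not just distributivity: distinct factorizations of the same monomial collapse to one support element, e.g.\ $h((x+y)\times(x+y))$ contains $\nu(x)\nu(y)$ once while $h(x+y)\times h(x+y)$ contains it twice. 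Two small points in your favor: you verify that $\mathbb{B}\llbracket\semiringVars\rrbracket$ has all greatest lower bounds (so the right-hand side is well-defined for $\provmodtot$), and you note that $\semiringshort$ must lie in $\provmodtot$'s semiring domain---well-definedness issues the paper's proof glosses over.
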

\begin{proof}
	Let $\semiringshort$ be a $\omega$-continuous $\plus\, $-idempotent semiring.
	Let $\ontology$ be a Datalog program, $(\database, \semiringshort, \annot)$ be an annotated database and $\alpha$ be a fact. 
	\begin{itemize}
		\item By Proposition \ref{prop:modelBasedSameAsTreeBasedForIdempotent}, since $\semiringshort$ is $\omega$-continuous and $\plus\, $-idempotent, $\provmodtot(\ontology, \database, \semiringshort, \annot,\alpha)=\provmodmon(\ontology, \database, \semiringshort, \annot,\alpha)= \atann(\ontology, \database, \semiringshort, \annot,\alpha)$. 
		
		\item Since $\atann$ satisfies the Commutation with $\omega$-Continuous Property and $\mathbb{N}^\infty\llbracket\semiringVars\rrbracket$ is universal for $\omega$-continuous semirings, 
		$\atann(\ontology, \database, \semiringshort, \annot,\alpha)=\mn{Eval}_\nu(\atann(\ontology, \database, \mathbb{N}^\infty\llbracket\semiringVars\rrbracket, \annot_\semiringVars,\alpha))$ where $\mn{Eval}_\nu$ is the unique semiring homomorphism from $\mathbb{N}^\infty\llbracket\semiringVars\rrbracket$ to $\semiringshort$ that extends $\nu:\semiringVars\rightarrow K$ where $\nu(x)=\annot(\annot_\semiringVars^-(x))$ for every $x\in \semiringVars$. 
		
		\item Let $f:\mathbb{N}^\infty\llbracket\semiringVars\rrbracket\mapsto \mathbb{B}\llbracket\semiringVars\rrbracket$ be the function that replaces all coefficients different from $0$ by $1$. 
		Since $\semiringshort$ is $\plus\, $-idempotent, for every $s\in \mathbb{N}^\infty\llbracket\semiringVars\rrbracket$, $\mn{Eval}_\nu(s)=\mn{Eval}_\nu(f(s))$. 
		Moreover, it is easy to check that $f$ is actually the unique $\omega$-continuous homomorphism of semirings such that for the one-variable monomials we have $f(x)=x$. 
		Hence since $\atann$ satisfies the Commutation with $\omega$-Continuous Property and $\mathbb{N}^\infty\llbracket\semiringVars\rrbracket$ is universal for $\omega$-continuous semirings, $\atann(\ontology, \database,\mathbb{B}\llbracket\semiringVars\rrbracket, \annot_\semiringVars,\alpha)=f(\atann(\ontology, \database,\mathbb{N}^\infty\llbracket\semiringVars\rrbracket, \annot_\semiringVars,\alpha))$.
		
		\item By Proposition \ref{prop:modelBasedSameAsTreeBasedForIdempotent}, since $\mathbb{B}\llbracket\semiringVars\rrbracket$ is $\omega$-continuous and $\plus\, $-idempotent, 
		$\provmodtot(\ontology, \database, \mathbb{B}\llbracket\semiringVars\rrbracket, \annot_\semiringVars,\alpha)= \provmodmon(\ontology, \database, \mathbb{B}\llbracket\semiringVars\rrbracket, \annot_\semiringVars,\alpha)= \atann(\ontology, \database,\mathbb{B}\llbracket\semiringVars\rrbracket, \annot_\semiringVars,\alpha)$.
		
		\item Finally, note that for every $s\in  \mathbb{B}\llbracket\semiringVars\rrbracket$, $\mn{Eval}_\nu(s)=h(s)$. 
	\end{itemize}
	To sum up, for $\prov\in\{\provmodtot,\provmodmon\}$ we have
	\begin{align*}
		\prov(\ontology, \database, \semiringshort, \annot,\alpha)=& \atann(\ontology, \database, \semiringshort, \annot,\alpha) \\
		=& \mn{Eval}_\nu(\atann(\ontology, \database, \mathbb{N}^\infty\llbracket\semiringVars\rrbracket, \annot_\semiringVars,\alpha)) \\
		=& \mn{Eval}_\nu(f(\atann(\ontology, \database, \mathbb{N}^\infty\llbracket\semiringVars\rrbracket, \annot_\semiringVars,\alpha))) \\
		=& \mn{Eval}_\nu(\atann(\ontology, \database,\mathbb{B}\llbracket\semiringVars\rrbracket, \annot_\semiringVars,\alpha))\\\
		=& \mn{Eval}_\nu(\prov(\ontology, \database, \mathbb{B}\llbracket\semiringVars\rrbracket, \annot_\semiringVars,\alpha))\\
		=& h(\prov(\ontology, \database, \mathbb{B}\llbracket\semiringVars\rrbracket, \annot_\semiringVars,\alpha))
		\qedhere
	\end{align*}
\end{proof}

\section{Provenance Computation}
\label{appendix:computation}

The problem of computing the provenance of a fact or query answer has been studied in different manners for $\atann$. 
As presented before, in this case the provenance expressions can be infinite. 
\citeauthor{DeutchMRT14} \shortcite{DeutchMRT14} study different semirings for which the provenance is finite and can be computed in a finite time. They also show that it is possible to represent $\atann$ with these semirings through a polynomial structure: circuits. 
We show how to adapt the classical semi-naive evaluation algorithm for the different provenance semantics studied in this paper.

\subsection{$\mdtann$ and $\rmdtann$ Cases}

We show that both $\mdtann$ and $\rmdtann$ can be represented by polynomial size circuits 
regardless of the semiring. 
We use a generalization of the algorithm presented by \citeauthor{DeutchMRT14} \shortcite{DeutchMRT14} for which we will use arithmetic circuits and not Boolean circuits to represent the provenance expressions.

\begin{definition} Let $\semiringVars$ be a set of annotations.
	An \emph{arithmetic circuit} $C$ is a pair  of a directed acyclic graph  $G$ and a labeling function $\gamma$ from the nodes of $G$ to $\semiringVars \cup \{+,*\} \cup \{0,1\}$.
	The nodes without outgoing edges are called leaves and the other nodes internal nodes. The labeling function associates each internal node to $+$ or $*$ and the leaves to variables in $\semiringVars$ or $1$ or $0$.
	The root of circuit is the only node without incoming edges.
\end{definition}

In the next proposition, we consider the semiring $\mathbb{N}^\infty\llbracket\semiringVars\rrbracket$. Thanks to the property of commutation with $\omega$-continuous homomorphisms, it can be extended to another semiring $\semiringshort$ by applying the homormophism from $\mathbb{N}^\infty\llbracket\semiringVars\rrbracket$ to $\semiringshort$ to the obtained circuit.

\begin{proposition}
	\label{lem:acircuit}
	For every fact $\fact$ and every $i\geq 0$, the annotation $\naiveannot^i(\fact)$ of $\fact$ in $(\naive^{i},\mathbb{N}^\infty\llbracket\semiringVars\rrbracket,\naiveannot^i) =\naive^{i}(\ontology,\database,\mathbb{N}^\infty\llbracket\semiringVars\rrbracket,\annot)$ can be represented by a circuit of size polynomial in the size of $\database$ and $i$. 
	Moreover, $\mdtann(\ontology,\database,\mathbb{N}^\infty\llbracket\semiringVars\rrbracket,\annot,\alpha)$ and $\rmdtann(\ontology,\database,\mathbb{N}^\infty\llbracket\semiringVars\rrbracket,\annot,\alpha)$ can be computed in a polynomial time in the size of $\database$.
\end{proposition}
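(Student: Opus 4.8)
The plan is to construct the arithmetic circuit incrementally, adding one layer per evaluation round, and then to observe that for $\mdtann$ and $\rmdtann$ only polynomially many rounds are ever needed. Throughout I work in data complexity, so $\ontology$ (and hence the schema and the maximal arity) is fixed. For the first part I proceed by induction on $i$, maintaining for each derivable fact $\beta$ a gate $g_i(\beta)$ that evaluates in $\mathbb{N}^\infty\llbracket\semiringVars\rrbracket$ to $\naiveannot^i(\beta)$. At level $0$ every $\beta\in\database$ is a leaf labelled $\annot_\semiringVars(\beta)$. To pass from level $i$ to level $i+1$, for each fact $\alpha$ over $\domain(\database)$ I create one multiplication gate for each pair $(\datrule,h)$ with $\datrule=\phi(\vec{x},\vec{y})\rightarrow H(\vec{x})\in\ontology$, $h(H(\vec{x}))=\alpha$ and $\naive^i\models h(\phi(\vec{x},\vec{y}))$, combining the gates $g_i(\beta)$ for $\beta\in h(\phi(\vec{x},\vec{y}))$, and one addition gate $g_{i+1}(\alpha)$ summing these products together with the leaf $\annot_\semiringVars(\alpha)$ when $\alpha\in\database$. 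This mirrors exactly the definitions of $\consop$ and of the union operator, so $g_{i+1}(\alpha)$ computes $\naiveannot^{i+1}(\alpha)$, which by Lemma~\ref{lem:deductiontree-iteration} equals $\sum_{t,\dep(t)\le i+1}\ann(t)$. Since Datalog introduces no new constants, the active domain stays $\domain(\database)$ and the number of facts is polynomial in $|\database|$; the number of homomorphisms $h$ per rule is at most $|\domain(\database)|^{v}$ with $v$ the fixed number of rule variables, and each product gate has constant fan-in (the body size). Hence each layer adds $\mathrm{poly}(|\database|)$ gates, and the circuit for level $i$ has size $O(i\cdot\mathrm{poly}(|\database|))$.

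For the second part I apply the same construction to the optimized and seminaive iterations. The only difference is which gates are (re)created: under $\semi$ a fact receives a single gate, created at the round it is first produced by $\diffop$ and referring only to gates of strictly earlier rounds; under $\opti$ the layers are frozen once the target is derived. By Lemmas~\ref{lem:oh} and~\ref{lem:sh} these gates compute $\optiannot^i$ and $\semiannot^i$, that is, the sums of $\ann(t)$ over minimal-depth, resp.\ hereditary-minimal-depth, trees of depth $\le i$. The decisive point is that the number of layers is polynomially bounded: the underlying Datalog fixpoint on facts is reached after at most $N=|\naive^\infty|$ rounds, with $N$ polynomial in $|\database|$, because the fact-sets form a chain that strictly grows until stabilisation. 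Consequently every minimal-depth derivation tree has depth at most $N$ (a fact is first produced at the round equal to its minimal depth), and every hereditary-minimal-depth tree, being in particular minimal-depth by Proposition~\ref{prop:treeconnections}, has depth at most $N$ as well. Taking $k=N$ in Propositions~\ref{prop:opti} and~\ref{prop:semi} and invoking Propositions~\ref{prop:executiontreeconnection_mdt} and~\ref{prop:executiontreeconnection_hmdt} gives $\mdtann=\optiannot^k$ and $\rmdtann=\semiannot^k$, so the associated circuits have size $O(N\cdot\mathrm{poly}(|\database|))=\mathrm{poly}(|\database|)$ and are produced in polynomial time.

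The main obstacle is precisely this polynomial bound on the number of iterations, and not the (routine) circuit bookkeeping: it is exactly what distinguishes $\mdtann$ and $\rmdtann$ from $\atann$, for which the naive iteration need not converge in finitely many steps and the power series is genuinely infinite. I would therefore concentrate the argument on showing that, for the minimal- and hereditary-minimal-depth semantics, every derivation tree contributing to the provenance has depth at most $N$, which caps the iteration at a polynomial number of rounds and is what upgrades the bound from $O(i\cdot\mathrm{poly}(|\database|))$ to $\mathrm{poly}(|\database|)$ alone.
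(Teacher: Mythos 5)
Your proposal is correct and follows essentially the same route as the paper's proof: an inductive, layer-per-iteration arithmetic-circuit construction mirroring $\consop$ (with polynomially many gates per layer, each referring to the previous layer's gates), combined with the observation that the optimized and seminaive iterations stabilize after polynomially many rounds because fact annotations (resp.\ the goal's annotation) are never modified after creation, which together with Lemmas~\ref{lem:oh} and~\ref{lem:sh} and Propositions~\ref{prop:executiontreeconnection_mdt} and~\ref{prop:executiontreeconnection_hmdt} yields polynomial-time computability of $\mdtann$ and $\rmdtann$. Your argument is somewhat more explicit than the paper's on the termination bound (identifying the first-production round of a fact with its minimal depth), but this is a presentational difference, not a different method.
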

\begin{proof}
	We generalize the algorithm proposed by \citeauthor{DeutchMRT14} \shortcite{DeutchMRT14} to construct Boolean circuits for $\mi{PosBool}(\semiringVars)$ of size polynomial in the database by using arithmetic circuits instead of Boolean circuits to represent the provenance. 
	
	We inductively describe an algorithm that constructs polynomial circuits $\mathcal{C}(i,\alpha,\ontology,\database,\mathbb{N}^\infty\llbracket\semiringVars\rrbracket,\annot)$ representing $\naiveannot^i(\fact)$ for every $\fact$. 
	
	\begin{itemize}
		\item Base case: $i=0$. In this case, for every $\fact$, $\naiveannot^i(\fact)=\annot(\fact)$ if $\fact\in\database$ and $\naiveannot^i(\fact)=0$ otherwise. Hence $\mathcal{C}(i,\alpha,\ontology,\database,\mathbb{N}^\infty\llbracket\semiringVars\rrbracket,\annot)$ consists of a single node labeled by $\annot(\fact)$ or $0$. 
		
		\item Induction step: Assume that for every $\fact$, we have built $\mathcal{C}(i,\alpha,\ontology,\database,\mathbb{N}^\infty\llbracket\semiringVars\rrbracket,\annot)$ polynomial in the size of $\database$ and $i$ that represents $\naiveannot^i(\fact)$. 
		
		We apply $\consop$ over the database $(\naive^{i},\mathbb{N}^\infty\llbracket\semiringVars\rrbracket,\annot')$ where $\annot'$ is a function associating to each fact in $\naive^{i}$ a new variable in a new set of variables $\semiringVars'$, and get a new annotated database denoted by $(\inter_{\consop}, \mathbb{N}^\infty\llbracket\semiringVars'\rrbracket, \annot_{\consop})$.
		
		To compute $\mathcal{C}(i+1,\alpha,\ontology,\database,\mathbb{N}^\infty\llbracket\semiringVars\rrbracket,\annot)$,  we use $\annot_{\consop}(\alpha)$.  It is known that there is a circuit representation of $\annot_{\consop}(\alpha)$ polynomial in $\inter_{\consop}$. 
		We replace each variable in  $\annot_{\consop}(\alpha)$ that represents some fact $\beta$ by the root of the circuit $\mathcal{C}(i,\beta,\ontology,\database,\mathbb{N}^\infty\llbracket\semiringVars\rrbracket,\annot)$. 
		Note that we add only a polynomial number of nodes and edges to the previous circuit. This conclude our induction.
	\end{itemize}
	
	Our proof can be easily adapted in the context of $\semi^{k}(\ontology,\database,\mathbb{N}^\infty\llbracket\semiringVars\rrbracket,\annot) $ and $\opti^{k}(\ontology,\database,\mathbb{N}^\infty\llbracket\semiringVars\rrbracket,\annot)$.
	Moreover, the annotation of a fact in $\semi^{k}(\ontology,\database,\mathbb{N}^\infty\llbracket\semiringVars\rrbracket,\annot) $  is not modified after the its creation, and the annotation of the goal fact in  $\opti^{k}(\ontology,\database,\mathbb{N}^\infty\llbracket\semiringVars\rrbracket,\annot)$ is not modified after its creation. Therefore,$\semi^{k}(\ontology,\database,\mathbb{N}^\infty\llbracket\semiringVars\rrbracket,\annot) $ and $\opti^{k}(\ontology,\database,\mathbb{N}^\infty\llbracket\semiringVars\rrbracket,\annot)$ terminates in a number of steps polynomial in the size of the database and we can conclude that $\mdtann(\ontology,\database,\mathbb{N}^\infty\llbracket\semiringVars\rrbracket,\annot,\alpha)$ and $\rmdtann(\ontology,\database,\mathbb{N}^\infty\llbracket\semiringVars\rrbracket,\annot,\alpha)$ can be computed in polynomial time in the size of the database.
\end{proof}

\subsection{$\mtann$ Case}
We show that there is no
polynomially computable
circuit 
that computes $\mtann$ on $\mathbb{N}^\infty[\semiringVars]$, by a reduction from the problem of counting the number of simle paths for the RPQ $a^*$, which is $\# P$ - hard \cite{DBLP:conf/www/ArenasCP12}. We hence start by defining these notions.

RPQ queries are binary queries over labeled-edge graphs. They are based on a regular language that can be defined by a deterministic automaton. 
A deterministic automaton is a tuple $\mathbb{A} = (s_0,S,F,A,\delta)$, where $S$ is a set of states, $F\subseteq S$ a set of final states, $A$ is a finite set called alphabet, and $\delta$ is a complete function from $S \times A$ to $S$. We extend $\delta$ to a complete function from $S \times A^*$ to $S$.

A path predicate is a binary predicate $\Lambda$ given by a regular expression over binary predicates. A path atom is an atom of the shape $\Lambda(t_1,t_2)$, where $\Lambda$ is a path predicate and $t_1, t_2$ are terms. A regular path query is a query $q(t_1,t_2) := \Lambda(t_1,t_2)$, where $\Lambda(t_1,t_2)$ is a path atom. 
Let $\mathcal{I}$ be an interpretation. We call \emph{path} (from $e_0$ to $e_n$) in $\mathcal{I}$ a (finite) sequence $p=e_0r_1e_1\ldots r_ne_n$ with $n\geq 0$ such that $e_0 \in \Delta_\mathcal{I}$, and for any $i \geq 1$, $e_i \in \Delta_\mathcal{I}$, $r_i$ is a binary predicate and $(e_{i-1},e_i) \in r^\mathcal{I}$, and denote by $w(p)$ the word $r_1\ldots r_n$ . 
We extend interpretations by interpreting path predicates as follows:
\begin{align*}
	&\Lambda^\mathcal{I} = \{(e_0,e_n) \mid \textrm{there exists a path $p$ from $e_0$ to $e_n$} \textrm{in $\mathcal{I}$ such that $w(p) \in \mathcal{L}(\Lambda)$}\}
\end{align*}
A path is simple if for any $i \not = j$, $e_i \not = e_j$. 
Given a deterministic automaton $\mathbb{A}$, a path is $\mathbb{A}$-simple if for any $i \not = j$, $(e_i,\delta(s_0,r_1\ldots r_{i-1})) \not = (e_j,\delta(s_0,r_1\ldots r_{j-1}))$.

There is a straightforward translation of an RPQ in a Datalog program, in particular for the RPQ $a*$.
\begin{definition}
	Let $\mathbb{A}$ be a deterministic automaton. The Datalog program $\Sigma_\mathbb{A}$ canonically associated with $\mathbb{A}$ contains, for each transition $\delta(s,a) = s'$, the following Datalog rule:
	\[s(x) \wedge a(x,y) \rightarrow s'(y). \]
	
	Moreover, for each final state $s_f \in F$, one Datalog rule is added:
	\[s_f(x) \rightarrow \mathsf{accept}(x)\]
\end{definition}

Given a databse $\database$, there is a strong connection between the non recursive proof trees of $\mathsf{accept}(a_2)$ \wrt $\Sigma_\mathbb{A}$ and $\database$ and the simple paths between $a_1$ and $a_2$ in $\database$.
\begin{proposition}\label{prop:link-path-nrt}
	There is a bijection between:
	\begin{enumerate}
		\item the set of all derivation trees of $\mathsf{accept}(a_2)$ w.r.t. $D \cup \{s_0(a_1)\}$ and $\Sigma_\mathbb{A}$, 
		\item the set of $\mathcal{L}(A)$ paths from $a_1$ to $a_2$ in $D$.
	\end{enumerate}
	
	There is also a bijection between:
	\begin{enumerate}
		\item[(3)] the set of non recursive derivation trees of $\mathsf{accept}(a_2)$ w.r.t. $D \cup \{s_0(a_1)\}$ and $\Sigma_\mathbb{A}$, 
		\item[(4)] the set of $\mathbb{A}$-simple $\mathcal{L}(A)$-paths from $a_1$ to $a_2$ in $D$.
	\end{enumerate}
	
	Therefore, $\mtann(\Sigma_\mathbb{A},D \cup \{s_0(a_1)\},\mathbb{N},\annot, \mathsf{accept}(a_2))$ where $\annot(\fact)=1$ for every $\fact$ gives the exact number of simple paths from $a_1$ to $a_2$ in $D$ satisfying $a^*$.
\end{proposition}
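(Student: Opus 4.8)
The plan is to first pin down the exact shape of any derivation tree of $\mathsf{accept}(a_2)$ with respect to $\Sigma_\mathbb{A}$ and $D\cup\{s_0(a_1)\}$, and then read a path off that shape. The only rule with $\mathsf{accept}$ in its head is $s_f(x)\rightarrow\mathsf{accept}(x)$ with $s_f\in F$, so the root has a single child $s_f(a_2)$. The only rules deriving a state atom $s'(y)$ are the transition rules $s(x)\wedge a(x,y)\rightarrow s'(y)$, so every non-leaf state node $s'(e')$ has exactly two children: a state node $s(e)$ and an edge atom $a(e,e')$; and since no rule head is an edge predicate, every edge atom is a leaf of $D$. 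Finally, the only state atom in $D\cup\{s_0(a_1)\}$ is $s_0(a_1)$, so the chain of state nodes must bottom out at a leaf $s_0(a_1)$. Hence every such tree is a \emph{caterpillar}: a spine $s_0(e_0),s_1(e_1),\dots,s_n(e_n)$ with $e_0=a_1$, $e_n=a_2$, consecutive nodes linked by a leaf edge atom $r_i(e_{i-1},e_i)\in D$ and a transition $\delta(s_{i-1},r_i)=s_i$, topped by $\mathsf{accept}(a_2)$ via a rule $s_n(x)\rightarrow\mathsf{accept}(x)$ with $s_n\in F$.

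Using this, I define $\Phi$ sending a tree to the path $p=e_0 r_1 e_1\cdots r_n e_n$ read off its spine; by the structural analysis $p$ is a genuine path in $D$ from $a_1$ to $a_2$ whose word $w(p)=r_1\cdots r_n$ is accepted (the top spine state is final), so $p$ is an $\mathcal{L}(\mathbb{A})$-path. Conversely I define $\Psi$ sending an $\mathcal{L}(\mathbb{A})$-path to the caterpillar built from the accepting run of $\mathbb{A}$ on $w(p)$. The step that makes $\Phi$ and $\Psi$ mutually inverse is that $\mathbb{A}$ is \emph{deterministic}: given the path, the state sequence $s_i=\delta(s_0,r_1\cdots r_i)$ is forced, each transition rule $s_{i-1}(x)\wedge r_i(x,y)\rightarrow s_i(y)$ is the unique applicable one, and the closing $\mathsf{accept}$-rule is the unique one for the (forced, necessarily final) state $s_n$. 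Thus the tree is completely determined by the path, and I verify $\Phi\circ\Psi=\mathrm{id}$ and $\Psi\circ\Phi=\mathrm{id}$ by a direct check, including the base case $n=0$ (the empty path, present exactly when $a_1=a_2$ and $s_0\in F$). This gives the first bijection.

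For the second bijection I would show that, under $\Phi$, a tree is non-recursive precisely when its path is $\mathbb{A}$-simple. The only facts that can coincide along an ancestor--descendant chain are the spine state atoms $s_i(e_i)$: the edge atoms are leaves, hence never ancestors of one another and of a different predicate and arity from state and $\mathsf{accept}$ atoms, while $\mathsf{accept}(a_2)$ occurs only at the root. Therefore the tree is non-recursive iff the pairs $(e_i,s_i)$ are pairwise distinct, which is exactly the definition of an $\mathbb{A}$-simple path (identifying the run state visited at position $i$ with $s_i$). The counting statement then follows: over the counting semiring $\mathbb{N}$ with $\annot\equiv 1$ we have $\ann(t)=\prod_{v\text{ leaf of }t}1=1$ for every tree, so $\mtann(\Sigma_\mathbb{A},D\cup\{s_0(a_1)\},\mathbb{N},\annot,\mathsf{accept}(a_2))$ equals the number of non-recursive derivation trees, i.e.\ the number of $\mathbb{A}$-simple $\mathcal{L}(\mathbb{A})$-paths; for the one-state automaton of $a^*$ the state component collapses, $\mathbb{A}$-simplicity becomes ordinary simplicity, and the value is the number of simple paths from $a_1$ to $a_2$ satisfying $a^*$. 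Finiteness of the sum is automatic, since a finite graph has finitely many $\mathbb{A}$-simple paths.

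The main obstacle I anticipate is the careful bookkeeping in the second bijection: arguing rigorously that recursiveness can be caused only by a repeated spine state atom requires checking that edge atoms and the $\mathsf{accept}$ atom can never form an ancestor--descendant pair of equally labelled nodes, and reconciling the indexing convention of the $\mathbb{A}$-simple definition (which pairs $e_i$ with $\delta(s_0,r_1\cdots r_{i-1})$) with the spine states $s_i=\delta(s_0,r_1\cdots r_i)$ that arise from the tree. Establishing that $\Phi$ and $\Psi$ are genuinely inverse, rather than merely a surjection, hinges entirely on determinism of $\mathbb{A}$, so I would make that dependence explicit throughout.
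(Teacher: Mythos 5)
Your proof is correct. Note that the paper itself states this proposition \emph{without} proof---it is treated as immediate from the construction of $\Sigma_\mathbb{A}$---so your structural analysis (every derivation tree of $\mathsf{accept}(a_2)$ is a caterpillar whose spine of state atoms bottoms out at the unique state fact $s_0(a_1)$, with edge atoms as pendant leaves) together with the determinism-based inverse construction supplies exactly the argument the paper leaves implicit, and your observation that recursiveness can only arise from repeated spine facts is the key point needed for the second bijection. Your explicit handling of the indexing mismatch in the paper's definition of $\mathbb{A}$-simple paths is also warranted: as literally written, pairing $e_i$ with $\delta(s_0,r_1\cdots r_{i-1})$ misaligns nodes with run states, and only the corrected pairing $(e_i,\delta(s_0,r_1\cdots r_i))$ coincides with distinctness of the spine facts $s_i(e_i)$, which is what the equivalence with non-recursive trees (and, for the one-state automaton of $a^*$, with ordinary simple paths) actually requires.
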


We are now ready to proof our result.
\begin{proposition}
	\label{th:datalognrt}
	Under the assumption that $P \neq NP$, there exist $\ontology$, $\database$ and $\fact$ such that $\mtann(\ontology,\database,\mathbb{N}^\infty\llbracket\semiringVars\rrbracket,\annot_X,\fact)$ cannot be represented by a circuit computable in polynomial time and of polynomial size in $\database$.
\end{proposition}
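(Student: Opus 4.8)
The plan is to argue by contradiction via a reduction from the $\sharpp$-hard problem of counting simple paths matching $a^*$, whose hardness is established in~\cite{DBLP:conf/www/ArenasCP12}. Fix the program $\ontology = \Sigma_\mathbb{A}$ canonically associated with a deterministic automaton $\mathbb{A}$ recognising $a^*$, together with the target fact $\mathsf{accept}(a_2)$, and let the database range over instances of the form $\database = D \cup \{s_0(a_1)\}$. Suppose, towards a contradiction, that there were a polynomial-time algorithm that on input $\database$ outputs an arithmetic circuit $C_\database$, of size polynomial in $|\database|$, computing $\mtann(\ontology, \database, \mathbb{N}^\infty\llbracket\semiringVars\rrbracket, \annot_X, \mathsf{accept}(a_2))$. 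I would then turn this algorithm into a polynomial-time procedure that computes the number of simple $a^*$-paths from $a_1$ to $a_2$ in $D$. Since that counting function is $\sharpp$-hard, and a $\sharpp$-hard function lying in $\textsc{FP}$ forces $\textsc{FP}=\sharpp$ and hence $\ptime=\np$ (as $\np\subseteq\ptime^{\sharpp}$), this contradicts the assumption $\ptime\neq\np$.

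First I would invoke Proposition~\ref{prop:link-path-nrt}: the non-recursive derivation trees of $\mathsf{accept}(a_2)$ are in bijection with the $\mathbb{A}$-simple $a^*$-paths from $a_1$ to $a_2$, so that $\mtann$ taken over the counting semiring with the constant annotation $1$ equals the number of such simple paths. To extract this number from $C_\database$, I would use that $\mtann$ satisfies the Commutation with Homomorphisms Property (Property~\ref{prop-commutation-homomorphism}, established in Proposition~\ref{prop:comh_md}): the $\omega$-continuous homomorphism $h : \mathbb{N}^\infty\llbracket\semiringVars\rrbracket \to \mathbb{N}^\infty$ extending the valuation $x \mapsto 1$ gives $h(\mtann(\ontology, \database, \mathbb{N}^\infty\llbracket\semiringVars\rrbracket, \annot_X, \mathsf{accept}(a_2))) = \mtann(\ontology, \database, \mathbb{N}^\infty, \annot, \mathsf{accept}(a_2))$ with $\annot\equiv 1$. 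Because a non-recursive tree has distinct facts on every root-to-leaf path, its depth is bounded by the number of facts and there are only finitely many of them; hence this value is a \emph{finite} natural number $N$, namely the desired path count. Applying $h$ to the circuit amounts to substituting $1$ for each variable leaf of $C_\database$; after this substitution every leaf lies in $\{0,1\}$, so no gate can ever produce $\infty$ and the circuit genuinely computes $N$ over $\mathbb{N}$.

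The step I expect to be the main obstacle is the actual evaluation of the substituted circuit, since an arithmetic circuit with product gates can represent integers of doubly-exponential magnitude even though the final value $N$ is small: $N$ is at most the number of simple paths, bounded by $n!$ with $n=|\database|$, hence has only $O(n\log n)$ bits. To sidestep the blow-up I would evaluate the substituted circuit modulo each prime $p_i$ up to some bound $B = O(n\log n)$, where the primes $p_1,\dots,p_k$ are obtained by sieving in polynomial time and chosen so that $\prod_i p_i > n! \ge N$ (the product of the primes up to $B$ exceeds $n!$ by the prime number theorem). Each reduction modulo $p_i$ is a semiring homomorphism $\mathbb{N}\to\mathbb{Z}/p_i\mathbb{Z}$, so gate-by-gate evaluation over $\mathbb{Z}/p_i\mathbb{Z}$, where all intermediate numbers stay below $p_i$, yields $N \bmod p_i$ in polynomial time; the Chinese Remainder Theorem then recovers $N$ exactly. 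Running the hypothesised circuit-construction algorithm, performing the substitution, and carrying out the modular evaluations are all polynomial in $|\database|$, so we would obtain a polynomial-time algorithm counting simple $a^*$-paths, yielding the desired contradiction.
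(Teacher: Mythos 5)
Your proof is correct and follows essentially the same route as the paper's: assume a polynomial-time constructible, polynomial-size circuit, specialize via the Commutation with ($\omega$-continuous) Homomorphisms Property under the valuation $x\mapsto 1$, invoke Proposition~\ref{prop:link-path-nrt} to identify the resulting value with the number of simple $a^*$-paths, and contradict the hardness result of \citeauthor{DBLP:conf/www/ArenasCP12}. The only difference is that you carefully justify the final evaluation step (finiteness of the value and the modular/CRT trick to avoid the doubly-exponential blow-up of intermediate gate values), a detail the paper's proof leaves implicit; this is a welcome tightening rather than a different approach.
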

\begin{proof}
	Suppose by contradiction that for every $\ontology$, $\database$ and $\fact$ we can compute in a polynomial time an arithmetic circuit representing $\mtann(\ontology,\database,\mathbb{N}^\infty\llbracket\semiringVars\rrbracket,\annot_X,\fact)$. It implies that the circuit has a polynomial size. 
	By applying this assumption to compute $\mtann(\Sigma_\mathbb{A},\database\cup \{s_0(a_1)\},\mathbb{N}^\infty\llbracket\semiringVars\rrbracket,\annot_X, \mathsf{accept}(a_2))$ and using the homomorphism from $\mathbb{N}^\infty\llbracket\semiringVars\rrbracket$ into $(\mathbb{N}^\infty,+,\times,0,1)$ that associate each variable to 1 over this circuit, since $\mtann$ satisfies the commutation with $\omega$-continuous homomorphisms, it is possible to compute the number of simple paths from  $a_1$ to $a_2$ in $\database$ satisfying $a^*$ by Proposition \ref{prop:link-path-nrt}. 
	This contradicts the result by \citeauthor{DBLP:conf/www/ArenasCP12} \shortcite{DBLP:conf/www/ArenasCP12}, which concludes our proof.
\end{proof}

\end{document}